\PassOptionsToPackage{x11names}{xcolor}
\documentclass[acmsmall,screen,nonacm]{acmart} %

\usepackage[T1]{fontenc}
\usepackage{graphicx}

\usepackage{amssymb}
\usepackage{amsmath}
\usepackage{soul}

\makeatletter
\newlength{\todowidthright}

\makeatother

\usepackage[textwidth=\marginparwidth]{todonotes}

\usepackage{xspace}
\usepackage{stmaryrd}
\usepackage{mathtools}
\usepackage{csquotes}
\usepackage{listings}
\usepackage{tikz}
\usetikzlibrary{arrows.meta, positioning, shapes.multipart,calc,decorations.pathreplacing}
\usepackage{xcolor}
\usepackage{wrapfig}

\usepackage{relsize} %

\usepackage{graphicx}   %
\usepackage{subcaption} %

\usepackage{tikz}
\usetikzlibrary{arrows.meta, positioning,automata}

\usepackage{aliascnt}
\usepackage{thmtools,thm-restate}

\makeatletter
\providecommand{\theHexample}{\theHsection.\arabic{example}}
\providecommand{\theHtheorem}{\theHsection.\arabic{theorem}}
\providecommand{\theHlemma}{\theHsection.\arabic{lemma}}
\providecommand{\theHdefinition}{\theHsection.\arabic{definition}}
\makeatother

\usepackage[nameinlink,noabbrev,capitalise]{cleveref}

\usepackage[framemethod=TikZ]{mdframed}
\mdfdefinestyle{theoremstyle}{%
linecolor=black,linewidth=.75pt,%
innertopmargin=0pt,
innerbottommargin=\topskip,
skipabove=\topskip
skipbelow=\topskip
} 
\mdtheorem[style=theoremstyle]{problemstatement}{Problem}
\Crefname{problemstatement}{Problem}{Problems}
\crefname{problemstatement}{Problem}{Problems}

\mdtheorem[style=theoremstyle]{assumption}{Assumption}
\Crefname{assumption}{Assumption}{Assumptions}
\crefname{assumption}{Assumption}{Assumptions}

\mdtheorem[style=theoremstyle]{observation}{Observation}
\Crefname{observation}{Observation}{Observations}
\crefname{observation}{Observation}{Observations}

\theoremstyle{remark}

\input{macros}

\title{Multiobjective Preexpectation Reasoning for Probabilistic Programs}

\author{Lena Verscht}
\orcid{0000-0001-6823-7918}
\affiliation{
    \institution{Saarland University, Saarland Informatics Campus}
    \city{Saarbrücken}
    \country{Germany}
}
\affiliation{
    \institution{RWTH Aachen University}
    \city{Aachen}
    \country{Germany}
}
\email{lverscht@cs.uni-saarland.de}

\author{Hannah Mertens}
\orcid{0009-0009-6815-3285}
\affiliation{
    \institution{RWTH Aachen University}
    \city{Aachen}
    \country{Germany}
}

\author{Kevin Batz}
\orcid{0000-0001-8705-2564}
\affiliation{%
    \institution{Cornell University}
    \city{Ithaca, NY}
    \country{USA}
}

\author{Sebastian Junges}
\orcid{0000-0003-0978-8466}
\affiliation{
    \institution{Radboud University}
    \city{Nijmegen}
    \country{The Netherlands}
}

\author{Benjamin Lucien Kaminski}
\orcid{0000-0001-5185-2324}
\affiliation{
    \institution{Saarland University, Saarland Informatics Campus}
    \city{Saarbrücken}
    \country{Germany}
}
\affiliation{
    \institution{University College London}
    \city{London}
    \country{UK}
}

\author{Joost-Pieter Katoen}
\orcid{0000-0002-6143-1926}
\affiliation{%
    \institution{RWTH Aachen University}
    \city{Aachen}
    \country{Germany}
}

\begin{document}

\makeatletter
\def\theHexample{\theexample}
\def\theHdefinition{\thedefinition}
\def\theHlemma{\thelemma}
\def\theHtheorem{\thetheorem}
\makeatother

\begin{abstract}
Probabilistic programs with nondeterminism model planning problems in which a strategy resolves the nondeterminism to optimize an expected outcome.
We study the \emph{multiobjective} setting, optimizing several outcomes at once along a Pareto front, and provide a deductive, program-level account of strategy synthesis.
Its core is a \emph{multiobjective preexpectation transformer} mapping a tuple of postexpectations to the set of simultaneously achievable values, an element of the convex Hoare powerdomain.
It conservatively extends weakest preexpectations and lifts standard loop rules.
We develop rules to synthesize witnessing strategies as \emph{mixed determinizations} that randomize over non-probabilistic determinizations.
We prove the transformer and synthesis rules sound against an operational MDP semantics, without requiring a finite state space:
our approach can be seen as a symbolic approach --- at program level --- for multiobjective optimization over infinite MDPs.
We demonstrate our machinery using various case studies.

\keywords{First keyword, Second keyword, Another keyword}
\end{abstract}

\maketitle

\section{Introduction}
\label{sec:intro}

Imperative programs featuring both (1) coin flips and (2) nondeterministic choice are a powerful formalism for describing control and planning problems~\cite{morgan1996probabilistic,agentmodels,zilberstein2025demonic}.
A strategy (or controller) prescribes how the nondeterminism is resolved.
Classical control and planning problems ask for a strategy that optimizes a \emph{single quantity}, e.g.\ the expected fuel level at the end of execution, or the probability that the variable \texttt{success} is true upon termination.
This perspective has been generalized to a \emph{multiobjective} setting, where the goal is to maximize \emph{several quantities simultaneously} --- for instance, to optimize both the success probability and the remaining fuel \cite{roijer2017multi}.
Optimizing several quantities simultaneously generally inflicts trade-offs~\cite{white1982multi,chatterjee2006mdp}: improving one objective may only be possible at the expense of degrading another.
Such trade-offs are captured by what is called a \emph{Pareto front}: the set of objective-value combinations that are achievable by some strategy and that are optimal in the sense that no objective can be improved without degrading another.
In this paper, we develop a probabilistic predicate-transformer-style approach to capture the Pareto front and synthesize multiobjective strategies for probabilistic programs.

Our predicate transformer builds upon classical predicate transformers for probabilistic programs~\cite{kaminski2019advanced,mciver2005abstraction,kozen85probabilistic}.
Given a quantity $\post$ (in this setting often called a \emph{postexpectation}) and a probabilistic program $\program$, the outcome $\wp{\program}{\post}$ of the probabilistic predicate transformer \wpsymbol is another quantity (often called preexpectation) that captures the expected value that $\post$ has after execution of $\program$.
We cast multiobjective strategy synthesis as \emph{program refinement}~\cite{batz2024programmatic}: given a probabilistic program $\program$ with nondeterministic choices and a specification --- objective quantities (postexpectations) $\post_1,\dots,\post_n$ with desired lower bounds $g_1,\dots,g_n$ --- we seek a determinization $\program'$ that resolves the nondeterminism and meets the specification:
\begin{quote}
\emph{Is there a determinization $C'$ of $C$ such that $g_i \leq \wp{\program'}{\post_i}$ for all $1 \leq i \leq n$?}
\end{quote}
We formalize this question in \Cref{sec:problem-mop}, present an approach to answering it and, when the answer is yes, show how to construct $\program'$.
For a single objective, an optimal refinement can always be taken \emph{deterministic} (provided an optimal refinement exists at all)~\cite{batz2024programmatic}; we call it a \emph{determinization}.
With multiple objectives, however, (1) instead of a single optimal determinization we must consider multiple Pareto optimal determinizations, and (2) hitting a trade-off may require \emph{mixing} different determinizations, i.e., constructing a distribution over such determinizations, which we encode in $\program'$ with initial coin flips.

At the heart of our approach is a \emph{multiobjective preexpectation transformer}, which conservatively extends weakest preexpectations~\cite{morgan1996probabilistic} at $n=1$.
Where the classical transformer maps a postexpectation to a single preexpectation, ours maps a tuple $\post_1,\dots,\post_n$ to the set of objective values achievable by refinements of $\program$.
Intuitively, these sets describe all the points under the Pareto front. 
More formally, these sets live in the \emph{convex Hoare powerdomain}~\cite{abramsky1995domain}: %
convexity captures mixing, the Hoare (downward closed) structure captures that we seek lower bounds and that we thus describe the area \emph{under} the Pareto front.
The multiobjective preexpectation transformer is well-defined, monotone, and $\omega$-continuous, thus preexpectations of loops exist as least fixed points. 

The transformer tells us \emph{which} trade-offs are achievable; we further show how to \emph{construct} a determinization that realizes a chosen one.
The idea is to reduce back to the single-objective case.
To target a particular trade-off, we combine the objectives into a single weighted one---valuing objective $\post_i$ by an amount $\weight_i$---and synthesize a determinization for the resulting postexpectation $\sum_i \weight_i \post_i$.
This is an ordinary single-objective query, so we can hand it to existing programmatic strategy synthesis~\cite{batz2024programmatic}.
Varying the weight vectors $\weight$ traces out different trade-offs, and because the achievable sets are \emph{convex}, they are recovered in full from the weighted-sum preexpectations---this reduction is known as \emph{scalarization}.
It is lossless for \emph{describing} the front, but not for \emph{targeting} a prescribed point on it (\Cref{sec:mop-wp}).
Scalarization also underlies the recent method of \citet{watanabe2026posterior}, which verifies convex safety specifications on posterior distributions by refining convex over-approximations of the achievable set; synthesizing a witnessing strategy is left open there, and is precisely what our refinement rules provide.
    
Finally, we connect our calculus to an operational semantics.
A probabilistic program induces a (possibly countably infinite-state) Markov decision process, and the achievable trade-offs have a direct operational reading: the expected-reward vectors achievable by its strategies.
Since suprema in infinite-state MDPs need not be attained, we work with the \emph{almost achievable} vectors --- those approached arbitrarily closely by some strategy.
We prove that our transformer computes \emph{exactly} this set; it is the denotational counterpart of a \emph{generalized Bellman operator} on convex sets of reward vectors.
Crucially, this correspondence holds for countable-state, finite-action MDPs, and thus without the finite-state restriction underlying seminal work on multiobjective MDP %
verification~\cite{chen2013stochastic,barret2008learning}.

\paragraph{Contributions}
In summary, this paper makes the following contributions:
\begin{itemize}
  \item We introduce a \emph{multiobjective preexpectation transformer} on the convex Hoare powerdomain---well-defined, monotone, $\omega$-continuous, and a conservative extension of weakest preexpectations (\Cref{sec:mop}).
  \item Building on this transformer, we lift invariant-based loop reasoning to the multiobjective setting, for both upper and lower bounds (\Cref{sec:loops}).
  \item We discuss when it is possible to synthesize a determinization realizing a given specification, and show how to do so (\Cref{sec:mop-wp}).
  \item We establish an \emph{exact} correspondence between the transformer and a generalized Bellman operator on the operational semantics, without assuming finite state (\Cref{sec:mdp,sec:mop-mdp}).
\end{itemize}
We demonstrate our machinery on three case studies (\Cref{sec:examples}).

\section{Overview}
\label{sec:overview}

\begin{figure}[t]
    \centering

    \begin{subfigure}[t]{0.32\textwidth}
        \centering
\begin{lstlisting}[mathescape,basicstyle=\footnotesize\ttfamily]
$\ASSIGN{p}{0} \fatsemi \ASSIGN{f}{0} \fatsemi \ASSIGN{t}{g} \fatsemi$
$\WHILE{f=0 \land p<g} \{$
  $\ASSIGN{t}{t \monus 1} \fatsemi$
  $\GCFSTOPEN{~\boldsymbol{\blue{\trueBold}}}{}$
    $\gray{\text{\% fast but risky}}$
    $\gray{\text{\% (maybe chosen)}}$
    $\ASSIGN{p}{p+2} \fatsemi$ 
    $\PCHOICE{\ASSIGN{f}{1}}{0.3}{\ASSIGN{f}{0}}$
  $\GCSECCLOSE \GCSECOPEN{~\boldsymbol{\blue{\trueBold}}}{}$
    $\gray{\text{\% slow but safe}}$
    $\gray{\text{\% (maybe chosen)}}$
    $\ASSIGN{p}{p+1} \fatsemi$
    $\PCHOICE{\ASSIGN{f}{1}}{0.05}{\ASSIGN{f}{0}}$
  $\GCSECCLOSE$
$\}$
\end{lstlisting}%
        \caption{$\program_{\text{robot}}$}
        \label{fig:robot_program_nondet}
    \end{subfigure}
    \hfill
    \begin{subfigure}[t]{0.32\textwidth}
        \centering
\begin{lstlisting}[mathescape,basicstyle=\footnotesize\ttfamily]
$\ASSIGN{p}{0} \fatsemi \ASSIGN{f}{0} \fatsemi \ASSIGN{t}{g} \fatsemi$
$\WHILE{f=0 \land p<g} \{$
  $\ASSIGN{t}{t \monus 1} \fatsemi$
  $\GCFSTOPEN{~\boldsymbol{\red{p \geq g \monus 2}}}$ 
    $\gray{\text{\% fast but risky}}$
    $\gray{\text{\% (sometimes chosen)}}$
    $\ASSIGN{p}{p+2} \fatsemi$ 
    $\PCHOICE{\ASSIGN{f}{1}}{0.3}{\ASSIGN{f}{0}}$
  $\GCSECCLOSE \GCSECOPEN{~\boldsymbol{\red{p < g \monus 2}}}{}$
    $\gray{\text{\% slow but safe}}$
    $\gray{\text{\% (sometimes chosen)}}$
    $\ASSIGN{p}{p+1} \fatsemi$
    $\PCHOICE{\ASSIGN{f}{1}}{0.05}{\ASSIGN{f}{0}}$
  $\GCSECCLOSE$
$\}$
\end{lstlisting}
        \caption{$\program_{\text{robot}}^{\iverson{p \geq g}}$}
        \label{fig:robot_program_det}
    \end{subfigure}
    \hfill
\begin{subfigure}[t]{0.32\textwidth}
        \centering
\begin{lstlisting}[mathescape,basicstyle=\footnotesize\ttfamily]
$\ASSIGN{p}{0} \fatsemi \ASSIGN{f}{0} \fatsemi \ASSIGN{t}{g} \fatsemi$
$\WHILE{f=0 \land p<g} \{$
  $\ASSIGN{t}{t \monus 1} \fatsemi$
  $\GCFSTOPEN{~\boldsymbol{\red{\trueBold}}}$ 
    $\gray{\text{\% fast but risky}}$
    $\gray{\text{\% (always chosen)}}$
    $\ASSIGN{p}{p+2} \fatsemi$ 
    $\PCHOICE{\ASSIGN{f}{1}}{0.3}{\ASSIGN{f}{0}}$
  $\GCSECCLOSE \GCSECOPEN{~\boldsymbol{\red{\falseBold}}}{}$
    $\gray{\text{\% slow but safe}}$
    $\gray{\text{\% (never chosen)}}$
    $\ASSIGN{p}{p+1} \fatsemi$
    $\PCHOICE{\ASSIGN{f}{1}}{0.05}{\ASSIGN{f}{0}}$
  $\GCSECCLOSE$
$\}$
\end{lstlisting}
        \caption{$\program_{\text{robot}}^{t}$}
        \label{fig:robot_program_det2}
    \end{subfigure}
    
    \caption{Nondeterministic program $\program_{\text{robot}}$ as well as two determinizations maximizing $\iverson{p \geq g}$ and $t$, respectively.}
    \label{fig:robot_program}
    \Description{
    Three program listings are shown side by side.
    The left listing is the original nondeterministic robot program.
    Inside a loop, the robot chooses nondeterministically between a fast movement, which advances by two positions with a higher probability of failure, and a slow movement, which advances by one position with a lower probability of failure.

    The middle listing is a determinization maximizing the probability of reaching the goal.
    The choice has been replaced by a deterministic guard that selects the slow action until the robot is within two positions of the goal, after which it always selects the fast action.

    The right listing is a determinization maximizing the expected remaining time.
    The nondeterministic choice has been resolved to always execute the fast movement.
    }
\end{figure}

\noindent%
Consider a robot that should travel to a goal position $g$, under a time budget and a risk of breaking down along the way.
At each step, it spends one unit of time and chooses in which manner to move: a \emph{fast} step covers more ground but carries a higher chance of breakdown, whereas a \emph{slow} step is safer but makes less progress.
The robot continues to move until it either reaches the goal or breaks down.
Two quantities are of interest at the end of a run: the success probability of the robot reaching the goal, and the remaining time budget.

We model this scenario as the probabilistic program $\program_{\text{robot}}$ in \Cref{fig:robot_program_nondet}.
The variable $p$ holds the robot's position, $g$ the goal position, $f$ is a flag indicating failure (being $0$ while the robot is intact and $1$ once it breaks down), and $t$ tracks the remaining time budget, initialized with the initial distance to the goal.
The \texttt{while} loop runs as long as no failure occurred and the robot is still short of the goal ($f=0 \land p < g$).

In each loop iteration, the time budget is first decremented, using the monus operation $\monus$ so that $t$ never drops below 0.
Because the loop certainly terminates after at most $g$ iterations, this cutoff never applies.
The robot then faces a choice between two branches: 
The fast branch advances its position by $2$ but causes it to break down with probability 0.3.
The slow branch advances by~$1$ but breakdown occurs only with probability 0.05.
Both branches are guarded by $\true$ (thus both enabled in every iteration), so \emph{either} branch may be taken in \emph{every} state: the program leaves this choice \emph{nondeterministic}, modeling a decision that a strategy has yet to resolve.

The question we ask is how to resolve this nondeterminism so as to \emph{maximize both objectives simultaneously} --- the probability of reaching the goal \emph{and} the remaining time budget $t$ upon termination.

\subsection{State of the Art: Maximizing Single Objectives}
Let us first consider a simpler question: 
Given all resolutions of the nondeterminism, what is the maximal probability of terminating the loop by reaching the goal position, i.e.\ with $p \geq g$?
This question is well-studied for probabilistic programs.

First, to compute this probability, we use the angelic weakest preexpectation of $\program_{\text{robot}}$ with respect to the postexpectation $\iverson{p \geq g}$ (a Boolean predicate evaluating to 0 or 1) \cite{morgan1996probabilistic}.
It describes exactly this maximal probability as a closed-form expression, $\wp{\program_{\text{robot}}}{\iverson{p \geq g}} = 0.95^{g \monus 2}$, and can be computed mechanically by deductive verification \cite{caesar}.
The optimum is attained by taking the \emph{safe} branch up until we are at most two steps from the target, and then finishing with a fast step, since then the probability of reaching the goal is 1.
This strategy dominates the one taking always the safe branch, as there the probability of reaching the goal from a position two steps away is just 0.95.
The weakest preexpectation calculus yields this optimal \emph{value}, but does not tell us how to \emph{resolve the nondeterminism} to achieve it.

Second, to determine such a strategy that attains the optimum, we use  \emph{programmatic strategy synthesis}~\cite{batz2024programmatic}, which resolves the nondeterminism by \emph{refining} the program itself.
Rather than describing a strategy externally, the refinement strengthens the guards of the nondeterministic choices, turning the original program into a deterministic one --- a \emph{determinization} --- that represents the strategy.
Programs thus act as strategies.
For $\program_{\text{robot}}$, this yields the determinization $\program_{\text{robot}}^{\iverson{p \geq g}}$ in \Cref{fig:robot_program_det}: the fast branch's guard becomes $p \geq g \monus 2$ and the safe branch's becomes $p < g \monus 2$, so the program deterministically takes the safe branch in all but the last iteration and attains the optimal reach-probability $0.95^{g \monus 2}$.

Third, the operational semantics of $\program_{\text{robot}}$ is a countably infinite-state Markov decision process (MDP) such that $\wp{\program_{\text{robot}}}{\iverson{p \geq g}}$ equals the maximal expected reward its strategies achieve in that MDP, and each determinization corresponds to a single memoryless deterministic (MD) strategy.
For our objectives, this correspondence is well-behaved --- the optimal reach-probability $0.95^{g \monus 2}$ is \emph{achieved} by the earlier determinization --- though in general the supremum over strategies in an infinite MDP need not be attained.

The same approach applies to the other objective:
the maximal expected remaining time budget is $\wp{\program_{\text{robot}}}{t} = g - \frac{10}{3} \cdot \left(1-0.7^{\left\lceil g/2 \right\rceil}\right)$,
taking into account the probability of ending early.
This optimum is attained by the determinization $\program_{\text{robot}}^{t}$ (\Cref{fig:robot_program_det2}) that always takes the fast branch.

\subsection{The Gap: From Single Objectives to Multiple Objectives}

We now consider two simultaneous objectives, each maximized by a \emph{different} determinization: $\program_{\text{robot}}^{\iverson{p \geq g}}$ takes the safe branch up until the last step, while $\program_{\text{robot}}^{t}$ takes the fast branch throughout.
These strategies conflict --- the safe route maximizes the reach-probability but spends all of the time budget, whereas the fast route preserves time but is far likelier to break down.
No single strategy attains both optima at once: 
we cannot achieve $\target_{\text{opt}}$ with remaining time $g - \frac{1-0.7^{\left\lceil g/2 \right\rceil}}{0.3}$ \emph{and} reach-probability $0.95^{g \monus 2}$ simultaneously.
This raises the question of which value pairs $(x, y)$ \emph{are} simultaneously achievable, and by which strategies.
Answering this is well understood for finite-state MDPs (e.g.\ \cite{etessami2008multi,chatterjee2006mdp}), but no program-level calculus exists to reason about multiple expectations simultaneously.

\begin{figure}[t]
\begin{tikzpicture}[
    x={4.3cm/3.2}, %
    y={4.3cm/1.1},
    >=stealth
]

\coordinate (V1) at (1.290125,0.857375);
\coordinate (V2) at (2.635,0.665);
\coordinate (V3) at (2.81,0.49);

\filldraw[
    fill=mygreen!40,
    draw=mygreen,
    line join=round
]
    (0,0)
    -- (0,0.857375)
    -- (V1)
    -- (V2)
    -- (V3)
    -- (2.81,0)
    -- cycle;

\draw[->] (0,0) -- (3.2,0) node[below] {$t$};
\draw[->] (0,0) -- (0,1.1) node[left] {$\iverson{p\geq g}$};
\foreach \x in {1,2,3}
    \draw (\x,0.02) -- (\x,-0.02) node[below] {\x};

\foreach \y/\label in {{0.5},{1}}
    \draw (0.05,\y) -- (-0.05,\y) node[left] {\label};

\fill[myred] (V1) circle (2.2pt);
\node[myred] at ($(V1)+(-0,-0.1)$) {$V_1$};

\fill[myred] (V2) circle (2.2pt);
\node[myred] at ($(V2)+(-0.08,-0.07)$) {$V_2$};

\fill[myred] (V3) circle (2.2pt);
\node[myred] at ($(V3)+(-0.15,-0.03)$) {$V_3$};

\draw[draw=myred,thick] (V1)
    -- (V2)
    -- (V3);

\coordinate (P) at (2.81,0.857375);
\fill[myred] (P) circle (2.2pt);
\node[myred] at ($(P)+(0,-0.08)$) {$\target_{\text{opt}}$};

\coordinate (P) at ($(V1)!0.4!(V2)$);
\fill[myred] (P) circle (2.2pt);
\node[myred] at ($(P)+(0,-0.08)$) {$\target$};

\end{tikzpicture}
\caption{Illustration of Pareto front (in red) for the program $\program_{\text{robot}}$ with respect to maximizing $t$ and $\iverson{p \geq g}$ for the initial state $\pstate$ with $\pstate(g) = 5$, including a target point $\target$ for strategy synthesis and the unachievable $\target_{\text{opt}}$.
Throughout, we write value pairs in the order $\vvechorizon{t}{\iverson{p \geq g}}$.
}
\label{fig:robot-pareto}
\Description{
A two-dimensional scatter plot illustrating the Pareto front for the robot example.
The horizontal axis represents the expected remaining time, and the vertical axis represents the probability of reaching the goal.

Three Pareto optimal corner points, labelled $V_1$, $V_2$, and $V_3$, lie on a decreasing piecewise-linear red curve representing the Pareto front.
A point labelled $\target$ lies on the front between $V_1$ and $V_2$, illustrating a target trade-off for strategy synthesis.
Another point labelled $\target_{\text{opt}}$ lies above the Pareto front, indicating an unattainable combination of simultaneously optimal objective values.
}
\end{figure}

In the multiobjective setting, there is thus no single optimal value; instead, several outcomes are optimal in incomparable ways.
The safe determinization achieves the corner $V_1 = (g-(20-19 \cdot 0.95^{g \monus 2}),\, 0.95^{g \monus 2})$ and the fast determinization the corner $V_3 = (g - \frac{1-0.7^{\left\lceil g/2 \right\rceil}}{0.3},\, 0.7^{\lceil g/2 \rceil-1})$; neither dominates the other, since improving one coordinate forces the other down.
Such mutually incomparable optima are called \emph{Pareto optimal}, and together they form the \emph{Pareto front}.
\Cref{fig:robot-pareto} shows the Pareto front for $\program_{\text{robot}}$ with respect to objectives $t$ and $\iverson{p \geq g}$;
every point on or below it is simultaneously achievable, and the front itself is the boundary of what is possible.
The third point $V_2$ is achieved by a determinization which picks the fast branch twice.

We set out to answer three questions.
The first concerns \emph{what} can be achieved:
\begin{quote}
\textbf{(Q1)}\ \emph{Given a program $\program$ and postexpectations $\post_1,\dots,\post_n$, what is the multiobjective preexpectation of $\program$ --- i.e.\ the set of simultaneously achievable value tuples --- and can it be computed by a mechanizable, program-level calculus?}
\end{quote}
Answering \textbf{(Q1)} yields a multiobjective preexpectation transformer, which provides the Pareto front, but not the strategies achieving it.
As in the single-objective case, we describe strategies by refinement:
\begin{quote}
\textbf{(Q2)}\ \emph{Given a program $\program$, postexpectations $\post_1,\dots,\post_n$, and an achievable target, how do we determine a refinement of $\program$ that achieves it?}
\end{quote}
Finally, weakest preexpectation reasoning has long been tied to operational MDP semantics. Multiobjective MDPs are themselves richly studied --- but almost exclusively in the \emph{finite}-state setting, whereas probabilistic programs induce \emph{infinite}-state MDPs. We therefore ask:
\begin{quote}
\textbf{(Q3)}\ \emph{How do the multiobjective preexpectation transformer and its refinements relate to, and extend, the theory of (possibly) infinite-state multiobjective MDPs?}
\end{quote}

\subsection{The Pareto Front via the Multiobjective Preexpectation Transformer}

We answer \textbf{(Q1)} with a weakest preexpectation-style transformer for multiple objectives, denoted $\mopsymbol$, that operates at the level of the program $\program$.
Where the single-objective transformer maps a postexpectation to a preexpectation, $\mopsymbol$ maps a tuple of postexpectations $\post_1,\dots,\post_n$ to a \emph{multiobjective preexpectation}: the set of value tuples simultaneously achievable by refinements of $\program$.
Such a set is downward closed, convex, and Scott closed --- an element of the \emph{convex Hoare powerdomain}. %
Each closure has a reading: downward-closure because we seek lower bounds, so achieving a tuple means achieving every tuple below it; convexity because \emph{mixing} refinements achieves convex combinations of their outcomes; and Scott-closure because it places the Pareto front --- which strategies in general only approach --- into the set, giving the domain completeness.

\begin{wrapfigure}[18]{r}{0.4\textwidth}
    \vspace{-1em}
\begin{lstlisting}[mathescape]
  $\annotate{\dwc{\conv{\set{\vvechorizon{1}{0},\vvechorizon{0}{1}}}}}$
  $\GCFSTOPEN{~\true}$
      $\annotate{\dwcset{\vvechorizon{1}{0}}}$
      $\ASSIGN{x}{1} \fatsemi \ASSIGN{y}{0}$
      $\annotate{\dwcset{\vvechorizon{x}{y}}}$
  $\GCSECCLOSE \GCSECOPEN{\true}$
      $\annotate{\dwcset{\vvechorizon{0}{1}}}$
      $\ASSIGN{x}{0} \fatsemi \ASSIGN{y}{1}$
      $\annotate{\dwcset{\vvechorizon{x}{y}}}$
  $\GCSECCLOSE$
  $\annotate{\dwcset{\vvechorizon{x}{y}}}$
\end{lstlisting}
\caption{Calculus annotations for a simple program with respect to the multiobjective expectation $\dwc{\set{\vvechorizon{x}{y}}}$. Read bottom to top.}
\label{fig:mop-annotations}
\Description{
A proof-style diagram illustrating the computation of the multiobjective preexpectation transformer.
The bottom annotation is the postexpectation $\dwc{\set{\vvechorizon{x}{y}}}$.
The two branches of a nondeterministic choice independently produce the singleton sets $\dwc{\set{\vvechorizon{1}{0}}}$ and $\dwc{\set{\vvechorizon{0}{1}}}$.
At the top, the transformer computes the downward closure of the convex hull of these two points, representing all convex combinations of the two deterministic outcomes together with all dominated points.
}
\end{wrapfigure}

\Cref{fig:mop-annotations} illustrates the transformer on a minimal program that nondeterministically sets $(x,y)$ to either $(1,0)$ or $(0,1)$; the annotations, read bottom to top, show $\mopsymbol$ applied to the multiobjective expectation $\dwc{\set{\vvechorizon{x}{y}}} \coloneqq \mylambda \pstate \dwc{\set{\vvechorizon{x(\pstate)}{y(\pstate)}}}$.
Each branch is deterministic, so its preexpectation is a single tuple: the first branch yields $\dwc{\set{\vvechorizon{1}{0}}}$, the second $\dwc{\set{\vvechorizon{0}{1}}}$.
At the nondeterministic choice, $\mopsymbol$ takes the \emph{convex closure} of the two: $\dwc{\conv{\set{\vvechorizon{1}{0},\vvechorizon{0}{1}}}}$.
This is exactly the set of tuples achievable by \emph{mixing} the two branches --- the segment between $(1,0)$ and $(0,1)$ and everything below it --- capturing that a coin flip between the branches realizes any convex combination of their outcomes.

We have thus answered \textbf{(Q1)}: the multiobjective preexpectation of $\program$ with respect to $\post_1,\dots,\post_n$ is $\mop{\program}{\dwcset{\vvvechorizon{\post_1}{\dots}{\post_n}}}$, whose maximal elements form the Pareto front.

\subsection{Optimal Determinizations for Multiple Expectations: Existence and Synthesis}
\label{sec:overview-optimal-determ}

We now turn to \textbf{(Q2)}: given an achievable target, calculate a refinement realizing it.
Suppose we aim for a trade-off strictly between the corners $V_1$ and $V_2$ of the robot's Pareto front, say $\target = \vvechorizon{1.83}{0.78}$, see \Cref{fig:robot-pareto}.
This is already different from the single-objective setting: we are not looking for a determinization achieving the one unique optimal value, but instead pick one Pareto optimal value that we aim to achieve.
More fundamentally, a determinization may no longer suffice at all.
Consider again the two-branch program of \Cref{fig:mop-annotations}: it admits exactly two deterministic refinements, realizing $\vvechorizon{1}{0}$ and $\vvechorizon{0}{1}$.
Its Pareto front is the segment between them, so every trade-off other than these two endpoints --- say $\vvechorizon{\tfrac{1}{2}}{\tfrac{1}{2}}$ --- is achievable \emph{only} by flipping a coin between the two refinements.
The robot exhibits the same phenomenon at larger scale: its deterministic refinements realize finitely many points, the Pareto optimal ones being $V_1$, $V_2$, and $V_3$, and our target $\target$ lies strictly between $V_1$ and $V_2$.
Reaching it requires \emph{mixing}. The refinement our method produces is
\[
    \PCHOICE{\program_{\text{robot}}^{\iverson{p \geq g}}}{0.6}{\program_{\text{robot}}^{V_2}},
\]
an initial coin flip choosing, with probability $0.6$, the safe determinization achieving $V_1$, and otherwise the determinization $\program_{\text{robot}}^{V_2}$ achieving $V_2$. 

The generality of the exemplified construction above follows from the fact that vertices of Pareto fronts are realized by deterministic strategies.
The simple shape of the robot's multiobjective preexpectation makes this clear:
it is the downward closure of the convex hull of just the three pure corner strategies (\Cref{fig:robot-pareto}).
On the other hand, in the case studies in \Cref{sec:mop-wp,sec:examples} we will see Pareto fronts with uncountably many vertices.

To capture the complete front, we generalize programmatic strategy synthesis \cite{batz2024programmatic}: a \emph{mixed determinization} flips an initial biased coin and then runs the selected deterministic refinement --- itself again a probabilistic program.
Our method constructs such a programmatic strategy by refining the nondeterministic choices into finitely many deterministic refinements and mixing them, as seen in the example above.
Such a mixed determinization need not always exist.
We give a sufficient criterion: if each scalarized objective $\weight \cdot \post$ admits an optimal determinization, and the values these realize form a Scott-closed set, then every achievable point is attained by a mixture of \emph{finitely many} of them.
What remains difficult is finding the weights: a weight vector pins down the target only when it is an \emph{exposed} point of the front, and extreme points need not be exposed.
This answers \textbf{(Q2)}.

\subsection{Relation to MDPs}

The developments above have a direct operational reading, answering \textbf{(Q3)}.
A probabilistic program denotes a (possibly infinite-state) MDP; its deterministic refinements correspond to memoryless deterministic strategies, and our mixed determinizations to distributions over them.
We prove that $\mopsymbol$ is \emph{sound} with respect to this semantics: the multiobjective preexpectation coincides with the set of \emph{almost} achievable expected reward vectors of the MDP --- those approached arbitrarily closely by some strategy --- whose maximal elements are its Pareto optimal points.
Thus, $\mopsymbol$ is the denotational counterpart of a generalized Bellman operator on convex sets of reward vectors~\cite{chen2013stochastic}.
This correspondence holds regardless of whether the state space is finite --- in contrast to the multiobjective MDP literature, which is developed almost exclusively for finite-state models.

\section{Probabilistic Programs}
\label{sec:pp}

Let $\Vars = \{x,y,z,\ldots\}$ be a countably infinite set of \emph{(program) variables} that can take values in $\VarsDom$. %
The countably infinite set of \emph{(program) states} is given by
\[
   \States \eeq \{ \pstate \colon \Vars \to \VarsDom ~\mid~  \text{$\pstate(x)=0$ for all but finitely many $x\in\Vars$}\}~.
\]
We consider programs in a simple \emph{nondeterministic probabilistic guarded command language} ($\pGCL$):
\begin{align*}
    \program \qquad::=\qquad& \SKIP \tag{effectless program}\\
    &\mid \ASSIGN{x}{\aexpr} \tag{variable assignment} \\
    &\mid \COMPOSE{\program}{\program} \tag{sequential composition}\\
    &\mid \GC{\guard_1}{\program}{\guard_2}{\program} \tag{guarded nondeterministic choice} \\
    &\mid \PCHOICE{\program}{\pexpr}{\program} \tag{probabilistic choice} \\
    &\mid \WHILEDO{\guard}{\program} \tag{loop}
\end{align*}
where $\aexpr\colon \States \to \VarsDom$ is an \emph{expression},
$\guard_1, \guard_2$ and $\guard$ are \emph{predicates} of type $\States \to \{\true,\false \}$,
and $\pexpr\colon \States \to [0,1]$ is a \emph{probability expression}.
For the guarded choice, we require that $\guard_1 \vee \guard_2 = \true$, i.e.\ at least one guard is enabled.
We sometimes write $\pstate \models \guard$ instead of $\pstate \in \guard$.
A predicate $\guard$ is \emph{valid}, denoted $\entails \guard$, if $\pstate \models \guard$ for every $\pstate$, and it is called \emph{unsatisfiable} if $\neg\guard$ is valid.
We call a program $\program \in \pGCL$ \emph{deterministic} if for all $\GC{\guard_1}{\program_1'}{\guard_2}{\program_2'}$ occurring in $\program$, $\guard_1 \wedge \guard_2$ is unsatisfiable.

Let us briefly review each $\pGCL$ construct.
$\SKIP$ does nothing.
$\ASSIGN{x}{\aexpr}$ evaluates expression $\aexpr$ in the current state and assigns the resulting value to variable $x$.
$\COMPOSE{\program_1}{\program_2}$ first executes $\program_1$, and then --- if $\program_1$ terminates --- $\program_2$.
The guarded choice $\GC{\guard_1}{\program_1}{\guard_2}{\program_2}$ first checks which of the guards $\guard_1$ and $\guard_2$ hold in the current state.
If only one guard, say $\guard_1$, holds, then the guarded choice \emph{deterministically} executes $\program_1$.
If \emph{both} guards hold, then the guarded choice \emph{nondeterministically} executes \emph{either} $\program_1$ \emph{or} $\program_2$.
Guarded choices with more than two guards are merely syntactic sugar for nested binary guarded choices, and thus do not add expressive power; we use this notation in examples for convenience. %
The probabilistic choice $\PCHOICE{\program_1}{\pexpr}{\program_2}$ introduces randomization: in state $\pstate$, the program $\program_1$ is executed with probability $\pexpr(\pstate)$ and $\program_2$ is executed with the remaining probability $1-\pexpr(\pstate)$.
Standard conditional choice $\ITE{\guard}{\program_1}{\program_2}$ is syntactic sugar for $\GC{\guard}{\program_1}{\neg\guard}{\program_2}$ and, equivalently, for the Dirac probabilistic choice $\PCHOICE{\program_1}{\iverson{\phi}}{\program_2}$.
Finally, the loop $\WHILEDO{\guard}{\program}$ executes the loop body $\program$ as long as $\guard$ holds, which is the only possible source of non-termination.

For a set $\Omega$, we denote the set of all finitely supported probability distributions over $\Omega$ as $\distr{\Omega}$
and the support of $\determDist \in \distr{\Omega}$ by $\supp{\determDist}$.
For a finite family of programs $\set{\program_i}_{i \in \set{1,\dots,m}}$ and a finitely supported distribution $\determDist$ over $\set{1,\dots,m}$, we use syntactic sugar
$\PCHOICEFINITE{\program_i}{}{\determDist(\program_i)}{}$ to denote the program which executes $\program_i$ with probability $\determDist(i)$.
This is achievable in the language above by a finite nesting of probabilistic choices.

\subsection{Program Determinization}
\label{sec:determ}

In the spirit of \cite{batz2024programmatic}, we introduce a notion of \emph{determinizations} for \pGCL programs.

\begin{figure}[t]
\small
\centering
\begin{gather*}
	\frac{\program \text{ deterministic}}{\program \ddeterm \program}
	\qquad\qquad
	\frac{\program_1' \ddeterm \program_1 \qquad \program_2' \ddeterm \program_2}{\COMPOSE{\program_1'}{\program_2'} \hqdeterm \COMPOSE{\program_1}{\program_2}} 
	\\[.75em]
	\frac{\guard_1' \eentails \guard_1 \qquad \guard_2' \eentails \guard_2 \qquad {\entails}~ \guard_1' \xor \guard_2' \qquad %
		\program_1' \ddeterm \program_1  \qquad \program_2' \ddeterm \program_2}{\GC{\guard_1'}{\program_1'}{\guard_2'}{\program_2'} \qdeterm \GC{\guard_1}{\program_1}{\guard_2}{\program_2}} 
    \\[.75em]
	\frac{\program_1' \determ \program_1 \qquad \program_2' \determ \program_2}{\PCHOICE{\program_1'}{p}{\program_2'} \hqdeterm \PCHOICE{\program_1}{p}{\program_2}}
	\qquad\qquad
	\frac{\program' \determ \program}{\WHILEDO{\guard}{\program'} \!\hqdeterm \WHILEDO{\guard}{\program}}
\end{gather*}
\caption{Rules defining the determinization relation $\determ$. Here $\entails$ denotes \emph{entailment} between predicates, i.e., $\guard \entails \guard'$ if for all states $\sigma$ it holds that $\sigma \models \guard$ implies $\sigma \models \guard'$, and $\guard_1' \xor \guard_2'$ denotes xor.
}
\label{fig:refinementrules}
\Description{
Inference rules defining the program determinization relation.
Rules are given for deterministic programs, sequential composition, guarded nondeterministic choice, probabilistic choice, and while loops.
The guarded-choice rule replaces overlapping guards by mutually exclusive, stronger guards while preserving the enabled behaviour.
}
\end{figure}
\begin{definition}[Determinizations of Programs]
\label{def:determ}
	The \emph{determinization} relation ${\determ} \subseteq \pGCL \times \pGCL$ is the smallest partial order on $\pGCL$ satisfying the rules given in \Cref{fig:refinementrules}.
	If $\program' \determ \program$, then we say that $\program'$ is a \emph{pure} determinization of $\program$.
	The set of pure determinizations is denoted by $\DetermSimple{\program}$.

	Let $\determDist \in \distr{\DetermSimple{\program}}$ be a finitely supported distribution over pure determinizations.
	Then the program 
	\[
		\program' = \PCHOICEFINITE{\program_i}{}{\determDist(\program_i)}{}
	\]
	which executes $\program_i$ with probability $\determDist(\program_i)$ is called a \emph{mixed} determinization of $\program$, denoted by $\program' \determmixed \program$.
	The set of mixed determinizations is denoted by $\DetermMixed{\program}$. %
\end{definition}

If $\program' \determ \program$, then $\program'$ and $\program$ coincide syntactically up to the guards occurring in the guarded choices.
The guards in $\program'$ are necessarily stronger than those in $\program$ as we require them to resolve the nondeterministic choices from $\program$.
This is formalized by the premises $\guard_1' \entails \guard_1$ and $\guard_2' \entails \guard_2$ in the $\determ$-rule for guarded choices.
The premise $\entails \guard_1' \xor \guard_2'$ ensures (1) that $\program'$ does not eliminate \emph{all} choices from some guarded choice in $\program$, i.e.\ $\entails \guard_1' \vee \guard_2'$,
and (2) that $\program'$ is deterministic, i.e.\ $\entails \neg \guard_1' \vee \neg \guard_2'$.
The latter differs from \cite{batz2024programmatic}, where $\program'$ resolves nondeterminism permissive, i.e., only as far as necessary.

Mixed determinizations are the analogue to mixed schedulers in MDPs, being a finitely supported distributions over pure determinizations.
We have already seen in \Cref{sec:overview} that we need randomization to achieve all Pareto optimal points.

\begin{example}[Determinizations]
	Consider a nondeterministic program choosing between setting $x$ to 0 or 1: $\GC{\boldsymbol{\blue{\trueBold}}}{\ASSIGN{x}{0}}{\boldsymbol{\blue{\trueBold}}}{\ASSIGN{x}{1}}$.
	A pure determinization of this program is $\GC{\boldsymbol{\red{x \geq 5}}}{\ASSIGN{x}{0}}{\boldsymbol{\red{x < 5}}}{\ASSIGN{x}{1}}$.
	A mixed determinization of this program is
\begin{lstlisting}[mathescape]
$\phantom{\left[\, \boldsymbol{\red{0.3}} \,\right]\ } \{ \GC{\boldsymbol{\red{x \geq 5}}}{\ASSIGN{x}{0}}{\boldsymbol{\red{x < 5}}}{\ASSIGN{x}{1}} \} $
$\left[\, \boldsymbol{\red{0.3}} \,\right] \{ \GC{\boldsymbol{\red{x \geq 9}}}{\ASSIGN{x}{0}}{\boldsymbol{\red{x < 9}}}{\ASSIGN{x}{1}}\}$,
\end{lstlisting}
	choosing the above pure determinization with probability $0.3$ and another with probability $0.7$.
\end{example}

\subsection{Optimizing Single Expectations}
\label{sec:wp-problem}

One of the most widely studied approaches to reasoning about the behavior of programs is predicate-transformer semantics, originally proposed by \citeauthor{dijkstra1976discipline} for non-probabilistic programs. %
For probabilistic programs, the standard framework is the \emph{weakest preexpectation calculus} of \citet{mciver2005abstraction}.

In this setting, an \emph{expectation} is a function $\post \colon \States \to \PosRealsInf$, where $\PosRealsInf = \PosReals \cup \{\infty\}$, that assigns a non-negative extended real value to each program state.
As is common, we set $0 \cdot \infty = 0$ and $a \cdot \infty = \infty$ for $a > 0$.

\begin{definition}[Expectations]
    The complete lattice of \emph{expectations} is $(\Exp, \sqsubseteq)$, where
    \begin{itemize}
        \item $\Exp = \States \to \PosRealsInf$ is the set of expectations, \qand
        \item $\sqsubseteq$ is the pointwise lifted order on $\PosRealsInf$, i.e.
        \[
            f \ssqsubseteq f' \, \qqiff \, \forall\, \pstate \in \States \colon\quad f(\pstate) \lleq f'(\pstate)~.
        \]
    \end{itemize}
\end{definition}

We denote constant expectations $\mylambda{\pstate} a$ as $a$.
The least element is $0$, the greatest element is $\infty$.
Suprema and infima are given as pointwise liftings of supremum and infimum on $\PosRealsInf$, i.e.\ for $E \subseteq \Exp$, we have
\[
    \bigsqcup E \eeq \mylambda \pstate \sup \{\post(\pstate) \mid \post \in E \}
    \qqand
    \bigsqcap E \eeq \mylambda \pstate \inf \{\post(\pstate) \mid \post \in E \}.
\]

Addition and multiplication as well as logical connectives on expectations are defined pointwise as usual, and substitution is defined as
$
    \post\subst{x}{\ee} \eeq \mylambda \pstate \post\bigl( \substState{\pstate}{x}{\ee} \bigr).
$

\begin{definition}[Weakest Preexpectations]
\label{def:wp}
    The (angelic) weakest preexpectation transformer
    $
        \wpsymbol\colon \Exp \to \Exp
    $
    is defined according to the rules in \Cref{tab:wp}.
\end{definition}

\begin{table}[t]
\renewcommand{\arraystretch}{1.5}
\begin{tabular}{@{\hspace{.5em}}l@{\hspace{2em}}l@{\hspace{.5em}}}
		\toprule\toprule
		$\boldsymbol{\program}$			& $\wp{\program}{\post}$ \\
		\midrule
		$\SKIP$				& $\post$ \\
        $\ASSIGN{x}{\ee}$ & $\post \subst{x}{\ee}$ \\
        $\COMPOSE{\program_1}{\program_2}$ & $\wp{\program_1}{\wp{\program_2}{\post}}$ \\
        $\GC{\guard_1}{\program_1}{\guard_2}{\program_2}$ & $[\guard_1] \cdot \wp{\program_1}{\post} \ssqcup [\guard_2] \cdot \wp{\program_2}{\post}$\\
        $\PCHOICE{\program_1}{\ps}{\program_2}$ & $\ps \cdot \wp{\program_1}{\post} \pplus (1-\ps) \cdot \wp{\program_2}{\post}$ \\
        $\WHILEDO{\guard}{\program'}$ & $\lfp X \mydot \quad [\neg \guard] \cdot \post \pplus [\guard] \cdot \wp{\program'}{X}$ \\
	\bottomrule\bottomrule
\end{tabular}%
\vspace{1em}
\caption{Rules for the $\wpsymbol$ transformer for a postexpectation $\post \in \Exp$.
        $\lfp g\mydot \Phi(g)$ denotes the least fixed point of the characteristic function $\phiWpNo$.}%
\label{tab:wp}
\end{table}

Given a program $\program$ and some \emph{postexpectation} $\post \in \Exp$, the weakest preexpectation $\wp{\program}{\post}$ characterizes the \emph{supremum}\footnote{This is a supremum since we consider the (less common) \emph{angelic} variant of \wpsymbol.
The dual \emph{demonic} variant computes their infimum.}
over all possible expected values of $\post$ after executing $\program$ (where the supremum ranges over all possible resolutions of nondeterminism).
As an example, consider the program $\programrun = \GC{\true}{\ASSIGN{x}{1} \fatsemi \ASSIGN{y}{0}}{\true}{\ASSIGN{x}{0} \fatsemi \ASSIGN{y}{1}}$.
Then $\wp{\programrun}{x} = 1 \sqcup 0 = 1$, since the maximal value of $x$ is 1, which is achieved by choosing the left branch always.
Similarly, $\wp{\programrun}{y} = 0 \sqcup 1 = 1$.

\section{Reasoning about Multiple Expectations}
\label{sec:problem-mop}

\subsection{Notation for Vectors and Various Closures}

We use the following basic conventions:
for a vector $u \in \PosRealsInfVect$, we write $u_i$ for its $i$-th component and allow addition and multiplication with a scalar $r \in \PosRealsInf$ by setting $(u+r)_i = u_i + r$ and $(u \cdot r)_i = u_i \cdot r$. 
For two vectors $u,v \in \PosRealsInfVect$, we denote by $u \lneq v$ that $v$ \emph{dominates} $u$, meaning that $u \leq v$ component-wise and that additionally $u_i < v_i$ in at least one component $i$.

Let $X \subseteq \PosRealsInfVect$.
The \emph{downward closure} of $X$ is given by $\dwc{X} = \{y \in \PosRealsInfVect \mid \exists x \in X \colon y \leq x\}$.
We say that $X$ is downward closed if $\dwc{X} = X$.

The \emph{convex closure} of $X$ is given by
\[
    \conv{X} \eeq \setcomp{ \sum_{i=0}^{k} \weightc_i \cdot x_i }{k \in \Nats, \quad \weightc_i \in [0,1], \quad \sum_{i=0}^{k} \weightc_i = 1,\quad x_i \in X}.
\]
We say that $X$ is convex closed if $\conv{X} = X$. %

Following \cite[Def. 2.3.1.]{abramsky1995domain}, the \emph{Scott closure} of $X$ is given by $\cl{X} = \dwc{\set{\sup D \mid D \subseteq X,~ D \text{ directed}}}$.
We say that $X$ is Scott closed if $X = \cl{X}$.

To ease notation when computing weakest preexpectations for a vector of $n$ preexpectations $\post = \vvvechorizon{\post_1}{\dots}{\post_n} \in \Exp^n$, we denote its componentwise application by
\[
    \wpVec{\program}{\post} = \vvvechorizon{\wp{\program}{\post_1}}{\dots}{\wp{\program}{\post_n}}.
\]
For a deterministic program $\program$, considering $\wpVec{\program}{\post}$ is meaningful because the values of different expectations are guaranteed to be consistent:
for any two expectations $\post_i$ and $\post_j$, $\wp{\program}{\post_i}$ and $\wp{\program}{\post_j}$ are achieved by the same determinization, which is $\program$ itself.

\subsection{Achievable and Pareto Optimal Points}

The weakest preexpectation calculus can establish the optimal value of each single objective in isolation, but it does not capture trade-offs between multiple competing objectives.
Consider again the program $\programrun = \GC{\true}{\ASSIGN{x}{1} \fatsemi \ASSIGN{y}{0}}{\true}{\ASSIGN{x}{0} \fatsemi \ASSIGN{y}{1}}$.
We are interested in the maximal values for $x$ and $y$, denoted in a vector as $\vvechorizon{x}{y}$.
If we simply apply \wpsymbol componentwise, we get $\wpVec{\program}{\vvechorizon{x}{y}} = \vvechorizon{1}{1}$.
While this correctly tells us that the maximal value is $1$ for each $x$ and $y$, there is no determinization of $\program$ for which \wpsymbol of $x$ and $y$ are both 1:
maximizing $x$ requires choosing the left branch, whereas maximizing $y$ requires choosing the right branch.
We say that $\vvechorizon{1}{1}$ is \emph{not achievable}.

\begin{definition}[Achievable Points]
\label{def:achievable-exp}
	A point $p \in \PosRealsInfVect$ is \emph{achievable} from $\pstate \in \States$ with respect to $\program \in \pGCL$ and $\post \in \Exp^n$
	iff there exists a mixed determinization $\program' \determmixed \program$ 
	such that $p \leq \wpVec{\program'}{\post}(\pstate)$. 
	The set of achievable points is given by $\AchExp{\program}{\post}{\pstate} = \dwc{\set{\wpVec{\program'}{\post}(\pstate) \mid \program' \determmixed \program}}$.
\end{definition}

\begin{example}
    For $\programrun$ as above, $\vvechorizon{1}{0}$ and $\vvechorizon{0}{1}$ are achievable by a pure determinization which chooses the left respectively the right branch.
    A mixed determinization can, for example, execute both of these determinizations with probability $0.5$, yielding the point $\vvechorizon{0.5}{0.5}$.
    The set of achievable points for this program from any initial state is $\dwc{\conv{\vvechorizon{0}{1},\vvechorizon{1}{0}}}$.
\end{example}

In trying to determine which of these achievable points is best, we see that this has no clear answer, as it can happen that for one objective, $p_1$ exceeds $p_2$ while for another, $p_2$ exceeds $p_1$.
In this case, we cannot (or rather do not want to) decide whether $p_1$ or $p_2$ is optimal.
Instead, we say that both are \emph{Pareto optimal}, which intuitively means that a point is not dominated by any other achievable point.

\begin{definition}[Pareto Front {\cite[Definition 3.2]{quatmann2023verification}}]
\label{def:pareto-exp}
    A point $p \in \cl{\AchExp{\program}{\post}{\pstate}}$ is \emph{Pareto optimal}, or on the \emph{Pareto front},
	if for all $p' \in \PosRealsInfVect$ with $p \lneq p'$ it holds that $p' \not\in  \cl{\AchExp{\program}{\post}{\pstate}}$.
    The Pareto front is denoted by $\ParetoExp{\program}{\post}{\pstate} \subseteq \PosRealsInfVect$.
\end{definition}

\begin{wrapfigure}[16]{r}{0.4\textwidth}
    \vspace{-1.5em}
\begin{tikzpicture}[
    x={5cm/3.2}, %
    y={5cm/1.1},
    >=stealth
]

\coordinate (V1) at (1.290125,0.857375);
\coordinate (V2) at (2.635,0.665);
\coordinate (V3) at (2.81,0.49);

\filldraw[
    fill=mygreen!40,
    draw=mygreen,
    line join=round
]
    (0,0)
    -- (0,0.857375)
    -- (V1)
    -- (V2)
    -- (V3)
    -- (2.81,0)
    -- cycle;

\draw[->] (0,0) -- (3.2,0); %
\draw[->] (0,0) -- (0,1.1); %

\draw[draw=myblue,very thick] (V1)
    -- (V2)
    -- (V3);
\end{tikzpicture}
\caption{Illustration of the Pareto front (blue) lying in between the achievable (green) and the unachievable (white) points for a fixed state $\pstate$ and $n=2$.}
\label{fig:pareto-front}
\Description{
A two-dimensional illustration of achievable and unachievable objective values.
The achievable region is shaded green, the unachievable region is white, and a blue curve separates them.
This boundary is labelled as the Pareto front and consists of all nondominated achievable points.
}
\end{wrapfigure}

The Pareto front for each state $\pstate$ is precisely the set of nondominated achievable points, i.e., those lying on the boundary between achievable and unachievable outcomes, and not strictly worse than any other achievable point.
An illustration of this is given in \Cref{fig:pareto-front}.

For the program $\programrun$ and postexpectations $x$ and $y$, the Pareto front consists of all points in $\conv{\vvechorizon{0}{1},\vvechorizon{1}{0}}$.

We can characterize the Pareto optimal points with the achievable points and vice versa:

\begin{lemma}
\label[lemma]{theo:cl-ach-is-dwc-pareto-exp}
	We have $\dwc{\ParetoExp{\program}{\post}{\pstate}} = \cl{\AchExp{\program}{\post}{\pstate}}$.
\end{lemma}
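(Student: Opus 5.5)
Write $K \coloneqq \cl{\AchExp{\program}{\post}{\pstate}}$. By definition, $\ParetoExp{\program}{\post}{\pstate} \subseteq K$, and $K$ is downward closed (the Scott closure is defined as a downward closure). We prove the two inclusions separately.

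\emph{Inclusion $\subseteq$.} Any point below a Pareto optimal point lies in $K$, because $K$ is downward closed. Hence $\dwc{\ParetoExp{\program}{\post}{\pstate}} \subseteq K$.

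\emph{Inclusion $\supseteq$.} This is the main work. Fix $p \in K$; we must find a Pareto optimal $q$ with $p \le q$. Consider
\[
  U_p = \setcomp{u \in K}{p \le u},
\]
partially ordered componentwise. We apply Zorn's lemma to $U_p$. Let $D \subseteq U_p$ be a chain.
\begin{itemize}
  \item A chain is directed, and $\PosRealsInfVect$ is a complete lattice, so $\sup D$ exists.
  \item Since $K$ is Scott closed, it is closed under suprema of directed subsets: $\sup D \in \cl{K} = K$.
  \item Clearly $p \le \sup D$.
\end{itemize}
So every chain in $U_p$ has an upper bound in $U_p$ (the empty chain is bounded by $p \in U_p$). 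Zorn's lemma yields a maximal element $q \in U_p$.

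It remains to show that $q$ is Pareto optimal. Suppose $q \lneq p'$ with $p' \in K$. Then $p \le q \le p'$, so $p' \in U_p$. Since $p' \ne q$, this contradicts the maximality of $q$. Hence $q \in \ParetoExp{\program}{\post}{\pstate}$, and therefore $p \in \dwc{\ParetoExp{\program}{\post}{\pstate}}$.

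\emph{Main obstacle.} The delicate step is closure of $K$ under directed suprema. This requires that $\cl{\cdot}$ is idempotent, i.e.\ that $\cl{X}$ is itself Scott closed, so that $\cl{\cl{X}} = \cl{X}$. That is not immediate from the one-step formula $\dwc{\set{\sup D \mid D \subseteq X \text{ directed}}}$: a directed set in $\cl{X}$ need not come from directed sets in $X$. Our first attempt is to exploit that we work in the finite-dimensional lattice $\PosRealsInfVect$. Given a chain $d_1 \le d_2 \le \dots$ in $\cl{X}$ (countable cofinal chains suffice, since $\sup$ is determined componentwise by countable sequences), we approximate. Pick $x_k \in X$ and a directed $D_k \subseteq X$ with $d_k \le \sup D_k$, and choose elements $e_k \in D_k$ that come within $1/k$ (or above $k$ for infinite coordinates) of $d_k$ componentwise. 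The difficulty is that the $e_k$ need not form a directed set. We handle this by using downward closure: $\min(e_k, d_k)$ is not needed; instead, we apply the approximation to the truncations $d_k - \varepsilon_k$ and argue that the resulting supremum still dominates $\sup_k d_k$.

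If that route gets messy, the fallback is to read \cite[Def.~2.3.1]{abramsky1995domain} as defining the Scott closure as the smallest Scott-closed superset, of which the displayed formula is one description. In that reading $K$ is Scott closed by definition, and the Zorn argument above goes through directly.
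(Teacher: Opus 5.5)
\paragraph{Gap: closure of $K$ under suprema of chains.} Your $\subseteq$ direction and the Zorn argument on $U_p$ are both correct. However, they rest entirely on $K=\cl{A}$, with $A \coloneqq \AchExp{\program}{\post}{\pstate}$, being closed under suprema of chains, and you never prove this. Your ``first attempt'' breaks off exactly where the work lies (``argue that the resulting supremum still dominates $\sup_k d_k$''). The fallback is not available either. The paper fixes $\cl{X}$ by the explicit one-step formula $\dwc{\{\sup D \mid D \subseteq X \text{ directed}\}}$, and this operator is \emph{not} idempotent in general. Take the antichain $X=\{(1-\tfrac1k,\tfrac1k) \mid k\geq 1\}\subseteq[0,\infty]^2$. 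Only singletons are directed, so $\cl{X}=\dwc{X}$. This set contains the chain $(1-\tfrac1k,0)_k$ but not its supremum $(1,0)$. So no argument about $\cl{\cdot}$ alone can succeed. You must use that $A$ is itself downward closed, and your sketch never makes that use precise.

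\paragraph{How to close it.} First, it suffices to handle increasing sequences. Every chain $C$ in $[0,\infty]^n$ contains an increasing sequence with the same supremum: for each coordinate $j$, pick elements of $C$ whose $j$-th coordinates approach $(\sup C)_j$, and take running maxima, which stay in $C$. So let $d_1\leq d_2\leq\cdots$ lie in $K$, with $d_k\leq\sup D_k$ for directed sets $D_k\subseteq A$. Define $y_k$ coordinatewise by $y_{k,j}\coloneqq\min\{(1-\tfrac1k)\,d_{k,j},\,k\}$. Then:
\begin{enumerate}
\item The $y_k$ increase.
\item $\sup_k y_k=\sup_k d_k$, since $1-\tfrac1k\to 1$ and the cap $k\to\infty$.
\item Every nonzero coordinate satisfies $y_{k,j}<d_{k,j}\leq(\sup D_k)_j$. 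So for each such $j$, some member of $D_k$ exceeds $y_{k,j}$ in coordinate $j$. By directedness, a common upper bound of these finitely many members lies in $D_k$ and dominates $y_k$. Hence $y_k\in A$ because $A$ is downward closed.
\end{enumerate}
Thus $\sup_k d_k$ is the supremum of the chain $\{y_k\}\subseteq A$, and so lies in $\cl{A}=K$. With this lemma in place, your Zorn argument goes through unchanged.
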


This lemma emphasizes that not necessarily every point which is Pareto optimal is achievable.
The Scott closure of the achievable points also contains points that are only \emph{almost} achievable, meaning that we can get arbitrarily close but not achieve these points exactly.
This is a phenomenon specific to infinite state space:
For reachability-reward objectives in finite-state MDPs, the set of achievable points is a polyhedron and hence Scott-closed (e.g.\ \cite{etessami2008multi}).
This impacts the existence of optimal determinizations and will be further discussed in \Cref{sec:existence}.

With this terminology in mind, we can formulate the first problem studied in this paper:

\begin{problemstatement}%
\label{prob:mop}%
Given a pGCL program $\program$, a finite set of postexpectations $\set{\post_1,\dots,\post_n} \subseteq \Exp$ and an initial state $\pstate$,
	what are the almost achievable points $\cl{\AchExp{\program}{\post}{\pstate}}$?
\end{problemstatement}

\section{The Multiobjective Preexpectation Transformer}
\label{sec:mop}
\setul{1pt}{.4pt}

In this section, we introduce the main technical tool for solving \Cref{prob:mop}: 
the \mopsymbol\ transformer for calculating \emph{\ul{{m}}ulti\ul{{o}}bjective \ul{{p}}reexpectations}.
It provides a program-level approach for reasoning about Pareto optimality of multiple expectations simultaneously.
Specifically, we will later (\Cref{theo:mop-is-pareto}) show that
\[
    \mop{\program}{\dwc{\set{\post}}}(\pstate) \eeq \cl{\AchExp{\program}{\post}{\pstate}}~,
\]
capturing precisely the almost achievable points.
The nondominated points of this set form the Pareto front.

We begin by discussing a suitable domain for reasoning about multiple expectations in \Cref{sec:mexp-domain} before presenting the rules of the \mopsymbol transformer in \Cref{sec:mop-rules}.
Thereafter, we establish its basic healthiness properties in \Cref{sec:health}. %

\subsection{A Domain for Reasoning about Multiple Expectations}
\label{sec:mexp-domain}

To reason about the preexpectations induced by determinizations of a program, we require a domain that is able to represent all almost achievable points for a given initial state simultaneously.
We make use of established concepts in domain theory: denotational foundations for combining nondeterministic and probabilistic computation have been extensively developed through powerdomain constructions.
Classical Hoare, Smyth, and Plotkin powerdomains capture different notions of nondeterministic choice through order-theoretic structures~\cite{plotkin79dijkstras,smyth83power}, while probabilistic and mixed powerdomains extend these constructions to probabilistic computation~\cite{jones90probabilistic,varacca2002powerdomain,tix2000convex}.
In particular, mixed powerdomains based on convex structures such as Kegelspitzen establish correspondences between state transformer semantics and predicate transformer semantics for probabilistic programs with nondeterminism~\cite{keimel2009predicate,keimel2016mixed}.
In contrast to our vector-valued setting, these approaches characterize transformers over scalar-valued predicates, such as real-valued expectations, using order-theoretic structures such as Scott closed convex sets of valuations.
To suit our needs, we employ a convex sub-powerdomain of the Hoare powerdomain (see \cite[Theorem 6.2.13]{abramsky1995domain}). %

\begin{definition}[Hoare Powerdomain]
\label{def:hoare-power}
    The convex Hoare powerdomain of $\PosRealsInfVect$, denoted $(\HoarePowerDom, \subseteq)$, is the complete lattice whose carrier is the set of all subsets of $\PosRealsInfVect$ which are
    (1) non-empty, (2) downward closed, (3) convex closed, and (4) Scott closed,
    ordered by subset inclusion.
\end{definition}
Note that downward closedness is also required by Scott closedness.
We leave it separate to keep intuition separate.
The least element of $\HoarePowerDom$ is $\zeroHoare=\set{0^n}$.
Suprema and infima of $X \subseteq \HoarePowerDom$ are constructed as:
\[
    \bighoaresup X = \bigcap \setcomp{\HDel \in \HoarePowerDom }{ \bigcup X \subseteq \HDel}
    \quad \text{and} \quad \bighoareinf X = \bigcap X.
\]
A much more easy-to-handle way of expressing suprema is given as follows:
\begin{lemma}
\label[lemma]{theo:sup-is-cl}
    For $X \subseteq \HoarePowerDom$, \bighoaresup X = \cl{\conv{ \bigcup X }}.
\end{lemma}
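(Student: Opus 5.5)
The plan is to show that $Y \coloneqq \cl{\conv{\bigcup X}}$ is itself an element of $\HoarePowerDom$ that contains $\bigcup X$, and that it lies inside every $\HDel \in \HoarePowerDom$ with $\bigcup X \subseteq \HDel$. Then $Y$ is the least such element, which is exactly $\bighoaresup X$.

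\emph{Minimality.} This direction is routine. Let $\HDel \in \HoarePowerDom$ with $\bigcup X \subseteq \HDel$. Since $\HDel$ is convex closed, $\conv{\bigcup X} \subseteq \conv{\HDel} = \HDel$. The operator $\cl{\cdot}$ is monotone, because directed subsets of a smaller set are directed subsets of a larger one. Hence $Y \subseteq \cl{\HDel} = \HDel$. Clearly $\bigcup X \subseteq Y$, since singletons are directed. Non-emptiness holds because $0^n \in Y$; if $X = \emptyset$ I would read $\bigcup X$ as $\set{0^n}$, matching $\zeroHoare$.

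\emph{Membership in $\HoarePowerDom$.} Put $Z \coloneqq \conv{\bigcup X}$. First I check that $Z$ is downward closed. Take $y \leq \sum_i \weightc_i x_i$ with $x_i \in \bigcup X$. Coordinatewise, write $y_j = \sum_i \weightc_i x'_{i,j}$ with $x'_{i,j} \le x_{i,j}$, for instance by scaling each summand by the same factor; infinite coordinates and the convention $0\cdot\infty=0$ need a small case split. Each $x'_i$ then lies in the downward closed set containing $x_i$, so $y \in Z$.

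Next I prove a characterization that makes $\cl{\cdot}$ tractable for downward closed sets $Z \subseteq \PosRealsInfVect$. Write $s' \ll s$ if for every $j$ either $s'_j < s_j$ or $s_j = s'_j = 0$. The claim is
\[
  \cl{Z} \eeq \setcomp{s}{\forall s' \ll s\colon\ s' \in Z}.
\]
For $\subseteq$: if $s' \ll \sup D$ with $D \subseteq Z$ directed, then by finite dimension and directedness some $d \in D$ satisfies $s' \le d$, so $s' \in Z$; downward closure then extends this to all of $\cl{Z}$. For $\supseteq$: choose an increasing chain $s^{(k)} \ll s$ with $\sup_k s^{(k)} = s$; this is possible coordinatewise, including coordinates equal to $\infty$. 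The chain is directed and lies in $Z$.

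With this characterization two facts follow.
\begin{itemize}
  \item $\cl{\cdot}$ is idempotent, so $Y$ is Scott closed. Let $D \subseteq \cl{Z}$ be directed with supremum $s$, and let $s' \ll s$. Interpolate $s' \ll s'' \ll s$. Some $d \in D$ satisfies $s'' \le d$, hence $s' \ll d$, and therefore $s' \in Z$.
  \item $Y$ is convex, and it is downward closed by construction. Let $a, b \in Y$, $\lambda \in [0,1]$, and $s' \ll \lambda a + (1-\lambda) b$. I would choose $a' \ll a$ and $b' \ll b$ with $s' \le \lambda a' + (1-\lambda)b'$. Then $a', b' \in Z$, and convexity plus downward closure of $Z$ give $s' \in Z$.
\end{itemize}

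\emph{Main obstacle.} The delicate step is the characterization of $\cl{Z}$ by the relation $\ll$, and the interpolation and splitting arguments built on it. These must handle coordinates equal to $0$ or $\infty$ and the convention $0\cdot\infty=0$. I would first carry out all arguments on $\PosReals^n$, then treat the infinite and zero coordinates separately by the explicit case distinction in the definition of $\ll$.
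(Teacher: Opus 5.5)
Your proof is correct, but it takes a different route from the paper. The paper reduces the lemma to a single cited fact, namely that closure preserves convexity (a standard convex-geometry result). Given that fact, $\cl{\conv{\bigcup X}}$ is a convex, Scott closed down-set containing $\bigcup X$, and minimality is immediate.

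You instead stay inside the order theory of $[0,\infty]^n$. Your relation $\ll$ is exactly the way-below relation of this continuous lattice. Your characterization of $\cl{Z}$ for down-sets $Z$, together with interpolation and the splitting step, then proves two things: the one-step operator $\cl{\cdot}$ is idempotent, and it preserves convexity.

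What each route buys:
\begin{itemize}
\item The paper's route is much shorter. It tacitly identifies the order-theoretic $\cl{\cdot}$ with topological closure on down-sets; the paper only argues one direction of this later, in a footnote. It also takes for granted that the Euclidean theorem extends to coordinates equal to $\infty$ under the convention $0\cdot\infty=0$.
\item The paper also never argues that $\cl{\cdot}$ is idempotent. This is needed, because the paper defines Scott closedness as being a fixed point of this one-step operator.
\item Your argument is longer but self-contained and makes all of these points explicit.
\end{itemize}

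You also correctly flag the edge case $X=\emptyset$, where the literal statement fails because $\cl{\conv{\emptyset}}=\emptyset\neq\zeroHoare$.
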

\begin{proof}
    The proof reduces to the following key observation: convexity is preserved under taking closures, see e.g., \cite[Theorem 2.35]{soltan2019lectures}.
\end{proof}

Recall that our goal is to capture $\cl{\AchExp{\program}{\post}{\pstate}}$.
The Hoare powerdomain construction aligns well with this goal:
\begin{enumerate}
    \item \emph{Non-emptiness:} $\zeroHoare$ is, by \Cref{def:achievable-exp}, always achievable as we are interested in lower bounds.
    
    \item \emph{Downward closedness:} if $p$ is achievable and $p' \leq p$, then $p'$ is also achievable by \Cref{def:achievable-exp}.

    \item \emph{Convex closedness:} mixing determinizations enables us to achieve any convex combination of achievable points.

    \item \emph{Scott closedness:} the Scott closure of achievable points is, obviously, Scott closed.
\end{enumerate}

Let $X,Y \in \HoarePowerDom$ and $r \in \PosRealsInf$.
We define scalar multiplication on $\HoarePowerDom$ pointwise as $r \cdot X = \set{r \cdot x \mid x \in X}$.
Further, we define addition as the Minkowski sum $\minkow$ with
\[
    X \minkow Y = \{x+y \mid x \in X, y \in Y\}.
\]
The Minkowski sum is well-defined on $\HoarePowerDom$ %
and satisfies standard properties of addition.
As usual, we let multiplication bind stronger than Minkowski sums.

\paragraph{Multiobjective expectations}

A classic expectation $\post$ maps program states to non-negative extended reals, i.e.\ $\post \colon \States \to \PosRealsInf$.
For reasoning about multiple objectives, we define a \emph{multiobjective expectation} %
as a mapping $\me \colon \States \to \HoarePowerDom$ from program states to elements of the Hoare powerdomain.

\begin{definition}[Multiobjective Expectations]
\label{def:mexp}
    The complete lattice of \emph{multiobjective expectations} of order $n$ is $(\mExp,\, \sqsubseteq)$, where
    \begin{itemize}
        \item $\mExp = \States \to \HoarePowerDom$ is the set of multiobjective expectations, mapping states to elements of the convex Hoare powerdomain of $\PosRealsInfVect$, and
        \item $\sqsubseteq$ is the pointwise lifted subset inclusion, i.e.,
        \[
            \me \morespace{\mesubseteq} \me' \qqiff \forall\, \pstate \in \States \colon\quad \me(\pstate) \morespace{\subseteq} \me'(\pstate).
        \]
    \end{itemize}
\end{definition}

The least element is $\zeroME = \mylambda \pstate \{0^n\}$. %
Operations on the Hoare powerdomain, in particular closure operations, are lifted pointwise to multiobjective expectations. %
For instance, the supremum $\mecup$ and infimum $\mecap$ of multiobjective expectations are pointwise liftings from the Hoare powerdomain, so for $\meset \subseteq \mExp$,
\[
    \bigsqcup \meset = \mylambda{\pstate} \cl{\conv{ \bigcup_{\me \in \meset} \me(\pstate) }} 
    \quad \text{and} \quad \bigsqcap \meset = \mylambda{\pstate} \bigcap_{\me \in \meset} \me(\pstate).
\]
Substitution is again defined as $\me\subst{x}{e} = \mylambda \pstate \me(\substState{\pstate}{x}{\ee})$.

\begin{example}[Multiobjective Expectations]
    Suppose we are interested in the expected values of two program variables, $x$ and $y$.
    This objective is represented by the multiobjective expectation
    $\mylambda \pstate \dwc{\set{\vvechorizon{\pstate(x)}{\pstate(y)}}}$,
    mapping each state $\pstate$ to the downward closure of the vector whose first component is the value of $x$ in $\pstate$ and whose second component is the value of $y$ in $\pstate$.
    For all $\pstate$, this set is nonempty.
    Further, it is convex: the convex combination of vectors below $\vvechorizon{\pstate(x)}{\pstate(y)}$ remains below $\vvechorizon{\pstate(x)}{\pstate(y)}$.
    It is also Scott closed, as any directed set below $\vvechorizon{\pstate(x)}{\pstate(y)}$ is bounded by $\vvechorizon{\pstate(x)}{\pstate(y)}$, so its supremum again lies below $\vvechorizon{\pstate(x)}{\pstate(y)}$.
\end{example}

Indeed, for any expectations $\post_1,\dots,\post_n$, we have that $\mylambda \pstate \dwcset{\vvvechorizon{\post_1(\pstate)}{\dots}{\post_n(\pstate)}} \in \mExp$.
To ease notation, we denote $\dwcset{\vvvechorizon{\post_1}{\dots}{\post_n}} \coloneqq \mylambda \pstate \dwcset{\vvvechorizon{\post_1(\pstate)}{\dots}{\post_n(\pstate)}}$.

\subsection{Inductive Computation of Multiobjective Preexpectations}
\label{sec:mop-rules}

A multiobjective \emph{post}expectation $\dwcset{\vvvechorizon{\post_1}{\dots}{\post_n}}$ specifies the expectations whose values after program termination are of interest.
Starting from such a postexpectation, our goal is to compute the corresponding multiobjective \emph{pre}expectation, which characterizes all points that are almost achievable.
To this end, we introduce the following transformer on multiobjective expectations.

\begin{definition}[Multiobjective Preexpectation Transformer]
\label{def:mop}
    For a \pGCL program $C$ and a multiobjective postexpectation $\me \in \mExp$, we define the \emph{multiobjective preexpectation transformer} %
    \[
        \mopC{C}\colon \mExp \to \mExp
    \]
    according to the rules in \Cref{table:mop}.
\end{definition}

\begin{table}[t]
\renewcommand{\arraystretch}{1.5}
\begin{tabular}{@{\hspace{.5em}}l@{\hspace{2em}}l@{\hspace{2em}}l@{\hspace{.5em}}}
		\hline\hline
		$\boldsymbol{\program}$			& $\mop{\program}{\mpost}$ \\
		\hline
		$\SKIP$				& $\mpost$ \\
        $\ASSIGN{x}{\ee}$ & $\mpost\subst{x}{\ee}$ \\
        $\COMPOSE{\program_1}{\program_2}$ & $\mop{\program_1}{\mop{\program_2}{\mpost}}$ \\
        $\GC{\guard_1}{\program_1}{\guard_2}{\program_2}$ & $[\guard_1] \cdot \mop{\program_1}{\mpost} \mmecup [\guard_2] \cdot \mop{\program_2}{\mpost}$\\
        $\PCHOICE{\program_1}{\ps}{\program_2}$ & $\ps \cdot \mop{\program_1}{\mpost} \mminkow (1-\ps) \cdot \mop{\program_2}{\mpost}$ \\
        $\WHILEDO{\guard}{\program'}$ & $\lfp \mylambda X\quad [\neg \guard] \cdot \mpost \mminkow [\guard] \cdot \mop{\program'}{X}$ \\
	\bottomrule\bottomrule
    \end{tabular}%
	\vspace{1em}
\caption{Rules for the $\mopsymbol$ transformer.
        $\lfp g \mydot \Phi(g)$ denotes the least fixed point of $\Phi$.}%
\label{table:mop}
\end{table}

The rules for \mopsymbol closely resemble those of the standard \wpsymbol\ transformer; see \Cref{tab:wp} for reference.
Indeed, the rules are obtained by lifting the \wpsymbol\ rules to the domain of multiobjective expectations.
Thus, the equivalence of the rules is a consequence of the careful choice of the domain of multiobjective expectations in accordance with the semantic goals.

We go over the rules.
$\SKIP$ leaves the multiobjective preexpectation unchanged.
For the assignment $\ASSIGN{x}{\ee}$, as usual, we evaluate $\mpost$ in the state where $x$ is substituted by $\ee$.
The \mopsymbol transformer is a compositional backward transformer, so we can push multiobjective expectations from end to beginning through the program.
Consequently, \mopsymbol of the composition $\COMPOSE{\program_1}{\program_2}$ is obtained by computing $\mop{\program_2}{\mpost}$ first and then giving the result to $\mop{\program_1}{\cdot}$.

Up until now, the operations used in the rules were straightforward liftings of the \wpsymbol rules to sets of vectors.
This changes in the rule for the guarded choice:
in both cases, the transformer takes the supremum in its respective domain.
For \wpsymbol, this is the pointwise supremum of the expectations generated in the subbranches.
For \mopsymbol, however, this is the pointwise Scott closed and convex closed union.
This captures precisely the semantic goals:
whereas \wpsymbol\ computes a single optimal expectation, \mopsymbol\ captures all achievable outcomes, with convex closure accounting for the probabilistic resolution of nondeterminism.

The effect of the probabilistic choice again is a straightforward lifting from the \wpsymbol transformer,
weighting the result of each subbranch with the respective probability.
For loops, we take the least fixed point of the characteristic function
\[
    \phiMopNo \eeq \mylambda X\quad [\neg \guard] \cdot \mpost \mminkow [\guard] \cdot \mop{\program'}{X}.
\]
As $\mExp$ is a complete lattice and this function is continuous, we know by Kleene's fixed point theorem (\Cref{theo:Kleene})
that the least fixed point exists and further that we can compute it by iterating on the bottom element $\zeroME$.
We discuss how to handle loops in \Cref{sec:loops}.

We \hyperref[theo:mop-is-pareto]{later} show that \mopsymbol as defined by these rules indeed captures precisely the almost achievable points, i.e.,
\[
    \mop{\program}{\dwc{\set{\post}}}(\pstate) \eeq \cl{\AchExp{\program}{\post}{\pstate}}
\]
for $\post \in \Exp^n$.
This means that if $p \in \mop{\program}{\mpost}(\pstate)$, we know that for all $\epsilon$, there is a mixed determinization $\program' \determmixed \program$ such that
$
    p-\epsilon \leq \wpVec{\program'}{\post}.
$

We give a small example to illustrate the application of the \mopsymbol rules.

\begin{example}[Multiobjective Preexpectations of Deterministic Programs]
\label{ex:simple-mop-assignment}
    Consider the program $\ASSIGN{x}{1} \fatsemi \ASSIGN{y}{0}$ and the multiobjective postexpectation $\mpost = \dwc{\set{\vvechorizon{x}{y}}}$.
    We compute \mopsymbol for the second assignment $\ASSIGN{y}{0}$ as well as the composed program as follows:
    
    \begin{minipage}[t]{.45\textwidth}
    \begin{align*}
        & \mop{\ASSIGN{y}{0}}{\mpost} \\
        =\ & \dwc{\set{\vvechorizon{x}{y}}}\subst{y}{0} \\
        =\ & \mylambda \pstate \mpost(\pstate[y \mapsto 0]) \\
        =\ & \mylambda \pstate \dwc{\set{\vvechorizon{\pstate(x)}{0}}} \\
        =\ & \dwc{\set{\vvechorizon{x}{0}}} \\
    \end{align*}
    \end{minipage}\hfill
    \begin{minipage}[t]{.45\textwidth}
        \begin{align*}
        & \mop{\ASSIGN{x}{1} \fatsemi \ASSIGN{y}{0}}{\mpost} \\
        =\ & \mop{\ASSIGN{x}{1}}{\mop{\ASSIGN{y}{0}}{\mpost}} \\
        =\ & \mop{\ASSIGN{x}{1}}{ \, \dwc{\set{\vvechorizon{x}{0}}} \, } \\
        =\ & \dwc{\set{\vvechorizon{1}{0}}} \\
    \end{align*}
    \end{minipage}

    Unsurprisingly, $\mopsymbol$ tells us that we can at best achieve the values 1 for $x$ and 0 for $y$.
    In this case, these are also the only values achievable in this program, since it is deterministic to begin with, i.e.\ there is no nondeterminism to be resolved.
    Thus, the result is equivalent to the weakest preexpectation calculus. %
    We will later show that this is not a coincidence but a tight correspondence between \mopsymbol and \wpsymbol for deterministic programs (see \Cref{theo:mop-wp-determ}).

\end{example}

\noindent
\begin{minipage}[t]{.55\textwidth}
\begin{example}[Multiobjective Preexpectations of Nondeterministic Programs]
\label{ex:mop-stanni-program}
    Consider again the program $\programrun = \GC{\true}{\ASSIGN{x}{1} \fatsemi \ASSIGN{y}{0}}{\true}{\ASSIGN{x}{0} \fatsemi \ASSIGN{y}{1}}$ and the postexpectation $\dwcset{\vvechorizon{x}{y}}$. %
    Applying \mopsymbol, we get the computation shown in \Cref{fig:mop-calc}, reading bottom to top.

    Note that in the last step from the second to the first line, we can substitute the Scott closure by the downward closure as the set is already closed under directed suprema.
    The only elements added by $\cl{\cdot}$ are thus the same that are added by $\dwc{\cdot}$.
    Consequently, we get
    \begin{align*}    
        & \mop{\programrun}{\dwcset{\vvechorizon{x}{y}}}(\pstate) \\
        =\ & \dwc{\conv{\set{\vvechorizon{1}{0},\vvechorizon{0}{1}}}}
    \end{align*}
    for all initial states $\pstate$, proving what we had already discussed in \Cref{sec:overview}:
    we can at best achieve any convex combination of $\vvechorizon{1}{0}$ and $\vvechorizon{0}{1}$, but for example never $\vvechorizon{1}{1}$.
    The latter is proven to be unachievable since $\vvechorizon{1}{1} \not \in \mop{\programrun}{\dwcset{\vvechorizon{x}{y}}}(\pstate)$ for all $\pstate \in \States$.
\end{example}
\end{minipage}\hfill
\begin{minipage}[t]{.4\textwidth}
\begin{lstlisting}[mathescape]
$\annotate{\dwc{\conv{\set{\vvechorizon{1}{0},\vvechorizon{0}{1}}}}}$
$\annotate{\clopen{\convopen{{\dwc{\set{\vvechorizon{1}{0}}} }}}}$
$\phantom{cl(conv(}\annotateNo{\hoaresup \, \dwc{\set{\vvechorizon{0}{1}}}))}$
$\annotate{\dwc{\set{\vvechorizon{1}{0}}} \mecup \dwc{\set{\vvechorizon{0}{1}}}}$
$\GCFSTOPEN{~\true}$
    $\annotate{\dwcset{\vvechorizon{1}{0}}}$
    $\ASSIGN{x}{1} \fatsemi \ASSIGN{y}{0}$
    $\annotate{\dwcset{\vvechorizon{x}{y}}}$
$\GCSECOPEN{\true}$
    $\annotate{\dwcset{\vvechorizon{0}{1}}}$
    $\ASSIGN{x}{0} \fatsemi \ASSIGN{y}{1}$
    $\annotate{\dwcset{\vvechorizon{x}{y}}}$
$\GCSECCLOSE$
$\annotate{\dwcset{\vvechorizon{x}{y}}}$
\end{lstlisting}
\captionof{figure}{
    Annotations for the computation of $\mop{\programrun}{\dwcset{\vvechorizon{x}{y}}}$ for \Cref{ex:mop-stanni-program}.
}
\label{fig:mop-calc}
\Description{Calculations of the \mopsymbol transformer applied to the running example.}
\end{minipage}

\subsection{Healthiness Conditions}
\label{sec:health}

Healthiness conditions are routine results in predicate transformer semantics to characterize well-behaved operators and ensure compatibility with the underlying order-theoretic structure.
For the remainder of this section, let $\program$ be a \pGCL program and $\mpost,\meg  \in \mExp$ be a multiobjective expectation.
We begin by showing that the \mopsymbol\ transformer is well-defined and satisfies the fundamental order-theoretic properties of monotonicity and continuity, which are stated in the following theorem.

\begin{restatable}[Basic Healthiness]{theorem}{mopwellcontmonotone}
\label{theo:mop-well-cont-monotone}
    $\mopC{\program}$ is well-defined, $\omega$-continuous, and monotone.
\end{restatable}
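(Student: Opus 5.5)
We prove all three properties simultaneously by structural induction on $\program$, with the induction hypothesis that $\mopC{\program'}$ maps $\mExp$ into $\mExp$, is monotone, and is $\omega$-continuous for every subprogram $\program'$. Monotonicity follows from $\omega$-continuity: if $\me \mesubseteq \me'$, apply continuity to the chain $\me, \me', \me', \dots$. It therefore suffices to establish well-definedness and preservation of suprema of $\omega$-chains.

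The first step is to collect lemmas about the building blocks on $\HoarePowerDom$, lifted pointwise. We need that scalar multiplication $r \cdot X$ and the Minkowski sum $X \minkow Y$ of non-empty, downward closed, convex, Scott closed sets are again such sets. The paper states that $\minkow$ is well-defined; for $r\cdot X$, the case $r=0$ gives $\zeroHoare$, and $r=\infty$ needs the convention $0\cdot\infty=0$. We also need that both operations preserve suprema of $\omega$-chains. By \Cref{theo:sup-is-cl}, $\bighoaresup_i X_i = \cl{\conv{\bigcup_i X_i}}$, and for a chain $\bigcup_i X_i$ is already convex and downward closed, so only the Scott closure matters. For the Minkowski sum we must show
\[
\cl{\textstyle\bigcup_i X_i} \minkow \cl{\textstyle\bigcup_j Y_j} = \cl{\textstyle\bigcup_i (X_i \minkow Y_i)} .
\]
The inclusion $\supseteq$ is monotonicity together with closedness of the left-hand side. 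For $\subseteq$, take $x = \sup D$ and $y = \sup D'$ with $D, D'$ directed in the unions. Then $\{d + d'\}$ is directed, and since chains are nested, each $d+d'$ lies in some $X_k \minkow Y_k$. Because addition on $\PosRealsInf$ is Scott continuous, $\sup\{d+d'\} = x+y$. Downward closure is handled similarly. The binary supremum $\mecup$ commutes with chain suprema because suprema commute with suprema. Multiplication by the Iverson bracket $[\guard]$ is a special case of scalar multiplication. Substitution is a reindexing of states, so it trivially preserves membership in $\mExp$ and all pointwise suprema.

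With these lemmas, the inductive cases are routine. $\SKIP$ and assignment are immediate. Sequential composition is a composition of continuous maps. Guarded and probabilistic choice combine the continuous maps from the induction hypothesis using the operations above. For the loop, the characteristic function $\phiMopNo$ is continuous in $X$ by the induction hypothesis on $\program'$ and the lemmas, and $\mExp$ is a complete lattice. By Kleene's fixed point theorem (\Cref{theo:Kleene}), $\lfp \phiMopNo = \bigsqcup_k \phiMopNo^k(\zeroME)$ exists and lies in $\mExp$, which gives well-definedness. For continuity in the postexpectation $\mpost$, observe that $\Phi_{\mpost}(X)$ is jointly continuous in $(\mpost, X)$, being continuous in each argument separately. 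Hence each iterate $\phiMopNo^k(\zeroME)$ is continuous in $\mpost$, by induction on $k$. Finally, a supremum over $k$ of continuous maps is continuous, since suprema over $i$ and $k$ interchange.

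The main obstacle is the Scott-closure bookkeeping in the Minkowski-sum continuity lemma, in particular showing that the Minkowski sum of two Scott closed sets is again Scott closed. A directed supremum of sums $x_\alpha + y_\alpha$ need not come from directed families in each summand. The plan is to use that the sets lie in $\PosRealsInfVect$, which is compact in each coordinate. Given a directed net $x_\alpha + y_\alpha$, we pass to convergent subnets of $x_\alpha$ and $y_\alpha$ and use that Scott closed subsets of $\PosRealsInfVect$ that are downward closed are topologically closed. This yields limit points $x \in X$ and $y \in Y$ with $\sup_\alpha (x_\alpha + y_\alpha) \le x + y$, and downward closure then places the supremum in $X \minkow Y$. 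The possibly infinite coordinates require care, which is why we work in the extended reals.
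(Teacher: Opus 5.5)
Your proposal is correct and follows the paper's own route: a simultaneous structural induction using closure and chain-continuity lemmas for scalar multiplication, Minkowski sum and binary join, with Kleene iteration for loops. Your compactness argument supplies the Scott-closedness of Minkowski sums, which the paper only asserts; the other deviations are equivalent in substance. These are deriving monotonicity from $\omega$-continuity, and obtaining continuity of the loop in the postexpectation by interchanging suprema over the Kleene iterates rather than via Park induction (your aside on $r=\infty$ is moot, since only scalars in $[0,1]$ occur in the rules, and $\infty\cdot X$ need not even be downward closed).
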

\begin{proof}
    The close correspondence between \wpsymbol and \mopsymbol, which differ only in their underlying domains, allows the proofs to follow essentially the same structure.
    We prove well-definedness, $\omega$-continuity, and monotonicity simultaneously by structural induction on the program.
    The base cases follow directly from the semantic definitions and closure properties of the Hoare powerdomain.
    The inductive cases use the compositional definition of $\mopC{\program}$ together with the preservation of continuity, monotonicity, and well-definedness under the semantic operators (e.g., Minkowski sum and scalar multiplication).
    For while loops, we reason via the characteristic functional, showing that it is $\omega$-continuous and monotone, so that Kleene's fixed-point theorem (\Cref{theo:Kleene}) and Park induction (\Cref{theo:park-upper-mop}) yield the desired properties of its least fixed point.
\end{proof}

The \mopsymbol\ transformer enjoys several further properties, as Dijkstra-style transformers classically do.
All proofs are done via straightforward induction on the program structure.

\begin{restatable}[Linearity]{theorem}{moplinear}
\label{theo:mop-linear}
    The $\mopsymbol$ transformer is sublinear, i.e.\ for $r \in \PosReals$, we have
    \[
        \mop{\program}{r \cdot \me \mminkow \meg} \qqsqsubseteq r \cdot \mop{\program}{\me} \qminkow \mop{\program}{\meg}.
    \]
\end{restatable}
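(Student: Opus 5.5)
The proof goes by structural induction on $\program$, simultaneously for all $\me, \meg \in \mExp$ and all $r \in \PosReals$. Throughout, we use that the Minkowski sum and scalar multiplication are well-defined on $\HoarePowerDom$ and are monotone w.r.t.\ $\subseteq$. We also use the elementary identities $r\cdot(X \minkow Y) = r\cdot X \minkow r\cdot Y$ and $a \cdot (r \cdot X) = r \cdot (a \cdot X)$, together with associativity and commutativity of $\minkow$, all lifted pointwise to $\mExp$. We do \emph{not} need the reverse distributivity $(a+b)\cdot X \subseteq a\cdot X \minkow b\cdot X$ for general sets. Finally, for Iverson brackets, $[\guard]\cdot X = \set{0^n}$ in states where $\guard$ fails, which is contained in every element of $\HoarePowerDom$ by non-emptiness and downward closedness.

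\emph{Base and easy cases.} For $\SKIP$ and $\ASSIGN{x}{\ee}$ the claim holds with equality, since substitution commutes with the pointwise operations: $(r\cdot\me \minkow \meg)\subst{x}{\ee} = r\cdot \me\subst{x}{\ee} \minkow \meg\subst{x}{\ee}$. For sequential composition, first apply the induction hypothesis for $\program_2$. Then use monotonicity of $\mopC{\program_1}$ (\Cref{theo:mop-well-cont-monotone}) to push the inequality through. Finally apply the induction hypothesis for $\program_1$ to the pair $\mop{\program_2}{\me}, \mop{\program_2}{\meg}$. For probabilistic choice, apply the induction hypothesis in each branch and use monotonicity of $\ps\cdot(\cdot)$ and $\minkow$. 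Then regroup: $\ps\cdot(r A_1 \minkow B_1) \minkow (1-\ps)\cdot(r A_2 \minkow B_2) = r\cdot(\ps A_1 \minkow (1-\ps) A_2) \minkow (\ps B_1 \minkow (1-\ps) B_2)$, which is exactly the right-hand side.

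\emph{Guarded choice.} By the induction hypothesis, $[\guard_i]\cdot\mop{\program_i}{r\me\minkow\meg} \sqsubseteq r\cdot[\guard_i]\mop{\program_i}{\me} \minkow [\guard_i]\mop{\program_i}{\meg}$ for $i=1,2$. Each of these right-hand sides is contained in $Z \coloneqq r\cdot\bigl([\guard_1]\mop{\program_1}{\me} \mecup [\guard_2]\mop{\program_2}{\me}\bigr) \minkow \bigl([\guard_1]\mop{\program_1}{\meg}\mecup[\guard_2]\mop{\program_2}{\meg}\bigr)$. This follows from monotonicity of $\minkow$ and the fact that each summand lies below the corresponding supremum. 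Since $Z \in \mExp$ by well-definedness, it is an upper bound of both branch terms. Hence their least upper bound $\mecup$, which is the left-hand side, is below $Z$. This is the only place where the supremum is genuinely used, and the argument relies only on its universal property, not on the explicit form $\cl{\conv{\cdot}}$ from \Cref{theo:sup-is-cl}.

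\emph{Loops (main obstacle).} Write $\Phi_{\me}$ for the characteristic function of $\WHILEDO{\guard}{\program'}$ w.r.t.\ the postexpectation $\me$. The goal is $\lfp \Phi_{r\me\minkow\meg} \sqsubseteq r\cdot\lfp\Phi_{\me} \minkow \lfp\Phi_{\meg}$. I would use Park induction (\Cref{theo:park-upper-mop}): it suffices to show that $Y \coloneqq r\cdot X_\me \minkow X_\meg$, with $X_\me = \lfp\Phi_\me$ and $X_\meg = \lfp\Phi_\meg$, is a prefixed point of $\Phi_{r\me\minkow\meg}$. Indeed, by the induction hypothesis for the body $\program'$ and monotonicity,
\[
\Phi_{r\me\minkow\meg}(Y) \sqsubseteq [\neg\guard](r\me \minkow \meg) \minkow [\guard]\bigl(r\cdot\mop{\program'}{X_\me}\minkow\mop{\program'}{X_\meg}\bigr).
\]
Regrouping as in the probabilistic-choice case, the right-hand side equals $r\cdot\Phi_\me(X_\me)\minkow\Phi_\meg(X_\meg) = Y$. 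The delicate points are to check that $Y \in \mExp$, which holds by well-definedness of $\minkow$, and that the Iverson-bracket regrouping is valid statewise. Statewise, exactly one of $[\guard]$ and $[\neg\guard]$ is $1$ and the other contributes $\set{0^n}$, so the regrouping is immediate. Should Park induction be unavailable in the required form, the fallback is Kleene iteration. One shows by induction on $k$ that $\Phi^k_{r\me\minkow\meg}(\zeroME) \sqsubseteq r\cdot\Phi^k_\me(\zeroME)\minkow\Phi^k_\meg(\zeroME) \sqsubseteq Y$, and then concludes since $Y$ is an upper bound of the chain.
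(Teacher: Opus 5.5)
Your proposal is correct and takes the same route as the paper, which proves sublinearity by a routine structural induction on the program, quantified over all postexpectations. The paper gives no details for loops; your Park-induction step, showing via the induction hypothesis for the body that $r\cdot\lfp\Phi_{\me} \minkow \lfp\Phi_{\meg}$ is a superinvariant of $\Phi_{r\cdot\me \minkow \meg}$, correctly fills in that case.
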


This really is only \emph{sub}linearity for the same reason as for \wpsymbol, unless we assume purely probabilistic programs, which is pointless in our setting.

\begin{example}
    Let $\me = \dwc{\set{\iverson{x=0}}}$ and $\meg = \dwc{\set{\iverson{x=1}}}$.
    Consider the program $\program = \GC{\true}{\ASSIGN{x}{0}}{\true}{\ASSIGN{x}{1}}$.
    Then, $\mop{\program}{\me} = \dwc{\set{1}}$ and $\mop{\program}{\meg} = \dwc{\set{1}}$.
    So, $\mop{\program}{\me} \minkow \mop{\program}{\meg} = \dwc{\set{2}}$.
    
    However, $\me \minkow \meg = \dwc{\set{\iverson{x=0 \lor x=1}}}$.
    So, $\mop{\program}{\me \minkow \meg} = \dwc{\set{1}}$.

    Intuitively, this happens because the $\mopsymbol$ transformer permits addition of results from subbranches only if they are achieved under the same resolution of nondeterminism.
    In the present example, maximizing $\me$ requires choosing the left branch, whereas maximizing $\meg$ requires choosing the right branch.
    As these choices are incompatible, there is no single resolution of nondeterminism that simultaneously achieves both objectives, and hence the value $2$ cannot be attained for $\me \minkow \meg$.
\end{example}

\begin{restatable}[Positive Homogeneity]{theorem}{mophomogeneity}
\label{theo:mop-homogeneity}
    The $\mopsymbol$ transformer is positively homogeneous, i.e.\ for $r \in \PosReals$, we have
    $
        \mop{\program}{r \cdot \me} \eeq r \cdot \mop{\program}{\me}.
    $
\end{restatable}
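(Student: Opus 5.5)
We prove the statement by structural induction on $\program$, simultaneously for all $\me \in \mExp$. First we settle the case $r = 0$. Here $0 \cdot \me = \mylambda \pstate \set{0^n} = \zeroME$. We show separately, by the same induction, that $\mop{\program}{\zeroME} = \zeroME$, i.e.\ strictness. The loop case of this follows because $\zeroME$ is a fixed point of $\mylambda X\, [\neg\guard]\cdot\zeroME \mminkow [\guard]\cdot\mop{\program'}{X}$ and is the bottom element. So from now on we assume $r > 0$, with $r < \infty$ since $r \in \PosReals$.

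Before the induction we need a few facts about the domain operations for fixed $r \in (0,\infty)$:
\begin{itemize}
\item The map $X \mapsto r\cdot X$ is an order isomorphism of $\HoarePowerDom$, with inverse $X \mapsto r^{-1}\cdot X$.
\item It commutes with $\dwc{\cdot}$, with $\conv{\cdot}$, and with suprema of directed sets, since scaling by $r$ is a monotone bijection of $\PosRealsInfVect$ preserving directed sups. Hence it commutes with $\cl{\cdot}$, and by \Cref{theo:sup-is-cl} with arbitrary suprema: $r\cdot\bighoaresup X = \bighoaresup\set{r\cdot x \mid x\in X}$.
\item It distributes over Minkowski sums, $r\cdot(X\minkow Y) = r\cdot X \minkow r\cdot Y$.
\item It commutes with multiplication by state-dependent scalars such as $[\guard]$ or $\ps$, since $r\cdot(s\cdot X) = s\cdot(r\cdot X)$ for $s \in [0,1]$.
\end{itemize}
All of these lift pointwise to $\mExp$, and scaling clearly commutes with substitution, $(r\cdot\me)\subst{x}{\ee} = r\cdot(\me\subst{x}{\ee})$.

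The non-loop cases are then routine rewriting with the induction hypothesis:
\begin{itemize}
\item $\SKIP$ and assignment are immediate.
\item For sequential composition, apply the hypothesis first to $\program_2$, then to $\program_1$ with postexpectation $\mop{\program_2}{\me}$.
\item For guarded choice, combine the hypothesis with commuting $r$ past $[\guard_i]$ and past $\mecup$.
\item For probabilistic choice, combine the hypothesis with distributivity over $\minkow$.
\end{itemize}

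The main obstacle is the loop case, where the claim is $\lfp \Phi_{r\me} = r\cdot \lfp \Phi_{\me}$ with $\Phi_{\me}(X) = [\neg\guard]\cdot\me \mminkow [\guard]\cdot\mop{\program'}{X}$. The plan is a transfer argument. By the induction hypothesis for the body $\program'$, applied to every $X$, and the distributivity facts above,
\[
  \Phi_{r\me}(r\cdot X) \eeq r\cdot \Phi_{\me}(X) \quad \text{for all } X \in \mExp .
\]
Let $h(X) = r\cdot X$. Then $h$ is an order automorphism of $\mExp$ with $h(\zeroME) = \zeroME$, and the identity says $\Phi_{r\me} = h\circ\Phi_{\me}\circ h^{-1}$. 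Conjugation by an order isomorphism carries the least fixed point to the least fixed point, so $\lfp\Phi_{r\me} = h(\lfp\Phi_{\me})$.

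As an alternative check via Kleene's theorem, \Cref{theo:mop-well-cont-monotone} gives $\lfp\Phi = \bigsqcup_k \Phi^k(\zeroME)$. Induction on $k$ gives $\Phi_{r\me}^k(\zeroME) = r\cdot\Phi_{\me}^k(\zeroME)$, and commuting $r$ with the supremum then concludes the argument. The delicate point in both routes is that the induction hypothesis for $\program'$ must hold for all postexpectations, not just $\me$. This is why the induction quantifies over all $\me$.
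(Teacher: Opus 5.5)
Your proof is correct and follows the route the paper indicates: structural induction on the program, quantified over all postexpectations. The loop case is handled by transferring the least fixed point using the induction hypothesis for the body, either through your conjugation by the order automorphism $X \mapsto r\cdot X$ or through the Kleene iterates. The only cosmetic difference is that for the case $r=0$ you prove strictness by a direct induction, whereas the paper derives it from feasibility with bound $0$.
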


An expectation $\me \in \mExp$ is \emph{bounded} by $b \in \PosReals^n$ if for all states $\pstate$ and $x \in \me(\pstate)$ we have $x \leq b$, which we denote by $\me \sqsubseteq b$.

\begin{restatable}[Feasibility]{theorem}{mopfeasible}
\label{theo:mop-feasible}
    The $\mopsymbol$ transformer is feasible, i.e.\ for a bounded expectation $\me \sqsubseteq b \in \PosReals^n$, we have
    $
        \mop{\program}{\me} \ssqsubseteq b.
    $
\end{restatable}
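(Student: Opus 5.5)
We argue by structural induction on $\program$, proving the slightly stronger statement that for every $b \in \PosReals^n$ and every $\me \in \mExp$ with $\me \sqsubseteq b$ we have $\mop{\program}{\me} \sqsubseteq b$. Since the paper does not state a no-$\infty$ convention, we read "bounded" as $b$ finite, which makes $\set{x \mid x \leq b}$ a genuine element of $\HoarePowerDom$. The key auxiliary observation is that the set $B_b = \dwc{\set{b}}$ is non-empty, downward closed, convex, and Scott closed. Scott closedness holds because a directed subset of $B_b$ has its supremum bounded by $b$ componentwise. Thus the multiobjective expectation $\mylambda\pstate\, B_b$ lies in $\mExp$, and $\me \sqsubseteq b$ is equivalent to $\me \mesubseteq \mylambda\pstate\, B_b$.

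The base cases are immediate. For $\SKIP$ the result is $\me$ itself. For an assignment, $\me\subst{x}{\ee}(\pstate) = \me(\pstate[x\mapsto \ee(\pstate)]) \subseteq B_b$.

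For sequential composition, apply the induction hypothesis to $\program_2$ to obtain $\mop{\program_2}{\me} \sqsubseteq b$. Then apply it to $\program_1$ with this bounded postexpectation.

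For guarded choice, each summand $[\guard_i]\cdot\mop{\program_i}{\me}(\pstate)$ is either $\set{0^n}$ or bounded by $b$ by the induction hypothesis. Hence their union lies in $B_b$. Because $B_b$ is convex and Scott closed, \Cref{theo:sup-is-cl} gives $\cl{\conv{\cdot}}$ of the union $\subseteq B_b$.

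For probabilistic choice, any element of the Minkowski sum has the form $p\,x + (1-p)\,y$ with $x,y \leq b$. Such an element is therefore $\leq b$, since $p \in [0,1]$ and $b$ is finite.

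The loop is the main step. Let $\Phi$ be the characteristic function and $\widehat b = \mylambda\pstate\, B_b$. We would use Park induction (\Cref{theo:park-upper-mop}), which is available since $\Phi$ is monotone by \Cref{theo:mop-well-cont-monotone}. It therefore suffices to show $\Phi(\widehat b) \sqsubseteq \widehat b$.

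By the induction hypothesis applied to the body with bound $b$, $\mop{\program'}{\widehat b} \sqsubseteq b$. Fix a state $\pstate$. If $\pstate \models \guard$, the set $\Phi(\widehat b)(\pstate)$ is $\set{0^n} \minkow \mop{\program'}{\widehat b}(\pstate) \subseteq B_b$. Otherwise it is $\me(\pstate) \minkow \set{0^n} \subseteq B_b$. Here we use that $[\guard]\cdot X = \set{0^n}$ when the guard is false, which holds by the convention $0\cdot\infty = 0$. Park induction then gives $\lfp \Phi \sqsubseteq \widehat b$.

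As an alternative to Park induction, we could argue via Kleene iteration. Every iterate $\Phi^k(\zeroME)$ is bounded by $b$, which follows by induction on $k$ using the same computation. The supremum of this chain is $\cl{\conv{\bigcup_k \cdot}}$ by \Cref{theo:sup-is-cl}, which again stays in $B_b$ by closedness.

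The only delicate point is ensuring that taking closures (convex and Scott) never escapes the bound. This is why we establish at the outset that $B_b$ is itself an element of $\HoarePowerDom$, and why we need $b$ to have finite components.
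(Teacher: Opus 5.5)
Your proof is correct and takes essentially the paper's route, which states that this property follows by a straightforward structural induction on $\program$; your loop case, showing that the constant multiobjective expectation $\mylambda \pstate\, \dwc{\set{b}}$ is a superinvariant of the characteristic function and then applying Park induction (\Cref{theo:park-upper-mop}), is the natural way to discharge that step. One small remark: finiteness of $b$ is not actually needed. The set $\dwc{\set{b}}$ is nonempty, downward closed, convex and Scott closed, and $p \cdot b + (1-p) \cdot b = b$ holds, even when some components of $b$ are $\infty$ (using $0 \cdot \infty = 0$).
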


\begin{restatable}[Strictness]{theorem}{mopstrict}
\label{theo:mop-strict}
    The $\mopsymbol$ transformer is strict, i.e.\
    $
        \mop{\program}{\zeroME} \eeq \zeroME.
    $
\end{restatable}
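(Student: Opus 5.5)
We prove $\mop{\program}{\zeroME} = \zeroME$ by structural induction on $\program$, where $\zeroME = \mylambda\pstate\,\set{0^n}$ is the least element of $\mExp$. Two facts about $\HoarePowerDom$ carry the whole argument. First, $r \cdot \set{0^n} = \set{0^n}$ for every $r \in \PosRealsInf$, using the convention $0\cdot\infty = 0$. Second, $\set{0^n} \minkow \set{0^n} = \set{0^n}$. Together with \Cref{theo:sup-is-cl}, which gives $\cl{\conv{\set{0^n}\cup\set{0^n}}} = \set{0^n}$, these show that $\zeroME$ is absorbed by every operation used in \Cref{table:mop}. Here $\set{0^n}$ is read as an element of the powerdomain, i.e.\ already downward closed in $\PosRealsInfVect$.

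\textbf{Base cases.} For $\SKIP$ the claim is immediate. For $\ASSIGN{x}{\ee}$ we have $\zeroME\subst{x}{\ee} = \mylambda\pstate\,\zeroME(\substState{\pstate}{x}{\ee}) = \zeroME$, since $\zeroME$ is constant.

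\textbf{Composition, choice, and probabilistic choice.} For $\COMPOSE{\program_1}{\program_2}$ we apply the induction hypothesis twice, first to $\program_2$ and then to $\program_1$. For the guarded choice, the hypotheses give $[\guard_1]\cdot\zeroME \mecup [\guard_2]\cdot\zeroME$. Each summand is pointwise $\set{0^n}$, whether the guard evaluates to $0$ or $1$, and the supremum is therefore $\cl{\conv{\set{0^n}}} = \set{0^n}$. For probabilistic choice we get $\ps\cdot\zeroME \minkow (1-\ps)\cdot\zeroME = \zeroME$ pointwise.

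\textbf{Loops.} This is the only case needing a real argument. We show that $\zeroME$ is a fixed point of $\Phi(X) = [\neg\guard]\cdot\zeroME \minkow [\guard]\cdot\mop{\program'}{X}$. By the induction hypothesis for the body, $\Phi(\zeroME) = [\neg\guard]\cdot\zeroME \minkow [\guard]\cdot\zeroME = \zeroME$. Since $\zeroME$ is the bottom element of $\mExp$, any fixed point it forms is automatically the least one, so $\lfp\Phi = \zeroME$. Alternatively, by Kleene iteration from $\zeroME$ (\Cref{theo:mop-well-cont-monotone}) every iterate equals $\zeroME$, and so does their supremum.

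The main obstacle is only bookkeeping: confirming that the operations on $\HoarePowerDom$ really fix $\set{0^n}$, which depends on the convention $0\cdot\infty=0$ and on the closure operations adding nothing beyond the downward closure of $0^n$.
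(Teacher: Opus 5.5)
Your proof is correct, but it is not the route the paper takes. The paper runs no separate induction. It reads strictness off feasibility (\Cref{theo:mop-feasible}) with bound $b = 0^n$. That gives $\mop{\program}{\zeroME} \sqsubseteq 0^n$, i.e.\ every point of $\mop{\program}{\zeroME}(\pstate)$ is $0^n$. Equality then follows because elements of $\HoarePowerDom$ are nonempty, or equivalently because $\zeroME$ is the least element of $\mExp$.

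You instead check case by case that scaling by guards and probabilities, Minkowski sums, and $\cl{\conv{\cdot}}$ all fix $\set{0^n}$. You then handle loops by noting that $\zeroME$ is a fixpoint of the characteristic function and, being the bottom element, is therefore its least fixpoint.

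Your version is self-contained, and its loop case is trivial: it needs neither Park induction nor continuity. The paper's version is more economical, because the structural induction is done once, in greater generality, for feasibility. There the loop case must bound a least fixpoint from above, e.g.\ via \Cref{theo:park-upper-mop} with the constant bound as superinvariant. In effect, your argument is the $b = 0^n$ instance of that induction with equality tracked at every step.
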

\begin{proof}
    This follows from feasibility by setting $b = \zeroME$.
\end{proof}

\section{Invariant-Based Reasoning for Loops}
\label{sec:loops}
Fix a loop $\program = \WHILEDO{\guard}{\program'}$ and a postexpectation $\mpost \in \mExp$ throughout this section.
Reasoning about $\mop{\program}{\mpost}$ amounts to reasoning about the least fixpoint of
\[
    \phiMopNo \eeq \mylambda X\quad [\neg \guard] \cdot \mpost \mminkow [\guard] \cdot \mop{\program'}{X}~.
\]
In principle, least fixpoints can be determined by the following classic result, which applies to $\phiMopNo$ since $\mExp$ is a complete lattice and $\phiMopNo$ is $\omega$-continuous (\Cref{theo:mop-well-cont-monotone}).

\begin{theorem}[Kleene's Fixpoint Theorem]
\label{theo:Kleene}
    For every $\omega$-continuous function $\Phi$ on a complete lattice with least element $\bot$, we have
    \[
        \lfp \Phi \eeq \bigsqcup \set{\Phi^n(\bot) \mid n \in \Nats}~.
    \]
\end{theorem}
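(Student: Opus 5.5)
We prove Kleene's theorem in the standard way. Write $L$ for the complete lattice and $\bot$ for its least element. Put $\Phi^0(\bot) = \bot$ and $\Phi^{k+1}(\bot) = \Phi(\Phi^k(\bot))$, and let $s = \bigsqcup \set{\Phi^k(\bot) \mid k \in \Nats}$. Since $L$ is a complete lattice, $s$ exists. The plan has three steps: first, the iterates form an ascending $\omega$-chain; second, $s$ is a fixed point of $\Phi$; third, $s$ lies below every fixed point. We use that $\omega$-continuity, meaning preservation of suprema of ascending $\omega$-chains, implies monotonicity. If $x \sqsubseteq y$, then the chain $x, y, y, \dots$ has supremum $y$, so $\Phi(y) = \Phi(x) \sqcup \Phi(y)$, and hence $\Phi(x) \sqsubseteq \Phi(y)$. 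In the application to $\phiMopNo$, monotonicity is in any case available separately from \Cref{theo:mop-well-cont-monotone}.

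\emph{Chain.} We show $\Phi^k(\bot) \sqsubseteq \Phi^{k+1}(\bot)$ by induction on $k$. The base case $\bot \sqsubseteq \Phi(\bot)$ holds because $\bot$ is least. For the step, apply monotonicity of $\Phi$ to the induction hypothesis.

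\emph{Fixed point.} By $\omega$-continuity applied to this chain,
\[
\Phi(s) \eeq \bigsqcup_{k} \Phi(\Phi^k(\bot)) \eeq \bigsqcup_{k} \Phi^{k+1}(\bot).
\]
The right-hand side equals $s$: the sequence differs from the original one only by the missing term $\bot$, and removing $\bot$ does not change a supremum. Hence $\Phi(s) = s$.

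\emph{Leastness.} Let $y$ be any fixed point of $\Phi$, or more generally any prefixed point with $\Phi(y) \sqsubseteq y$. We show $\Phi^k(\bot) \sqsubseteq y$ by induction on $k$. The base case holds because $\bot$ is least. For the step, $\Phi^{k+1}(\bot) \sqsubseteq \Phi(y) \sqsubseteq y$ by monotonicity. So $y$ is an upper bound of the chain, and therefore $s \sqsubseteq y$. Together with the previous step, this gives $s = \lfp \Phi$.

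None of these steps is a real obstacle. The only point needing care is the precise reading of $\omega$-continuity, namely that it concerns ascending chains and thereby yields monotonicity. For $\mExp$, we must also make sure that the supremum used is the lattice supremum $\bigsqcup$, the pointwise $\cl{\conv{\cdot}}$ of the union (see \Cref{theo:sup-is-cl}), and not the plain union. The abstract argument above never inspects how the supremum is constructed, so it applies unchanged.
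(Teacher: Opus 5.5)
Your proof is correct. It is the standard argument: the iterates form an ascending $\omega$-chain, $\omega$-continuity makes their supremum a fixed point, and induction shows every prefixed point bounds all iterates. The paper gives no proof of its own and simply invokes the theorem as a classical result; your derivation of monotonicity from chain-continuity, and your remark that the abstract argument is unaffected by suprema in $\mExp$ being closures of unions, make the argument self-contained for its use there.
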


Computing all iterates is rarely feasible, however.
Instead, we approximate least fixpoints by \emph{invariants}, exploiting that \mopsymbol operates on a complete lattice.
We say that $\invariant \in \mExp$ is a \emph{superinvariant} of $\phiMopNo$ if $\phiMopNo(\invariant) \sqsubseteq \invariant$, and a \emph{subinvariant} if $\invariant \sqsubseteq \phiMopNo(\invariant)$.

\subsection{Upper Bounds}

Upper bounds on least fixpoints are obtained elegantly via Park induction.

\begin{lemma}[Park Induction for Upper Bounds]
\label[lemma]{theo:park-upper-mop}
    If $\invariant$ is a superinvariant of $\phiMopNo$, then 
    \[
    \lfp \phiMopNo \sqsubseteq \invariant ~.
    \]
\end{lemma}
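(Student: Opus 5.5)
This is the classical Park induction principle, so the plan is to derive it from the Knaster--Tarski characterization of least fixpoints on complete lattices. Everything needed from the paper is already available. $(\mExp, \sqsubseteq)$ is a complete lattice by \Cref{def:mexp}, since it is the pointwise lifting of the complete lattice $\HoarePowerDom$, with infima given pointwise by intersection. Moreover, $\phiMopNo$ is monotone. Monotonicity follows from \Cref{theo:mop-well-cont-monotone} applied to $\mopC{\program'}$, together with two observations: scalar multiplication by the indicator $[\guard]$ is monotone, and the Minkowski sum $\minkow$ is monotone in each argument with respect to subset inclusion (if $X \subseteq X'$ then $Y \minkow X \subseteq Y \minkow X'$, directly from the definition).

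Next I would show that $\lfp \phiMopNo$ equals the infimum of all superinvariants. Let $S = \setcomp{Y \in \mExp}{\phiMopNo(Y) \sqsubseteq Y}$ and $L = \bigsqcap S$, which exists by completeness. For each $Y \in S$ we have $L \sqsubseteq Y$, so monotonicity gives $\phiMopNo(L) \sqsubseteq \phiMopNo(Y) \sqsubseteq Y$. Hence $\phiMopNo(L)$ is a lower bound of $S$, and therefore $\phiMopNo(L) \sqsubseteq L$; in particular $L \in S$. Applying monotonicity once more, $\phiMopNo(\phiMopNo(L)) \sqsubseteq \phiMopNo(L)$, so $\phiMopNo(L) \in S$, which gives $L \sqsubseteq \phiMopNo(L)$. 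Thus $L$ is a fixpoint. Every fixpoint lies in $S$, so $L$ lies below every fixpoint and is the least one. Since the given $\invariant$ belongs to $S$, we conclude $\lfp \phiMopNo = L \sqsubseteq \invariant$.

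The only point requiring care is whether the least fixpoint obtained this way is the same object as the $\lfp$ used in \Cref{table:mop}. It is, because least fixpoints are unique; this also agrees with the Kleene iterate of \Cref{theo:Kleene}. A more direct alternative, which I would mention, runs through Kleene's theorem using the $\omega$-continuity from \Cref{theo:mop-well-cont-monotone}. By induction on $k$, one shows $\phiMopNo^k(\zeroME) \sqsubseteq \invariant$: the base case holds because $\zeroME$ is least, and the step is $\phiMopNo^{k+1}(\zeroME) \sqsubseteq \phiMopNo(\invariant) \sqsubseteq \invariant$. Then $\invariant$ is an upper bound of the chain, so the supremum $\lfp \phiMopNo$ lies below it.

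A subtle issue is that \Cref{theo:mop-well-cont-monotone} appeals to \Cref{theo:park-upper-mop} for loops. To avoid circularity, the argument above should use only the monotonicity of the \emph{body} $\mopC{\program'}$ (obtained by structural induction) together with monotonicity of the operators, and not the theorem for the loop itself.
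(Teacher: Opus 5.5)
Your proof is correct and takes essentially the paper's approach: the paper cites Knaster--Tarski to identify $\lfp \phiMopNo$ with the least superinvariant, and you reprove that same characterization by taking the infimum of all superinvariants. Your remark that only monotonicity of $\phiMopNo$ is needed (coming from monotonicity of the body transformer, not of the loop itself) is a useful addition: the paper's healthiness proof invokes this lemma, and its own proof leaves the monotonicity requirement of Knaster--Tarski implicit.
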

\begin{proof}
    By Knaster--Tarski~\cite{tarski1955lattice}, $\lfp \phiMopNo$ is the \emph{least} $X \in \mExp$ with $\phiMopNo(X) \sqsubseteq X$, and superinvariants are exactly such $X$.
\end{proof}

Notice that this rule requires neither continuity of $\phiMopNo$ nor any termination or boundedness assumption.
Guessing a superinvariant always suffices for an upper bound.
Further established induction principles for upper bounds on least fixpoints of monotone functions, such as $k$-induction~\cite{batz2021latticed}, carry over to our setting as well but are omitted here.

\begin{example}[Park Induction]
\label{ex:split-upper}
    Consider the program
    \[
        \program_{\text{split}} \eeq \WHILEDO{0 < k}{\ \ASSIGN{k}{k-1} \fatsemi \GC{\true}{\ASSIGN{x}{x+1}}{\true}{\ASSIGN{y}{y+1}}\,},
    \]
    where all variables range over $\Nats$, together with the postexpectation $\dwcset{\vvechorizon{x}{y}}$:
    the program distributes $k$ increments between $x$ and $y$, and we are interested in the trade-off between the two final values.
    We guess that from a state with $k$ remaining iterations, the achievable points are the distributions of the $k$ increments and their mixtures, i.e., we guess the invariant
    \[
        \invariant \eeq \dwc{\conv{\set{\vvechorizon{x + k}{y},\ \vvechorizon{x}{y + k}}}}~.
    \]
    To verify that $\invariant$ is a superinvariant, first note that for $k = 0$, we have $\phiMopNo(\invariant) = \dwcset{\vvechorizon{x}{y}} = \invariant$.
    For $k > 0$, the rules of \Cref{table:mop} yield
    \[
        \mop{\GC{\true}{\ASSIGN{x}{x+1}}{\true}{\ASSIGN{y}{y+1}}}{\invariant}
        \eeq
        \invariant\subst{x}{x+1} \mmecup \invariant\subst{y}{y+1}~,
    \]
    and substituting $k-1$ for $k$ in this supremum gives, by \Cref{theo:sup-is-cl},
    \[
        \phiMopNo(\invariant)
        \eeq
        \dwc{\conv{\set{\vvechorizon{x + k}{y},\ \vvechorizon{x+1}{y + k - 1},\ \vvechorizon{x + k - 1}{y+1},\ \vvechorizon{x}{y + k}}}}~.
    \]
    The two middle points lie on the segment between the two outer ones, so $\phiMopNo(\invariant) = \invariant$, i.e.,
    the invariant $\invariant$ is a fixpoint, and in particular a superinvariant.
    \Cref{theo:park-upper-mop} thus yields 
    \[
    \mop{\program_{\text{split}}}{\dwcset{\vvechorizon{x}{y}}} \sqsubseteq \invariant~.
    \]
    We will return to this example to show that this upper bound is in fact exact.
\end{example}

\subsection{Lower Bounds}

For lower bounds, one might hope for the dual rule: if $\invariant$ is a subinvariant, then $\invariant \sqsubseteq \lfp \phiMopNo$.
This rule is \emph{unsound}, already for the standard \wpsymbol transformer, which \mopsymbol subsumes, as we show later (\Cref{theo:mop-wp-single}), and obtaining sound lower-bound rules is considerably more challenging in general~\cite{hark2019aiming}.
For a counterexample, consider $\WHILEDO{\true}{\SKIP}$:
every $\invariant \in \mExp$ is a fixpoint of the associated $\phiMopNo = \mylambda X~ X$ and hence a subinvariant, but $\lfp \phiMopNo = \zeroME$ (which holds for every post $\mpost$), so the naive rule would certify the absurd lower bound $\invariant \sqsubseteq \zeroME$ for arbitrary $\invariant$.
What subinvariants do certify soundly are lower bounds on the \emph{greatest} fixpoint.

\begin{lemma}[Park Induction for Lower Bounds]
\label[lemma]{theo:park-lower-mop}
    If $\invariant$ is a subinvariant of $\phiMopNo$, then 
    \[
        \invariant \sqsubseteq \gfp \phiMopNo~.
    \]
\end{lemma}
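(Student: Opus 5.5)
This is the order-dual of \Cref{theo:park-upper-mop}, so I would prove it the same way: via the Knaster--Tarski characterization of the greatest fixpoint of a monotone map on a complete lattice. The argument needs two facts. First, $(\mExp,\sqsubseteq)$ is a complete lattice (\Cref{def:mexp}); its infima are pointwise intersections and its suprema are pointwise $\cl{\conv{\cdot}}$ of unions (\Cref{theo:sup-is-cl}). Second, $\phiMopNo$ is monotone.

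For monotonicity, let $X \sqsubseteq Y$. By \Cref{theo:mop-well-cont-monotone}, $\mop{\program'}{X} \sqsubseteq \mop{\program'}{Y}$. Scaling pointwise by $[\guard]$ preserves inclusion, and so does Minkowski sum with the fixed summand $[\neg\guard]\cdot\mpost$, since $A \subseteq B$ implies $C \minkow A \subseteq C \minkow B$. Hence $\phiMopNo(X) \sqsubseteq \phiMopNo(Y)$.

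Knaster--Tarski~\cite{tarski1955lattice} then gives
\[
    \gfp \phiMopNo \eeq \bigsqcup \setcomp{X \in \mExp}{X \sqsubseteq \phiMopNo(X)}~,
\]
so $\gfp\phiMopNo$ is the greatest subinvariant. Our $\invariant$ is a member of this set, so $\invariant \sqsubseteq \gfp\phiMopNo$.

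If I wanted the argument self-contained rather than citing Knaster--Tarski, I would show directly that $S \coloneqq \bigsqcup \setcomp{X}{X\sqsubseteq\phiMopNo(X)}$ is a subinvariant. For each subinvariant $X$ we have $X \sqsubseteq \phiMopNo(X) \sqsubseteq \phiMopNo(S)$ by monotonicity. So $\phiMopNo(S)$ is an upper bound of all such $X$, and hence $S \sqsubseteq \phiMopNo(S)$. This step uses only the least-upper-bound property in $\HoarePowerDom$, so the closure operations inside $\bigsqcup$ cause no trouble.

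Monotonicity of $S\sqsubseteq\phiMopNo(S)$ gives $\phiMopNo(S)\sqsubseteq\phiMopNo(\phiMopNo(S))$, so $\phiMopNo(S)$ is itself a subinvariant. Therefore $\phiMopNo(S)\sqsubseteq S$, and $S$ is a fixpoint. Any fixpoint is a subinvariant and hence lies below $S$, so $S$ is the greatest one.

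The only real care point is that the lattice operations are the powerdomain ones rather than plain unions. This matters only in the appeal to completeness of $\mExp$, which is already established. No continuity, termination, or boundedness assumption is needed, mirroring the remark after \Cref{theo:park-upper-mop}.
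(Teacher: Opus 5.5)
Your proof is correct and matches the paper's: the paper's proof is the one-line dual Knaster--Tarski appeal that the greatest fixpoint of $\phiMopNo$ is the greatest subinvariant, which is exactly your central step. Your explicit monotonicity check for $\phiMopNo$ (the paper takes it from \Cref{theo:mop-well-cont-monotone}) and your self-contained Tarski argument are sound elaborations of that same argument, not a different route.
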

\begin{proof}
    Dually to \Cref{theo:park-upper-mop}, $\gfp \phiMopNo$ is the \emph{greatest} $X \in \mExp$ with $X \sqsubseteq \phiMopNo(X)$.
\end{proof}

The remainder of this section therefore develops criteria under which reasoning about the greatest fixpoint is sound for \mopsymbol, i.e., under which least and greatest fixpoints coincide.
As a tool for formulating and proving such criteria, we introduce the \emph{multiobjective liberal preexpectation transformer} $\molpC{\program}\colon \mExp \to \mExp$, defined by the rules of \Cref{table:mop}, with the single change that loops are interpreted by \emph{greatest} fixpoints, i.e.,
\[
    \molp{\WHILEDO{\guard}{\program'}}{\mpost}
    \eeq
    \gfp \phiMolpNo
    \quad \text{with} \quad
    \phiMolpNo \eeq \mylambda X\quad [\neg \guard] \cdot \mpost \mminkow [\guard] \cdot \molp{\program'}{X}.
\]
By construction, $\mop{\program}{\mpost} \sqsubseteq \molp{\program}{\mpost}$ for all $\program$ and $\mpost$, by induction on the program structure with $\lfp \phiMopNo \sqsubseteq \gfp \phiMolpNo$ in the loop case.
Intuitively, \mopsymbol grants nontermination nothing, whereas \molpsymbol grants it everything.
In particular, $\molp{\program}{\zeroME}$ measures what can be \enquote{gained through nontermination}:
terminating runs contribute the postexpectation $\zeroME$, so any nonzero outcome must be attained by diverging.
This yields concise formulations of the two termination notions we employ.
Below, whenever expectations are assumed to be \emph{bounded} by some $b \in \PosRealsVect$, i.e.\ $\mpost(\pstate) \leq b$ for all $\pstate$ (cf.\ \Cref{sec:health}), all constructions (in particular the greatest element $\topME$ and greatest fixpoints) are taken over the complete lattice of multiobjective expectations bounded by $b$.

\begin{definition}[Demonic Termination]
\label{def:dast-dct}
We define the following notions:
    \begin{enumerate}
        \item A program $\program$ is \emph{demonically almost-surely terminating (dAST) on state $\pstate$}, if
        \[
            \molp{\program}{\zeroME}(\pstate) = \{0\}~,
        \]
        where we take\footnote{This is necessary for $\molp{\program}{\zeroME}(\pstate) = \{0\}$ to actually characterize dAST. When applying this definition, we will always restrict to $b$-bounded expectations for a suitable $b$.} the above $\molpsymbol$ over the complete sub-lattice of \emph{$b$-bounded} multiobjective postexpectations for some arbitrary but fixed bound $b<\infty$.
        \item A loop $\WHILEDO{\guard}{\program'}$ is \emph{demonically certainly terminating (dCT) on state $\pstate$} if there is a $k \in \Nats$ with
        $
            (\phiMolpNo)^k(\topME)(\pstate) = \{0\},
        $
        where $\phiMolpNo$ is taken with respect to the post $\zeroME$.
    \end{enumerate}
\end{definition}

Both notions capture \emph{demonic} termination, i.e., termination under every resolution of the nondeterministic choices, even though \mopsymbol and \molpsymbol resolve nondeterminism angelically:
dAST states that no resolution gains anything from diverging, capturing that every determinization of $\program$ terminates with probability one;
dCT states that, from the initial state $\pstate$, all executions have left the loop after some bounded number of iterations, so that no determinization admits any diverging execution.

The key step towards our lower-bound rules is that dAST (under boundedness of the post $\mpost$) and dCT (not requiring boundedness) make the liberal and the strict transformer coincide.
For that, we first recap a classic consequence of compactness, which applies in our setting since every element of $\HoarePowerDom$ is a closed subset of the compact space $\PosRealsInfVect$ (see the proof of \Cref{theo:wp-as-weighted-mop}).

\begin{theorem}[Cantor's Intersection Theorem {\cite[cf.\ Theorem~26.9]{munkres2000topology}}]
\label{theo:cantor}
    In a compact space, every decreasing sequence of nonempty closed sets has a nonempty intersection.
\end{theorem}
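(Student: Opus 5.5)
The statement is the classical Cantor intersection theorem. We prove it with the finite-intersection-property characterization of compactness and do not rely on any results from the paper.

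Let $K_0 \supseteq K_1 \supseteq K_2 \supseteq \cdots$ be nonempty closed subsets of a compact space $S$. Suppose, for contradiction, that $\bigcap_{i \in \Nats} K_i = \emptyset$. The complements $U_i = S \setminus K_i$ are open, and by De Morgan they satisfy $\bigcup_i U_i = S \setminus \bigcap_i K_i = S$. So $\{U_i\}_{i \in \Nats}$ is an open cover of $S$.

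By compactness there is a finite subcover $U_{i_1}, \dots, U_{i_m}$. Since the $K_i$ decrease, the $U_i$ increase, so with $N = \max_j i_j$ the union of the subcover is just $U_N$. Hence $U_N = S$, that is, $K_N = \emptyset$, which contradicts the assumption that every $K_i$ is nonempty. Therefore $\bigcap_i K_i \neq \emptyset$.

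The argument has no real obstacle; the only point needing care is the monotonicity step that turns the finite subcover into a single set $U_N$. The Hausdorff property is not used, so the statement holds for arbitrary compact spaces.

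In the paper's setting, the theorem is applied in $\PosRealsInfVect$ with the product of the order topology on $[0,\infty]$. This space is compact by Tychonoff (finite product), since $[0,\infty]$ is homeomorphic to $[0,1]$. Checking that elements of $\HoarePowerDom$ are topologically closed there is a separate matter, which the paper defers to the proof of \Cref{theo:wp-as-weighted-mop}. It is not part of the statement proved here.
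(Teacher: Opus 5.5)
Your proof is correct and follows essentially the paper's route. The paper gives no proof of its own and cites the finite-intersection-property characterization of compactness (Munkres, Theorem~26.9). Your complement/open-cover argument, with the monotonicity step collapsing the finite subcover to a single $U_N$, is that characterization specialized to a nested sequence; your side remarks that Hausdorffness is unused and that $[0,\infty]^n$ is compact are also accurate.
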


Next, we derive the descending counterpart of \Cref{theo:Kleene} for $\molpsymbol$:

\begin{theorem}[Descending Fixpoint Iteration]
\label{theo:co-kleene}
    $\gfp \phiMolpNo = \bigsqcap \set{(\phiMolpNo)^n(\topME) \mid n \in \Nats}$.
\end{theorem}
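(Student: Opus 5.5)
The plan is to follow the standard proof of the dual Kleene theorem, with $\omega$-continuity replaced by \emph{co-continuity}: $\phiMolpNo$ should preserve infima of decreasing chains. The steps are as follows.

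First, by monotonicity of $\phiMolpNo$ (inherited from the monotonicity of $\molpC{\program'}$, which is proved exactly as in \Cref{theo:mop-well-cont-monotone}), the iterates $(\phiMolpNo)^n(\topME)$ form a decreasing chain. Write $Y = \bigsqcap_n (\phiMolpNo)^n(\topME)$. An induction on $n$ gives $\gfp \phiMolpNo \sqsubseteq (\phiMolpNo)^n(\topME)$ for every $n$: the base case is trivial, and the step follows from monotonicity applied to $\gfp\phiMolpNo = \phiMolpNo(\gfp\phiMolpNo)$. Hence $\gfp \phiMolpNo \sqsubseteq Y$.

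For the converse, by \Cref{theo:park-lower-mop} it suffices to show that $Y$ is a subinvariant, i.e.\ $Y \sqsubseteq \phiMolpNo(Y)$. By monotonicity, $\phiMolpNo(Y) \sqsubseteq \bigsqcap_n (\phiMolpNo)^{n+1}(\topME) = Y$ always holds, so the real claim is the co-continuity statement
\[
    \phiMolpNo\Bigl(\bigsqcap_n X_n\Bigr) \eeq \bigsqcap_n \phiMolpNo(X_n)
\]
for decreasing chains $(X_n)$. I would prove this by structural induction on $\program'$, simultaneously for all programs, and thereby obtain co-continuity of $\molpC{\program}$. Infima are pointwise intersections, so the assignment and sequential composition cases are immediate, and so is scalar multiplication by $r<\infty$. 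The remaining cases are Minkowski sums (probabilistic choice and the loop functional) and Scott-convex hulls of unions (guarded choice). In each of them the inclusion ``$\subseteq$'' is monotonicity, so only ``$\supseteq$'' needs work. Nested loops in the inductive case are handled by the present statement applied inductively, together with the fact that infima of fixpoints along chains commute, or more simply by noting that gfps of co-continuous maps are again co-continuous in the parameter.

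The main obstacle is the ``$\supseteq$'' direction for Minkowski sums and for the hull $\cl{\conv{A_n\cup B_n}}$. Here I would use compactness. All sets involved are closed subsets of the compact space $\PosRealsInfVect$; this is the fact referenced via \Cref{theo:cantor}, and the boundedness convention for $\topME$ helps. For the Minkowski sum, take $z \in \bigcap_n (A_n \minkow B_n)$ and consider the sets $K_n = \{(a,b) \in A_n\times B_n \mid a+b=z\}$. Each $K_n$ is nonempty; it is closed because addition is continuous on the extended reals, with the convention $0\cdot\infty=0$ and the bounded setting avoiding pathologies; and the $K_n$ are decreasing. \Cref{theo:cantor} then yields $(a,b)\in\bigcap_n A_n\times\bigcap_n B_n$, which gives $z$. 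For the guarded choice, the same argument applies after writing an element of $\conv{A_n\cup B_n}$ as $\lambda a+(1-\lambda)b$, since for convex downward-closed sets two points suffice, and extracting the triple $(\lambda,a,b)$ from the compact set $[0,1]\times A_n\times B_n$. The extra Scott closure is handled by first verifying that this two-point convex hull is already closed, again by compactness. With co-continuity established, $Y \sqsubseteq \phiMolpNo(Y)$ follows, and therefore $Y \sqsubseteq \gfp\phiMolpNo$.
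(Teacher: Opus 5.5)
Your overall plan matches the paper's: the inclusion $\gfp \phiMolpNo \sqsubseteq \bigsqcap_n (\phiMolpNo)^n(\topME)$ by order theory, the reduction of the converse to $\omega$-co-continuity proved by structural induction, \Cref{theo:cantor} for Minkowski sums, and a Park-style argument for inner loops. Your Minkowski step is also fine, since addition on $\PosRealsInf$ is continuous and multiplication by a fixed $p$ is a homeomorphism or constant.

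The guarded-choice step has a genuine gap. You claim that for $A,B \in \HoarePowerDom$ the two-point hull $\set{\lambda a + (1-\lambda) b \mid \lambda \in [0,1],\, a \in A,\, b \in B}$ is closed ``by compactness'', i.e., as the image of the compact set $[0,1] \times A \times B$. However, $(\lambda,a) \mapsto \lambda \cdot a$ is not continuous at $(0,\infty)$ under the convention $0 \cdot \infty = 0$: take $\lambda_k = 1/k$ and $a_k = k^2$. So the image need not be closed.

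Concretely, for $A = \dwc{\set{(\infty,0)}}$ and $B = \dwc{\set{(0,1)}}$ the two-point hull is $[0,\infty] \times [0,1) \cup \set{0} \times [0,1]$, whereas $A \hoaresup B = [0,\infty] \times [0,1]$. A point such as $(5,1)$ is not dominated by any exact convex combination. Consequently your Cantor witness sets of triples $(\lambda,a,b) \in [0,1] \times A_n \times B_n$ can be empty for $z \in \bigcap_n (A_n \hoaresup B_n)$. They also need not be closed, for the same discontinuity. This is exactly the subtlety the paper flags, and it handles it by first reducing to bounded sublattices via finite caps.

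You cannot sidestep this by assuming boundedness. The theorem is needed over the full domain $\mExp$: \Cref{theo:residual} is stated for arbitrary $\mpost, \meg$, and \Cref{theo:eq-fp-ii} applies it with possibly unbounded posts.

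On $[0,1] \times [0,c]^n \times [0,c]^n$ with $c < \infty$ the combination map is continuous and your argument goes through verbatim. So one repair is to reduce to that case:
\begin{itemize}
\item By the cylinder property from step (i) of the proof of \Cref{theo:mop-as-weighted-wp}, every element of $\HoarePowerDom$ is a full cylinder along its unbounded components over a finitely valued, hence bounded, compact base.
\item Along a decreasing chain, the sets of unbounded components form a decreasing sequence of subsets of a finite index set, so they are eventually constant.
\item $A \hoaresup B$ is the cylinder along the union of the unbounded components of $A$ and $B$, over the convex hull of the projections of $A$ and $B$ onto the remaining components. These projections are bounded, so that hull is compact.
\end{itemize}
Running your compactness argument on these bounded bases then yields co-continuity of the hull. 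This is where the finite-cap reduction the paper alludes to does its work.
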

\begin{proof}
    The inclusion $\sqsubseteq$ is pure order theory: we have 
    $\gfp \phiMolpNo \sqsubseteq \topME$, and applying the monotone function $\phiMolpNo$ repeatedly preserves this bound while fixing the left-hand side.
    For $\sqsupseteq$, it suffices that the right-hand side is a fixpoint of $\phiMolpNo$, i.e., that $\phiMolpNo$ commutes with infima of descending sequences.
    In contrast to the $\omega$-continuity of $\phiMopNo$ (\Cref{theo:mop-well-cont-monotone}), this $\omega$-\emph{co}-continuity is not routine:
    infima in $\mExp$ are statewise intersections, and the Minkowski sums and suprema in the rules of \Cref{table:mop} must be commuted with them.
    For that, we leverage \Cref{theo:cantor}.
    As an example, consider the fixed-probability Minkowski combination with $p \in [0,1]$ and let $z \in \bigcap_n \bigl(p \cdot X_n(\pstate) \minkow (1-p) \cdot Y_n(\pstate)\bigr)$ for descending sequences $X_n, Y_n$:
    the witness sets $W_n = \setcomp{(u,v) \in X_n(\pstate) \times Y_n(\pstate)}{z \leq p \cdot u + (1-p) \cdot v}$ are nonempty, closed, and decreasing in the compact space $\PosRealsInfVect \times \PosRealsInfVect$, so \Cref{theo:cantor} yields a single witness pair that works for \emph{all} $n$ simultaneously, proving $z \in p \cdot \bigl(\bigcap_n X_n(\pstate)\bigr) \minkow (1-p) \cdot \bigl(\bigcap_n Y_n(\pstate)\bigr)$, where downward closedness of the two intersections absorbs the remaining slack in $z \leq p \cdot u + (1-p) \cdot v$.
    The suprema of guarded choices require more care:
    due to the convention $0 \cdot \infty = 0$, closure points involving $\infty$-components need not be dominated by exact convex combinations, and one first reduces to the bounded sublattices via finite caps, where the compactness argument applies.
    Lastly, for nested loops, co-continuity of the body transformer propagates through inner greatest fixpoints by a Park-style argument.
    The claim then follows by induction on the program structure. 
\end{proof}

With these ingredients, we can bound the gap between the strict $\mopsymbol$ and the liberal $\molpsymbol$ transformer:
liberal preexpectations of Minkowski sums decompose into a strict part and a liberal part.
Instantiated with $\meg = \zeroME$, the following lemma states that every liberal outcome splits into a strict outcome plus what is attainable through nontermination.

\begin{lemma}
\label[lemma]{theo:residual}
    For all $\program \in \pGCL$ and $\mpost, \meg \in \mExp$, we have
    \[
        \molp{\program}{\mpost \mminkow \meg} \ssqsubseteq \mop{\program}{\mpost} \mminkow \molp{\program}{\meg}.
    \]
\end{lemma}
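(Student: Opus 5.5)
We prove the inequality by structural induction on $\program$, simultaneously for all $\mpost, \meg \in \mExp$.

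\emph{Loop-free constructs.}
For $\SKIP$ and assignments the statement holds with equality, since substitution distributes over $\minkow$ pointwise.

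\emph{Sequential composition.}
For $\COMPOSE{\program_1}{\program_2}$ we apply the hypothesis for $\program_2$, then monotonicity of $\molpC{\program_1}$ (which holds by the same argument as \Cref{theo:mop-well-cont-monotone}), and then the hypothesis for $\program_1$ with the pair $\mop{\program_2}{\mpost}$, $\molp{\program_2}{\meg}$.

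\emph{Probabilistic choice.}
Here we use commutativity and associativity of $\minkow$ together with distributivity of scalar multiplication over $\minkow$. This distributivity holds because $p \cdot (X \minkow Y) = p \cdot X \minkow p \cdot Y$ for $p \in [0,1]$.

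\emph{Guarded choice.}
We must show
\[
\iverson{\guard_1}\cdot(A_1 \minkow B_1) \mecup \iverson{\guard_2}\cdot(A_2\minkow B_2) \;\sqsubseteq\; \bigl(\iverson{\guard_1}A_1 \mecup \iverson{\guard_2}A_2\bigr) \minkow \bigl(\iverson{\guard_1}B_1\mecup\iverson{\guard_2}B_2\bigr).
\]
Each $A_i \minkow B_i$ is contained in the right-hand side, so it suffices that the right-hand side is an element of the domain, hence convex and Scott closed. Well-definedness of $\minkow$ on $\HoarePowerDom$ gives this, and \Cref{theo:sup-is-cl} then bounds the left-hand supremum.

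\emph{Loops.}
Let $\program = \WHILEDO{\guard}{\program'}$. The goal is
\[
\gfp \Psi_{\mpost\minkow\meg} \;\sqsubseteq\; \lfp \Phi_{\mpost} \minkow \gfp \Psi_{\meg},
\]
where $\Phi$ and $\Psi$ are the characteristic functions of $\mopsymbol$ and $\molpsymbol$ respectively. Park induction cannot be applied directly, because the right-hand side is not itself a greatest fixpoint. Instead, we use Kleene iteration (\Cref{theo:Kleene}) for the strict part, $L_k = \Phi_{\mpost}^k(\zeroME)$. By induction on $k$ we show
\[
\gfp\Psi_{\mpost\minkow\meg} \;\sqsubseteq\; L_k \minkow \Psi_{\meg}^{k}(\topME)\quad\text{for all } k.
\]
The base case $k=0$ is trivial. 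For the step, we unfold $\gfp\Psi_{\mpost\minkow\meg} = \Psi_{\mpost\minkow\meg}(\gfp\ldots)$, apply the step hypothesis and monotonicity, and then apply the induction hypothesis for the body $\program'$ to
\[
\molp{\program'}{L_k \minkow \Psi_{\meg}^k(\topME)} \;\sqsubseteq\; \mop{\program'}{L_k} \minkow \molp{\program'}{\Psi_{\meg}^k(\topME)}.
\]
Finally, we regroup using $\iverson{\neg\guard}(\mpost\minkow\meg)\minkow\iverson{\guard}(U\minkow V) = (\iverson{\neg\guard}\mpost\minkow\iverson{\guard}U)\minkow(\iverson{\neg\guard}\meg\minkow\iverson{\guard}V)$.

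It remains to pass to the limit. Write $z$ for a point in the left-hand side, and for each $k$ choose a witness pair $(u_k,v_k) \in L_k(\pstate) \times \Psi_{\meg}^k(\topME)(\pstate)$ with $z \leq u_k + v_k$. Note that $u_k \in \lfp\Phi_{\mpost}(\pstate)$, which is closed in the compact space $\PosRealsInfVect$. The sets $\Psi_{\meg}^k(\topME)(\pstate)$ are decreasing and closed, with intersection $\gfp\Psi_{\meg}(\pstate)$ by \Cref{theo:co-kleene}. A Cantor-style argument as in the proof of \Cref{theo:co-kleene} (\Cref{theo:cantor}) applies to the witness sets
\[
W_k=\setcomp{(u,v)\in \lfp\Phi_{\mpost}(\pstate)\times\Psi_{\meg}^k(\topME)(\pstate)}{z\le u+v}.
\]
These are nonempty, closed and decreasing, so they have a common element, which witnesses $z \in (\lfp\Phi_{\mpost}\minkow\gfp\Psi_{\meg})(\pstate)$.

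\emph{Main obstacle.}
The hardest step is this limit argument. Closedness of $W_k$ needs continuity of addition, which fails at $\infty$; as in \Cref{theo:co-kleene}, we first restrict to $b$-bounded sublattices via finite caps and then remove the cap by a further monotone limit. The same care is needed for the closure points added by $\mecup$ in the guarded-choice case.
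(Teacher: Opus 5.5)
Your proof is correct and follows the paper's argument. The shared steps are: structural induction generalized to an arbitrary second postexpectation, the upper-bound argument for guarded choice, a stagewise inclusion against the Kleene iterates of the strict loop functional and the descending iterates of the liberal one, and a Cantor-intersection step on the witness sets $W_k$. The only cosmetic difference is that you bound the greatest fixpoint directly through its fixpoint equation, rather than through its own descending iterates as the paper does, which saves one appeal to descending fixpoint iteration.

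Your closing ``main obstacle'' is a non-issue. Addition on $[0,\infty]$ is continuous in the order topology, as the paper itself uses in the proof of its weighted-sum theorem, so each $W_k$ is closed outright. No capping to bounded sublattices is needed, either in the loop case or in the guarded-choice case.
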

\begin{proof}
    By induction on the program structure. The generalization from $\meg = \zeroME$ to arbitrary $\meg$ is what makes the induction go through.
    For $\SKIP$ and assignments, the claim holds with equality.
    For $\COMPOSE{\program_1}{\program_2}$, apply the induction hypothesis for $\program_2$, monotonicity of $\molpC{\program_1}$, and the induction hypothesis for $\program_1$ at the intermediate postexpectations $\mop{\program_2}{\mpost}$ and $\molp{\program_2}{\meg}$.
    For the probabilistic choice, the claim follows from the induction hypotheses by rearranging Minkowski sums; for the guarded choice, we additionally use
    $(X_1 \mminkow Y_1) \mecup (X_2 \mminkow Y_2) \ssqsubseteq (X_1 \mecup X_2) \mminkow (Y_1 \mecup Y_2)$, which holds since the right-hand side is an upper bound on both joinands.
    For a loop, we compare the three fixpoints through their iterates:
    by \Cref{theo:co-kleene}, $\molp{\WHILEDO{\guard}{\program'}}{\mpost \mminkow \meg}$ and $\molp{\WHILEDO{\guard}{\program'}}{\meg}$ are the infima of the descending iterates $B_n = (\phiMolpNo)^n(\topME)$ and $R_n$ of their respective liberal characteristic functions, and by \Cref{theo:Kleene}, $\mop{\WHILEDO{\guard}{\program'}}{\mpost}$ is the supremum of the ascending iterates $A_n = (\phiMopNo)^n(\zeroME)$.
    An inner induction on $n$ establishes the claim stagewise, i.e., $B_n \sqsubseteq A_n \mminkow R_n$:
    the base case is $\topME \sqsubseteq \zeroME \mminkow \topME$, and the step follows from the outer induction hypothesis for the loop body, applied at the stage expectations $A_n$ and $R_n$, together with monotonicity.
    It remains to transfer this to the respective limits.
    Fix $\pstate$ and $z \in \bigcap_n B_n(\pstate)$.
    The stagewise inclusions provide witnesses for every $n$, where the strict witness lies in the closed set $\bigl(\bigsqcup_m A_m\bigr)(\pstate)$ since $A_n \sqsubseteq \bigsqcup_m A_m$; so the sets
    \[
        W_n \eeq \setcomp{(u,v) \in \Bigl(\textstyle\bigsqcup_m A_m\Bigr)(\pstate) \times R_n(\pstate)}{z \leq u + v}
    \]
    are nonempty; they are closed and decreasing in the compact space $\PosRealsInfVect \times \PosRealsInfVect$, so \Cref{theo:cantor} yields a single pair $(u,v)$ that works for all $n$, witnessing
    $z \in \bigl(\bigsqcup_m A_m\bigr)(\pstate) \minkow \bigl(\bigcap_n R_n(\pstate)\bigr)$.
\end{proof}

This yields the following coincidence results:

\begin{lemma}[Equivalence of Fixpoints I]
\label{theo:eq-fp-i}
    If $\mpost$ is bounded and $\program$ is dAST on $\pstate$, then 
    \[\molp{\program}{\mpost}(\pstate) = \mop{\program}{\mpost}(\pstate)~.\]
\end{lemma}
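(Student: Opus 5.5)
The inclusion $\mop{\program}{\mpost}(\pstate) \subseteq \molp{\program}{\mpost}(\pstate)$ holds without any assumption, since $\mop{\program}{\mpost} \sqsubseteq \molp{\program}{\mpost}$ for all $\program$ and $\mpost$ (see the remark after the definition of $\molpsymbol$). The work is therefore in the reverse inclusion. We argue in the complete lattice of $b$-bounded multiobjective expectations, where $b$ is a bound on $\mpost$. This is the same sublattice over which dAST is defined, and by \Cref{theo:mop-feasible} all preexpectations of $\mpost$ stay inside it.

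For the reverse inclusion, instantiate \Cref{theo:residual} with $\meg = \zeroME$. Since $\mpost \minkow \zeroME = \mpost$, this gives
\[
    \molp{\program}{\mpost} \ssqsubseteq \mop{\program}{\mpost} \mminkow \molp{\program}{\zeroME}.
\]
Evaluating at $\pstate$ and using the dAST assumption $\molp{\program}{\zeroME}(\pstate) = \{0\}$ yields
\[
    \molp{\program}{\mpost}(\pstate) \subseteq \mop{\program}{\mpost}(\pstate) \minkow \{0\} = \mop{\program}{\mpost}(\pstate).
\]
Here Minkowski sums are evaluated statewise, as the pointwise lifting of operations to $\mExp$ prescribes.

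The one delicate point is making sure both sides of \Cref{theo:residual} are read in the same lattice as the dAST hypothesis. The liberal transformer depends on the top element $\topME$ through the greatest fixpoints, so $\molp{\program}{\zeroME}$ must be computed in the $b$-bounded sublattice, with the same $b$ that bounds $\mpost$. I would check that the proof of \Cref{theo:residual} (its base case $\topME \sqsubseteq \zeroME \minkow \topME$ and the Cantor argument) goes through verbatim in the $b$-bounded sublattice. If the dAST bound $b'$ differs from $b$, I would also argue that dAST is independent of the chosen finite bound. One way is positive homogeneity (\Cref{theo:mop-homogeneity}, lifted to $\molpsymbol$), rescaling between bounds. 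Alternatively, one can enlarge to a common bound $\max(b,b')$ and note that the gain through nontermination scales linearly with the cap. I expect this bookkeeping about bounds to be the main obstacle; the algebraic step itself is immediate.
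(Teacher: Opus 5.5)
Your proposal is correct and is the paper's proof. The inclusion $\mop{\program}{\mpost}(\pstate) \subseteq \molp{\program}{\mpost}(\pstate)$ comes from $\mop{\program}{\mpost} \sqsubseteq \molp{\program}{\mpost}$, and the converse is \Cref{theo:residual} at $\meg = \zeroME$ combined with $\molp{\program}{\zeroME}(\pstate) = \{0\}$, everything read in the lattice of expectations bounded by a fixed bound $b$ of $\mpost$. The paper simply reads dAST in that same $b$-bounded lattice, so your detour about independence of the bound is unnecessary; as stated, that claim would also fail for the degenerate bound $b' = 0^n$, under which every program is vacuously dAST.
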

\begin{proof}
    Fix a bound $b$ of $\mpost$ and work over the complete sub-lattice of expectations bounded by $b$, where dAST provides $\molp{\program}{\zeroME}(\pstate) = \{0\}$.
    We have $\mop{\program}{\mpost}(\pstate) \subseteq \molp{\program}{\mpost}(\pstate)$ as noted above.
    Conversely, \Cref{theo:residual} with $\meg = \zeroME$ yields
    \[
        \molp{\program}{\mpost}(\pstate)
        \eeq \molp{\program}{\mpost \mminkow \zeroME}(\pstate)
        ~{}\subseteq{}~ \mop{\program}{\mpost}(\pstate) \mminkow \molp{\program}{\zeroME}(\pstate)
        \eeq \mop{\program}{\mpost}(\pstate)~.
    \]
\end{proof}

\begin{lemma}[Equivalence of Fixpoints II]
\label{theo:eq-fp-ii}
    If the loop $\program_{\text{loop}} = \WHILEDO{\guard}{\program'}$ is dCT on $\pstate$, then, for arbitrary, possibly unbounded $\mpost$,
    \[
        \molp{\program_{\text{loop}}}{\mpost}(\pstate) \eeq \mop{\program_{\text{loop}}}{\mpost}(\pstate)~.
    \]
\end{lemma}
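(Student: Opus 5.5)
The plan is to mimic \Cref{theo:eq-fp-i}, but use the finite horizon from dCT in place of boundedness of $\mpost$. The inclusion $\mop{\program_{\text{loop}}}{\mpost}(\pstate) \subseteq \molp{\program_{\text{loop}}}{\mpost}(\pstate)$ holds by construction. For the converse, we use \Cref{theo:residual} with $\meg = \zeroME$:
\[
    \molp{\program_{\text{loop}}}{\mpost}
    \ssqsubseteq \mop{\program_{\text{loop}}}{\mpost} \mminkow \molp{\program_{\text{loop}}}{\zeroME}.
\]
It then suffices to show $\molp{\program_{\text{loop}}}{\zeroME}(\pstate) = \{0\}$, since Minkowski addition with $\{0\}$ is the identity. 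Unlike dAST, this step needs no bounded sublattice, which is why $\mpost$ may be unbounded.

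The key step is therefore to derive $\molp{\program_{\text{loop}}}{\zeroME}(\pstate) = \{0\}$ from dCT. By definition, $\molp{\program_{\text{loop}}}{\zeroME} = \gfp \phiMolpNo$, where $\phiMolpNo$ is taken with respect to the post $\zeroME$. By \Cref{theo:co-kleene}, this equals $\bigsqcap_n (\phiMolpNo)^n(\topME)$. The iterates descend, since $\phiMolpNo(\topME) \sqsubseteq \topME$ and $\phiMolpNo$ is monotone. Hence $\gfp \phiMolpNo(\pstate) \subseteq (\phiMolpNo)^k(\topME)(\pstate) = \{0\}$ for the $k$ given by dCT. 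Nonemptiness gives equality. Even more simply, we do not need \Cref{theo:co-kleene} at all: $\gfp \phiMolpNo \sqsubseteq (\phiMolpNo)^k(\topME)$ follows from $\gfp \sqsubseteq \topME$, monotonicity, and the fixpoint property.

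The main obstacle I expect is that \Cref{theo:residual} and \Cref{theo:co-kleene} were argued in settings where the finite-cap reduction to bounded sublattices was used. Their use here must be valid over the full lattice $\mExp$, since $\mpost$ is unbounded. Because of this, I would avoid relying on \Cref{theo:co-kleene} and use only the elementary bound above.

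For \Cref{theo:residual} in the loop case, the stagewise claim $B_n \sqsubseteq A_n \mminkow R_n$ needs no compactness. The limit transfer is needed only at $\pstate$. There, $R_k(\pstate) = \{0\}$, so the witness sets $W_n$ reduce for $n \ge k$ to subsets of $(\bigsqcup_m A_m)(\pstate) \times \{0\}$. Then $z \in B_k(\pstate) \subseteq A_k(\pstate) \subseteq \mop{\program_{\text{loop}}}{\mpost}(\pstate)$ directly.

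The argument is cleanest if written directly. For $z \in \molp{\program_{\text{loop}}}{\mpost}(\pstate) \subseteq B_k(\pstate)$ (with $B_n$ the iterates for post $\mpost$), the stagewise inclusion gives $z \in A_k(\pstate) \minkow R_k(\pstate) = A_k(\pstate)$. The stagewise inclusion at the state $\pstate$ is proved by the outer induction for the body. This uses \Cref{theo:residual} for $\program'$ applied to arbitrary arguments, and that part holds in general. The one remaining check is that dCT at $\pstate$ suffices even though the stagewise inclusion involves other states; the inclusion is pointwise for all states, so this is fine.
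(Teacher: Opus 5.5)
Your proposal is correct and follows the paper's proof. You derive $\molp{\program_{\text{loop}}}{\zeroME}(\pstate) = \{0\}$ from the elementary bound $\gfp \phiMolpNo \sqsubseteq (\phiMolpNo)^k(\topME)$, without \Cref{theo:co-kleene}, and then rerun the computation of \Cref{theo:eq-fp-i} via \Cref{theo:residual} over the full domain. Your direct variant is also sound and skips the compactness-based limit transfer for the outer loop: it evaluates the stagewise inclusion $B_k \sqsubseteq A_k \mminkow R_k$ at $\pstate$, where dCT gives $R_k(\pstate) = \{0\}$. It still relies on \Cref{theo:residual} for the body $\program'$, and hence on co-continuity for any nested loops.
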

\begin{proof}
    We have, $\gfp \phiMolpNo \sqsubseteq (\phiMolpNo)^k(\topME)$ for every $k$, where we take $\phiMolpNo$ with respect to $\zeroME$.
    Hence, choosing $k$ as in the definition of dCT gives 
    $\molp{\program_{\text{loop}}}{\zeroME}(\pstate) \subseteq (\phiMolpNo)^k(\topME)(\pstate) = \{0\}$, so $\molp{\program_{\text{loop}}}{\zeroME} = \{0\}$.
    Now the computation displayed in the proof of \Cref{theo:eq-fp-i} applies, this time over the full domain $\mExp$, as \Cref{theo:residual} requires no boundedness.
\end{proof}

This yields the two desired lower-bound proof rules.

\begin{theorem}
\label[theorem]{theo:lower-i}
    If $\mpost$ and $\invariant$ are bounded, $\program$ is dAST on $\pstate$, and $\invariant$ is a subinvariant of $\phiMopNo$, then 
    \[
    \invariant(\pstate) \subseteq (\lfp \phiMopNo)(\pstate) ~.
    \]
\end{theorem}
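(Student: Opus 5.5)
The plan is to chain the lower-bound Park rule with the first fixpoint-coincidence result, carrying everything out in the complete sub-lattice of $b$-bounded multiobjective expectations for a common bound $b$.

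First, fix bounds $b_{\mpost}$ and $b_{\invariant}$ for $\mpost$ and $\invariant$, and let $b$ be their componentwise maximum. By feasibility (\Cref{theo:mop-feasible}), $\mop{\program'}{X}$ is $b$-bounded whenever $X$ is. Since $[\neg\guard]$ and $[\guard]$ partition the states, $\phiMolpNo$ then maps the $b$-bounded sub-lattice into itself. Because $\invariant \sqsubseteq \phiMopNo(\invariant)$, and $\mop{\program'}{\cdot} \sqsubseteq \molp{\program'}{\cdot}$ by the structural induction noted after the definition of $\molpsymbol$, we also get $\invariant \sqsubseteq \phiMopNo(\invariant) \sqsubseteq \phiMolpNo(\invariant)$. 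Hence $\invariant$ is a subinvariant of $\phiMolpNo$ on the bounded lattice. Applying \Cref{theo:park-lower-mop}, in its form for the liberal functional, gives $\invariant \sqsubseteq \gfp \phiMolpNo = \molp{\program}{\mpost}$, with the greatest fixpoint taken in the $b$-bounded lattice.

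Second, I invoke \Cref{theo:eq-fp-i}. Since $\mpost$ is bounded and $\program$ is dAST on $\pstate$, evaluating at $\pstate$ gives $\molp{\program}{\mpost}(\pstate) = \mop{\program}{\mpost}(\pstate) = (\lfp \phiMopNo)(\pstate)$. Combining this with the first step yields $\invariant(\pstate) \subseteq (\lfp\phiMopNo)(\pstate)$.

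The main obstacle is making sure that all lattices agree. The dAST hypothesis is stated for some arbitrary but fixed bound, while \Cref{theo:eq-fp-i} and \Cref{theo:park-lower-mop} must be applied with the bound $b$ chosen above. I would argue that dAST does not depend on the bound: if $\molp{\program}{\zeroME}(\pstate) = \{0\}$ in one bounded lattice, the same holds in any other. The route is positive homogeneity (\Cref{theo:mop-homogeneity}), which should also hold for $\molpsymbol$ by the same proof, applied to rescale $\topME$; alternatively, one can appeal directly to the fact that dAST means every determinization terminates almost surely.

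A second point to check is that $\lfp \phiMopNo$ is the same whether it is computed in the bounded sub-lattice or in $\mExp$. This holds because the Kleene iterates from $\zeroME$ stay $b$-bounded by feasibility, and the bounded sub-lattice is closed under the suprema involved, since $\cl{\conv{\cdot}}$ of $b$-bounded sets remains $b$-bounded.
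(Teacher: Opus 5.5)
Your proof is correct and follows the paper's route: work in the lattice bounded by a common bound of $\mpost$ and $\invariant$, obtain a greatest-fixpoint lower bound by Park induction (\Cref{theo:park-lower-mop}), and collapse it to $(\lfp \phiMopNo)(\pstate)$ via \Cref{theo:eq-fp-i}. Your explicit step $\invariant \sqsubseteq \phiMopNo(\invariant) \sqsubseteq \phiMolpNo(\invariant)$ spells out the comparison $\gfp \phiMopNo \sqsubseteq \gfp \phiMolpNo = \molp{\program}{\mpost}$, which the paper's one-line chain uses implicitly when the loop body itself contains loops. In your side remark on bound-independence of dAST, note that rescaling only relates proportional bounds, so you additionally need monotonicity of $\molpsymbol$ in the bound, and the original bound must be strictly positive.
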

\begin{theorem}
\label[theorem]{theo:lower-ii}
    If $\program$ is dCT on $\pstate$ and $\invariant$ is a subinvariant of $\phiMopNo$, then 
    \[
        \invariant(\pstate) \subseteq (\lfp \phiMopNo)(\pstate)~.
    \]
\end{theorem}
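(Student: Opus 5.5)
The plan mirrors the proof of \Cref{theo:lower-i}, with dCT in place of dAST and boundedness. I chain three earlier results. First, Park induction for lower bounds, \Cref{theo:park-lower-mop}, applied to the liberal functional. Second, the fact that $\mop{\program'}{\cdot}$ lies below $\molp{\program'}{\cdot}$ on the loop body. Third, the fixpoint coincidence \Cref{theo:eq-fp-ii}.

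The first step is to transfer the subinvariant from $\phiMopNo$ to $\phiMolpNo$. Since $\mop{\program'}{X} \sqsubseteq \molp{\program'}{X}$ for every $X$, and Minkowski sum and scalar multiplication with $[\guard]$ are monotone, we obtain $\phiMopNo(X) \sqsubseteq \phiMolpNo(X)$ for all $X \in \mExp$. Hence
\[
\invariant \sqsubseteq \phiMopNo(\invariant) \sqsubseteq \phiMolpNo(\invariant),
\]
so $\invariant$ is also a subinvariant of $\phiMolpNo$.

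The second step applies \Cref{theo:park-lower-mop} to $\phiMolpNo$, which is monotone on the complete lattice $\mExp$ by the same structural argument as in \Cref{theo:mop-well-cont-monotone}. This gives $\invariant \sqsubseteq \gfp \phiMolpNo = \molp{\program}{\mpost}$. Here the lattice is the full, unbounded $\mExp$, since no boundedness of $\mpost$ or $\invariant$ is assumed.

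The third step invokes \Cref{theo:eq-fp-ii}. Because $\program$ is dCT on $\pstate$, it yields $\molp{\program}{\mpost}(\pstate) = \mop{\program}{\mpost}(\pstate) = (\lfp \phiMopNo)(\pstate)$, and evaluating the previous inclusion at $\pstate$ gives the claim.

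The only delicate point is the lattice in which the greatest fixpoint is taken. \Cref{theo:eq-fp-ii} is stated over the full domain $\mExp$, which matches the use of Park induction there. The dCT hypothesis itself is phrased with respect to the post $\zeroME$ and the top element $\topME$ of $\mExp$, so no mismatch of lattices arises. I therefore expect no real obstacle beyond the routine check that $\phiMopNo \sqsubseteq \phiMolpNo$ pointwise. That check holds because $\mop{\program'}{\cdot} \sqsubseteq \molp{\program'}{\cdot}$ was established by structural induction before \Cref{def:dast-dct}.
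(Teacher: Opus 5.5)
Your proof is correct and uses the same ingredients as the paper's proof: Park induction for lower bounds (\Cref{theo:park-lower-mop}) over the full domain $\mExp$, followed by the fixpoint coincidence of \Cref{theo:eq-fp-ii}. The only difference is bookkeeping. The paper applies Park induction to $\phiMopNo$ itself and then tacitly needs $\gfp \phiMopNo \sqsubseteq \gfp \phiMolpNo = \molp{\program}{\mpost}$ to invoke \Cref{theo:eq-fp-ii}, whereas your transfer of the subinvariant via $\phiMopNo \sqsubseteq \phiMolpNo$ makes that step explicit.
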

\begin{proof}
    For \Cref{theo:lower-i}, we work over the lattice of expectations bounded by a common bound of $\mpost$ and $\invariant$; for \Cref{theo:lower-ii}, over the full domain.
    In both cases, \Cref{theo:park-lower-mop} yields $\invariant(\pstate) \sqsubseteq (\gfp \phiMopNo)(\pstate)$.
    By \Cref{theo:eq-fp-i} resp.\ \Cref{theo:eq-fp-ii}, the latter equals 
    \[
    \mop{\WHILEDO{\guard}{\program'}}{\mpost}(\pstate) = (\lfp \phiMopNo)(\pstate)~.
    \]
\end{proof}

Revisiting \Cref{ex:split-upper}, the invariant $\invariant$ there is an exact fixpoint, hence in particular a subinvariant.
The loop $\program_{\text{split}}$ is dCT for every $\pstate$:
every iteration decreases $k$, so from a state with counter value $k$, the descending iterates of $\phiMolpNo$ with respect to $\zeroME$ reach $\zeroHoare$ after $k+1$ unfoldings.
Since the postexpectation $\dwcset{\vvechorizon{x}{y}}$ is unbounded, \Cref{theo:lower-i} is not applicable, but \Cref{theo:lower-ii} is, and together with the upper bound from \Cref{ex:split-upper} we obtain exactness:
$
    \mop{\program_{\text{split}}}{\dwcset{\vvechorizon{x}{y}}} = \invariant.
$

\begin{example}[Lower Bounds under dAST]
\label{ex:geo-lower}
    Consider the following (non-dCT) loop
    \[
        \program_{\text{geo}} \eeq \WHILEDO{c = 1}{\ \PCHOICE{\ASSIGN{c}{0}}{\tfrac{1}{2}}{\ASSIGN{x}{x+1}}\,}
    \]
    with postexpectation $\dwcset{\vvechorizon{\iverson{x \geq 1}}{\iverson{x = 0}}}$, bounded by $\vvechorizon{1}{1}$:
    the two objectives are the probabilities of terminating with an incremented resp.\ untouched $x$.
    The program is not dCT (flipping tails forever is a diverging execution, albeit of probability $0$), but it is dAST (for every $\pstate$), so \Cref{theo:lower-i} applies.
    We use the invariant
    \[
        \invariant
        \eeq
        [c \neq 1] \cdot \dwcset{\vvechorizon{\iverson{x \geq 1}}{\iverson{x = 0}}}
        \mminkow
        [c = 1] \cdot \dwcset{\vvechorizon{\iverson{x \geq 1} + \tfrac{1}{2} \cdot \iverson{x = 0}}{\tfrac{1}{2} \cdot \iverson{x = 0}}}.
    \]
    For $c \neq 1$, we have $\phiMopNo(\invariant) = \dwcset{\vvechorizon{\iverson{x \geq 1}}{\iverson{x = 0}}} = \invariant$; for $c = 1$, the rules of \Cref{table:mop} give
    \[
        \phiMopNo(\invariant)
        \eeq \tfrac{1}{2} \cdot \invariant\subst{c}{0} \mminkow \tfrac{1}{2} \cdot \invariant\subst{x}{x+1}
        \eeq \dwcset{\vvechorizon{\tfrac{1}{2} \cdot \iverson{x \geq 1} + \tfrac{1}{2}}{\tfrac{1}{2} \cdot \iverson{x = 0}}}
        \eeq \invariant,
    \]
    where the last step uses $\iverson{x \geq 1} + \iverson{x = 0} = 1$ on $\Nats$.
    So $\invariant$ is an exact fixpoint, and \Cref{theo:park-upper-mop,theo:lower-i} together yield $\mop{\program_{\text{geo}}}{\dwcset{\vvechorizon{\iverson{x \geq 1}}{\iverson{x = 0}}}} = \invariant$.
    For the initial state $c = 1$, $x = 0$, we obtain $\dwcset{\vvechorizon{\tfrac{1}{2}}{\tfrac{1}{2}}}$:
    the program cannot do better than a fair split between the two objectives.
\end{example}

\paragraph{Lower $\omega$-invariants}

The rules above hinge on termination.
Our final rule applies to \emph{every} loop and \emph{every} postexpectation.
Instead of certifying one invariant against the fixpoint, it certifies a whole \emph{sequence of invariants} against the ascending Kleene iterates, stage by stage.
A sequence $\invariant\colon \Nats \to \mExp$ is a \emph{lower $\omega$-invariant} if
\[
    \invariant_0 \ssqsubseteq \phiMopNo(\zeroME)
    \qquad \text{and} \qquad
    \invariant_{k+1} \ssqsubseteq \phiMopNo(\invariant_k) \quad \text{for all } k \in \Nats.
\]

\begin{theorem}[Lower Bounds from $\omega$-invariants]
\label{theo:cousot-mop}
    If $\invariant$ is a lower $\omega$-invariant, then 
    \[\bigsqcup_k \invariant_k \sqsubseteq \lfp \phiMopNo~. \]
\end{theorem}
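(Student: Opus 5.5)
We show stagewise that each invariant is dominated by the corresponding Kleene iterate, and then pass to the supremum. Write $A_k \coloneqq (\phiMopNo)^{k+1}(\zeroME)$ for the ascending iterates, shifted by one so that they align with the indexing of the invariant sequence.

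The first step is an induction on $k$ establishing $\invariant_k \sqsubseteq A_k$ for all $k \in \Nats$.
\begin{itemize}
    \item The base case is precisely the first defining condition of a lower $\omega$-invariant, $\invariant_0 \sqsubseteq \phiMopNo(\zeroME) = A_0$.
    \item For the step, assume $\invariant_k \sqsubseteq A_k$. The second defining condition together with monotonicity of $\phiMopNo$ yields
    \[
        \invariant_{k+1} \ssqsubseteq \phiMopNo(\invariant_k) \ssqsubseteq \phiMopNo(A_k) \eeq A_{k+1}.
    \]
\end{itemize}
Monotonicity of $\phiMopNo$ is inherited from the monotonicity of $\mopC{\program'}$ (\Cref{theo:mop-well-cont-monotone}). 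It remains to check that the outer operations preserve $\sqsubseteq$: the multiplication by $[\guard]$ and the Minkowski sum with the fixed summand $[\neg\guard]\cdot\mpost$. Both are evidently monotone with respect to statewise inclusion.

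The second step passes to the limit. Since $A_k \sqsubseteq \bigsqcup_m (\phiMopNo)^m(\zeroME)$ for every $k$, each $\invariant_k$ is bounded by this supremum. The supremum of the iterates equals $\lfp \phiMopNo$ by Kleene's theorem (\Cref{theo:Kleene}), which applies because $\phiMopNo$ is $\omega$-continuous (\Cref{theo:mop-well-cont-monotone}). Hence $\lfp \phiMopNo$ is an upper bound of $\{\invariant_k \mid k \in \Nats\}$, and the least upper bound $\bigsqcup_k \invariant_k$ lies below it.

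Only lattice-theoretic facts are needed here. In particular, no closure argument for the join is required, because $\bigsqcup$ in $\mExp$ is by definition the least upper bound in the lattice. Concretely, the statewise closure $\cl{\conv{\cdot}}$ of a union of sets, all contained in the closed convex set $(\lfp\phiMopNo)(\pstate)$, stays inside that set.

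The main point requiring care is the monotonicity of $\phiMopNo$ itself, specifically monotonicity of the Minkowski sum on $\HoarePowerDom$. Since $X \subseteq X'$ implies $Y \minkow X \subseteq Y \minkow X'$, this is immediate. Everything else is the standard Cousot-style $\omega$-invariant argument.
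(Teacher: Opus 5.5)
Your proof is correct and follows the paper's argument: the same induction uses monotonicity to give $\invariant_k \sqsubseteq (\phiMopNo)^{k+1}(\zeroME)$, and the iterates are then bounded by $\lfp \phiMopNo$. The only difference is that you bound the iterates via Kleene's theorem. The paper instead shows $(\phiMopNo)^{k}(\zeroME) \sqsubseteq \lfp \phiMopNo$ by a second induction using only monotonicity, starting from $\zeroME \sqsubseteq \lfp \phiMopNo$ and $\phiMopNo(\lfp \phiMopNo) = \lfp \phiMopNo$, so $\omega$-continuity is not actually needed.
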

\begin{proof}
    By induction on $k$, using monotonicity of $\phiMopNo$, we get $\invariant_k \sqsubseteq (\phiMopNo)^{k+1}(\zeroME)$, and each iterate is below $\lfp \phiMopNo$, again by induction on $k$.
    Hence $\lfp \phiMopNo$ is an upper bound on all $\invariant_k$, and thus on their supremum.
\end{proof}

This rule requires no termination or boundedness side conditions whatsoever.
The price is that it comes with infinitely many proof obligations, and that the stages $\invariant_k$ must track the unrolling depth explicitly, which quickly becomes cumbersome.

\begin{example}[Lower $\omega$-Invariants]
\label{ex:geo-omega}
    Revisiting \Cref{ex:geo-lower}, we prove the same lower bound with \Cref{theo:cousot-mop}, i.e., without appealing to termination.
    The $k$-th stage must account for what is achievable within $k$ unrollings of the loop. One choice is 
    \[\invariant_0 = [c \neq 1] \cdot \dwcset{\vvechorizon{\iverson{x \geq 1}}{\iverson{x = 0}}} \mminkow [c=1] \cdot \zeroHoare \]
    and, for $k \geq 1$,
    \begin{align*}
        \invariant_k
        \eeq{} & [c \neq 1] \cdot \dwcset{\vvechorizon{\iverson{x \geq 1}}{\iverson{x = 0}}} \\
        & \mminkow~ [c = 1] \cdot \dwcset{\vvechorizon{(1 - 2^{-k}) \cdot \iverson{x \geq 1} + \bigl(\tfrac{1}{2} - 2^{-k}\bigr) \cdot \iverson{x = 0}}{\ \tfrac{1}{2} \cdot \iverson{x = 0}}}.
    \end{align*}
    Computations as in \Cref{ex:geo-lower} verify $\invariant_0 = \phiMopNo(\zeroME)$ and $\invariant_{k+1} = \phiMopNo(\invariant_k)$, so $\invariant$ is a lower $\omega$-invariant.
    The stages form an ascending chain whose coefficients converge to those of the invariant from \Cref{ex:geo-lower}, and since suprema in $\mExp$ are Scott closed (\Cref{theo:sup-is-cl}), $\bigsqcup_k \invariant_k$ is exactly that invariant. We thus obtain the same lower bound as before, at the price of carrying the unrolling index $k$ through every step.
\end{example}

\section{From Weighted Objectives to Synthesized Determinizations}
\label{sec:mop-wp}

Recall that by \Cref{theo:mop-is-pareto,theo:mop-wp} we know that
\[
    \mop{\program}{\dwc{\set{\post}}}(\pstate)
    \quad \eeq \quad 
    \cl{\AchExp{\program}{\post}{\pstate}}
    \quad \eeq \quad 
    \cl{\set{ \wpVec{\program'}{\post}(\pstate) \mid \program' \determmixed \program }},
\]
meaning that every point in $\mop{\program}{\dwc{\set{\post}}}(\pstate)$ can be approximated arbitrarily closely by the weakest preexpectation vector of some mixed determinization $\program' \determmixed \program$.
However, this characterization does not guarantee that the desired point is achieved by an \emph{exact} determinization; hence the term \emph{almost} achievability.
In this section, we investigate when such optimal determinizations do exist and, if they do, how they can be constructed mechanically.

To this end, we relate the multiobjective optimization problem represented by \mopsymbol to single-objective optimization represented by \wpsymbol in two steps.
First, we show how \emph{weighted sums} of the expectations $\post_1,\dots,\post_n$ transform the multiobjective problem into a single-objective optimization problem that can be handled directly by \wpsymbol in \Cref{sec:wsum-exp}.
Second, we study in \Cref{sec:wsum-opt} how \emph{optimal determinizations for weighted sums} can be used to obtain determinizations that optimize the original multiobjective objective.
This allows us to discuss the existence (\Cref{sec:existence}) and synthesis (\Cref{sec:synthesis}) of such optimal determinizations.

\paragraph{Geometric background}
A \emph{weight vector} $\weight \in [0,1]^n$ is a vector such that $\sum_i \weight_i = 1$. %
We denote the set of all weight vectors by $\Weights$.
For a set of vectors $S \subseteq \PosRealsInfVect$, we define its weighted sum with respect to a weight vector $\weight \in \Weights$ as
$
    \weight \cdot S = \set{\weight \cdot p \mid p \in S} \subseteq \PosRealsInf.
$
This operation transforms a set of vectors into a set of real values, which corresponds to projection onto the direction given by $\weight$.

For weight vector $\weight \in \Weights$ and $\HDel \in \HoarePowerDom$, the \emph{face} induced by $\weight$ is defined as
\[
    \face{\weight}{\HDel}
    \eeq
    \set{x \in \HDel \mid
    \weight \cdot x =
    \sup \set{\weight \cdot y \mid y \in \HDel}}.
\]
The corresponding \emph{supporting halfspace}
\[
    \set{ x \mid \weight \cdot x \leq {\sup_{y \in \HDel}} \weight \cdot y}
\]
contains all points whose weighted sum is no larger than the optimum achieved in direction $\weight$.

An \emph{extreme point} of $\HDel$ is a point $x \in \HDel$ that cannot be written as a non-trivial convex combination of two other points in $\HDel$.
Equivalently, there are no $a,b \in \HDel$ with $a \neq x$, $b \neq x$, and $\weightc \in (0,1)$ such that $x = \weightc a + (1-\weightc)b$.
An extreme point $x$ is \emph{exposed} if there exists a weight vector $\weight$ such that $\face{\weight}{\HDel} = \set{x}$.
We denote by $p \finval \infty$ a \emph{finitely valued} point, i.e., $p \in \PosRealsVect$.
For illustrative explanations of these notions, we refer to \cite[Section 2.1.2]{quatmann2023verification}.

\subsection{Weighted Sums of Multiple Expectations}
\label{sec:wsum-exp}

The result of \mopsymbol is a set of achievable vectors of expectations, whereas \wpsymbol optimizes a single expectation and therefore produces a scalar value.
Weighted sums provide the connection between these two perspectives: they map vectors of expectations to scalar expectations that can be optimized directly using \wpsymbol.

We use a \emph{weight vector} $\weight \in \Weights$
to assign preferences to objectives.
For example, $\weight = \vvechorizon{\frac{2}{3}}{\frac{1}{3}}$ applied to two objectives expresses that the first objective is considered twice as important as the second.

By weighting multiple expectations according to $\weight$, we obtain a single expectation.
For example, for $\weight = \vvechorizon{\frac{2}{3}}{\frac{1}{3}}$ and the two expectations $x$ and $y$ we have
\[
    \vvechorizon{\frac{2}{3}}{\frac{1}{3}} \cdot \vvechorizon{x}{y}
    \eeq \frac{2}{3}\cdot x + \frac{1}{3}\cdot y,
\]
which is a single expectation that can be optimized directly using \wpsymbol.
For the program
\[
    \programrun = \GC{\true}{\ASSIGN{x}{1} \fatsemi \ASSIGN{y}{0}}{\true}{\ASSIGN{x}{0} \fatsemi \ASSIGN{y}{1}},
\]
we obtain
\[
    \wp{\program}{\frac{2}{3}\cdot x + \frac{1}{3}\cdot y}
    =
    \frac{2}{3},
\]
which is the maximal value achievable for this particular weighting of $x$ and $y$.

There is a tight relationship between \wpsymbol applied to multiple weighted expectations and \mopsymbol.
A related connection was observed by \citeauthor{watanabe2026posterior} (cf.\ Lemma~2) in the special case of indicator expectations for a partition of the state space.
Here, we generalize this observation to arbitrary expectations. %

\begin{theorem}%
\label{theo:wp-as-weighted-mop}
    For all $\post \in \Exp^n$, $\pstate \in \States$, and $\weight \in \Weights$, we have
    \[
        \wp{\program}{\weight \cdot \post}(\pstate) \eeq \max \set{\weight \cdot x \mid x \in \mop{\program}{\dwcset{\post}}(\pstate)}.
    \]
\end{theorem}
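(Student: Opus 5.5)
I will prove the statement by structural induction on $\program$, but first generalise it so that the induction goes through loops and sequential composition. For a multiobjective expectation $\me \in \mExp$, define the scalar expectation $\weight \cdot \me \coloneqq \mylambda \pstate \sup \set{\weight \cdot x \mid x \in \me(\pstate)}$. The generalised claim is
\[
    \wp{\program}{\weight \cdot \me} \eeq \weight \cdot \mop{\program}{\me},
\]
with the supremum on the right attained. For $\me = \dwcset{\post}$ we have $\weight \cdot \me = \weight \cdot \post$, which gives the theorem.

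\textbf{Attainment.} I settle this first, since it is what justifies $\max$. Every $\HDel \in \HoarePowerDom$ is Scott closed in $\PosRealsInfVect$. I will argue that for downward-closed sets Scott closure coincides with closedness in the product topology of the compact space $\PosRealsInfVect$. Hence $\HDel$ is compact. The map $x \mapsto \weight \cdot x$ is continuous into $\PosRealsInf$ (with $0 \cdot \infty = 0$; the coordinates with $\weight_i = 0$ simply drop out). So the supremum is a maximum.

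\textbf{Induction.} The cases are as follows.
\begin{itemize}
\item $\SKIP$ and assignments are immediate, since substitution commutes with $\weight \cdot (\cdot)$.
\item Sequential composition follows directly from the generalised hypothesis for both components.
\item For probabilistic choice, $\sup$ of $\weight\cdot(\cdot)$ over a Minkowski sum $p X \minkow (1-p) Y$ splits into $p\sup_X + (1-p)\sup_Y$, because the linear functional is maximised independently on each summand.
\item For guarded choice, the key fact is $\sup_{\cl{\conv{X \cup Y}}} \weight \cdot x = \max(\sup_X \weight \cdot x, \sup_Y \weight \cdot x)$. This holds because a linear functional has the same supremum on a set, on its convex hull, and on its downward closure. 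The Iverson factors also commute: if $[\guard_1] = 0$, that joinand is $\zeroHoare$ and contributes $0$, matching the $\wpsymbol$ rule.
\end{itemize}

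\textbf{Loops.} This is the main obstacle. Write $\Psi$ for the $\wpsymbol$-characteristic function of $\WHILEDO{\guard}{\program'}$ with post $\weight\cdot\me$, and $\phiMopNo$ for the $\mopsymbol$-characteristic function with post $\me$. By the induction hypothesis for $\program'$ together with the Minkowski and Iverson facts above, $\weight \cdot \phiMopNo(X) = \Psi(\weight \cdot X)$ for every $X$. So $\weight \cdot (\cdot)$ intertwines the two characteristic functions, and it maps $\zeroME$ to $0$.

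By Kleene (\Cref{theo:Kleene}), both least fixed points are suprema of their iterates, so it remains to show that $\weight \cdot (\cdot)$ commutes with suprema of $\omega$-chains. The inequality $\geq$ is monotonicity. For $\leq$, note that on an ascending chain $X_k$ the supremum is $\cl{\bigcup_k X_k}$; the union is already convex because the chain is ascending. Any point of this closure is a limit of points in $\bigcup_k X_k$. Lower semicontinuity of $x \mapsto \weight \cdot x$ then bounds its weighted value by $\sup_k \sup_{X_k} \weight \cdot x$. I expect the delicate part to be the $\infty$-components in this closure argument, which may need the truncation to bounded sublattices used in \Cref{theo:co-kleene}.
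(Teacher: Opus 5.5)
Your proposal is correct and follows essentially the same route as the paper. Both generalize to arbitrary $\me$ via scalarization, use compactness of elements of $\HoarePowerDom$ for attainment, and run a structural induction in which loops are handled by an iterate-by-iterate correspondence under Kleene. The one real difference is how the join fact is proved: the paper proves it once for arbitrary joins by observing that the halfspace $\set{x \mid \weight \cdot x \leq c}$ is itself an element of $\HoarePowerDom$, and hence contains $\bighoaresup X$ whenever it contains every member. That argument also covers the Scott-closure step you skip in the guarded-choice case, and it shows your worry about $\infty$-components is unfounded: your own continuity argument on the compact space already suffices, and no truncation to bounded sublattices is needed.
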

\begin{proof}
    Throughout the proof, fix a weight vector $\weight \in \Weights$.
    We first collect two observations on how weighted sums interact with the Hoare powerdomain.

    First, we show that for every $\HDel \in \HoarePowerDom$, the supremum $\sup\, (\weight \cdot \HDel)$ is \emph{attained} by some point of $\HDel$; in particular, the maximum in the statement of the theorem is well-defined.
    We equip $\PosRealsInf$ with its order topology \cite[Section~14]{munkres2000topology}, under which it is compact \cite[Theorem~27.1]{munkres2000topology}.
    Accordingly, we equip $\PosRealsInfVect$ with the product topology, which is again compact \cite[Theorem~26.7]{munkres2000topology}.
    The map $x \mapsto \weight \cdot x = \sum_{i} \weight_i \cdot x_i$ from $\PosRealsInfVect$ to $\PosRealsInf$ is continuous with respect to these topologies:
    addition is continuous on $\PosRealsInf$, and so is multiplication by the constant $\weight_i \in [0,1]$.
    Moreover, every $\HDel \in \HoarePowerDom$ is a closed subset of $\PosRealsInfVect$, as downward closedness and Scott closedness together imply topological closedness.\footnote{%
        Let $a \not\in \HDel$ and let $D$ be the set of all $c \leq a$ with $c_i < a_i$ for every $i$ with $a_i \neq 0$.
        Then $D$ is directed with $\sup D = a$, so Scott closedness of $\HDel$ yields some $c \in D$ with $c \not\in \HDel$.
        The open set $\set{z \mid c_i < z_i \text{ for all } i \text{ with } a_i \neq 0}$ is a neighborhood of $a$ disjoint from $\HDel$:
        every $z$ in this set satisfies $c \leq z$, so $z \in \HDel$ would imply $c \in \HDel$ by downward closedness.
        More abstractly, Scott closed subsets of a continuous lattice are closed in its Lawson topology, which on $\PosRealsInfVect$ coincides with the product topology considered here; see \cite[Chapter~III]{gierz2003continuous}.}
    Consequently, $\HDel$ is a nonempty (\Cref{def:hoare-power}) closed subset of a compact space and hence itself compact \cite[Theorem~26.2]{munkres2000topology}.
    The extreme value theorem \cite[Theorem~27.4]{munkres2000topology}, which applies since the codomain $\PosRealsInf$ carries the order topology, thus yields that the continuous map $\weight \cdot (\cdot)$ attains a maximum on the nonempty compact set $\HDel$, i.e., $\sup\, (\weight \cdot \HDel) = \max\, (\weight \cdot \HDel)$ is attained by some point of $\HDel$.

    Second, for every $c \in \PosRealsInf$, the supporting halfspace $\set{x \mid \weight \cdot x \leq c}$ is itself an element of $\HoarePowerDom$:
    it contains $0^n$, it is downward closed since $\weight \geq 0$, it is convex closed since $\weight \cdot (\cdot)$ is linear, and it is Scott closed since $\weight \cdot (\cdot)$ preserves suprema of directed sets.
    Now observe that
    \[
        \max\, (\weight \cdot \HDel) \leq c
        \quad \text{iff} \quad
        \HDel \subseteq \set{x \mid \weight \cdot x \leq c}~.
    \]
    Since the supporting halfspace lies in $\HoarePowerDom$ and $\bighoaresup X$ is the \emph{least} element of $\HoarePowerDom$ containing every $\HDel \in X$, this equivalence shows that, for every $X \subseteq \HoarePowerDom$, the quantities $\max\, (\weight \cdot \bighoaresup X)$ and $\sup_{\HDel \in X}\, \max\, (\weight \cdot \HDel)$ have the same upper bounds $c$, and are therefore equal, i.e., we have
    \[
        \max\, \bigl(\weight \cdot \bighoaresup X\bigr) \eeq \sup_{\HDel \in X}~ \max\, (\weight \cdot \HDel)~.
        \tag{$\ast$}
    \]
    Notice that the outer supremum on the right-hand side ranges over the possibly infinite set $X$ and need not be attained.
    Intuitively, ($\ast$) states that the convex and Scott closures taken by $\bighoaresup$ (\Cref{theo:sup-is-cl}) do not add points with larger weighted sums.

    Next, in order to enable a proof by induction on the program structure, we generalize the claim from the principal postexpectation $\dwcset{\post}$ to \emph{arbitrary} $\mpost \in \mExp$. For that,
    define the \emph{scalarization of $\mpost \in \mExp$ along $\weight$} as the classic expectation
    \[
        \scal{\weight}{\mpost} \eeq \mylambda \pstate \max\, \bigl(\weight \cdot \mpost(\pstate)\bigr) \morespace{\in} \Exp~.
    \]
    We claim that scalarization commutes with the two transformers, i.e.,
    \[
        \forall \program \in \pGCL \colon \forall \mpost \in \mExp\colon \quad
        \scal{\weight}{\mop{\program}{\mpost}} \eeq \wp{\program}{\scal{\weight}{\mpost}}.
        \tag{$\dagger$}
    \]
    The theorem is the instance $\mpost = \dwcset{\post}$ of ($\dagger$) evaluated at $\pstate$:
    by monotonicity of $\weight \cdot (\cdot)$, the maximum of $\weight \cdot x$ over $\dwcset{\post(\pstate)} = \set{x \mid x \leq \post(\pstate)}$ is attained at $x = \post(\pstate)$ itself, so $\scal{\weight}{\dwcset{\post}} = \weight \cdot \post$.
    The generalization to arbitrary $\mpost$ is needed because
    sequential composition and loops apply \mopsymbol\ to intermediate postexpectations which are in general not of the form $\dwcset{\postg}$ for any $\postg \in \Exp^n$, and the induction hypothesis must be available for these cases as well.

    We prove ($\dagger$) by induction on the program structure, going over the rules in \Cref{table:mop,tab:wp}.
    For $\SKIP$ and $\ASSIGN{x}{\ee}$, and sequential composition, this is straightforward.
    For $\PCHOICE{\program_1}{\ps}{\program_2}$, it suffices that weighted maxima are linear in convex Minkowski combinations, i.e., for all $\HDel, \HDel' \in \HoarePowerDom$ and $p \in [0,1]$,
    \[
        \max\, \Bigl(\weight \cdot \bigl(p \cdot \HDel \minkow (1-p) \cdot \HDel'\bigr)\Bigr)
        \eeq
        p \cdot \max\, (\weight \cdot \HDel) + (1-p) \cdot \max\, (\weight \cdot \HDel')~.
    \]
    Here, $\leq$ holds because $\weight \cdot (p \cdot x + (1-p) \cdot y) = p \cdot (\weight \cdot x) + (1-p) \cdot (\weight \cdot y)$, and $\geq$ holds by combining the two maximizers, which exist by the first observation above.
    For the guarded choice, multiplication by the guards commutes with scalarization statewise, and the binary suprema taken in the respective domains match by ($\ast$), i.e.,
    \[
        \max\, \bigl(\weight \cdot (\HDel \hoaresup \HDel')\bigr) \eeq \max \set{\max\, (\weight \cdot \HDel),\, \max\, (\weight \cdot \HDel')}~.
    \]

    Finally, consider $\WHILEDO{\guard}{\program'}$, for which both sides of ($\dagger$) are least fixpoints.
    We have $\mop{\WHILEDO{\guard}{\program'}}{\mpost} = \lfp \phiMopNo$ with $\phiMopNo(X) = [\neg \guard] \cdot \mpost \mminkow [\guard] \cdot \mop{\program'}{X}$, and $\wp{\WHILEDO{\guard}{\program'}}{\scal{\weight}{\mpost}} = \lfp \phiWpNo$ for the corresponding classic characteristic function w.r.t.\ the postexpectation $\scal{\weight}{\mpost}$.
    Both characteristic functions are continuous (\Cref{theo:mop-well-cont-monotone}; for \wpsymbol\ see \cite{kaminski2019advanced}), so Kleene's fixpoint theorem (\Cref{theo:Kleene}) expresses both least fixpoints as suprema of their iterations from the respective bottom element.
    An inner induction on $k \in \Nats$ shows that scalarization maps the $k$-th \mopsymbol\ iteration to the $k$-th \wpsymbol\ iteration, i.e., $\scal{\weight}{(\phiMopNo)^k(\zeroME)} = (\phiWpNo)^k(0)$:
    for $k=0$, we have $\scal{\weight}{\zeroME} = 0$ since $\zeroME(\pstate) = \set{0^n}$.
    For $k+1$, we push the scalarization through one application of $\phiMopNo$:
    statewise, if $\guard$ holds, both characteristic functions apply their respective body transformer, and the outer induction hypothesis ($\dagger$) for $\program'$ applies at the intermediate postexpectation $(\phiMopNo)^k(\zeroME)$. Conversely, if $\guard$ does not hold, both return $\mpost$ resp.\ $\scal{\weight}{\mpost}$.
    In summary, we conclude:
    \begin{align*}
        & \scal{\weight}{\mop{\WHILEDO{\guard}{\program'}}{\mpost}} \\
        =\ & \scal{\weight}{\textstyle\bigsqcup_{k}\, (\phiMopNo)^k(\zeroME)} \tag{\Cref{theo:Kleene}} \\
        =\ & \textstyle\bigsqcup_{k}\, \scal{\weight}{(\phiMopNo)^k(\zeroME)} \tag{$\ast$, applied statewise} \\
        =\ & \textstyle\bigsqcup_{k}\, (\phiWpNo)^k(0) \tag{inner induction} \\
        =\ & \wp{\WHILEDO{\guard}{\program'}}{\scal{\weight}{\mpost}} \tag{\Cref{theo:Kleene} for $\phiWpNo$}
    \end{align*}
\end{proof}

\begin{figure}[t]
    \centering
    \begin{subfigure}[t]{0.31\textwidth}
        \centering
\begin{tikzpicture}[
    >=stealth
]

\coordinate (V1) at (2,3);
\coordinate (V2) at (3,2);

\coordinate (W) at (4,3);

\coordinate (ProjV1) at (2.72,2.04);

\filldraw[
    fill=mygreen!40,
    draw=mygreen,
    line join=round
]
    (0,0)
    -- (0,3)
    -- (V1)
    -- (V2)
    -- (3,0)
    -- cycle;

\draw[->] (0,0) -- (4,0);
\draw[->] (0,0) -- (0,4.7);

\coordinate (Support) at (1.1,3.3);

\draw[->, thick, myblue]
    (0,0) -- (1.4,4.2)
    node[right] {$\weight$};

\draw[gray, thick,->]
    (V1) -- (Support);

\draw[
    mygreen!80!black,
    thick,
    decorate,
    decoration={brace, mirror, amplitude=6pt}
]
    (0.1,0.05) -- ($(Support)+(0.08,-0.1)$)
    node[midway,below=3pt,rotate=70,mygreen!80!black]
    {$\wp{\program}{\weight \cdot \post}(\pstate)$};

\end{tikzpicture}

\caption{\Cref{theo:wp-as-weighted-mop} for $n=2$ and a fixed state $\pstate$, showing that $\wp{\program}{\weight \cdot \post}(\pstate)$ corresponds to the maximal projection of the achievable region (in green) onto $\weight$.
}
\label{fig:wp-as-weighted-mop}
    \end{subfigure}
    \hfill
\begin{subfigure}[t]{0.31\textwidth}
        \centering
\begin{tikzpicture}[
    >=stealth
]

\coordinate (V1) at (2,3);
\coordinate (V2) at (3,2);

\filldraw[
    fill=mygreen!40,
    draw=mygreen,
    line join=round
]
    (0,0)
    -- (0,3)
    -- (V1)
    -- (V2)
    -- (3,0)
    -- cycle;

\draw[->] (0,0) -- (4,0);
\draw[->] (0,0) -- (0,4.7);

\filldraw[
    fill=myred!40,
    draw=none,
    line join=round
]
(0,{11/3})
-- (4,{7/3})
-- (4,4.5)
-- (0,4.5)
-- cycle;

\draw[draw=myred]
(0,{11/3})
-- (4,{7/3});

\draw[->, thick, myblue]
    (1.1,3.3) -- (1.4,4.2)
    node[right] {$\weight$};

\end{tikzpicture}
\caption{Illustration of the points excluded by a single weight vector $\weight$, whose projection onto $\weight$ exceeds $\wp{\program}{\weight \cdot \post}(\pstate)$ (in red).
These points are certainly not achievable.
}
\label{fig:mop-as-weighted-wp}
    \end{subfigure}
    \hfill
\begin{subfigure}[t]{0.31\textwidth}
        \centering
\begin{tikzpicture}[
    >=stealth
]

\coordinate (V1) at (2,3);
\coordinate (V2) at (3,2);

\filldraw[
    fill=mygreen!40,
    draw=mygreen,
    line join=round
]
    (0,0)
    -- (0,3)
    -- (V1)
    -- (V2)
    -- (3,0)
    -- cycle;

\draw[->] (0,0) -- (4,0);
\draw[->] (0,0) -- (0,4.7);

\filldraw[
    fill=myred!40,
    draw=none,
    line join=round
]
(0,{11/3})
-- (4,{7/3})
-- (4,4.5)
-- (0,4.5)
-- cycle;

\draw[draw=myred]
(0,{11/3})
-- (4,{7/3});

\draw[->, thick, myblue]
    (1.1,3.3) -- (1.4,4.2)
    node[right] {$\weight$};

\draw[->, thick, shorten >=2pt, mygreen!80!black]
  ($(V1)+(-0.6,-0.8)$)
  to[bend left=20]
  node[below=8pt] {$\wpVec{\program_\weight}{\post}(\pstate)$}
  (V1);

\fill[mygreen!80!black] (V1) circle (2pt);

\end{tikzpicture}
\caption{Illustration of \Cref{theo:determ-wp-as-mop}.
The point $\wpVec{\program_\weight}{\post}(\pstate)$ is in $\face{\weight}{\mop{\program}{\dwcset{\post}}(\pstate)}$, 
which in this case is a singleton.
}
\label{fig:wsum-nonach}
    \end{subfigure}
    
    \caption{Illustrations of the relation between \mopsymbol and \wpsymbol for weighted sums.}
    \label{fig:mop-wp-ill}

\Description{%
The figure contains three subfigures labelled (a), (b), and (c) illustrating the relationship between weighted projections and maximum objective projections for a convex achievable region.
Each subfigure shows a two-dimensional coordinate system with a green polygonal region representing the achievable set and a blue vector $w$ indicating the projection direction.
Subfigure (a) shows the weighted projection $\wp{\program}{\weight \cdot \post}(\pstate)$ as the maximal projection of the green achievable region onto the direction of $\weight$.
Subfigure (b) shows a red region above a slanted boundary containing points whose projection onto $\weight$ is larger than the weighted projection value and therefore cannot be achieved.
Subfigure (c) highlights a single boundary point with a green dot, representing $\wpVec{\program_\weight}{\post}(\pstate)$ on a face of the convex set.
}
\end{figure}

The theorem shows that optimizing a weighted sum using \wpsymbol is equivalent to projecting the result of \mopsymbol onto the direction given by $\weight$.
Geometrically, a weighted sum $\weight \cdot x$ corresponds to the projection of $x$ onto the direction $\weight$.
Thus, \Cref{theo:wp-as-weighted-mop} can be interpreted as asking: how far can we move in the direction of $\weight$ while still remaining inside $\mop{\program}{\dwc{\set{\post}}}(\pstate)$?
See \Cref{fig:wp-as-weighted-mop} for an illustration.

\begin{example}
    Consider again the program $\programrun$ and $\mpost = \dwc{\set{\vvechorizon{x}{y}}}$.
    Recall from \Cref{ex:mop-stanni-program} that 
    $
        \mop{\programrun}{\mpost}
        \eeq
        \dwc{\conv{\set{\vvechorizon{1}{0},\vvechorizon{0}{1}}}}~.
    $
    For $\weight = \vvechorizon{\frac{2}{3}}{\frac{1}{3}}$, we obtain
    \[
        \vvechorizon{\frac{2}{3}}{\frac{1}{3}} \cdot \dwc{\conv{\set{\vvechorizon{1}{0},\vvechorizon{0}{1}}}}
        \eeq
        \dwc{\set{\frac{2}{3}}}.
    \]
    The maximum of this set is $\frac{2}{3}$, matching the result for $\wp{\program}{\frac{2}{3}\cdot x + \frac{1}{3}\cdot y}$.
\end{example}

We rephrase \Cref{theo:wp-as-weighted-mop} by looking at the supporting halfspace given by each weight vector
\[
    \set{ x \in \PosRealsInfVect \mid \weight \cdot x \leq \wp{\program}{\weight \cdot \post}(\pstate)}.
\]
This set contains all vectors whose weighted sum  $\weight \cdot x$ is no larger than the optimal weighted sum achievable by the program $\program$.
\Cref{theo:wp-as-weighted-mop} implies that for any $y \in \PosRealsInfVect$ which is \emph{not} included in the halfspace, we have
$y \not \in \mop{\program}{\dwcset{\post}}(\pstate)$.
These points $y$ are illustrated in red in \Cref{fig:mop-as-weighted-wp}.

Considering a second weight vector yields a second supporting halfspace and therefore may exclude more points, progressively refining the approximation towards the actually achievable points.
By considering all possible weight vectors, we recover the entire \mopsymbol as the intersection of all resulting halfspaces, as similarly observed by \citeauthor{watanabe2026posterior} (cf.\ Proposition~1) in the special case of indicator expectations.

\begin{theorem}
\label{theo:mop-as-weighted-wp}
    For all $\pstate \in \States$, we have
    \[
        \mop{\program}{\dwc{\set{\post}}}(\pstate) \eeq \bighoareinf_{\weight \in \Weights} \set{ x \in \PosRealsInfVect \mid \weight \cdot x \leq \wp{\program}{\weight \cdot \post}(\pstate)}~.
    \]
\end{theorem}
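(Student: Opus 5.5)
To prove the statement, I will show two inclusions. Write $\HDel \coloneqq \mop{\program}{\dwc{\set{\post}}}(\pstate)$ and $H_\weight \coloneqq \set{ x \in \PosRealsInfVect \mid \weight \cdot x \leq \wp{\program}{\weight \cdot \post}(\pstate)}$. Since infima in $\HoarePowerDom$ are intersections, and each $H_\weight$ lies in $\HoarePowerDom$ by the second observation in the proof of \Cref{theo:wp-as-weighted-mop}, the right-hand side is simply $\bigcap_{\weight \in \Weights} H_\weight$.

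The inclusion $\HDel \subseteq \bigcap_\weight H_\weight$ is immediate from \Cref{theo:wp-as-weighted-mop}. For every $x \in \HDel$ and every $\weight$, we have $\weight \cdot x \leq \max(\weight \cdot \HDel) = \wp{\program}{\weight \cdot \post}(\pstate)$.

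For the converse inclusion I use a separating hyperplane argument. Take $y \notin \HDel$; I want a weight $\weight$ with $\weight \cdot y > \max(\weight\cdot\HDel)$.

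\emph{Finite case.} Suppose first that $y$ is finitely valued. Because $\HDel$ is downward closed and Scott closed, there is some $y' \leq y$ with $y'$ finite, $y' \notin \HDel$, and $y'_i < y_i$ wherever $y_i > 0$ (as in the footnote to \Cref{theo:wp-as-weighted-mop}). I would then consider the finite part $\HDel_{\mathrm{fin}} \coloneqq \HDel \cap \PosRealsVect$. It is convex and closed in $\PosRealsVect$, and $y \notin \HDel_{\mathrm{fin}}$, so the standard separation theorem in $\mathbb{R}^n$ gives a vector $u \neq 0$ with $u \cdot y > \sup_{x \in \HDel_{\mathrm{fin}}} u \cdot x$. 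Downward closedness of $\HDel$ within the orthant forces $u \geq 0$: a negative component could be driven to $+\infty$ by decreasing the corresponding coordinate of $x$ toward $0$. Strictly, this last step only works after shifting so that the orthant boundary is handled, e.g.\ by separating from $\HDel_{\mathrm{fin}} - \mathbb{R}^n_{\geq 0}$, which is convex, closed and excludes $y$. Normalizing $u$ gives $\weight \in \Weights$.

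\emph{Points with infinite coordinates.} It remains to show that $\sup_{x\in\HDel_{\mathrm{fin}}} \weight\cdot x = \max(\weight\cdot \HDel)$, and to treat the case where $y$ or the points of $\HDel$ have infinite coordinates. This is the main obstacle, because of the convention $0\cdot\infty=0$ and because $\HDel$ may contain points with $\infty$-components. My plan is to reduce to bounded truncations. For a cap $N\in\Nats$, the set $\HDel \cap [0,N]^n$ is compact and convex, and $y$ lies outside $\HDel$ iff its truncation at some large enough $N$ lies outside (by Scott closedness, since $y = \sup_N \min(y,N)$). The separation is then done at the truncated level, where everything is finite. Finally, I use that every point of $\HDel$ is a directed supremum of its truncations, together with continuity of $\weight\cdot(\cdot)$, to see that the truncated maximum is dominated by $\max(\weight\cdot \HDel)$ and that the separating inequality persists for $y$. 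If $y$ has an infinite component $i$ while every point of $\HDel$ is finite in $i$, I instead choose $\weight$ concentrated near $e_i$, whose weighted maximum over $\HDel$ is finite. I expect this careful handling of $\infty$ to consume most of the effort.
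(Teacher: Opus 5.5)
Your argument is correct in substance and takes a genuinely different, shorter route than the paper.

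\textbf{How the routes differ.} The paper splits coordinates into bounded and unbounded ones. It proves that $\HDel$ is a full cylinder along the unbounded ones and projects to the compact, finitely valued part $\HDel^0$. It then separates $x^0$ from $\HDel^0$ against the upward set $U$, using compact-versus-closed separation, or a unit vector if $x^0$ has an infinite entry. Finally it zeroes the weight on the unbounded coordinates and renormalizes.

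You need none of this. For finitely valued $y$, strict point-versus-closed-convex separation from $C = \HDel_{\mathrm{fin}} - \mathbb{R}^n_{\geq 0}$ gives $u \geq 0$, $u \neq 0$ with a \emph{finite} separating value. The unbounded coordinates take care of themselves, since $u_i > 0$ on one of them would make that value infinite. Closedness of $C$ deserves one line: $C = \set{z \mid \max(z,0) \in \HDel_{\mathrm{fin}}}$, and $\HDel_{\mathrm{fin}}$ is closed by the paper's footnote.

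The identity $\sup_{x \in \HDel_{\mathrm{fin}}} \weight \cdot x = \max(\weight \cdot \HDel)$, which you call the main obstacle, takes one line. For $x \in \HDel$, the truncations $\min(x,M)$ lie in $\HDel_{\mathrm{fin}}$ by downward closedness, and $\weight \cdot \min(x,M)$ increases to $\weight \cdot x$ for $\weight \geq 0$ (with $0 \cdot \infty = 0$). The direction you need is $\max(\weight \cdot \HDel) \leq \sup_{\HDel_{\mathrm{fin}}} \weight \cdot x$. Your claim that the truncated value is \emph{dominated by} the maximum is the trivial direction.

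The paper's route buys compactness of the separated set and the explicit structural fact that unbounded coordinates are unconstrained. Yours buys brevity and avoids any coordinate-wise case split.

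\textbf{Fix the infinite case.} One sentence there would fail if read literally. Scott closedness gives $N$ with $y^N = \min(y,N) \notin \HDel$. You must then apply your \emph{finite case} to $y^N$, separating it from all of $\HDel_{\mathrm{fin}}$. Conclude with the identity above and $\weight \cdot y \geq \weight \cdot y^N$.

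Separating $y^N$ only from the compact set $\HDel \cap [0,N]^n$ is not enough, and the inequality does not ``persist''. Take $\HDel = \dwc{\conv{\set{(10,0),(0,1)}}}$, $y = (1,1)$, $N = 1$ and $\weight = (\tfrac12,\tfrac12)$. Then $\weight \cdot y = 1 > 0.95 = \max(\weight \cdot (\HDel \cap [0,1]^2))$, yet $\max(\weight \cdot \HDel) = 5$.

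Once assembled this way, your separate case with $\weight$ ``concentrated near $e_i$'' is superfluous. If you keep it, $\weight = e_i$ itself works, since compactness of $\HDel$ makes $\max_{x \in \HDel} x_i$ finite.
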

\begin{proof}
    Fix $\pstate \in \States$ and write
    \[
        \HDel \eeq \mop{\program}{\dwc{\set{\post}}}(\pstate)
        \qquad \text{and} \qquad
        \HDel_\weight \eeq \set{ x \mid \weight \cdot x \leq \wp{\program}{\weight \cdot \post}(\pstate)}~.
    \]
    We use the two observations from the proof of \Cref{theo:wp-as-weighted-mop}:
    for every $\HDel' \in \HoarePowerDom$ and $\weight \in \Weights$, the supremum $\sup\, (\weight \cdot \HDel')$ is attained, so that we may write $\max$ throughout,
    and every supporting halfspace, in particular every $\HDel_\weight$, is itself an element of $\HoarePowerDom$.
    Since infima in $\HoarePowerDom$ are plain intersections (see the discussion after \Cref{def:hoare-power}), the claim amounts to the set equality $\HDel = \bigcap_{\weight \in \Weights} \HDel_\weight$.

    The inclusion $\subseteq$ is immediate:
    for all $x \in \HDel$ and $\weight \in \Weights$, we have
    \[
        \weight \cdot x
        ~\leq~ \max\, (\weight \cdot \HDel)
        \eeq \wp{\program}{\weight \cdot \post}(\pstate)
        \tag{\Cref{theo:wp-as-weighted-mop}}
    \]
    and hence $x \in \HDel_\weight$.

    For $\supseteq$, we show that every point $x \not\in \HDel$ is excluded by at least one halfspace, so that $\bigcap_{\weight} \HDel_\weight$ contains no point outside of $\HDel$.
    Thus, let $x \not\in \HDel$. We construct a weight vector $\weight \in \Weights$ with
    \[
        \wp{\program}{\weight \cdot \post}(\pstate)
        \eeq
        \max\, (\weight \cdot \HDel)
        ~<~ \weight \cdot x~,
        \tag{\Cref{theo:wp-as-weighted-mop}}
    \]
    witnessing $x \not\in \HDel_\weight$.
    Call a component $i$ \emph{unbounded} if $z_i = \infty$ for some $z \in \HDel$, and \emph{bounded} otherwise.
    Unbounded components constrain the choice of $\weight$:
    if $\weight_i > 0$ for an unbounded component $i$ with witness $z \in \HDel$, $z_i = \infty$, then $\max\, (\weight \cdot \HDel) \geq \weight \cdot z \geq \weight_i \cdot \infty = \infty$, so $\HDel_\weight$ is all of $\PosRealsInfVect$ and excludes nothing.
    Any separating weight vector must therefore vanish on the unbounded components.
    We proceed in four steps: (i) we show that $\HDel$ is a full \emph{cylinder} along its unbounded components, so that discarding these components loses no information; (ii) we pass to the \emph{bounded part} $\HDel^0$ of $\HDel$, a finitely valued element of $\HoarePowerDom$, together with the point $x^0 \not\in \HDel^0$ obtained from $x$ by setting all unbounded components to $0$; (iii) we \emph{separate} $x^0$ from $\HDel^0$ by a weight vector; (iv) and we \emph{transfer} the separation back to $\HDel$ by zeroing that weight vector on the unbounded components and renormalizing.

    For step (i), let $i$ be an unbounded component and $y \in \HDel$. We claim that every point $y'$ with $y'_j \leq y_j$ for all $j \neq i$ and an \emph{arbitrary} value $y'_i$ lies in $\HDel$ as well, i.e., membership in $\HDel$ places no constraint whatsoever on any unbounded component.
    This is forced by the three closure properties of $\HoarePowerDom$:
    pick $z \in \HDel$ with $z_i = \infty$ and define, for every $\weightc \in (0,1]$, the point $p_\weightc \in \PosRealsInfVect$ with components
    \[
        (p_\weightc)_i \eeq y'_i
        \qquad \text{and} \qquad
        (p_\weightc)_j \eeq (1-\weightc) \cdot y_j \quad \text{for } j \neq i~.
    \]
    Each $p_\weightc$ lies in $\HDel$:
    the convex combination $\weightc \cdot z + (1-\weightc) \cdot y$ is in $\HDel$ by convex closedness, its $i$-th component is $\weightc \cdot \infty + (1-\weightc) \cdot y_i = \infty \geq y'_i$ since $\weightc > 0$, its $j$-th component is at least $(1-\weightc) \cdot y_j$, and $\HDel$ is downward closed.
    The set $D = \set{p_\weightc \mid \weightc \in (0,1]}$ is a chain, and hence directed:
    for $\weightc' \leq \weightc$, we have $p_\weightc \leq p_{\weightc'}$ componentwise, i.e., the points increase as $\weightc$ decreases.
    The supremum of $D$ is computed componentwise:
    at $i$, all points share the constant value $y'_i$,
    and at $j \neq i$, we have $\sup_{\weightc \in (0,1]}\, (1-\weightc) \cdot y_j = y_j$, since $1-\weightc$ ranges over $[0,1)$:
    for finite $y_j$, the values $(1-\weightc) \cdot y_j$ approach $y_j$, and for $y_j = \infty$, every $\weightc < 1$ already yields $(1-\weightc) \cdot \infty = \infty$ (for $\weightc = 1$, the convention $0 \cdot \infty = 0$ applies, affecting neither case).
    By Scott closedness, $\sup D \in \HDel$, and since $y' \leq \sup D$, with equality at component $i$, downward closedness yields $y' \in \HDel$.

    For step (ii), let $\HDel^0$ consist of all points of $\HDel$ whose unbounded components are all $0$, and let $x^0$ be obtained from $x$ by setting all unbounded components to $0$.
    Then $\HDel^0$ is again an element of $\HoarePowerDom$:
    we have $\HDel^0 = \HDel \cap \dwc{\set{u}}$ for the single point $u$ with value $0$ at all unbounded and $\infty$ at all bounded components;
    the downward closure $\dwc{\set{u}} = \set{y \mid y \leq u}$ contains $0^n$, is downward closed by construction, and, being defined by componentwise upper bounds, is closed under convex combinations and under suprema of directed sets,
    and intersections of elements of $\HoarePowerDom$ are again elements, being their infima, where nonemptiness holds since $0^n$ lies in every nonempty downward closed set.
    Moreover, $\HDel^0$ is \emph{finitely valued}, i.e.\ $\HDel^0 \subseteq \PosRealsVect$ because 
    its points are $0$ at the unbounded components and finite at the bounded ones, by the definition of boundedness.
    Finally, we have $x^0 \not\in \HDel^0$:
    suppose to the contrary that $x^0 \in \HDel^0$, enumerate the unbounded components as $i_1, \dots, i_m$ (with $m = 0$ if there are none), and let $x^{(k)}$ denote the point that agrees with $x$ on $i_1, \dots, i_k$ and with $x^0$ on all other components.
    In particular, $x^{(0)} = x^0$ and $x^{(m)} = x$, as $x$ and $x^0$ differ at most on unbounded components.
    By induction on $k$, every $x^{(k)}$ lies in $\HDel$:
    we have $x^{(0)} = x^0 \in \HDel^0 \subseteq \HDel$ by assumption, and $x^{(k+1)}$ arises from $x^{(k)} \in \HDel$ by changing only the value of the \emph{unbounded} component $i_{k+1}$, which the cylinder property, applied with $y = x^{(k)}$ and $y' = x^{(k+1)}$, leaves unconstrained.
    Hence $x = x^{(m)} \in \HDel$, contradicting $x \not\in \HDel$.

    For step (iii), we produce a weight vector $\weight' \in \Weights$ separating $x^0$ from $\HDel^0$, i.e., with $\max\, (\weight' \cdot \HDel^0) < \weight' \cdot x^0$, by distinguishing whether $x^0$ is finitely valued.
    If $x^0$ has an infinite component $x^0_j = \infty$, then $j$ is necessarily bounded, as the unbounded components of $x^0$ are $0$, and the $j$-th unit vector $\weight' = e_j \in \Weights$ suffices:
    the convention $0 \cdot \infty = 0$ discards all other components of $x^0$, so $\weight' \cdot x^0 = x^0_j = \infty$, whereas $\max\, (\weight' \cdot \HDel^0) < \infty$, since the maximum is attained at a point of the finitely valued set $\HDel^0$.
    If instead $x^0 \finval \infty$, then both $x^0$ and $\HDel^0$ are contained in $\PosRealsVect \subseteq \Reals^n$, where weighted sums coincide with the usual inner products, and we invoke convex geometry.
    The set $\HDel^0$ is a \emph{compact convex} subset of $\Reals^n$:
    convex closedness holds by \Cref{def:hoare-power} and, on finitely valued sets, coincides with convexity in $\Reals^n$;
    as established in the proof of \Cref{theo:wp-as-weighted-mop}, $\HDel^0$ is a closed subset of the compact space $\PosRealsInfVect$ and hence compact there,
    and since compactness is intrinsic to the subspace \cite[Section~26]{munkres2000topology} and the topology of $\PosRealsInfVect$ restricted to $\PosRealsVect$ is the Euclidean one (established, e.g., by the homeomorphism $x \mapsto \frac{x}{x+1}$, $\infty \mapsto 1$), $\HDel^0$ is compact as a subset of $\Reals^n$, too.
    Moreover, the upward closed set $U = \set{y \in \Reals^n \mid x^0 \leq y}$ is closed, convex, and disjoint from $\HDel^0$, since any point of $\HDel^0$ above $x^0$ would, by downward closedness, imply $x^0 \in \HDel^0$.
    The separating hyperplane theorem for disjoint closed convex sets, one of which is compact \cite[Part III]{rockafellar1970convex}, thus yields a normal vector $c \in \Reals^n$ with
    \[
        \sup\, \set{c \cdot y \mid y \in \HDel^0} ~<~ \inf\, \set{c \cdot z \mid z \in U}~,
    \]
    where the left-hand side is a real number, since $\HDel^0$ is nonempty and compact.
    The vector $c$ is nonnegative:
    if $c_i < 0$, then moving from $x^0$ arbitrarily far along the $i$-th coordinate, which stays inside $U$, would drive the right-hand side to $-\infty$, contradicting that it strictly exceeds the real left-hand side.
    Further, $c \neq 0$, as otherwise both sides would equal $0$, since $\HDel^0$ and $U$ are nonempty, contradicting strictness of the separation.
    Consequently, $\weight' = c / \sum_i c_i \in \Weights$ is a weight vector, and dividing the separation inequality by $\sum_i c_i > 0$ yields $\max\, (\weight' \cdot \HDel^0) < \weight' \cdot x^0$, where we use that $x^0 \in U$.

    For step (iv), we transfer the separation from $\HDel^0$ back to $\HDel$.
    Since the points of $\HDel^0$ as well as $x^0$ are $0$ on all unbounded components, both sides of the inequality $\max\, (\weight' \cdot \HDel^0) < \weight' \cdot x^0$ depend only on the bounded components of $\weight'$.
    In particular, $s = \sum_{j \text{ bounded}} \weight'_j$ is strictly positive, as $s = 0$ would force both sides to be $0$. That $\weight' \cdot x^0$ vanishes in this case again uses $0 \cdot \infty = 0$ if $x^0$ has an infinite component.
    We may thus zero and renormalize: define $\weight \in \Weights$ by $\weight_i = 0$ for unbounded $i$ and $\weight_j = \weight'_j / s$ for bounded $j$, whose components indeed sum to $s / s = 1$.
    For $y \in \HDel$, let $y^0$ denote the point obtained by setting all unbounded components of $y$ to $0$. Then $y^0 \in \HDel$ by downward closedness, and hence $y^0 \in \HDel^0$.
    Using the convention $0 \cdot \infty = 0$ at the unbounded components, we have $\weight \cdot y = \frac{1}{s} \cdot (\weight' \cdot y^0)$, and likewise $\weight \cdot x = \frac{1}{s} \cdot (\weight' \cdot x^0)$.
    Since moreover every point of $\HDel^0$ arises as $y^0$ from some $y \in \HDel$, namely from itself, we conclude
    \[
        \max\, (\weight \cdot \HDel)
        \eeq \frac{1}{s} \cdot \max\, (\weight' \cdot \HDel^0)
        ~<~ \frac{1}{s} \cdot (\weight' \cdot x^0)
        \eeq \weight \cdot x~,
    \]
    where strictness is preserved as $0 < \frac{1}{s} < \infty$.
    This exhibits the required excluding halfspace $\HDel_\weight$ and completes the proof of $\supseteq$.
\end{proof}

As discussed before, a point is excluded from a halfspace whenever it is too optimistic in the direction of some $\weight$, i.e., whenever its weighted sum exceeds the value achievable by the program.
Conversely, a point belongs to the intersection of all these halfspaces exactly when no weighted sum of its coordinates exceeds the value that the program can achieve under the corresponding weighted objective.

\subsection{Relating Weighted Sums and Determinizations}
\label{sec:wsum-opt}

\Cref{theo:wp-as-weighted-mop,theo:mop-as-weighted-wp} show that we can obtain optimal points by considering weightings of the postexpectations.
What they do not tell us is how these optima can be realized by a concrete determinization.
In the following, we therefore study the relationship between the optimum of a weighted objective $\wp{\program}{\weight \cdot \post}$ and a deterministic program $\program_\weight$ that achieves this optimum.

First, note that such a program does not necessarily exist.
However, if it does, we have a determinization $\program_\weight \determmixed \program$ satisfying
\[
    \wp{\program_\weight}{\weight \cdot \post}
    =
    \wp{\program}{\weight \cdot \post},
\]
i.e., $\program_\weight$ achieves the highest possible value with respect to $\weight\cdot \post$.
Such optimal determinizations can be constructed using programmatic strategy synthesis \cite{batz2024programmatic}.
Further, we know that for single objectives such as $\weight\cdot \post$, if an optimal determinization exists, then a pure one suffices \cite[Theorem 2.5]{batz2024programmatic}, i.e.\ we can assume that $\program_\weight \determ \program$ is a pure determinization.
We denote an optimal, pure determinization of $\program$ w.r.t.\ $\weight \cdot \post$ by $\program_\weight \determopt{\weight \cdot \post} \program$.
We observe the following.%
\begin{restatable}{lemma}{wpofdeterm}
\label[lemma]{theo:wpofdeterm}
    For $\weight \in \Weights$, let $\program_\weight \determopt{\weight \cdot \post} \program$. %
    Then,
    $
        \weight \cdot \wpVec{\program_\weight}{\post} =  \wp{\program}{\weight \cdot\post}.
    $
\end{restatable}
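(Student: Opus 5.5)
The lemma chains two equalities:
\[
    \weight \cdot \wpVec{\program_\weight}{\post} \eeq \wp{\program_\weight}{\weight \cdot \post} \eeq \wp{\program}{\weight \cdot \post}.
\]
The second equality is exactly the defining property of $\program_\weight \determopt{\weight \cdot \post} \program$: it is a pure determinization that attains the optimal value for the scalar objective $\weight \cdot \post$. So the real content is the first equality. It says that, for the \emph{deterministic} program $\program_\weight$, weakest preexpectations commute with forming weighted sums. This is linearity of $\wpsymbol$ on deterministic programs.

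For the first equality I will prove a general auxiliary claim: for every deterministic $\program'$, all $\post_1,\dots,\post_n \in \Exp$, and all $\weight \in \Weights$,
\[
    \wp{\program'}{\textstyle\sum_i \weight_i \cdot \post_i} \eeq \textstyle\sum_i \weight_i \cdot \wp{\program'}{\post_i}.
\]
Instantiating it with $\program' = \program_\weight$ and evaluating statewise gives the first equality. The proof is by structural induction on $\program'$, over the rules of \Cref{tab:wp}. Sub-programs of a deterministic program are deterministic, so the induction hypothesis applies throughout.
\begin{itemize}
    \item For $\SKIP$ and assignment, substitution distributes over sums and scalar products pointwise.
    \item Sequential composition applies the hypothesis twice.
    \item Probabilistic choice follows from distributivity in $\PosRealsInf$; the convention $0\cdot\infty = 0$ is harmless because all coefficients are finite and nonnegative.
    \item For the guarded choice, determinism makes $\guard_1 \wedge \guard_2$ unsatisfiable, and $\guard_1 \vee \guard_2$ always holds. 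So at each state exactly one Iverson bracket is $1$, and the pointwise $\sqcup$ collapses to a single branch, to which the hypothesis applies. This is the only place where determinism is used; it is precisely what fails for nondeterministic programs (cf.\ the sublinearity example after \Cref{theo:mop-linear}).
\end{itemize}

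The loop case is the main obstacle, since both sides are least fixpoints. I will argue via \Cref{theo:Kleene}. Let $\Phi_i$ be the characteristic function for $\post_i$ and $\Phi_\weight$ the one for $\weight\cdot\post$. An inner induction on $k$ shows $(\Phi_\weight)^k(0) = \sum_i \weight_i \cdot (\Phi_i)^k(0)$. The base case is $0 = \sum_i \weight_i \cdot 0$. The step uses the outer induction hypothesis for the loop body, applied to the expectations $(\Phi_i)^k(0)$, together with the pointwise identity $[\neg\guard]\cdot\sum_i \weight_i\post_i + [\guard]\cdot\sum_i \weight_i X_i = \sum_i \weight_i([\neg\guard]\post_i + [\guard]X_i)$. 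It remains to pass to the limit. The iterates $(\Phi_i)^k(0)$ are ascending chains in $k$, and suprema of ascending chains commute with finite nonnegative linear combinations in $\PosRealsInf$, since addition and scaling by finite weights are monotone and Scott-continuous. Hence $\sup_k \sum_i \weight_i (\Phi_i)^k(0) = \sum_i \weight_i \sup_k (\Phi_i)^k(0)$, which closes the loop case.

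Finally, instantiating the auxiliary claim at $\program_\weight$ and composing with the optimality equation yields the lemma. An alternative is to avoid the separate linearity argument by appealing to \Cref{theo:wp-as-weighted-mop} together with the correspondence between $\mopsymbol$ and $\wpsymbol$ for deterministic programs (\Cref{theo:mop-wp-determ}). However, the direct induction is self-contained, and I would present that.
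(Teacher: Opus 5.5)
Your proposal is correct and is the paper's argument: the paper's proof is just ``linearity of \wpsymbol{} and optimality of $\program_\weight$'', i.e.\ your chain $\weight \cdot \wpVec{\program_\weight}{\post} = \wp{\program_\weight}{\weight\cdot\post} = \wp{\program}{\weight\cdot\post}$. You additionally spell out the linearity of \wpsymbol{} on deterministic programs by structural induction, with the Kleene-iterate argument for loops, and this is sound; like the paper, you silently use that a pure determinization is deterministic, which holds for programs generated by the rules of \Cref{fig:refinementrules} thanks to the exclusive-or premise of the guarded-choice rule.
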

\begin{proof}
    Follows from linearity of \wpsymbol and optimality of $\program_\weight$.
\end{proof}

This lemma connects the scalar optimum obtained from the weighted objective with the multiobjective value achieved by the corresponding determinization.
In particular, although the optimization objective is the single expectation $\weight \cdot \post$, the determinization $\program_\weight$ induces a complete vector of expectation values $\wpVec{\program_\weight}{\post}$.
Next, we show that for all states $\pstate \in \States$, the vector $\wpVec{\program_\weight}{\post}(\pstate)$ is a boundary point of $\mop{\program}{\dwcset{\post}}(\pstate)$, namely a point on the face induced by the weight vector $\weight$.

\begin{restatable}{lemma}{determwpasmop}
\label{theo:determ-wp-as-mop}
    For $\pstate \in \States$ and $\weight \in \Weights$, let $\program_\weight \determopt{\weight \cdot \post} \program$. %
    Then,
    \[
        \wpVec{\program_\weight}{\post}(\pstate) \in \face{\weight}{\mop{\program}{\dwcset{\post}}(\pstate)}.
    \]
\end{restatable}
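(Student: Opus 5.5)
To prove the lemma, I would check the two defining conditions of the face separately. Write $\HDel = \mop{\program}{\dwcset{\post}}(\pstate)$ and $p = \wpVec{\program_\weight}{\post}(\pstate)$. By the definition of $\face{\weight}{\HDel}$, I must show two things: first, that $p \in \HDel$, and second, that $\weight \cdot p = \sup\set{\weight \cdot y \mid y \in \HDel}$.

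\emph{Membership.} Since $\program_\weight \determ \program$ is a pure determinization, it is in particular a (trivially) mixed one, namely the Dirac distribution on $\program_\weight$. Hence $p \in \AchExp{\program}{\post}{\pstate}$ by \Cref{def:achievable-exp}, so also $p \in \cl{\AchExp{\program}{\post}{\pstate}}$. By \Cref{theo:mop-is-pareto} the latter set equals $\HDel$. If one prefers to avoid the forward reference, an alternative is available. By \Cref{theo:mop-wp-determ}, $\mop{\program_\weight}{\dwcset{\post}} = \dwcset{\wpVec{\program_\weight}{\post}}$. Monotonicity of $\mopsymbol$ under determinization then gives $\mop{\program_\weight}{\cdot} \sqsubseteq \mop{\program}{\cdot}$; this follows by structural induction over the rules of \Cref{fig:refinementrules}, using that $[\guard_i'] \le [\guard_i]$ and that each joinand lies below the supremum $\mecup$. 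I expect this membership step to be the only real obstacle, since it depends on the correspondence between $\mopsymbol$ and the achievable sets rather than on pure calculation.

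\emph{Optimality in direction $\weight$.} By \Cref{theo:wpofdeterm}, $\weight \cdot p = \wp{\program}{\weight \cdot \post}(\pstate)$. By \Cref{theo:wp-as-weighted-mop}, the right-hand side equals $\max\set{\weight \cdot y \mid y \in \HDel}$, and this maximum is attained, hence equals the supremum. Combining the two steps gives $p \in \face{\weight}{\HDel}$. The infinite values need no separate treatment: both cited theorems already hold in $\PosRealsInf$, and equality of the weighted sums is all the face condition requires.
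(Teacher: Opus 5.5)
Your proof is correct, and its optimality step is exactly the paper's argument: \Cref{theo:wpofdeterm} followed by \Cref{theo:wp-as-weighted-mop}, using that the maximum is attained and therefore equals the supremum. The paper's proof leaves the membership condition $\wpVec{\program_\weight}{\post}(\pstate) \in \mop{\program}{\dwcset{\post}}(\pstate)$ implicit, so your Dirac-mixture argument via \Cref{theo:mop-is-pareto} usefully closes that step; however, your fallback does not avoid the forward reference as claimed, since \Cref{theo:mop-wp-determ} is itself derived from \Cref{theo:mop-wp} and thus from the same operational correspondence.
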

\begin{proof}
    By definition, $\face{\weight}{\mop{\program}{\dwcset{\post}}(\pstate)}$ is equal to
    \[
        \set{x \in \mop{\program}{\dwcset{\post}}(\pstate) \mid
        \weight \cdot x =
        \sup \set{\weight \cdot y \mid y \in \mop{\program}{\dwcset{\post}}(\pstate)}}.
    \]
    Since we have
    \begin{align*}
        & \weight \cdot \wpVec{\program_\weight}{\post}(\pstate) \\
        =\ & \wp{\program}{\weight \cdot \post}(\pstate) \tag{\Cref{theo:wpofdeterm}}\\
        =\ & \sup \set{\weight \cdot y \mid y \in \mop{\program}{\dwcset{\post}}(\pstate)} \tag{\Cref{theo:wp-as-weighted-mop}}
    \end{align*}
    the property holds.
\end{proof}

Intuitively, optimizing a weighted sum selects a supporting halfspace of \mopsymbol, and the determinization optimizing the weighted sum must lie on this halfspace.
This is visualized in \Cref{fig:wsum-nonach}.

This result provides the most important link between weighted-sum optimization and multiobjective synthesis:
for a fixed weight vector $\weight$, we can use programmatic strategy synthesis to construct a determinization $\program_\weight$ optimizing the weighted objective $\weight \cdot \post$.
Evaluating this determinization componentwise then yields $\wpVec{\program_\weight}{\post}(\pstate)$, a concrete point on the boundary of \mopsymbol. In this way, by considering all weight vectors, we can recover the entire \mopsymbol.

\begin{restatable}{lemma}{mopasdetermwp}
\label[lemma]{theo:mop-as-determ-wp}
    Assume that for all $\weight \in \Weights$, $\program_\weight \determopt{\weight \cdot \post} \program$ exists.
    Then,
    \[
        \mop{\program}{\dwcset{\post}} = \mylambda{\pstate} \bighoareinf_{\weight} \set{ x \mid \weight \cdot x \leq \weight \cdot \wpVec{\program_\weight}{\post}(\pstate) \text{ for } \program_\weight \determopt{\weight \cdot \post} \program }.
    \]
\end{restatable}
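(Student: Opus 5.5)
The plan is to reduce the statement to \Cref{theo:mop-as-weighted-wp} by identifying the two families of halfspaces weight vector by weight vector. Fix $\pstate \in \States$. By \Cref{theo:mop-as-weighted-wp},
\[
    \mop{\program}{\dwcset{\post}}(\pstate) \eeq \bighoareinf_{\weight \in \Weights} \set{ x \mid \weight \cdot x \leq \wp{\program}{\weight \cdot \post}(\pstate)}~.
\]
It therefore suffices to show, for each $\weight \in \Weights$, that the halfspace bound $\wp{\program}{\weight \cdot \post}(\pstate)$ coincides with $\weight \cdot \wpVec{\program_\weight}{\post}(\pstate)$ for every (equivalently, some) $\program_\weight \determopt{\weight \cdot \post} \program$. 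Once the halfspaces coincide, the two infima range over the same elements of $\HoarePowerDom$, and the equation holds statewise. Lifting it pointwise via $\mylambda{\pstate}$ then gives the claimed identity of multiobjective expectations.

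The per-weight identity is exactly \Cref{theo:wpofdeterm}, evaluated at $\pstate$. That lemma is stated as an equality of expectations, so evaluation at $\pstate$ is immediate. I would spell out the underlying reasoning to check the edge cases. Since $\program_\weight$ is a pure determinization, its guarded choices are deterministic, so $\wpsymbol$ is linear on $\program_\weight$ and yields $\wp{\program_\weight}{\weight\cdot\post} = \weight \cdot \wpVec{\program_\weight}{\post}$. Optimality then gives $\wp{\program_\weight}{\weight\cdot\post} = \wp{\program}{\weight\cdot\post}$.

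One subtlety is the phrase ``for $\program_\weight \determopt{\weight \cdot \post} \program$'' in the set-builder. It could be read as a fixed choice or as ranging over all optimal determinizations. Under either reading the constraint set is the same, because every optimal determinization yields the same scalar $\weight \cdot \wpVec{\program_\weight}{\post}(\pstate) = \wp{\program}{\weight\cdot\post}(\pstate)$. The assumption that such a determinization exists for all $\weight$ guarantees that no index is vacuous. An empty family would produce an unconstrained halfspace and break the equality.

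The main obstacle is confirming linearity of $\wpsymbol$ for pure determinizations in the presence of $\infty$ values and the convention $0\cdot\infty=0$. Weights $\weight_i = 0$ paired with $\wp{\program_\weight}{\post_i} = \infty$ must not cause a mismatch. I would first appeal to \Cref{theo:wpofdeterm} as given. If that seems too quick, I would instead invoke \Cref{theo:mop-wp-determ}, by which $\mop{\program_\weight}{\dwcset{\post}} = \dwcset{\wpVec{\program_\weight}{\post}}$, together with \Cref{theo:wp-as-weighted-mop} applied to $\program_\weight$: the maximum of $\weight\cdot x$ over $\dwc{\set{\wpVec{\program_\weight}{\post}(\pstate)}}$ is attained at the top point. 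That route handles the $\infty$ conventions uniformly through continuity of $\weight\cdot(\cdot)$ on $\PosRealsInfVect$.
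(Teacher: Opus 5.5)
Your proposal is correct and takes essentially the paper's route. The paper also reduces to \Cref{theo:mop-as-weighted-wp} and identifies the halfspace bounds weight by weight, citing \Cref{theo:determ-wp-as-mop}; that lemma's own proof is exactly the combination of \Cref{theo:wpofdeterm} and \Cref{theo:wp-as-weighted-mop} you spell out, and your handling of the set-builder reading and the $0\cdot\infty$ convention is sound.
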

\begin{proof}
    Essentially follows from \Cref{theo:determ-wp-as-mop} and \Cref{theo:mop-as-weighted-wp}.
\end{proof}

Equivalently, the achievable region is the closure of the convex hull of all boundary points obtained by weighted-sum optimization.

\begin{restatable}{lemma}{mopasclconvopt}
\label[lemma]{theo:mop-as-cl-conv-opt}
    Assume that for all $\weight \in \Weights$, there exists an optimal determinization $\program_\weight \determopt{\weight \cdot \post} \program$ w.r.t.\ $\weight \cdot \post$.
    Then, we have that
    \[
        \mop{\program}{\dwcset{\post}}(\pstate) = \cl{\conv{ \set{ \wpVec{\program_\weight}{\post}(\pstate) \mid \weight \in \Weights, \program_\weight \determopt{\weight \cdot \post} \program }}}.
    \]
\end{restatable}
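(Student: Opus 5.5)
Fix $\pstate$ and write $\HDel = \mop{\program}{\dwcset{\post}}(\pstate)$ and $P = \set{ \wpVec{\program_\weight}{\post}(\pstate) \mid \weight \in \Weights, \program_\weight \determopt{\weight \cdot \post} \program }$. The plan is to prove both inclusions, $\cl{\conv{P}} \subseteq \HDel$ and $\HDel \subseteq \cl{\conv{P}}$, using \Cref{theo:mop-as-weighted-wp} and \Cref{theo:determ-wp-as-mop} as the two main tools. Throughout, I would identify $\cl{\conv{P}}$ with $\bighoaresup_{p \in P} \dwcset{p}$ via \Cref{theo:sup-is-cl}, so that it is an element of $\HoarePowerDom$.

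For $\subseteq$, \Cref{theo:determ-wp-as-mop} puts every $p \in P$ in the face $\face{\weight}{\HDel} \subseteq \HDel$. Hence $\dwcset{p} \subseteq \HDel$ for all $p \in P$. Since $\HDel \in \HoarePowerDom$ is an upper bound of these principal sets and $\bighoaresup$ is the least such bound, we get $\cl{\conv{P}} \subseteq \HDel$.

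For $\supseteq$, let $K = \cl{\conv{P}} \in \HoarePowerDom$. I would reuse the scalarization machinery from the proof of \Cref{theo:wp-as-weighted-mop}. First, $\max(\weight \cdot K)$ is attained by compactness. Second, by ($\ast$) it equals $\sup_{p \in P} \weight \cdot p$. For each fixed $\weight$ this supremum is at least $\weight \cdot \wpVec{\program_\weight}{\post}(\pstate)$, which equals $\wp{\program}{\weight \cdot \post}(\pstate)$ by \Cref{theo:wpofdeterm}. So $\max(\weight \cdot K) \geq \wp{\program}{\weight \cdot \post}(\pstate)$, and the reverse inequality follows from $K \subseteq \HDel$ and \Cref{theo:wp-as-weighted-mop}.

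It then remains to show that $\HDel \subseteq K$. This is where I expect the real work. Suppose $x \in \HDel \setminus K$. I would rerun the separation argument from the proof of \Cref{theo:mop-as-weighted-wp}, steps (i)--(iv), with $K$ in place of $\HDel$: that argument only used that the set is an element of $\HoarePowerDom$. It produces a $\weight$ with $\max(\weight \cdot K) < \weight \cdot x \leq \max(\weight \cdot \HDel) = \wp{\program}{\weight \cdot \post}(\pstate)$, which contradicts the equality just established.

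Alternatively, the cleanest route is to observe that \Cref{theo:mop-as-weighted-wp}'s proof in fact shows that every $K \in \HoarePowerDom$ equals the intersection of its supporting halfspaces, $K = \bigcap_\weight \set{y \mid \weight \cdot y \leq \max(\weight\cdot K)}$. Applying this to both $K$ and $\HDel$, whose support values coincide, gives $K = \HDel$ directly. I would state this as a small auxiliary remark extracted from that proof. The main obstacle is precisely this extraction: checking that the cylinder and bounded-part reduction for unbounded components goes through for an arbitrary element of $\HoarePowerDom$, not only for $\mopsymbol$ values. I believe it does, since only the closure properties were used.
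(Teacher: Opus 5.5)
Your proposal is correct and follows the paper's proof. One inclusion comes from each $\wpVec{\program_\weight}{\post}(\pstate)$ lying in $\mop{\program}{\dwcset{\post}}(\pstate)$, for which the paper cites \Cref{theo:cl-ach-is-dwc-pareto-exp,theo:mop-is-pareto} directly; this is also what underwrites the membership part of \Cref{theo:determ-wp-as-mop}. The other inclusion comes from a separating-hyperplane argument, and your observation that steps (i)--(iv) in the proof of \Cref{theo:mop-as-weighted-wp} use only the closure properties of $\HoarePowerDom$, and so separate any outside point from an arbitrary $K \in \HoarePowerDom$, is sound and supplies the handling of $\infty$-valued components that the paper's bare appeal to the Euclidean separation theorem leaves implicit.
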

\begin{proof}
    $\supseteq$ follows from \Cref{theo:cl-ach-is-dwc-pareto-exp,theo:mop-is-pareto}.
    For $\subseteq$, we can apply the separating hyperplane theorem \cite[Part III]{rockafellar1970convex}.
\end{proof}

The above characterizations rely on the existence of optimal determinizations for every weighted objective.
The next section investigates when this assumption holds.

\subsection{Existence of Optimal Determinizations}
\label{sec:existence}

Optimal determinizations, even in the single-objective case, do not always exist.
In other words, for an arbitrary program $\program$ and expectation $\post \in \Exp$, there may be no pure determinization $\program' \determ \program$ such that
$
    \wp{\program'}{\post} \eeq \wp{\program}{\post}.
$
By \cite[Theorem 2.5]{batz2024programmatic}, then no mixed determinization exists either.
The reason is that the weakest preexpectation of a nondeterministic program captures the supremum of all achievable values, but this supremum need not be attained by a single determinization.
To make this precise, we note the following:

\begin{lemma}
\label{theo:wp-sup-determ}
    We have
	$
		\wp{\program}{\post} = \bigsqcup\ \set{\wp{\program'}{\post} \mid \program' \determ \program}.
	$
\end{lemma}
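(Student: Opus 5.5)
The plan is to prove the two inequalities separately and pointwise in the initial state $\pstate$, treating $\le$ for $\program' \determ \program$ as a monotonicity statement and the reverse direction as an approximation statement.

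For $\bigsqcup \set{\wp{\program'}{\post} \mid \program' \determ \program} \sqsubseteq \wp{\program}{\post}$, we show $\wp{\program'}{\post} \sqsubseteq \wp{\program}{\post}$ for every $\program' \determ \program$ by induction on the derivation of $\program' \determ \program$ via the rules of \Cref{fig:refinementrules}. The deterministic base case is trivial, and sequential composition and probabilistic choice follow from monotonicity of $\wpsymbol$ together with the induction hypotheses. For the guarded choice, $\guard_i' \entails \guard_i$ gives $[\guard_i'] \le [\guard_i]$. Hence $[\guard_1'] \cdot \wp{\program_1'}{\post} \sqcup [\guard_2'] \cdot \wp{\program_2'}{\post} \sqsubseteq [\guard_1] \cdot \wp{\program_1}{\post} \sqcup [\guard_2] \cdot \wp{\program_2}{\post}$. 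For loops, the induction hypothesis for the bodies makes the characteristic function of $\program'$ pointwise below that of $\program$. Then every Kleene iterate, and hence the least fixpoint (\Cref{theo:Kleene}), is below as well. Since $\wp{\program}{\post}$ is an upper bound of the set, the supremum is below it.

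For $\wp{\program}{\post} \sqsubseteq \bigsqcup \set{\wp{\program'}{\post} \mid \program' \determ \program}$, we fix $\pstate$ and $a < \wp{\program}{\post}(\pstate)$, and aim to construct a single pure determinization $\program'$ with $\wp{\program'}{\post}(\pstate) > a$. A naive structural induction works for the loop-free constructs. One subtlety is that a sequential composition needs a determinization of $\program_2$ that is good simultaneously on all (countably many) intermediate states reached by $\program_1$. Since guards are arbitrary state predicates, we can glue state-wise choices into one predicate, so we first strengthen the claim to a uniform additive approximation on a given finite set of states. The real difficulty is loops. There the loop body's determinization is fixed across iterations, so the resulting strategy is memoryless: at a given program point and state it must always choose the same branch. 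Kleene unrolling (\Cref{theo:Kleene}) would naturally give history-dependent, bounded-horizon strategies, and these cannot be expressed syntactically.

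I would therefore route this direction through the operational semantics. $\wp{\program}{\post}(\pstate)$ equals the supremal expected total (terminal) reward over all strategies of the countable-state, finite-action MDP of $\program$ (the standard wp/MDP correspondence, \Cref{sec:mdp}). Moreover, pure determinizations correspond exactly to memoryless deterministic strategies, because guard pairs $\guard_1' \xor \guard_2'$ may be arbitrary predicates and so can encode any per-state choice at each program location. It then remains to invoke the result that in countable MDPs with finitely many actions and nonnegative total-reward objectives, $\varepsilon$-optimal memoryless deterministic strategies exist from a fixed initial state. This is Ornstein-type for reachability, and for rewards it goes via an appropriate reduction; alternatively one can cite \cite[Theorem 2.5]{batz2024programmatic}, which establishes this correspondence for determinizations. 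I expect this last step to be the main obstacle, since $\varepsilon$-optimality of memoryless strategies in infinite MDPs is genuinely nontrivial, especially when the supremum is $\infty$, which I would handle by approximating from below with finite caps $\min(\post, b)$.
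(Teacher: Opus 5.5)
The paper states this lemma without proof, so there is no argument of its own to match. Your proof is correct.

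\textbf{The easy inequality.} Proving $\bigsqcup\set{\wp{\program'}{\post} \mid \program' \determ \program} \sqsubseteq \wp{\program}{\post}$ by induction over the rules of \Cref{fig:refinementrules} is exactly right. The reflexivity and transitivity steps of the generated partial order are trivial.

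\textbf{The converse.} The operational route works. The caps $\min(\post,b)$, together with $\omega$-continuity of $\wpsymbol$ in the postexpectation, reduce the problem to bounded rewards. You claim that pure determinizations correspond exactly to memoryless deterministic schedulers, but you only need one direction: every such scheduler is induced by some pure determinization. That direction holds because each guarded-choice occurrence of $\program$ is always executed with the same program component in its configuration. Two syntactically identical occurrences can reach the same configuration, so the full correspondence can fail, but you never use it.

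\textbf{The step you call the main obstacle.} This step is elementary here and needs no Ornstein-type theorem. The operational MDP is finitely branching, and you only need $\varepsilon$-optimality from the single configuration $(\program,\pstate)$. Restrict to the finitely many configurations reachable within $k$ steps, redirecting transitions that leave this set to a zero-reward sink. An optimal memoryless deterministic scheduler of this finite MDP is worth at least the $k$-step reward of any given scheduler. Extending it arbitrarily loses nothing, since rewards are nonnegative, and letting $k \to \infty$ finishes the argument. This is the finite-fragment idea behind \Cref{theo:memless-suff}.

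\textbf{Reading the lemma off the paper's results.} The lemma can in fact be recovered from results the paper does prove, none of which depends on it:
\begin{enumerate}
  \item \Cref{theo:wp-as-weighted-mop} at $n=1$ gives $\wp{\program}{\post}(\pstate) = \max \mop{\program}{\dwcset{\post}}(\pstate)$.
  \item \Cref{theo:mop-wp} identifies this set with the Scott closure of the values of mixed determinizations. For $n=1$, its maximum is their supremum.
  \item The value of a mixed determinization is a convex combination of values of pure ones, so it is bounded by the best of them.
\end{enumerate}

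\textbf{What each route buys.} Ornstein-type results give more than you need, namely one scheduler that is $\varepsilon$-optimal from all initial states at once. The paper's implicit route obtains the lemma as a by-product of its multiobjective soundness theorem. Your route depends only on the standard single-objective $\wpsymbol$/MDP correspondence.
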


This means that we can characterize the weakest preexpectation of a nondeterministic program in terms of the weakest preexpectations of its determinizations.
In particular, $\bigsqcup$ denotes a supremum rather than a maximum.
Hence, we know that for every state $\pstate$ and $\epsilon > 0$, there exists a determinization $\program' \determ \program$ such that
\[
    \wp{\program}{\post}(\pstate) - \epsilon \leq \wp{\program'}{\post}(\pstate).
\]
Thus, although $\wp{\program}{\post}(\pstate)$ can be approximated arbitrarily closely by $\epsilon$-optimal determinizations, it is not necessarily achieved by any single determinization.
This distinction between approximation and exact achievability is illustrated in the following example.

\begin{example}
\label{ex:non-scott-closed}
	Consider the following program with postexpectation $x$: 

\begin{lstlisting}[mathescape]
$\annotate{2}$
$\ASSIGN{c}{0} \fatsemi \ASSIGN{x}{1} \fatsemi \ASSIGN{i}{0} \fatsemi$
$\WHILE{c=0} \{$
  $\ASSIGN{i}{i+1} \fatsemi$
  $\GC{\true}{\ASSIGN{x}{x+\frac{1}{2^i}}}{\true}{\ASSIGN{c}{1}}$
$\}$
$\annotate{x}$
\end{lstlisting}

	The program repeatedly chooses between increasing $x$ by a diminishing amount and terminating the loop.
    After $k$ iterations, taking the left branch every time yields the value
    \[
        1 + \sum_{i=1}^{k}\frac{1}{2^i},
    \]
    which approaches $2$ as $k$ tends to infinity.
    Hence, for every $\epsilon > 0$, there exists a determinization that performs sufficiently many iterations before terminating and achieves a value greater than $2-\epsilon$.

    However, achieving the value $2$ exactly would require taking the left branch infinitely often.
    In that case, the loop does not terminate, meaning that \wpsymbol for this determinization is 0.
    Thus, the supremum of achievable values is $2$, but this supremum is not achieved by any determinization.
\end{example}

There are, however, sufficient conditions under which the supremum computed by \wpsymbol is guaranteed to be achievable by a determinization.

\begin{theorem}
\label{theo:wp-optimal-existence}
	If either (1)\ $\program$ is dAST and $\post$ is bounded, or (2)\ $\program$ is dCT and $\post$ is possibly unbounded,
	then there exists $\program' \determ \program$ such that $\wp{\program'}{\post} = \wp{\program}{\post}$.
\end{theorem}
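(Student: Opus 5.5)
The plan is to construct the optimal determinization explicitly by \emph{greedily} resolving every guarded choice towards a branch that attains the maximum in the \wpsymbol\ rule, and then to argue that this greedy program loses nothing. This is the classical ``optimal memoryless strategies exist for total-reward objectives under termination'' argument, done at program level.

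\emph{Construction.} We proceed by structural induction on $\program$, defining $\program'$ together with the invariant $\wp{\program'}{\postg} = \wp{\program}{\postg}$ for the relevant postexpectations $\postg$. For a guarded choice $\GC{\guard_1}{\program_1}{\guard_2}{\program_2}$ with postexpectation $\postg$, we set
\[
\guard_1' = \guard_1 \wedge \bigl(\neg\guard_2 \vee \wp{\program_1}{\postg} \geq \wp{\program_2}{\postg}\bigr)
\qquad\text{and}\qquad
\guard_2' = \neg \guard_1'.
\]
These guards satisfy the premises of \Cref{fig:refinementrules}. Since $\wpsymbol$ attains the binary maximum of two values, the greedy choice is optimal locally. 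Sequential composition and probabilistic choice are handled by recursing into the components with the intermediate postexpectations $\wp{\program_2}{\post}$. For a loop $\WHILEDO{\guard}{\program''}$, set $W = \wp{\WHILEDO{\guard}{\program''}}{\post}$, which is the least fixed point of $\phiWpNo$. We then take an optimal determinization $\program'''$ of the body with respect to the postexpectation $W$, given by the induction hypothesis, so that $\wp{\program'''}{W} = \wp{\program''}{W}$. The candidate is $\WHILEDO{\guard}{\program'''}$. Under the premise, the body is dAST, respectively dCT: any divergence of the body inside the loop is a divergence of the loop itself, so the hypothesis transfers to subprograms. Boundedness of $W$ follows from feasibility.

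\emph{Key step: the loop.} By construction $W$ is a fixed point of the characteristic function $\Psi$ of the determinized loop, i.e.\ $\Psi(W) = [\neg\guard]\cdot\post + [\guard]\cdot\wp{\program'''}{W} = W$. Park induction only gives $\lfp \Psi \sqsubseteq W$; the nontrivial direction is $W \sqsubseteq \lfp\Psi$, and this is exactly where termination enters. I would use the single-objective instance of the lower-bound rules already proved. By the embedding $\dwcset{\cdot}$ and \Cref{theo:lower-i} (case (1), $\post$ and $W$ bounded, the determinized loop dAST since every determinization of a dAST program is dAST) or \Cref{theo:lower-ii} (case (2), dCT being inherited likewise), the fixed point $W$, being a subinvariant, lies below $\lfp\Psi$. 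Hence $\wp{\WHILEDO{\guard}{\program'''}}{\post} = W$. Transferring the \mopsymbol\ rules to \wpsymbol\ for $n=1$ is routine, via \Cref{theo:wp-as-weighted-mop} with $\weight=(1)$. Alternatively, one can invoke the standard \wpsymbol\ proof rules of \cite{kaminski2019advanced} directly.

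\emph{Main obstacle.} I expect the main obstacle to be the interplay of loops with nested nondeterminism and state dependence. The determinization of the body must be optimal for the specific postexpectation $W$, and this requires the induction hypothesis to be stated for \emph{all} (bounded) postexpectations rather than just $\post$. One must also check that termination of the original program implies termination of each sub-loop on all reachable states. Since dAST is defined statewise, I would strengthen the statement to ``dAST on all states'' and verify that dAST of the loop implies dAST of the body via $\molp{\program''}{\zeroME} \sqsubseteq$ the unfolding of $\phiMolpNo$. A secondary subtlety is that the guard $\wp{\program_1}{\postg} \geq \wp{\program_2}{\postg}$ must be an admissible predicate, but predicates are arbitrary subsets of $\States$ here, so this is harmless.
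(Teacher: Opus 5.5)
Your overall strategy is a sound and more self-contained route than the paper's, which proves this theorem only by invoking completeness of programmatic strategy synthesis \cite[Theorem~6.7]{batz2024programmatic}. Your strategy has three parts: resolve each guarded choice greedily, observe that $W$ is a fixed point of the characteristic function $\Psi$ of the determinized loop, and close $W \sqsubseteq \lfp \Psi$ with the $n=1$ instance of \Cref{theo:lower-i,theo:lower-ii}. The step that fails is the propagation of the termination hypothesis to subprograms, because termination on all states is not inherited. The loop $\WHILEDO{x = 0}{\COMPOSE{\WHILEDO{x \neq 0}{\SKIP}}{\ASSIGN{x}{1}}}$ is dCT and dAST on every state, yet its body diverges from every state with $x \neq 0$. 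Likewise, $\COMPOSE{\ASSIGN{x}{0}}{\WHILEDO{x \neq 0}{\SKIP}}$ terminates from every state while its second component does not. Your unfolding argument only yields $[\guard] \cdot \molp{\program''}{\zeroME} \sqsubseteq \molp{\WHILEDO{\guard}{\program''}}{\zeroME}$, i.e.\ dAST of the body on $\guard$-states. For $\COMPOSE{\program_1}{\program_2}$, it says nothing about $\program_2$ outside the states reached after $\program_1$. So an induction hypothesis stated for programs that are dAST (dCT) on all states cannot be invoked on the body or on $\program_2$. Strengthening the theorem's hypothesis to \emph{all states} does not help, since it is the subprograms that violate it.

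There are two ways to repair this. The first is to assume that every loop occurring in $\program$ is, as a standalone program, dAST (resp.\ dCT) on all states. This property is inherited by all subprograms, and your proof then goes through essentially verbatim, but it proves a weaker theorem. The second is to make the induction statewise: claim $\wp{\program'}{\postg}(\pstate) = \wp{\program}{\postg}(\pstate)$ only for states $\pstate$ on which $\program$ is dAST (resp.\ dCT). This needs a closure lemma you have not stated. If $\COMPOSE{\program_1}{\program_2}$ is dAST on $\pstate$, then every determinization of $\program_1$ started in $\pstate$ must terminate almost surely in states on which $\program_2$ is dAST. This is derivable from $\molp{\program_1}{\molp{\program_2}{\zeroME}}(\pstate) = \{0\}$ and $\omega$-continuity of $\wpsymbol$. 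Analogously, one body iteration started in the dAST region $R$ of the loop must stay in $R$. In the loop case, $W$ is then in general \emph{not} a global subinvariant of $\Psi$, because at $\guard$-states outside $R$ the body determinization may be suboptimal. You must therefore apply \Cref{theo:lower-i,theo:lower-ii} to $[R] \cdot W$ instead of $W$. Its subinvariance on $R$ follows from the statewise hypothesis for the body together with the closure lemma. Alternatively, run your greedy argument directly on the operational MDP. There, the tail estimate only concerns runs from the initial configuration, so none of this bookkeeping arises, and you can transfer the resulting memoryless scheduler back via \Cref{theo:determ-is-sched}.
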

\begin{proof}
    This is due to the fact that programmatic strategy synthesis is always complete for upper bounds, and for lower bounds under conditions (1) or (2) \cite[Theorem 6.7]{batz2024programmatic}.
\end{proof}

This theorem gives criteria for the existence of optimal determinizations for single-objective expectations.
We now return to the multiobjective setting and ask when a determinization optimizing multiple expectations exists.

\begin{problemstatement}
\label{prob:mop-exist}
    Given a program $\program \in \pGCL$, a postexpectation $\post \in \Exp^n$, a state $\pstate \in \States$ and a point $p \in \PosRealsInfVect$,
    decide whether there exists a determinization $\program' \determmixed \program$ such that $p \leq \wpVec{\program'}{\post}(\pstate)$. %
\end{problemstatement}

By \Cref{theo:mop-wp}, \mopsymbol is the Scott closure of the set of achievable points.
Therefore, if this set is already Scott closed, we immediately know that every point is achievable by a determinization.

\begin{restatable}{lemma}{existenceone}
\label[lemma]{theo:existenceone}
    If the set $\dwcset{\wpVec{\program'}{\post}(\pstate) \mid \program' \determmixed \program }$ is Scott closed,    
    then for all points $p \in \mop{\program}{\dwcset{\post}}(\sigma)$
    there exists a determinization $\program' \determmixed \program$ such that $p \leq \wpVec{\program'}{\post}(\pstate)$.
\end{restatable}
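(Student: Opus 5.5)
The plan is to chain the characterisation of \mopsymbol as the Scott closure of the achievable points with the hypothesis that this set is already Scott closed, and then unfold the definition of downward closure. Write
\[
    S \eeq \dwcset{\wpVec{\program'}{\post}(\pstate) \mid \program' \determmixed \program } \eeq \AchExp{\program}{\post}{\pstate}~,
\]
where the second equality is \Cref{def:achievable-exp} verbatim.

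First, invoke \Cref{theo:mop-wp} (together with \Cref{theo:mop-is-pareto}), as recalled at the start of \Cref{sec:mop-wp}. It gives $\mop{\program}{\dwcset{\post}}(\pstate) = \cl{\AchExp{\program}{\post}{\pstate}} = \cl{S}$. Second, use the hypothesis that $S$ is Scott closed, i.e.\ $\cl{S} = S$. Combining the two, every $p \in \mop{\program}{\dwcset{\post}}(\pstate)$ lies in $S$ itself.

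Third, unfold the downward closure. Membership $p \in S$ means there is some element $\wpVec{\program'}{\post}(\pstate)$ of the generating set, with $\program' \determmixed \program$, such that $p \leq \wpVec{\program'}{\post}(\pstate)$. This $\program'$ is exactly the witness the lemma asks for.

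There is one small point to check. The two sets must have the same generators, not merely look alike. \Cref{theo:mop-wp} is stated over mixed determinizations $\program' \determmixed \program$ in \Cref{sec:mop-wp}, and this matches the set in the hypothesis, so no extra convexification step is needed.

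The only real work sits in \Cref{theo:mop-wp}, which we may assume. The proof therefore involves no genuine obstacle beyond making sure the cited equality is used at the fixed state $\pstate$.
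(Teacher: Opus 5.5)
Your proposal is correct and follows the paper's proof: both identify $\mop{\program}{\dwcset{\post}}(\pstate)$ with the Scott closure of the set of vectors achieved by mixed determinizations, use the hypothesis to drop the closure, and read off the witness $\program' \determmixed \program$ from the downward closure. Your direct citation of \Cref{theo:mop-is-pareto} also sidesteps the implicit identification of $\cl{X}$ with $\cl{\dwc{X}}$ made in the last step of \Cref{theo:mop-wp}, on which the paper's version relies.
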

\begin{proof}
    By \Cref{theo:mop-wp}, we know that
    \[
        \mop{\program}{\dwc{\set{\post}}}(\pstate) \eeq \cl{\set{ \wpVec{\program'}{\post}(\pstate) \mid \program' \determmixed \program }}.
    \]
    Under the assumption of Scott closedness of the set of achievable weakest preexpectation vectors, we immediately get
    \[
        \mop{\program}{\dwc{\set{\post}}}(\pstate) \eeq \dwc{\set{ \wpVec{\program'}{\post}(\pstate) \mid \program' \determmixed \program }},
    \]
    implying that every point in $\mop{\program}{\dwc{\set{\post}}}(\pstate)$ is achievable by a determinization $\program'\determmixed \program$.
\end{proof}

\Cref{theo:existenceone} provides a sufficient condition for the existence of optimal determinizations.
The condition of being Scott closed is not necessary, however.
In \Cref{ex:non-scott-closed}, the set of achievable points is not Scott closed.
The reason is the limit point $2$, for which no determinization exists.
Nevertheless, every value strictly below $2$, such as $1.9$, is achievable.

It remains an open question whether this existence criterion can be characterized directly in terms of the structure of the program $\program$ and the postexpectation $\post$, as is possible in the single-objective setting (cf.\ \Cref{theo:wp-optimal-existence}).

\subsection{Synthesis of Optimal Determinizations}
\label{sec:synthesis}

Having established when a desired point is achievable, we turn to the corresponding synthesis problem.

\begin{problemstatement}
\label{prob:mop-synth}
    Given a program $\program \in \pGCL$, a postexpectation $\post \in \Exp^n$, a state $\pstate \in \States$ and a point $p \in \mop{\program}{\dwcset{\post}}(\pstate)$,
    if it exists, synthesize a mixed determinization $\program' \determmixed \program$ such that $p \leq \wpVec{\program'}{\post}(\pstate)$.
\end{problemstatement}

Our approach builds on the characterization of \Cref{theo:mop-as-cl-conv-opt}, which expresses \mopsymbol in terms of optimal determinizations for weighted objectives under the assumption that for every weight vector $\weight \in \Weights$, a corresponding optimal determinization $\program_\weight \determopt{\weight \cdot \post} \program$ exists.

The following result shows that, under the same assumption, Scott closedness of the weighted-sum boundary points is sufficient to synthesize a determinization for every achievable point.

\begin{restatable}{lemma}{existencetwo}
\label[lemma]{theo:existencetwo}
    If (1) for every weight vector $\weight \in \Weights$, $\program_\weight \determopt{\weight \cdot \post} \program$ exists
    and (2) 
    \[
        \dwcset{ \wpVec{\program_\weight}{\post}(\pstate) \mid \weight \in \Weights, \program_\weight \determopt{\weight \cdot \post} \program }
    \]
    is Scott closed, then for all $p \in \mop{\program}{\dwcset{\post}}(\sigma)$, there is a determinization $\program' \determmixed \program$ such that $p \leq \wpVec{\program'}{\post}(\pstate)$.
\end{restatable}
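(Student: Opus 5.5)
We would reduce the claim to the fact that the convex hull of a compact, downward closed set in $\PosRealsInfVect$ is already closed, and then realize convex combinations by a finite probabilistic mixture. Fix $\pstate$ and write
\[
    S \eeq \dwcset{ \wpVec{\program_\weight}{\post}(\pstate) \mid \weight \in \Weights, \program_\weight \determopt{\weight \cdot \post} \program }~.
\]
By assumption (1), \Cref{theo:mop-as-cl-conv-opt} applies. Since taking the downward closure inside does not change the result (everything is finally Scott closed, hence downward closed), it yields $\mop{\program}{\dwcset{\post}}(\pstate) = \cl{\conv{S}}$. It therefore suffices to show two things. First, $\conv{S}$ is already Scott closed, so that $\cl{\conv{S}} = \conv{S}$. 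Second, every point of $\conv{S}$ is dominated by $\wpVec{\program'}{\post}(\pstate)$ for some $\program' \determmixed \program$.

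The second part is the easy step, so we do it first. Let $p \leq \sum_{j=0}^{k} \weightc_j s_j$ with $s_j \in S$. Then each $s_j \leq \wpVec{\program_{\weight_j}}{\post}(\pstate)$ for some pure $\program_{\weight_j} \determ \program$. Take $\program' = \PCHOICEFINITE{\program_{\weight_j}}{}{\weightc_j}{}$, which is a mixed determinization by \Cref{def:determ}. The $\wpsymbol$ rule for probabilistic choice (\Cref{tab:wp}), iterated through the finite nesting, gives
\[
    \wpVec{\program'}{\post}(\pstate) \eeq \sum_j \weightc_j \cdot \wpVec{\program_{\weight_j}}{\post}(\pstate)~.
\]
Componentwise monotonicity of nonnegative linear combinations then yields $p \leq \wpVec{\program'}{\post}(\pstate)$.

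For the first part, we use assumption (2). As in the proof of \Cref{theo:wp-as-weighted-mop}, the Scott closed, downward closed set $S$ is topologically closed in the compact space $\PosRealsInfVect$, hence compact. By Carathéodory's theorem, every point of $\conv{S}$ is a convex combination of at most $n+1$ points of $S$. So $\conv{S}$ is the image of the compact set $\Delta_{n} \times S^{n+1}$ under $(\weightc, s) \mapsto \sum_j \weightc_j s_j$, where $\Delta_n$ is the standard simplex. The set $\conv{S}$ is clearly downward closed and convex. Hence it remains to show that for a directed $D \subseteq \conv{S}$ we have $\sup D \in \conv{S}$. We would pick representations $(\weightc^{(d)}, s^{(d)})$ for $d \in D$, use compactness to extract a convergent subnet with limit $(\weightc, s) \in \Delta_n \times S^{n+1}$, and conclude $\sup D \leq \sum_j \weightc_j s_j$.

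The main obstacle is exactly this limit step, because $(\weightc, s) \mapsto \sum_j \weightc_j s_j$ is not continuous at $0 \cdot \infty$. If $\weightc^{(d)}_j \to 0$ while $s^{(d)}_{j,i} \to \infty$, the products may converge to something positive, yet the limit term vanishes. We would first try to split the components, in the spirit of step (i) of the proof of \Cref{theo:mop-as-weighted-wp}. On components $i$ that are unbounded in $S$, the cylinder property makes $\conv{S}$ unconstrained, so these components can be discarded. On the remaining components, $S$ is bounded by a finite vector. There the map is continuous, and the compactness argument (equivalently, the usual fact that the convex hull of a compact set in $\Reals^n$ is compact) goes through. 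This mirrors the reduction to bounded sublattices via finite caps used in the proof of \Cref{theo:co-kleene}. With $\conv{S}$ Scott closed, the two parts combine to give the statement of \Cref{theo:existencetwo}, and in fact a mixture of at most $n+1$ optimal pure determinizations suffices.
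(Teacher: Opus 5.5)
Your decomposition is the paper's: \Cref{theo:mop-as-cl-conv-opt}, then removal of the Scott closure via (2), then a finite mixture of optimal pure determinizations. You are right that the middle step needs an argument. The paper passes from $\cl{\conv{X}}$ to $\dwc{\conv{X}}$ citing only (2), although (2) asserts Scott closedness of $\dwc{X}$, not of its convex hull. However, your treatment of infinite components fails at this step.

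The claim that $\conv{S}$ is unconstrained on components that are unbounded in $S$ does not follow from the cylinder argument in \Cref{theo:mop-as-weighted-wp}. There, $y'$ is obtained only as the directed supremum of the points $p_\weightc$, and placing it in the set requires Scott closedness of the \emph{convex} set. For $\conv{S}$ this is exactly what you are trying to prove, and $S$ itself is not convex. The claim is in fact false. Take $S = \dwc{\set{\vvechorizon{\infty}{0},\vvechorizon{0}{1}}}$, which is Scott closed. Then $\conv{S}$ contains $\vvechorizon{0}{1}$ and every $\vvechorizon{\infty}{1-\weightc}$ with $\weightc \in (0,1]$. It does not contain $\vvechorizon{1}{1}$: a convex combination with second component $1$ must put all its weight on $\vvechorizon{0}{1}$. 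So $\conv{S}$ is neither unconstrained in the first component nor Scott closed.

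No repair is possible, because this $S$ arises under the lemma's hypotheses. Take $\program = \GC{\true}{\ASSIGN{x}{1}}{\true}{\ASSIGN{x}{0}}$ and $\post = \vvechorizon{\infty \cdot \iverson{x=1}}{\iverson{x=0}}$.
\begin{itemize}
\item For every $\weight$ with $\weight_1 > 0$, the unique optimal pure determinization always takes the first branch.
\item For $\weight = \vvechorizon{0}{1}$, it always takes the second branch, since $0 \cdot \infty = 0$.
\item Hence (1) holds, and the set in (2) is exactly $S$.
\end{itemize}
The rules of \Cref{table:mop} give $\mop{\program}{\dwcset{\post}}(\pstate) = \cl{\conv{S}}$, which contains $\vvechorizon{1}{1}$. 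Yet every mixed determinization yields either $\vvechorizon{\infty}{1-\weightc}$ with $\weightc > 0$, or $\vvechorizon{0}{1}$. None of these dominates $\vvechorizon{1}{1}$.

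So the statement needs an extra hypothesis. One option is that all points $\wpVec{\program_\weight}{\post}(\pstate)$ are finitely valued; bounded $\post$ suffices. Another is to strengthen (2) to Scott closedness of $\dwc{\conv{\cdot}}$.

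Under finiteness your argument is sound and more careful than the paper's:
\begin{itemize}
\item Scott closedness forces $S$ to be bounded; otherwise $\infty$ times a unit vector would be a directed supremum of points of $S$.
\item So $S$ is compact in $\Reals^n$, and $\conv{S}$ is compact by Carath\'eodory.
\item A topologically closed lower set is Scott closed, so $\conv{S}$ is Scott closed.
\end{itemize}
You also get the bound of $n+1$ mixed components for free.
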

\begin{proof}
    Let $p \in \mop{\program}{\dwcset{\post}}(\sigma)$.
    By the assumptions, we have
    \begin{align*}
        & \mop{\program}{\dwcset{\post}}(\pstate) \\
        =\ & \cl{\conv{ \set{ \wpVec{\program_\weight}{\post}(\pstate) \mid \weight \in \Weights, \program_\weight \determopt{\weight \cdot \post} \program }}} \tag{1, \Cref{theo:mop-as-cl-conv-opt}} \\
        =\ & \dwc{\conv{ \set{ \wpVec{\program_\weight}{\post}(\pstate) \mid \weight \in \Weights, \program_\weight \determopt{\weight \cdot \post} \program }}} \tag{2} \\
    \end{align*}

    For $p$, this implies that there exists some $m \in \Nats$ and for all $i \in \set{1,\dots,m}$ programs $\program_i \determopt{\weight_i \cdot \post} \program$ optimal for weights $\weight_i \in \Weights$, and $\weightc_i \in [0,1]$, $\sum \weightc_i = 1$, such that
    \[
        p \leq \sum_{i=1}^{m} \weightc_i \cdot \wpVec{\program_i}{\post}(\pstate).
    \]
    Define $\program' \determmixed \program$ as the program executing $\program_i$ with probability $\weightc_i$.
    By \Cref{def:wp}, we have that
    \[
        \wpVec{\program'}{\post}(\pstate) = \sum_{i=1}^{m} \weightc_i \vvvechorizon{\wp{\program_i}{\post_1}}{\dots}{\wp{\program_i}{\post_n}}(\pstate) \geq p,
    \]
    proving the claim.   
\end{proof}

The proof of \Cref{theo:existencetwo} reduces the synthesis problem to finding suitable weight vectors: as soon as these are known, we can apply programmatic strategy synthesis to obtain programs $\program_i \determopt{\weight_i \cdot \post} \program$ and construct a mixed determinization $\program'$ as described in the proof.

In order to solve \Cref{prob:mop-synth}, it thus remains to identify appropriate weight vectors.
Say we want to synthesize a determinization achieving an extreme point $p \in \mop{\program}{\dwcset{\post}}(\sigma)$.
We may restrict our attention to extreme points, as each (non-extreme) point below the boundary is dominated by an extreme point on the boundary, and every boundary point is a convex combination of at most $n+1$ extreme points by Carathéodory's theorem. %
Since $p$ is a boundary point, there exists a weight vector $\weight$ whose supporting face contains $p$.
By \Cref{theo:determ-wp-as-mop}, $\wpVec{\program_\weight}{\post}(\pstate)$ belongs to the same supporting face.
Hence, if the face is a singleton (which, by definition, means that $p$ is exposed), we are done.

Unfortunately, not every extreme point is exposed.
For a non-exposed extreme point, optimizing a single weighted objective therefore does not suffice to select the desired point: the corresponding supporting face contains additional achievable points.
The following example illustrates both the successful application of \Cref{theo:existencetwo} to obtain a determinization and the obstruction caused by non-exposed extreme points.

\begin{figure}[t]
    \begin{subfigure}[l]{0.45\textwidth}
\centering
\begin{lstlisting}[mathescape]
$\ASSIGN{q}{1} \fatsemi \ASSIGN{p}{0} \fatsemi \ASSIGN{c}{1} \fatsemi$
$\WHILE{c=1} \{$
  $\ASSIGN{x}{\frac{q^2-p^2}{q^2+p^2}} \fatsemi$
  $\ASSIGN{y}{\sqrt{1-x^2} + 1} \fatsemi$
  $\GCFSTOPEN{~\blue{\trueBold}}$
    $\gray{\text{\% stop}}$
    $\ASSIGN{c}{0}$ 
  $\GCSECCLOSE\GCSECOPEN{\blue{\trueBold}}{}$ 
    $\gray{\text{\% choose next rational}}$
    $\GCFSTOPEN{~q \leq p}{\ASSIGN{q}{q+1} \fatsemi \ASSIGN{p}{0}} \GCSECCLOSE$
    $\GCSECOPEN{q > p}{\ASSIGN{p}{p+1}} \GCSECCLOSE$
  $\GCSECCLOSE\GCSECOPEN{\boldsymbol{\blue{x=1}}}{}$
    $\gray{\text{\% stop and drop y}}$
    $\ASSIGN{y}{0} \fatsemi \ASSIGN{c}{0}$
  $\GCSECCLOSE$
$\}$
\end{lstlisting}
\caption{$\program_{\text{circ}}$}
\label{fig:nonexposed-face}
    \end{subfigure}
    \hfill
        \begin{subfigure}[l]{0.45\textwidth}
\centering
\begin{lstlisting}[mathescape]
$\ASSIGN{q}{1} \fatsemi \ASSIGN{p}{0} \fatsemi \ASSIGN{c}{1} \fatsemi$
$\WHILE{c=1} \{$
  $\ASSIGN{x}{\frac{q^2-p^2}{q^2+p^2}} \fatsemi$
  $\ASSIGN{y}{\sqrt{1-x^2} + 1} \fatsemi$
  $\GCFSTOPEN{~\boldsymbol{\red{x = \frac{3}{5}}}}$
    $\gray{\text{\% stop}}$
    $\ASSIGN{c}{0}$ 
  $\GCSECCLOSE\GCSECOPEN{~\boldsymbol{\red{x \neq \frac{3}{5}}}}{}$ 
    $\gray{\text{\% choose next rational}}$
    $\GCFSTOPEN{~q \leq p}{\ASSIGN{q}{q+1} \fatsemi \ASSIGN{p}{0}} \GCSECCLOSE$
    $\GCSECOPEN{q > p}{\ASSIGN{p}{p+1}} \GCSECCLOSE$
  $\GCSECCLOSE\GCSECOPEN{~\boldsymbol{\red{\falseBold}}}{}$
    $\gray{\text{\% stop and drop y}}$
    $\ASSIGN{y}{0} \fatsemi \ASSIGN{c}{0}$
  $\GCSECCLOSE$
$\}$
\end{lstlisting}
\caption{$\program_{\mathrm{circ}}^{\weight_1}$}
\label{fig:nonexposed-face-opt-w}
    \end{subfigure}
    \caption{Nondeterministic program $\program_{\text{circ}}$ and a determinization $\program_{\mathrm{circ}}^{\weight_1}$ optimal for $\weight_1 \cdot \vvechorizon{x}{y}$, achieving $\target_1'$.}
\label{fig:nonexposed-face-full-1}
\Description{The circle program, producing the Pareto front with non-exposed extreme points.}
\end{figure}

\noindent
\begin{minipage}[t]{0.65\textwidth}
\begin{example}[Synthesizing Optimal Determinizations]
\label{ex:circle}
    Consider the program $\program_{\text{circ}}$ in \Cref{fig:nonexposed-face}.
    Intuitively, $\program_{\text{circ}}$ enumerates all rational points of the upper right quarter of the unit circle centered at $(0,1)$.
    It does so by looping through values $q=1,2,\dots$ and $p=0,1,\dots,q$ and assigning to $x$ the rational $\frac{p^2-q^2}{p^2+q^2}$ whose corresponding $y$ coordinate on the unit circle is rational as well.
    In each iteration, the program can either stop and return the current point, continue to the next rational number, or, if $x=1$, stop and drop the $y$ coordinate to $0$.
    For postexpectations $x$ and $y$, we get
    \begin{align*}
    & \mop{\program_{\text{circ}}}{\dwc{\set{\vvechorizon{x}{y}}}} \\
    =\ & \dwc{\conv{\set{\vvechorizon{1}{0}} \cup \set{\vvechorizon{a}{b+1} \mid a^2+b^2 = 1}}} \cap \Rats^2,
    \end{align*}
    which is visualized in \Cref{fig:circ-pareto-mop}.

\end{example}
\end{minipage}
\hfill
\begin{minipage}[t]{0.3\textwidth}
\vspace{-1.4em}
\centering
\begin{tikzpicture}[
    scale=2,
    >=stealth
]

\filldraw[fill=mygreen!40,draw=mygreen]
    (0,2)
    arc[start angle=90,end angle=0,radius=1]
    -- (1,0)
    -- (0,0)
    -- cycle;

\draw[->] (0,0) -- (1.6,0) node[below] {$x$};
\draw[->] (0,0) -- (0,2.3) node[left] {$y$};

\foreach \x in {1}
    \draw (\x,0.01) -- (\x,-0.01) node[below] {\x};

\foreach \y in {1,2}
    \draw (0.01,\y) -- (-0.01,\y) node[left] {\y};

\coordinate (t1) at (0.6,1.5);
\fill[myred] (t1) circle (1pt);
\node[myred,below] at ($(t1)$) {$\target_1$};

\coordinate (p) at (0.6,1.8);
\fill[myred] (p) circle (1pt);
\node[myred,above] at ($(p)$) {$\target_1'$};

\coordinate (t2) at (1,1);
\fill[myred] (t2) circle (1pt);
\node[myred,right] at ($(t2)$) {$\target_2$};

\end{tikzpicture}
\captionof{figure}{Illustration of $\mop{\program_{\text{circ}}}{\dwc{\set{\vvechorizon{x}{y}}}}$, including target points $\target_1$ and $\target_2$ for synthesis.}
\label{fig:circ-pareto-mop}
\Description{
The figure shows a coordinate plane with a shaded quarter-circle region in the first quadrant.
The horizontal axis is labelled $x$ and the vertical axis is labelled $y$.
The shaded region extends from the origin to the point $(1,1)$ and is bounded by the axes and a curved arc.
The curved boundary represents a quarter-circle connecting the point $(0,2)$ to the point $(1,1)$.
Two red target points are marked inside the region and labelled $\target_1$ and $\target_2$.
The point $\target_1$ is located below a red point labelled $\target_1'$ near the upper curved boundary.
The point $\target_2$ is located near the right edge of the shaded region at approximately $(1,1)$.
The label $\target_1'$ identifies a point on the curved boundary above the target point $\target_1$.
The caption states that the illustration represents $\mop{\program_{\text{circ}}}{\dwc{\set{\vvechorizon{x}{y}}}}$ and includes two target points for synthesis.
}
\end{minipage}

Assume we want to synthesize a determinization achieving the point
    $
        \target_1 = \vvechorizon{\frac{3}{5}}{\frac{1}{2}+1}.
    $
    Notice that $\target_1$ is not an extreme point, see \Cref{fig:circ-pareto-mop}.
    Instead, it lies strictly below several extreme points, for instance the exposed extreme point
    $
        \target_1' = \vvechorizon{\frac{3}{5}}{\frac{4}{5}+1}.
    $
    Since $\target_1'$ is a
point on the boundary of the circle, there exists a weight vector whose weighted objective is uniquely maximized at $\target_1'$.
    The normal vector to the tangent to the circle at $\target_1'$ is simply the radius through $\target_1'$, so we choose
    $
        \weight_1=\vvechorizon{\frac{3}{7}}{\frac{4}{7}}.
    $

    Next, we can use programmatic strategy synthesis to compute a program $\program_{\mathrm{circ}}^{\weight_1}$ maximizing the weighted objective $\weight_1 \cdot \vvechorizon{x}{y}$.
    The program $\program_{\mathrm{circ}}^{\weight_1}$ is given in \Cref{fig:nonexposed-face-opt-w}.
    By \Cref{theo:determ-wp-as-mop}, we obtain
    \[
        \wpVec{\program_{\mathrm{circ}}^{\weight_1}}{\vvechorizon{x}{y}} = \target_1',
\]
    which we can also see as $\program_{\mathrm{circ}}^{\weight_1}$ stops as soon as $x = \frac{3}{5}$, so the previous assignment must have set $y$ to $\frac{4}{5}+1$.
    Finally, observe that since $\target_1 \leq \target_1'$, the synthesized determinization also achieves the desired threshold $\target_1$.

\begin{figure}[t]
    \begin{subfigure}[l]{0.45\textwidth}
\centering
\begin{lstlisting}[mathescape]
$\ASSIGN{q}{1} \fatsemi \ASSIGN{p}{0} \fatsemi \ASSIGN{c}{1} \fatsemi$
$\WHILE{c=1} \{$
  $\ASSIGN{x}{\frac{q^2-p^2}{q^2+p^2}} \fatsemi$
  $\ASSIGN{y}{\sqrt{1-x^2} + 1} \fatsemi$
  $\GCFSTOPEN{~\boldsymbol{\red{x=1}}}$
    $\gray{\text{\% stop}}$
    $\ASSIGN{c}{0}$ 
  $\GCSECCLOSE\GCSECOPEN{~\boldsymbol{\red{x \neq 1}}}{}$ 
    $\gray{\text{\% choose next rational}}$
    $\GCFSTOPEN{~q \leq p}{\ASSIGN{q}{q+1} \fatsemi \ASSIGN{p}{0}} \GCSECCLOSE$
    $\GCSECOPEN{q > p}{\ASSIGN{p}{p+1}} \GCSECCLOSE$
  $\GCSECCLOSE\GCSECOPEN{~\boldsymbol{\red{\falseBold}}}{}$
    $\gray{\text{\% stop and drop y}}$
    $\ASSIGN{y}{0} \fatsemi \ASSIGN{c}{0}$
  $\GCSECCLOSE$
$\}$
\end{lstlisting}
\caption{Determinization $\program_{\text{circ}}^{\target_2}$ achieving $\target_2=\vvechorizon{1}{1}$.}
\label{fig:nonexposed-face-ach-t}
    \end{subfigure}
    \hfill
    \begin{subfigure}[l]{0.45\textwidth}
\centering
\begin{lstlisting}[mathescape]
$\ASSIGN{q}{1} \fatsemi \ASSIGN{p}{0} \fatsemi \ASSIGN{c}{1} \fatsemi$
$\WHILE{c=1} \{$
  $\ASSIGN{x}{\frac{q^2-p^2}{q^2+p^2}} \fatsemi$
  $\ASSIGN{y}{\sqrt{1-x^2} + 1} \fatsemi$
  $\GCFSTOPEN{~\boldsymbol{\red{\falseBold}}}$
    $\gray{\text{\% stop}}$
    $\ASSIGN{c}{0}$ 
  $\GCSECCLOSE\GCSECOPEN{~\boldsymbol{\red{x \neq 1}}}{}$ 
    $\gray{\text{\% choose next rational}}$
    $\GCFSTOPEN{~q \leq p}{\ASSIGN{q}{q+1} \fatsemi \ASSIGN{p}{0}} \GCSECCLOSE$
    $\GCSECOPEN{q > p}{\ASSIGN{p}{p+1}} \GCSECCLOSE$
  $\GCSECCLOSE\GCSECOPEN{~\boldsymbol{\red{x=1}}}{}$
    $\gray{\text{\% stop and drop y}}$
    $\ASSIGN{y}{0} \fatsemi \ASSIGN{c}{0}$
  $\GCSECCLOSE$
$\}$
\end{lstlisting}
\caption{Determinization $\program_{\text{circ}}^{\target_1}$ optimizing for $\target_1$.}
\label{fig:nonexposed-face-opt-x}
    \end{subfigure}
        
    \caption{Further determinizations of $\program_{\text{circ}}$ from \Cref{fig:nonexposed-face}, both optimal for $\weight_2 \cdot \vvechorizon{x}{y}$.}
\label{fig:nonexposed-face-full-2}
\Description{
    Two determinizations of the circle program, both achieving $x$ to be $1$, but in one $y$ is 0 and in the other one, $y$ is 1.
}
\end{figure}

    Next, consider the target point
    $
        \target_2=\vvechorizon{1}{1}.
    $
    Since the set
    \[
        \dwcset{\wpVec{\program'}{\vvechorizon{x}{y}}(\pstate) \mid \program' \determmixed \program_{\text{circ}}}
    \]
    is Scott closed, by \Cref{theo:existenceone}, there exists a determinization achieving $\target_2$.
    Concretely, the determinization $\program_{\text{circ}}^{\target_2}$ in \Cref{fig:nonexposed-face-ach-t} achieves $\target_2=\vvechorizon{1}{1}$, as, similar to $\program_{\mathrm{circ}}^{\weight_1}$, it stops as soon as $x=1$, so the previous assignment must have set $y$ to $1$.
    
    The face containing $\target_2$ is exposed by the weight vector $\weight_2=\vvechorizon{1}{0}$, suggesting that we maximize the objective $1\cdot x + 0\cdot y = x$.
    Applying programmatic strategy synthesis may then produce the determinization shown in \Cref{fig:nonexposed-face-opt-x}, which is optimal for $x$.
    This program stops as soon as $x=1$, sacrificing $y$ by setting it to $0$.
    While this is optimal for the weighted objective, it yields the value
    $
        \wpVec{\program_{\mathrm{circ}}^{\weight_2}}{\vvechorizon{x}{y}} = \vvechorizon{1}{0},
    $
    missing the target $\target_2=\vvechorizon{1}{1}$.

    The reason is that $\target_2$ is not exposed:
    the face containing $\target_2$ also contains all points on the line from $\vvechorizon{1}{0}$ to $\vvechorizon{1}{1}$.
    Consequently, the determinization optimal for the respective weight vector is only guaranteed to achieve a point on the same face as $\target_2$, but not necessarily $\target_2$ itself.

\Cref{ex:circle}illustrates that weighted-sum optimization can fail to synthesize a determinization for an arbitrary extreme point.
Exact synthesis of a desired point via weighted sums requires the point to be exposed.

Exposedness can be characterized directly in terms of the structure of the program $\program$ and the postexpectation $\post$, providing a program-level criterion for when weighted-sum synthesis is \emph{complete}:
if $\program$ is dCT or loop-free, then all extreme points in $\mop{\program}{\dwcset{\post}}(\pstate)$ are exposed.
This follows from (1) the fact that in these cases, the underlying MDP has finitely many states and (2) for finite MDPs, the sets of achievable points are polytopes \cite[Lemma 2.3]{quatmann2023verification}, for which extreme points are exposed. %
Finding more program-level criteria for exposedness is future work.

In any case, the lack of exposedness does not prevent approximate synthesis.

\begin{theorem}[Straszewicz's Theorem {\cite[Theorem 18.6]{rockafellar1970convex}}]
    For a closed convex set $C$, every extreme point is the limit of some sequence of exposed points.
\end{theorem}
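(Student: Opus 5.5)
The plan is to reduce to the compact case by truncating $C$ with a ball, and to prove the compact case with the classical farthest-point argument. Throughout, $C \subseteq \Reals^n$ is closed and convex, possibly unbounded, and $x$ is an extreme point of $C$. We must produce exposed points $y_k$ of $C$ with $y_k \to x$.

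\emph{Step 1 (truncation).} Fix $R > 0$ and set $K = C \cap \overline{B}(x,R)$. This set is compact and convex. Moreover, $x$ is still an extreme point of $K$, because $K \subseteq C$ and $x \in K$.

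\emph{Step 2 (local exposedness implies global exposedness).} Let $y$ be an exposed point of $K$ with $\lVert y-x\rVert < R$. Then there is $c$ such that $c \cdot z < c \cdot y$ for all $z \in K \setminus \set{y}$. We claim $c$ also exposes $y$ in $C$.

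\begin{itemize}
  \item Suppose $z \in C$, $z \neq y$, and $c \cdot z \geq c \cdot y$.
  \item Since $y$ lies in the interior of the ball, the points $y + t(z-y)$ belong to $K$ for all small $t > 0$. They lie in $C$ by convexity and in the ball by choice of $t$.
  \item Each such point is different from $y$ and satisfies $c \cdot (y + t(z-y)) \geq c \cdot y$, which contradicts uniqueness of the maximizer in $K$.
\end{itemize}

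Consequently it suffices to show that $x$ is a limit of exposed points of the compact set $K$. Those exposed points eventually lie strictly inside the ball, so Step 2 applies to them.

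\emph{Step 3 (compact case).} Let $E$ be the set of exposed points of $K$. First, farthest points are exposed. If $y \in K$ maximizes $\lVert z-a\rVert$ over $z \in K$ for some $a \in \Reals^n$, then $c = y-a$ exposes $y$. Indeed, for every $z \in K$,
\[
(z-a)\cdot(y-a) \leq \lVert z-a\rVert\,\lVert y-a\rVert \leq \lVert y-a\rVert^2,
\]
and equality in both steps forces $z = y$.

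Next we show $K = \conv{\overline{E}}$, where $\overline{E}$ is the topological closure of $E$. The set $\conv{\overline{E}}$ is compact by Carathéodory, since $\overline{E} \subseteq K$ is compact.

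\begin{itemize}
  \item Suppose some $p \in K$ lies outside $\conv{\overline{E}}$.
  \item Strictly separate: choose a unit vector $u$ and $\alpha$ with $u \cdot q < \alpha < u \cdot p$ for all $q \in \conv{\overline{E}}$.
  \item Take $a = p - t u$ with $t$ large, and let $y \in K$ be a farthest point from $a$.
  \item Then $\lVert y-a\rVert^2 \geq \lVert p-a\rVert^2 = t^2$.
  \item Expanding $\lVert y-a\rVert^2 = \lVert y-p\rVert^2 + 2t\, u\cdot(y-p) + t^2$ gives $u \cdot y \geq u \cdot p - \lVert y-p\rVert^2/(2t)$.
  \item Since $K$ is bounded, the correction term tends to $0$, so for large $t$ we get $u \cdot y > \alpha$.
  \item Hence $y \notin \conv{\overline{E}}$, although $y \in E$ by the first claim. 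This is a contradiction.
\end{itemize}

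Finally, by Milman's converse to Krein--Milman, the extreme points of $\conv{\overline{E}} = K$ lie in $\overline{E}$. For this compact-set case one can argue directly: write the extreme point $x$ as a convex combination of points of $\overline{E}$ via Carathéodory; extremality forces all points with positive coefficient to equal $x$. Hence $x \in \overline{E}$, i.e.\ $x$ is a limit of exposed points of $K$. By Steps 1 and 2, these are eventually exposed points of $C$.

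I expect the main obstacle to be the separation step in Step 3. One has to make sure the correction term $\lVert y-p\rVert^2/(2t)$ is uniformly small; this uses boundedness of $K$, which is exactly why the truncation in Step 1 is needed. The second point needing care is Step 2, where the exposed points must lie strictly inside the ball for the local-to-global argument to go through.
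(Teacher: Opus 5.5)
Your proof is correct. Farthest points are exposed, strict separation with a far-away center shows that a compact convex set is the convex hull of the closure of its exposed points, and intersecting with a ball around the extreme point handles unbounded $C$, since exposed points of the truncation that lie in the open ball stay exposed in $C$. The paper gives no proof of its own and simply cites Rockafellar's Theorem~18.6; your argument is essentially the classical proof behind that citation, with exposedness in the classical sense of an arbitrary normal vector $c \in \mathbb{R}^n$ rather than the paper's weight vectors.
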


Even when an extreme point is not exposed and we cannot apply the methods described above, we can get arbitrarily close to the extreme point with exposed points.
Since exposed points can be obtained through weighted-sum optimization, every extreme point can be approximated arbitrarily closely by a synthesized determinization.

\section{Case Studies}
\label{sec:examples}

\subsection{The Robot: Trading Safety with Speed}
\label{sec:robotics}

\begin{figure}[t]
    \centering

    \begin{subfigure}[t]{0.32\textwidth}
        \centering
\begin{lstlisting}[mathescape,basicstyle=\footnotesize\ttfamily]
$\ASSIGN{p}{0} \fatsemi \ASSIGN{f}{0} \fatsemi \ASSIGN{t}{g} \fatsemi$
$\WHILE{f=0 \land p<g} \{$
  $\ASSIGN{t}{t \monus 1} \fatsemi$
  $\GCFSTOPEN{~\boldsymbol{\blue{\trueBold}}}{}$
    $\gray{\text{\% fast but risky}}$
    $\gray{\text{\% (maybe chosen)}}$
    $\ASSIGN{p}{p+2} \fatsemi$ 
    $\PCHOICE{\ASSIGN{f}{1}}{0.3}{\ASSIGN{f}{0}}$
  $\GCSECCLOSE \GCSECOPEN{~\boldsymbol{\blue{\trueBold}}}{}$
    $\gray{\text{\% slow but safe}}$
    $\gray{\text{\% (maybe chosen)}}$
    $\ASSIGN{p}{p+1} \fatsemi$
    $\PCHOICE{\ASSIGN{f}{1}}{0.05}{\ASSIGN{f}{0}}$
  $\GCSECCLOSE$
$\}$
\end{lstlisting}%
        \caption{$\program_{\text{robot}}$}
        \label{fig:robot_program_nondet_again}
    \end{subfigure}
    \hfill
    \begin{subfigure}[t]{0.32\textwidth}
        \centering
\begin{lstlisting}[mathescape,basicstyle=\footnotesize\ttfamily]
$\ASSIGN{p}{0} \fatsemi \ASSIGN{f}{0} \fatsemi \ASSIGN{t}{g} \fatsemi$
$\WHILE{f=0 \land p<g} \{$
  $\ASSIGN{t}{t \monus 1} \fatsemi$
  $\GCFSTOPEN{~\boldsymbol{\red{p \geq g \monus 2}}}$ 
    $\gray{\text{\% fast but risky}}$
    $\gray{\text{\% (sometimes chosen)}}$
    $\ASSIGN{p}{p+2} \fatsemi$ 
    $\PCHOICE{\ASSIGN{f}{1}}{0.3}{\ASSIGN{f}{0}}$
  $\GCSECCLOSE \GCSECOPEN{~\boldsymbol{\red{p < g \monus 2}}}{}$
    $\gray{\text{\% slow but safe}}$
    $\gray{\text{\% (sometimes chosen)}}$
    $\ASSIGN{p}{p+1} \fatsemi$
    $\PCHOICE{\ASSIGN{f}{1}}{0.05}{\ASSIGN{f}{0}}$
  $\GCSECCLOSE$
$\}$
\end{lstlisting}
        \caption{$\program_{\text{robot}}^{V_1} = \program_{\text{robot}}^{\iverson{p \geq g}}$ (\Cref{fig:robot_program_det}).}
        \label{fig:robot_program_detV1}
    \end{subfigure}
    \hfill
\begin{subfigure}[t]{0.32\textwidth}
        \centering
\begin{lstlisting}[mathescape,basicstyle=\footnotesize\ttfamily]
$\ASSIGN{p}{0} \fatsemi \ASSIGN{f}{0} \fatsemi \ASSIGN{t}{g} \fatsemi$
$\WHILE{f=0 \land p<g} \{$
  $\ASSIGN{t}{t \monus 1} \fatsemi$
  $\GCFSTOPEN{~\boldsymbol{\blue{p < 2 \lor p \geq g \monus 2}}}$ 
    $\gray{\text{\% fast but risky}}$
    $\gray{\text{\% (sometimes chosen)}}$
    $\ASSIGN{p}{p+2} \fatsemi$ 
    $\PCHOICE{\ASSIGN{f}{1}}{0.3}{\ASSIGN{f}{0}}$
  $\GCSECCLOSE \GCSECOPEN{~\boldsymbol{\blue{2 \leq p < g \monus 2}}}{}$
    $\gray{\text{\% slow but safe}}$
    $\gray{\text{\% (sometimes chosen)}}$
    $\ASSIGN{p}{p+1} \fatsemi$
    $\PCHOICE{\ASSIGN{f}{1}}{0.05}{\ASSIGN{f}{0}}$
  $\GCSECCLOSE$
$\}$
\end{lstlisting}
        \caption{$\program_{\text{robot}}^{V_2}$}
        \label{fig:robot_program_detV2}
    \end{subfigure}
    
    \caption{Nondeterministic program $\program_{\text{robot}}$ as well as two determinizations achieving Pareto optimal points $V_1$ and $V_2$, respectively.}
    \label{fig:robot_program_again}
    \Description{
        Three program listings are shown side by side, as in Figure 1.
        The left listing is the original nondeterministic robot program.
        The middle listing is the determinization achieving the extreme point $V_1$, which is equal to the determinization optimal for the probability of reaching the goal.
        The right listing is the determinization achieving the extreme point $V_2$.
    }
\end{figure}

We begin this section by applying the \mopsymbol transformer to the example $\program_{\text{robot}}$ (see \Cref{fig:robot_program_nondet_again}) with which we started in \Cref{sec:overview}.
Recall that this was a classic risk versus efficiency tradeoff in robotics.
Aiming to reach the goal position $g$ from the current position $p$, the robot has to choose which step to take:
\emph{fast and risky} --- advance two positions, then flip a coin that raises the failure flag $f$ with probability $0.3$ --- or \emph{slow and safe} --- advance one position with failure probability $0.05$.
We write $\qf \coloneqq 0.7$ and $\qs \coloneqq 0.95$ for the two probabilities of not breaking down.

The objectives we aim to maximize are the remaining time budget $t$ and the probability of reaching the goal $\iverson{p\geq g}$.
The suitable multiobjective postexpectation is $\dwc{\set{\vvechorizon{t}{\iverson{p\geq g}}}}$.

\Cref{sec:loops} provided rules for reasoning about loops.
Note that the loop in $\program_{\text{robot}}$ terminates certainly:
it is executed at most $g$ times, as this is the maximal number of steps the robot needs to reach the goal.
Consequently, we can apply \Cref{theo:park-lower-mop,theo:lower-ii} to prove that some invariant $\invariant$ is exact.
To understand the proposed $\invariant$, we discuss the shape and achieved values of optimal strategies.

First, note that the last probabilistic choice never matters.
Within a round the robot moves first and flips second.
If the move reaches the goal, the flip is irrelevant:
the guard has already failed through $p \geq g$, and neither the guard nor the post ever reads $f$ again.
So the final round of every successful route is risk-free: its coin is \emph{void}.
A fast final step therefore gives its speed without its risk, and every Pareto optimal point is achieved by a route that ends with a fast step.

\paragraph{The optimal strategies}
We say a coin is \emph{relevant} if its outcome can still abort the walk.
This is every coin except the void final one.
For each $k = 1, \dots, \lceil \gap/2 \rceil$, consider the route ``commit to $k$ fast steps'':
$k-1$ fast steps first, then $\gap \monus 2k$ slow steps (the $k$ fast steps cover $2k$ cells; the slow ones make up the rest), then the free fast finish.
It takes at most $R_k \coloneqq k + (\gap \monus 2k)$ rounds to reach the goal.
For odd $\gap$ the boldest commitment overshoots the goal by one cell.
This is harmless, since guard and post only compare $p$ against $g$.
We do not need to consider larger $k$:
Beyond $\lceil \gap/2 \rceil$ the leading $k-1$ and the final fast steps alone already suffice.

Two observations establish the optimality of these strategies:
First, ending fast is always better than not ending fast due to the risk-free last coin.
Second, we go \emph{riskiest first}:
the success probability is the product over the relevant coins, hence order-invariant.
However, the expected number of rounds is not: fronting the risky coins lets failing runs fail early and terminate with a higher remaining time budget.
Risky-first hence maximizes the expected remaining time budget at fixed risk.

\paragraph{The points achieved by optimal strategies}
First, the probability of reaching the goal is
\[
    S_k(\gap) \coloneqq \qf^{\,k-1} \cdot \qs^{\,\gap \monus 2k},
\]
capturing that we start by taking $k-1$ fast steps, then $\gap \monus 2k$ slow steps, and the final fast step reaches the goal with probability $1$.

Second, we discuss the expected remaining time budget.
Let $C_k(r)$ be the probability of surviving the first $r$ relevant coins, which is exactly the probability that round $r+1$ is played at all, given by:
\[
    C_k(r) \coloneqq \qf^{\,\min(r,\, k-1)} \cdot \qs^{\,r \monus (k-1)}
\]
The remaining time budget corresponds to $t$ minus the number of rounds played, which is therefore given by
\[
    L_k(t, \gap) \coloneqq \sum_{r=0}^{\min(t,\, R_k) - 1} C_k(r), \qquad \text{ where (as before) } R_k = k + (\gap \monus 2k).s
\]
The truncation at $\min(t, R_k)$ is precisely the budget being truncated at zero.
Consequently, the route taking $k$ fast steps achieves the point
\[
    V_k(t, \gap) \coloneqq \vvechorizon{\, t - L_k(t, \gap)}{\ S_k(\gap)}.
\]

We propose the following invariant\footnote{
Indeed, this invariant was proposed by Anthropic's Claude Fable 5 and subsequently verified in Lean.
Finding a suitable invariant is the most challenging part of program analysis, as becomes apparent in the non-obvious shape of the invariant for this rather small example.
Once the invariant is known, the remaining proof obligations can largely be mechanized.
}, combining all optimal routes:
\[
    \invariant = \bigsqcup_{k=1}^{\lceil \gap/2 \rceil} \dwcset{\,V_k(t, \gap)\,}.
\]

As mentioned above, we show that this invariant is precisely the least fixed point of the \mopsymbol\ characteristic function, that is, $\lfp \phiMopNo = \invariant$.
By \Cref{theo:park-lower-mop,theo:lower-ii}, it therefore suffices to show that $\phiMopNo(\invariant) = \invariant$.
Since this follows by a straightforward application of the rules in \Cref{table:mop}, we omit the calculations.

For the entire program including initializations, we get
\[
    \mop{\program_{\text{robot}}}{\dwcset{\vvechorizon{t}{\iverson{p \geq g}}}}(\sigma)
    \;=\;
    \bigsqcup_{k=1}^{\lceil g/2 \rceil} \dwcset{\, \vvechorizon{\, g - L_k(g)}{\ 0.7^{\,k-1} \cdot 0.95^{\,g \monus 2k} }\, };
\]
with 
\[
    L_k(g) \coloneqq \sum_{r=0}^{g - k - 1} 0.7^{\,\min(r,\, k-1)} \cdot 0.95^{\,r \monus (k-1)},
\]
depending only on the initial value of $g$.

So, each extreme point of \mopsymbol corresponds to taking $k$ fast steps, then slow steps, and a final fast step.
In terms of determinizations, this can be realized by setting the guard of the fast branch to $p < 2\cdot (k-1) \lor p \geq g \monus 2$, and the guard of the slow branch to $2 \cdot (k-1) \leq p < g \monus 2$.
See \Cref{fig:robot_program_detV1,fig:robot_program_detV2} for examples of such determinizations for $k=1$ and $k=2$, respectively.
Considering the single objectives $\iverson{p \geq g}$ and $t$, $V_1$ is always the optimal result for $\iverson{p \geq g}$, and $V_{\lceil g/2 \rceil}$ for $t$.

\paragraph{Concrete frontiers}
Consider the initial states where $g$ is 4, 5, and 9, respectively.
For each state, we can simply insert the respective value of $g$ into the result for \mopsymbol that we have computed above.
The results are visualized in \Cref{fig:robot-pareto-mop}, we have
\[
\mop{\program_{\text{robot}}}{\dwcset{\vvechorizon{t}{\iverson{p \geq g}}}}(\pstate)
    \;=\;
    \bigsqcup \dwcset{\, V_1,\dots,V_k \, };
\]
for $k = \lceil \pstate(g)/2 \rceil$ with the following points:
for $\pstate(g) = 4$, we have $k = 2$ and
\[
    V_1 = \vvechorizon{1.29}{0.9},
    \qquad
    V_2 = \vvechorizon{2.3}{0.7}.
\]

For $\pstate(g) = 5$, we have $k = 3$ and
\[
    V_1 = \vvechorizon{1.29}{0.86},
    \qquad
    V_2 = \vvechorizon{2.64}{0.67},
    \qquad
    V_3 = \vvechorizon{2.81}{0.49}.
\]

For $\pstate(g) = 9$, we have $k = 5$ and
\[
    V_1 = \vvechorizon{1.73}{0.7},
    \qquad
    V_2 = \vvechorizon{3.37}{0.54},
    \qquad
    V_3 = \vvechorizon{4.25}{0.42},
\]
\[
    V_4 = \vvechorizon{4.76}{0.33},
    \qquad
    V_5 = \vvechorizon{4.77}{0.24}.
\]

\begin{figure}[t]
    \centering

    \begin{subfigure}[t]{0.32\textwidth}
        \centering
        \scalebox{0.9}{
\begin{tikzpicture}[
    x={4cm/3.2}, %
    y={4cm/1.1},
    >=stealth
]

\foreach \x in {1,2,3}
    \draw (\x,0.01) -- (\x,-0.01) node[below] {\x};

\foreach \y/\label in {0.5/{0.5},1/{1}}
    \draw (0.01,\y) -- (-0.01,\y) node[left] {\label};

\coordinate (V1) at (1.9,0.9);
\coordinate (V2) at (2.3,0.7);

\filldraw[
    fill=mygreen!40,
    draw=mygreen,
    line join=round
]
    (0,0)
    -- (0,0.9)
    -- (V1)
    -- (V2)
    -- (2.3,0)
    -- cycle;

\draw[->] (0,0) -- (3.2,0); %
\draw[->] (0,0) -- (0,1.1); %

\fill[myred] (V1) circle (2.2pt);
\node[myred] at ($(V1)+(-0,-0.1)$) {$V_1$};

\fill[myred] (V2) circle (2.2pt);
\node[myred] at ($(V2)+(-0.08,-0.07)$) {$V_2$};
\end{tikzpicture}}
\caption{$\pstate = (g \mapsto 4)$.}
\label{fig:robot-pareto-mop-1}%
    \end{subfigure}%
    \hfill
    \begin{subfigure}[t]{0.32\textwidth}
        \centering
        \scalebox{0.9}{
\begin{tikzpicture}[
    x={4cm/3.2}, %
    y={4cm/1.1},
    >=stealth
]

\foreach \x in {1,2,3}
    \draw (\x,0.01) -- (\x,-0.01) node[below] {\x};

\foreach \y/\label in {0.5/{0.5},1/{1}}
    \draw (0.01,\y) -- (-0.01,\y) node[left] {\label};

\coordinate (V1) at (1.290125,0.857375);
\coordinate (V2) at (2.635,0.665);
\coordinate (V3) at (2.81,0.49);

\filldraw[
    fill=mygreen!40,
    draw=mygreen,
    line join=round
]
    (0,0)
    -- (0,0.857375)
    -- (V1)
    -- (V2)
    -- (V3)
    -- (2.81,0)
    -- cycle;

\draw[->] (0,0) -- (3.2,0); %
\draw[->] (0,0) -- (0,1.1); %
\foreach \x in {1,2,3}
    \draw (\x,0.01) -- (\x,-0.01) node[below] {\x};

\foreach \y/\label in {0.5/{0.5},1/{1}}
    \draw (0.01,\y) -- (-0.01,\y) node[left] {\label};

\fill[myred] (V1) circle (2.2pt);
\node[myred] at ($(V1)+(-0,-0.1)$) {$V_1$};

\fill[myred] (V2) circle (2.2pt);
\node[myred] at ($(V2)+(-0.08,-0.07)$) {$V_2$};

\fill[myred] (V3) circle (2.2pt);
\node[myred] at ($(V3)+(-0.15,-0.03)$) {$V_3$};

\end{tikzpicture}}
\caption{$\pstate = (g \mapsto 5)$}
\label{fig:robot-pareto-mop-2}
    \end{subfigure}%
    \hfill
\begin{subfigure}[t]{0.32\textwidth}
        \centering
        \scalebox{0.9}{
\begin{tikzpicture}[
    x={4cm/5.2}, %
    y={4cm/1.1},
    >=stealth
]

\coordinate (V1) at (1.73,0.7);
\coordinate (V2) at (3.37,0.54);
\coordinate (V3) at (4.25,0.42);
\coordinate (V4) at (4.76,0.33);
\coordinate (V5) at (4.77,0.24);

\filldraw[
    fill=mygreen!40,
    draw=mygreen,
    line join=round
]
    (0,0)
    -- (0,0.7)
    -- (V1)
    -- (V2)
    -- (V3)
    -- (V4)
    -- (V5)
    -- (4.77,0)
    -- cycle;

\draw[->] (0,0) -- (5.2,0); %
\draw[->] (0,0) -- (0,1.1); %
\foreach \x in {1,2,3,4,5}
    \draw (\x,0.01) -- (\x,-0.01) node[below] {\x};

\foreach \y/\label in {0.5/{0.5},1/{1}}
    \draw (0.01,\y) -- (-0.01,\y) node[left] {\label};

\fill[myred] (V1) circle (2.2pt);
\node[myred] at ($(V1)+(-0,-0.1)$) {$V_1$};

\fill[myred] (V2) circle (2.2pt);
\node[myred] at ($(V2)+(-0.08,-0.1)$) {$V_2$};

\fill[myred] (V3) circle (2.2pt);
\node[myred] at ($(V3)+(-0.03,-0.1)$) {$V_3$};

\fill[myred] (V4) circle (2.2pt);
\node[myred] at ($(V4)+(0.48,0.02)$) {$V_4$};

\fill[myred] (V5) circle (2.2pt);
\node[myred] at ($(V5)+(0.48,-0.02)$) {$V_5$};

\end{tikzpicture}}
\caption{$\pstate = (g \mapsto 9)$}
\label{fig:robot-pareto-mop-3}
    \end{subfigure}%
    \caption{Illustration of $\mop{\program_{\text{robot}}}{\dwc{\set{\vvechorizon{t}{\iverson{p\geq g}}}}}$ for several initial states $\pstate$.
    The x-axis shows $t$ and the y-axis shows $\iverson{p\geq g}$.}
    \label{fig:robot-pareto-mop}
    \Description{
    Three two-dimensional scatter plots illustrating the Pareto front for the robot example, for initial values of $g$ being 4,5, and 9.
    The horizontal axis represents the expected remaining time, and the vertical axis represents the probability of reaching the goal.

    The Pareto optimal corner points lie on a decreasing piecewise-linear red curve representing the Pareto front.
    For the leftmost plot, these are two points labelled $V_1$ and $V_2$.
    For the middle plot, these are three points $V_1$, $V_2$, and $V_3$.
    For the rightmost plot, these are five points $V_1$, $V_2$, $V_3$, $V_4$, and $V_5$.
    }
\end{figure}

Let us take a closer look at $\pstate$ with $\pstate(g) = 5$.
Assume we want to have two time units remaining.
The result presented above proves that we can, for example, definitely find a determinization achieving 0.5 probability of reaching the goal as $\vvechorizon{2}{0.5} \in \mop{\program_{\text{robot}}}{\dwc{\set{\vvechorizon{t}{\iverson{p\geq g}}}}}(\pstate)$.
On the other hand, this also proves that we cannot reach the goal with 0.8 probability as $\vvechorizon{2}{0.8} \not \in \mop{\program_{\text{robot}}}{\dwc{\set{\vvechorizon{t}{\iverson{p\geq g}}}}}(\pstate)$.

Further, we can see that it is always better to choose the last step to be a fast one.
The all-slow route taking five slow steps achieves only approximately $\vvechorizon{0.48}{0.81}$ and is strictly dominated by the Pareto optimal point $V_1$ in both coordinates:
even the most cautious optimal behavior sprints the last two cells, because that sprint is free.
$V_3$'s larger expected leftover is earned solely by failing earlier more often.
Expected leftover is averaged over failing runs too, i.e., the model genuinely rewards cheap failure.

\paragraph{Synthesizing determinizations}
Keep the initial state with $\pstate(g) = 5$ fixed.
Note that the set of achievable values for $\sigma$ is Scott closed.
This means that by \Cref{theo:existenceone} we know that we can achieve \emph{all} points.

Let $\target = \vvechorizon{1.83}{0.78}$ be the target point we want the robot to achieve, as in \Cref{sec:overview-optimal-determ}, lying on the line between $V_1$ and $V_2$ with $\target = 0.6 \cdot V_1 + 0.4 \cdot V_2$.
The two closest extreme points of the Pareto front are $V_1$ and $V_2$, both exposed.
The corresponding weight vectors whose associated supporting faces contain these points only are
$\weight_1 = \vvechorizon{\frac{1}{10}}{\frac{9}{10}}$ for $V_1$ and
$\weight_2 = \vvechorizon{\frac{1}{3}}{\frac{2}{3}}$ for $V_2$.
This is illustrated in \Cref{fig:robot-pareto-weights}.

Next, we can synthesize pure determinizations optimizing $\weight_1 \cdot \vvechorizon{t}{\iverson{p\geq g}}$ and  $\weight_2 \cdot \vvechorizon{t}{\iverson{p\geq g}}$.
These are $\program_{\text{robot}}^{V_1}$ in \Cref{fig:robot_program_detV1} and $\program_{\text{robot}}^{V_2}$ in \Cref{fig:robot_program_detV2} which commit to one respectively two fast steps.
We mix those determinizations and obtain
\[
    \program_{\text{robot}}^{\vvechorizon{t}{\iverson{p \geq g}}}\; \coloneqq\; \PCHOICE{\program_{\text{robot}}^{V_1}}{0.6}{\program_{\text{robot}}^{V_2}}.
\]
This determinization precisely achieves $\target$, i.e.,
\begin{align*}
    & \wpVec{\program_{\text{robot}}^{\vvechorizon{t}{\iverson{p \geq g}}}}{t, \iverson{p \geq g}} \\
    =\ & 0.6 \cdot \wpVec{\program_{\text{robot}}^{V_1}}{t, \iverson{p \geq g}} + 0.4 \cdot \wpVec{\program_{\text{robot}}^{V_2}}{t, \iverson{p \geq g}} \\
    =\ & 0.6 \cdot V_1 + 0.4 \cdot V_2 \\
    =\ & \target.
\end{align*}

\begin{figure}[t]

\scalebox{0.9}{
\begin{tikzpicture}[
    x={4cm},
    y={4cm},
    >=stealth
]

\foreach \x in {1,2,3}
    \draw (\x,0.01) -- (\x,-0.01) node[below] {\x};

\foreach \y/\label in {0.5/{0.5},1/{1}}
    \draw (0.01,\y) -- (-0.01,\y) node[left] {\label};

\coordinate (V1) at (1.290125,0.857375);
\coordinate (V2) at (2.635,0.665);
\coordinate (V3) at (2.81,0.49);

\filldraw[
    fill=mygreen!40,
    draw=mygreen,
    line join=round
]
    (0,0)
    -- (0,0.857375)
    -- (V1)
    -- (V2)
    -- (V3)
    -- (2.81,0)
    -- cycle;

\draw[->] (0,0) -- (3.2,0) node[below] {$t$};
\draw[->] (0,0) -- (0,1.1) node[left] {$\iverson{p\geq g}$};

\draw[->, ultra thick, myblue]
  (V1) -- ($(V1)+0.03*({1},{9})$)
  node[right] {$\weight_1$};

\draw[myblue, thick]
  ($(V1)-0.03*({-9},{1})$) -- 
  ($(V1)+0.03*({-9},{1})$);

\draw[->, ultra thick, myblue]
  (V2) -- ($(V2)+0.12*({1},{2})$)
  node[right] {$\weight_2$};

\draw[myblue, thick]
  ($(V2)-0.12*({-2},{1})$) -- 
  ($(V2)+0.12*({-2},{1})$);

\coordinate (P) at ($(V1)!0.4!(V2)$);
\fill[myred] (P) circle (2.2pt);
\node[myred,right] at ($(P)+(0,-0.06)$) {$\target$};

\fill[myred] (V1) circle (2.2pt);
\node[myred] at ($(V1)+(-0,-0.1)$) {$V_1$};

\fill[myred] (V2) circle (2.2pt);
\node[myred] at ($(V2)+(-0.08,-0.07)$) {$V_2$};

\fill[myred] (V3) circle (2.2pt);
\node[myred] at ($(V3)+(-0.15,-0.03)$) {$V_3$};

\end{tikzpicture}}
\caption{Illustration of the weight vectors needed to achieve $\target$.
Note that apart from the scaling of the axes, this plot is equivalent to \Cref{fig:robot-pareto-mop-2}.}
\label{fig:robot-pareto-weights}
\Description{
    The same scatter plot as in the previous figures for $g$ being 5.
    There are two weight vectors for $V_1$ and $V_2$ whose faces, indicated in blue, contain only the respective points.
}
\end{figure}

\subsection{The Casino: Gambling under Risk of Ruin}
\label{sec:casino}

Next, consider a blackjack-inspired game played in a casino.
Each round, we can choose between stopping or continuing to gamble.
If we stop, the game terminates immediately and we cash out the winnings accumulated so far.
If we continue, the potential winnings increase, but with some probability we lose all winnings altogether.
The objective is to simultaneously optimize the game duration $i$ (more gambling fun with a single bet) and -- of course -- the accumulated winnings $n$, thus trading off longer play for increasing risk of complete loss.
The program $\program_{\text{casino}}$ together with annotations computing the result of \mopsymbol with respect to post $\dwc{\set{\vvechorizon{i}{n}}}$, is shown in \Cref{fig:casino-annotations}.

\begin{figure}[t]
\begin{lstlisting}[mathescape]
$\annotate{\cl{\conv{\vvechorizon{3-\frac{2^{m+1}}{3^m}}{\left(\frac{2}{3}\right)^m \cdot m} \mid m \in \Nats}}}$
$\ASSIGN{n}{0} \fatsemi \ASSIGN{i}{0} \fatsemi \ASSIGN{c}{1} \fatsemi$
$\annotate{[c=0] \cdot \vvechorizon{i}{n} + [c=1] \cdot \cl{\conv{\vvechorizon{i + 3-\frac{2^{m+1}}{3^m}}{\left(\frac{2}{3}\right)^m \cdot (n+m)} \mid m \in \Nats}}}$
$\WHILE{c=1} \{$
    $\ASSIGN{i}{i+1} \fatsemi$
    $\GCFSTOPEN{~\boldsymbol{\blue{\trueBold}}}$ 
        $\gray{\text{\% stop and cash out}}$
        $\ASSIGN{c}{0} \fatsemi$
    $\GCSECCLOSE \GCSECOPEN{\boldsymbol{\blue{\trueBold}}}{}$
        $\gray{\text{\% gamble}}$
        $\PCHOICE{\ASSIGN{n}{n+1}}{\frac{2}{3}}{\COMPOSE{\ASSIGN{n}{0}}{\ASSIGN{c}{0}}}$
    $\GCSECCLOSE$
$\}$
$\annotate{\dwcset{\vvechorizon{i}{n}}}$
\end{lstlisting}

\caption{Annotations for computing $\mop{\program_{\text{casino}}}{\dwc{\set{\vvechorizon{i}{n}}}}$ }
\label{fig:casino-annotations}
\Description{Calculations of the \mopsymbol result for the casino example.
Starting with the postexpectation at the bottom, the result after the loop is the identified invariant.
Then, the rules for the initial assignments are applied.}
\end{figure}

We again prove an exact fixpoint of the characteristic function using rules for lower and upper bounds.
The invariant we will use is
\[
    \invariant = [c=0] \cdot \dwcset{\vvechorizon{i}{n}} \minkow  [c=1] \cdot \bigmecup_{m \in \Nats} \dwcset{\vvechorizon{i + 3-\frac{2^{m+1}}{3^m}}{\left(\frac{2}{3}\right)^m \cdot (n+m)}}
\]
The invariant describes the possible states of the game depending on whether the player has stopped or is still gambling.
If $c=0$, the game has terminated and the current state $(i,n)$ represents the final cashed-out result.
If $c=1$, the game is still running.
The parameter $m$ can be thought of as the number of successful gambling rounds so far: 
after $m$ wins the expected duration is
\[
    i+3-\frac{2^{m+1}}{3^m},
\]
and the winnings become $n+m$ with probability
$
    \left(\frac23\right)^m.
$
The Scott and convex closed union over $m\in\mathbb{N}$ captures all possible numbers of successful gambling rounds.

\begin{figure}[t]
\begin{lstlisting}[mathescape, basicstyle=\small]
$\annotate{R\subst{i}{i+1}}$
$\ASSIGN{i}{i+1} \fatsemi$
$\annotate{R \coloneqq \dwcset{\vvechorizon{i}{n}} \mecup \frac{2}{3} \cdot ([c=0] \cdot \dwcset{\vvechorizon{i}{n+1}} \mminkow [c=1] \cdot \dots}$
$\GCFSTOPEN{~\boldsymbol{\blue{\trueBold}}}$
    $\annotate{\dwcset{\vvechorizon{i}{n}}}$
    $\gray{\text{\% stop and cash out}}$
    $\ASSIGN{c}{0} \fatsemi$
    $\annotate{[c=0] \cdot \dwcset{\vvechorizon{i}{n}} \mminkow [c=1] \cdot {\displaystyle \bigmecup_{m \in \Nats}} \dwcset{\vvechorizon{i + 3-\frac{2^{m+1}}{3^m}}{\left(\frac{2}{3}\right)^m \cdot (n+m)}}}$
$\GCSECCLOSE \GCSECOPEN{\boldsymbol{\blue{\trueBold}}}{}$
    $\annotate{\frac{2}{3} \cdot \left([c=0] \cdot \dwcset{\vvechorizon{i}{n+1}} \mminkow  [c=1] \cdot {\displaystyle \bigmecup_{m \in \Nats}} \dwcset{\vvechorizon{i + 3-\frac{2^{m+1}}{3^m}}{\left(\frac{2}{3}\right)^m \cdot (n+1+m)}}\right)}$
    $\annotateNo{\mminkow \frac{1}{3} \cdot \dwcset{\vvechorizon{i}{0}}}$
    $\gray{\text{\% gamble}}$
    $\PCHOICE{\ASSIGN{n}{n+1}}{\frac{2}{3}}{\COMPOSE{\ASSIGN{n}{0}}{\ASSIGN{c}{0}}}$
    $\annotate{[c=0] \cdot \dwcset{\vvechorizon{i}{n}} \mminkow  [c=1] \cdot {\displaystyle \bigmecup_{m \in \Nats}} \dwcset{\vvechorizon{i + 3-\frac{2^{m+1}}{3^m}}{\left(\frac{2}{3}\right)^m \cdot (n+m)}}}$
$\GCSECCLOSE$
$\annotate{[c=0] \cdot \dwcset{\vvechorizon{i}{n}} \mminkow  [c=1] \cdot {\displaystyle \bigmecup_{m \in \Nats}} \dwcset{\vvechorizon{i + 3-\frac{2^{m+1}}{3^m}}{\left(\frac{2}{3}\right)^m \cdot (n+m)}}}$
\end{lstlisting}%
\caption{Annotations for computing $\mopsymbol$ of the loop body of $\program_{\text{casino}}$ and invariant $I$.}
\label{fig:casino_invariant}
\Description{Calculations of the \mopsymbol result for the casino example, where the loop body is considered with the invariant.}
\end{figure}

To prove this invariant correct, we use use \Cref{theo:park-upper-mop} to prove that this is an upper bound, i.e., we need to show that $\phiMopNo(\invariant) \leq \invariant$.
The characteristic function is defined as
\[
    \phiMopNo(\invariant) \, \eeq\, \iverson{c=0} \cdot \dwcset{\vvechorizon{i}{n}} \mminkow  \iverson{c=1} \cdot \mop{\program'}{\invariant}~,
\]
with $\program'$ notating the loop body of $\program_{\text{casino}}$.
We compute $\mop{\program'}{\invariant}$ in \Cref{fig:casino_invariant}, where $R\subst{i}{i+1}$ denotes the final result.
In the characteristic function, this term is guarded by $\iverson{c=1}$.
We get that $\iverson{c=1} \cdot R\subst{i}{i+1}$ is equivalent to
\begin{align*}
   \iverson{c=1} \cdot &\ \left(\dwcset{\vvechorizon{i+1}{n}} \right. \\
   &\left. \mecup\, \left(\frac{2}{3} \cdot \left(\bigmecup_{m \in \Nats} \dwcset{\vvechorizon{i +1  + 3-\frac{2^{m+1}}{3^m}}{\left(\frac{2}{3}\right)^m \cdot (n+1+m)}}\right) \minkow  \frac{1}{3} \cdot \dwcset{\vvechorizon{i+1}{0}}\right)\right).
\end{align*}
Focusing on the second part, we apply $\frac{2}{3}$ and add $\dwcset{\vvechorizon{i+1}{0}}$ pointwise, getting us
\[
    \bigmecup_{m \in \Nats} \dwcset{\vvechorizon{\frac{2}{3} \cdot \left(i +1  + 3-\frac{2^{m+1}}{3^m}\right)+ \frac{1}{3}\cdot (i+1)}{\frac{2}{3} \cdot \left(\left(\frac{2}{3}\right)^m \cdot (n+1+m)\right)}}.
\]
The first component reduces to
\begin{align*}
    & \frac{2}{3} \cdot \left(i +1  + 3-\frac{2^{m+1}}{3^m}\right) + \frac{1}{3}\cdot (i+1) \\
    =\ & i + 1 + \frac{2}{3} \cdot 3 - \frac{2}{3} \cdot\frac{2^{m+1}}{3^m} \\
    =\ &  i + 3-\frac{2^{m+2}}{3^{m+1}}.
\end{align*}
Putting this together with the second component, we get
\begin{align*}
   \dwcset{\vvechorizon{i+1}{n}}
   \, \mecup\, \bigmecup_{m \in \Nats} \dwcset{\vvechorizon{i + 3-\frac{2^{m+2}}{3^{m+1}}}{\left(\frac{2}{3}\right)^{m+1} \cdot (n+m+1)}}
\end{align*}
If shift the index in the least upper bound by one, we get
\begin{align*}
   \dwcset{\vvechorizon{i+1}{n}}
   \, \mecup\, \bigmecup_{m \geq 1} \dwcset{\vvechorizon{i + 3-\frac{2^{m+1}}{3^{m}}}{\left(\frac{2}{3}\right)^{m} \cdot (n+m)}},
\end{align*}
and since $\dwcset{\vvechorizon{i+1}{n}}$ is precisely the latter expression for $m=0$, we get that this is equivalent to
\begin{align*}
    \bigmecup_{m \in \Nats} \dwcset{\vvechorizon{i + 3-\frac{2^{m+1}}{3^{m}}}{\left(\frac{2}{3}\right)^{m} \cdot (n+m)}}.
\end{align*}
This lets us conclude that
\[
    \phiMopNo(\invariant) \, \eeq\, [c=0] \cdot \dwcset{\vvechorizon{i}{n}} \minkow  \iverson{c=1} \cdot \bigmecup_{m \in \Nats} \dwcset{\vvechorizon{i + 3-\frac{2^{m+1}}{3^{m}}}{\left(\frac{2}{3}\right)^{m} \cdot (n+m)}}  \, \eeq\, \invariant,
\]
so $\invariant$ is a fixed point of $\phiMopNo$.
With \Cref{theo:park-upper-mop}, we get that $\lfp \phiMopNo \leq \invariant$.

For lower bounds, in contrast to \Cref{sec:robotics}, the program is not dCT but dAST: it has an infinite execution corresponding to gambling forever, but the probability of this execution is 0.
Moreover, the postexpectation is unbounded.
So, we cannot apply \Cref{theo:eq-fp-i,theo:eq-fp-ii}.
Instead, we use the $\omega$-invariant
\[
    \invariant_k = [c=0] \cdot \dwcset{\vvechorizon{i}{n}} + [c=1] \cdot \bigmecup_{0 \leq m \leq k} \dwcset{\vvechorizon{i + 3-\frac{2^{m+1}}{3^m}}{\left(\frac{2}{3}\right)^m \cdot (n+m)}},
\]
for which we need to prove that
\[
    \invariant_0 \ssqsubseteq \phiMopNo(\zeroME)
    \qquad \text{and} \qquad
    \invariant_{k+1} \ssqsubseteq \phiMopNo(\invariant_k) \quad \text{for all } k \in \Nats.
\]
Since the calculations are similar to what we have seen above, we omit them here.
By \Cref{theo:cousot-mop}, we get that $\invariant = \bigsqcup_k \invariant_k \leq \phiMopNo(\invariant)$.
Both inclusions then let us conclude that we have indeed found the exact result for \mopsymbol, as indicated in \Cref{fig:casino-annotations}.

Including initializations, we finally get
\[
    \mop{\program_{\text{casino}}}{\dwc{\set{\vvechorizon{i}{n}}}} \eeq \dwc{\set{\vvechorizon{3}{0}} \cup \conv{\set{\vvechorizon{3-\frac{2^{m+1}}{3^m}}{\left(\frac{2}{3}\right)^m \cdot m} \mid m \in \Nats, m \geq 3}}}.
\]

\begin{figure}[t]
    \scalebox{0.9}{
\begin{tikzpicture}[
    x={6cm/3.2}, %
    y={6cm/1.1},
    >=stealth
]

\foreach \x in {1,2,3}
    \draw (\x,0.01) -- (\x,-0.01) node[below] {\x};

\foreach \y/\label in {0.5/{0.5},1/{1}}
    \draw (0.01,\y) -- (-0.01,\y) node[left] {\label};

\coordinate (V1)  at ({3-2^4/3^3},  {(2/3)^3*3});
\coordinate (V2)  at ({3-2^5/3^4},  {(2/3)^4*4});
\coordinate (V3)  at ({3-2^6/3^5},  {(2/3)^5*5});
\coordinate (V4)  at ({3-2^7/3^6},  {(2/3)^6*6});
\coordinate (V5)  at ({3-2^8/3^7},  {(2/3)^7*7});
\coordinate (V6)  at (2.921963115,0.312147538); %
\coordinate (V7)  at (2.947975410,0.234110654);
\coordinate (V8)  at (2.965316940,0.173415299);
\coordinate (V9)  at (2.976877960,0.127171219);
\coordinate (V10) at (2.984585307,0.092488160);
\coordinate (V11) at (2.989723538,0.066797004);
\coordinate (V12) at (2.993149025,0.047956823);
\coordinate (V13) at (2.995432683,0.034254874);
\coordinate (V14) at (2.996955122,0.024359021);
\coordinate (V15) at (2.997970082,0.017254307);
\coordinate (VL) at ({3},{0});

\filldraw[
    fill=mygreen!40,
    draw=mygreen,
    line join=round
]
    (0,0)
    -- (0,{8/9})
    -- (V1)
    -- (V2)
    -- (V3)
    -- (V4)
    -- (V5)
    -- (V6)
    -- (V7)
    -- (V8)
    -- (V9)
    -- (V10)
    -- (V11)
    -- (V12)
    -- (V13)
    -- (V14)
    -- (V15)
    -- (VL)
    -- cycle;

\draw[->] (0,0) -- (3.2,0) node[below] {$i$};
\draw[->] (0,0) -- (0,1.1) node[left] {$n$};

\foreach \i in {1,...,15}
    \fill[myred] (V\i) circle (1.5pt);

\fill[myred] (VL) circle (1.5pt);

\end{tikzpicture}}
\caption{Illustration of $\mop{\program_{\text{casino}}}{\dwc{\set{\vvechorizon{i}{n}}}}$ for any initial state.
}
\label{fig:casino}
\Description{
A two-dimensional scatter plot illustrating the Pareto front for the casino example.
The horizontal axis represents the expected number of rounds, and the vertical axis represents the expected winnings.

There are infinitely many Pareto optimal corner points.
These get more and more dense as they approximate $\vvechorizon{3}{0}$.
}
\end{figure}

This set is visualized in \Cref{fig:casino}.
We observe that the Pareto front is formed by line segments between points that become increasingly dense and converge to $\vvechorizon{3}{0}$.
We can conclude, for example, that it is not beneficial to play four rounds and expect to obtain any winnings.
The Pareto optimal point with the highest expected winning is approximately $\vvechorizon{2.4}{0.89}$. %
If this is what we are aiming for, we should play two rounds after the initial one and then flip a slightly biased coin, depending on its outcome we either play another round or not.

Moreover, all extreme points are exposed, so pure determinizations achieve all exposed extreme points; mixed determinizations realize the intervening points on the boundary.
Notably, this also applies to $\vvechorizon{3}{0}$.
The supporting face of the weight vector $\vvechorizon{1}{0}$ contains only $\vvechorizon{3}{0}$.
The corresponding weighted expectation $1 \cdot i + 0 \cdot n$ optimizes solely for the length of the game, and thus results in always choosing to continue gambling.
Consequently, with probability 1, the \enquote{lose-it-all} event eventually occurs, which is why the expected value for $n$ in this determinization is 0.

\subsection{The Queue: Balancing Adaptive Servers}
\label{sec:queue}

\begin{figure}[t]
   \begin{minipage}[c]{0.55\textwidth}
\begin{lstlisting}[mathescape]
$\ASSIGN{q}{0} \fatsemi \ASSIGN{e}{0} \fatsemi \ASSIGN{p}{0} \fatsemi \ASSIGN{t}{0} \fatsemi$
$\WHILE{t<T} \{$
    $\ASSIGN{t}{t+1} \fatsemi$
    $\GCFSTOPEN{~\boldsymbol{\blue{q>0}}}$
        $\gray{\text{\% one server active}}$
        $\COMPOSE{\ASSIGN{e}{e+2}}{\program_{\text{work}}^{0.3}}$
    $\GCSECCLOSE \GCSECOPEN{\boldsymbol{\blue{q>0}}}{}$
        $\gray{\text{\% two servers active}}$
        $\COMPOSE{\ASSIGN{e}{e+1}}{\program_{\text{work}}^{0.5}}$
    $\GCSECCLOSE \GCSECOPEN{\boldsymbol{\blue{q>0}}}{}$
        $\gray{\text{\% three servers active}}$
        $\COMPOSE{\ASSIGN{e}{e+0}}{\program_{\text{work}}^{0.7}}$
    $\GCSECCLOSE \GCSECOPEN{\boldsymbol{\red{q=0}}}{}$
        $\gray{\text{\% empty queue}}$
        $\PCHOICE{\ASSIGN{q}{1}}{a}{\ASSIGN{q}{0}}$
    $\GCSECCLOSE$
    $\gray{\text{\% productivity bonus}}$
    $\GCSECOPEN{\boldsymbol{\red{q<N}}}{\ASSIGN{p}{p+1}} \GCSECCLOSE$
$\}$
\end{lstlisting}
\end{minipage}
\hfill
\begin{minipage}[c]{0.38\textwidth}
$\program_{\text{work}}^\tau$:
\begin{lstlisting}[mathescape]
$\{$
    $\PCHOICE{\ASSIGN{q}{q}}{\tau}{\ASSIGN{q}{q + 1}}$
$\PBRACK{a}$
    $\PCHOICE{\ASSIGN{q}{q \monus 1}}{\tau}{\ASSIGN{q}{q}}$
$\}$
\end{lstlisting}
\end{minipage}

\caption{The program $\program_{\text{queue}}$.}
\label{fig:queue-program}
\Description{\pGCL code for the queue program.
}
\end{figure}

Lastly, we consider a queueing system with up to three active servers that can be adjusted in each step.
Jobs arrive probabilistically and are served or queued depending on the current number of active servers.

A program modeling this is provided in \Cref{fig:queue-program}.
Inputs to the program are the arrival probability $a$, the time threshold $T$ and the quality-of-service threshold $N$.
Initially, the queue is empty ($q=0$).
The variables $e$ and $p$ count the energy savings and productivity boni, respectively.

As long as we have not reached the time threshold, the system must decide in each iteration how many servers should be active.
For example, if the system chooses to run only one server, we save energy for the two idle servers.
A job arrives with probability $a$.
As we have only one server running, the probability of being able to handle the job and keeping the queue length equal is only 30\%.
If we are not able to handle the job, the queue length is increased by one.
If no job arrives, with probability 30 \%, we finish one waiting job which decreases the queue length by 1.
Otherwise, the queue length does not change.

When we choose more servers to run, the probability of production increases to 50 \% for two servers and to 70 \% for three servers.
Finally, we receive a productivity bonus if the number of jobs waiting in the queue is below the threshold $N$.

The objective is to balance energy consumption and performance: we want to maximize $e$ and $p$, so we compute $\mop{\program_{\text{queue}}}{\dwc{\set{\vvechorizon{e}{p}}}}$.
We only present the final result here, but note that the program is dCT as the loop runs precisely $T$ times, enabling simple reasoning for lower bounds as for the robot example in \Cref{sec:robotics}.

\begin{wrapfigure}[14]{r}{0.4\textwidth}
\centering
\begin{tikzpicture}[
    x={8cm/4},
    y={8cm/1},
    >=stealth
]

\filldraw[
fill=mygreen,
draw=mygreen!40,
fill opacity=.30]
(0.60,1.6) --
(0.60,1.937) --
(0.00,1.973) --
(0,1.6) -- cycle;

\filldraw[
fill=mygreen,
draw=mygreen!65,
fill opacity=.5]
(1.00,1.6) --
(1.00,1.825) --
(0.00,1.925) -- 
(0.00,1.6) -- cycle;

\filldraw[
fill=mygreen,
draw=mygreen!85,
fill opacity=.7]
(1.40,1.6) --
(1.40,1.657) --
(0.00,1.853) --
(0.00,1.6) -- cycle;

\node[mygreen!65] at (0.62,1.98) {$a=0.3$};
\node[mygreen!85] at (1.03,1.89) {$a=0.5$};
\node[mygreen] at (1.43,1.71) {$a=0.7$};

\draw[->] (0,1.6) -- (1.9,1.6) node[below] {$e$};
\draw[->] (0,1.6) -- (0,2.06) node[left] {$p$};

\foreach \x in {0.5,1,1.5}
    \draw (\x,1.605)--(\x,1.595) node[below] {\x};

\foreach \y in {1.7,1.8,1.9,2.0}
    \draw (0.01,\y)--(-0.01,\y)
    node[left] {\pgfmathprintnumber{\y}};

\end{tikzpicture}
\caption{$\mop{\program_{\text{queue}}}{\dwc{\set{\vvechorizon{e}{p}}}}$ for initial states where $N=T=2$, depending on $a$.
Darker green corresponds to higher initial values of the arrival probability $a$.
}
\label{fig:mop-queue}
\Description{
A two-dimensional scatter plot illustrating the Pareto fronts for the queue example for three values of $a$.
The horizontal axis represents the expected enery savings, and the vertical axis represents the expected productivity bonus.

There is only one Pareto optimal corner point.
For larger $a$, this point is moving to the right and down, causing the shape to be steeper.
}
\end{wrapfigure}

The result depends heavily on the initial values of $T$ and $N$.
For simplicity, we focus on initial states where $T = N = 2$, for which we have

\noindent
\begin{minipage}{0.6\textwidth}
\begin{align*}
    & \mop{\program_{\text{queue}}}{\dwc{\set{\vvechorizon{e}{p}}}} \\
    =\ & \dwc{{\set{\vvechorizon{0}{2-0.3a^2}, \vvechorizon{2a}{2-0.7a^2}}}},
\end{align*}
\end{minipage}

\noindent
being the downward closure of two points depending only on the arrival probability $a$.
For $a \in \set{0.3,0.5,0.7}$, this set is visualized in \Cref{fig:mop-queue}.

The dependence on $a$ reveals how the workload of the environment influences the attainable trade-off.
Increasing the arrival probability $a$ shifts the Pareto frontier towards higher energy savings but lower productivity, as the system can exploit more aggressive server deactivation strategies at the cost of exceeding the quality-of-service threshold more frequently.
In particular, the quadratic decrease in productivity demonstrates that increasing workload pressure has a disproportionate impact on service quality, while the available energy savings grow only linearly.

Thus, \mopsymbol provides a starting point for analyzing families of adaptive systems under varying environmental conditions.
Rather than requiring a separate analysis for each individual value of $a$, it characterizes the entire family of attainable trade-offs symbolically.
This allows us to study how changes in the environment influence the balance between competing objectives, such as energy consumption and productivity, and to identify how adaptive strategies should be adjusted as the workload changes.

\section{Countable-State Multiobjective MDPs}
\label{sec:mdp}

Having established the required notions for multiobjective reasoning at the program level, we now proceed to relate these to reasoning principles on Markov decision processes (MDPs).
This section introduces the necessary preliminaries, while the following \Cref{sec:mop-mdp} establishes the precise relations.

We consider countable-state, finite-action MDPs~\cite{puterman2014markov} as operational semantics for probabilistic programs.
Where possible, we follow the notation from \cite{batz2024jp,baier2008principles}.
Toward the analysis of multiobjective MDPs, we introduce multidimensional reward structures.

Formally, an MDP is a quadruple $\MDP = \left( \OpStates, \MI, \OpAct, \OpP \right)$ where
$\OpStates$ is a countable non-empty set of states,
$\MI \subseteq \OpStates$ is a \emph{set} of initial states,
$\OpAct$ is a finite non-empty set of action labels,
and $\OpP \colon \OpStates \times \OpAct \times \OpStates \rightarrow [0,1]$ is a transition probability function such that for all $\opState \in \OpStates$ and all $\opAct\in\OpAct$, $\sum_{\opState'\in\OpStates} \OpP(\opState,\opAct,\opState')\in\{0,1\}$
and the support $\set{\opState' \in \OpStates \mid \OpP(\opState,\opAct,\opState') > 0} = \OpSucc(\opState,\opAct)$ is finite.
We define $\OpSucc(\opState) = \bigcup_{\opAct \in \OpAct} \OpSucc(\opState,\opAct)$.
For $\opState \in \OpStates$ we write $\OpAct(\opState) = \{\opAct \in \OpAct \mid \sum_{\opState'\in\OpStates} \OpP(\opState,\opAct,\opState') = 1\}$ and require that $|\OpAct(\opState)|  \geq 1$ for every $\opState$.
An MDP $\MDP$ is a Markov chain if $|\OpAct(\opState)| = 1$ for all $\opState \in \OpStates$.

To resolve nondeterminism in MDPs, we define \emph{schedulers}.
    A (deterministic) \emph{scheduler} for $\MDP = \left( \OpStates, \MI, \OpAct, \OpP \right)$ is a function $\msched \colon \OpStates^{+} \to \OpAct$ satisfying $\msched(\opState_0\ldots \opState_n) \in \OpAct(\opState_n)$ for all $\opState_0\ldots \opState_n \in \OpStates^{+}$.
    A scheduler $\msched$ is called \emph{memoryless} if $\msched(\opState_0\ldots \opState_n)$ depends only on $\opState_n$. %
    A memoryless scheduler can be identified with maps of type $\OpStates \to \OpAct$. %
    We denote the set of deterministic scheduler by $\SchedPure{\MDP}$.
    A memoryless scheduler $\msched \colon \OpStates \to \OpAct$ for $\MDP$ induces a Markov chain $\MDP^\msched = \left( \OpStates, \MI, \OpAct, \OpP' \right)$ where for $\opState,\opState' \in \OpStates$ and $\opAct \in \OpAct$,
    \[
    \OpP'(\opState,\opAct,\opState') \eeq 
    \begin{cases}
        \OpP(\opState,\opAct,\opState') &\text{if $a = \msched(\opState)$, and}\\
        0 &\text{otherwise.}
    \end{cases}~.
    \]

    A \emph{mixed scheduler} %
    $\sched \in \distr{\SchedPure{\MDP}}$ is a finitely supported distribution over schedulers.
    We call a mixed scheduler $\sched$ pure if $|\supp{\sched}|=1$, i.e., if it is a (deterministic) scheduler.
    We call a mixed scheduler memoryless if all schedulers in its support are memoryless. 
    We denote b< $\Sched{\MDP}$ the set of mixed schedulers and by $\SchedMemless{\MDP}$ the set of mixed memoryless schedulers. %

We call an MDP $\MDP'$ a \emph{determinization} of $\MDP$ if there exists a memoryless scheduler $\msched$ such that $\MDP' = \MDP^\msched$.

Following \citeauthor{batz2024programmatic}, we consider \emph{reachability-reward} objectives.
Each state in a target set $\MT$ is a sink and has a $\PosRealsInfVect$-valued reward that is collected when that state is reached for the first time.
The use of reward vectors supports the multiobjective reasoning developed below.

For $\MT\subseteq \OpStates$, the set of finite paths eventually reaching $\MT$ is
\[
   \pathstotarget{\MT} \eeq \{\, \opState_0 \ldots \opState_m \in \OpStates^{+}~\mid~  \opState_m \in \MT,\, \forall k \in\{0,\ldots,m-1\}\colon \opState_k\not\in \MT  \,\} ~.
\]
For a pure scheduler $\sched \in \SchedPure{\MDP}$, the probability of a path $\finPath = \opState_0 \dots \opState_k \in \OpStates^+$ in MDP $\MDP$ is 
\[
    \pathProb{\MDP}{\sched}(\finPath) = \prod_{0 \leq i < k} \OpP(\opState_{i},\sched(\opState_0 \dots \opState_i),\opState_{i+1}).
\]
We say $\finPath$ is \emph{possible under $\sched$} if $\pathProb{\MDP}{\sched}(\finPath) > 0$ and denote the set of possible paths as $\finPaths{\sched}$.
For a mixed scheduler $\sched \in \Sched{\MDP}$, the probability of a path $\finPath \in \OpStates^+$ is %
\[
    \pathProb{\MDP}{\sched}(\finPath) = \sum_{\sched' \in \supp{\sched}}  \sched(\sched') \cdot \pathProb{\MDP}{\sched'}(\finPath). %
\]

Given $\OpRew \colon \MT \to \PosRealsInfVect$, we define the function $\exprew{\msched}{\MDP}{\OpRew} \colon \MI \to \PosRealsInfVect$, which maps an initial state $\mi$ to its expected (reachability-)reward $\ExpRewN{\MDP}{\OpRew}{\mi}{\sched}$ after at most $k$ steps under scheduler $\msched$ by
\[
    \ExpRewN{\MDP}{\OpRew}{\mi}{\sched} \eeq
    \sum_{1 \leq i \leq k} \sum_{\mi\opState_1\ldots\opState_i\in \pathstotarget{\MT}} \pathProb{\MDP}{\msched}(\mi\opState_1\ldots\opState_i) \cdot \OpRew(\opState_i)~.
\]
Finally, we define for all states $\mi \in \MI$
\begin{align*}
   \ExpRew{\MDP}{\OpRew}{\mi}{\sched} \eeq& \sup_{k \in \Nats} \ExpRewN{\MDP}{\OpRew}{\mi}{\sched}. %
\end{align*}

\paragraph{Notation}
For the remainder of this section, we let $\OpRew \colon \MT \to \PosRealsInfVect$ be a reward assignment for an MDP $\MDP = \left( \OpStates, \MI, \OpAct, \OpP \right)$.

\subsection{Achievable Rewards in MDPs}
\label{sec:bellman-pareto}

Following \Cref{sec:problem-mop}, we develop multiobjective reasoning on MDPs, building on seminal work by~\cite{white1982multi,henig1983vector,chatterjee2006mdp}.
Contrary to that work, we study countable-state, finite-action MDPs.
The main result in this section concerns achievable rewards, the analogue of multiobjective preexpectations.

\begin{definition}[Achievable Rewards]
\label{def:achievable}
    We define the following:
    \begin{enumerate}
        \item  The \emph{set of achievable rewards from $\mi \in \MI$ after at most $k \in \Nats$ steps} is defined as
        \[
            \AchN{\MDP}{\OpRew}{\opState} = \dwc{\set{ \ExpRewN{\MDP}{\OpRew}{\mi}{\sched} \mid \sched \in \Sched{\MDP}}} \subseteq \PosRealsInfVect.
        \]

        \item  The \emph{set of achievable rewards from $\mi \in \MI$} is defined as
        \[
            \Ach{\MDP}{\OpRew}{\mi} = \dwc{\set{ \ExpRew{\MDP}{\OpRew}{\mi}{\sched} \mid \sched \in \Sched{\MDP}}} \subseteq \PosRealsInfVect.
        \]

        \item  The \emph{set of achievable rewards by memoryless schedulers from $\mi \in \MI$} is defined as
        \[
            \AchMemless{\MDP}{\OpRew}{\mi} = \dwc{\set{ \ExpRew{\MDP}{\OpRew}{\mi}{\sched} \mid \sched \in \SchedMemless{\MDP}}} \subseteq \PosRealsInfVect.
        \]
    \end{enumerate}
    If $\MDP$, $\OpRew$, or $\mi$ are clear from the context, it might be omitted.
\end{definition}
The Pareto front is indeed the boundary of the closure of the achievable rewards, but we always operate on the achievable rewards.  %

\begin{figure}[t]
    \centering
    \begin{tikzpicture}[->, node distance=1.5cm, on grid, auto]
    \node[state,initial,initial text=] (s0) {$s_0$};
    \node[state] (s1) [above right=of s0,label=above:{$\vvechorizon{0}{1}$}] {$s_1$};
    \node[state] (s3) [below right=of s0,label=above:{$\vvechorizon{1}{0}$}] {$s_3$};
    \node[state] (s2) [right=of s1] {$s_2$};
    \node[state] (s4) [right=of s3] {$s_4$};

    \path
        (s0) edge[] node[below] {$\alpha$} (s1)
            edge[] node[below left] {$\beta$} (s3)
        (s1) edge[] node {} (s2)
        (s3) edge[] node {} (s4)
        (s2) edge[loop above] node {} (s2)
        (s4) edge[loop above] node {} (s4);
\end{tikzpicture}
    \caption{MDP $\MDP_1$ discussed in \Cref{ex:pareto}. States without label are assumed to have reward $\vvechorizon{0}{0}$.}
    \label{fig:ex:pareto}
\end{figure}

\begin{example}
\label{ex:pareto}
    Consider the MDP $\MDP_1$ in \Cref{fig:ex:pareto}.
    The scheduler can probabilistically choose between collecting reward $\vvechorizon{0}{1}$ or $\vvechorizon{1}{0}$.
    After collecting the reward, we end up in a sink state with no more reward.
    So, we have 
    \[
        \AchMDP{\MDP_1} =  \dwc{\conv{\set{\vvechorizon{0}{1}, \vvechorizon{1}{0}}}}.
    \] 
\end{example}
    
We demonstrate that memoryless schedulers suffice to characterize the achievable points.
Intuitively, for all points achievable by a memoryful scheduler, we can find a set of memoryless schedulers whose expected reward converges to the reward of the memoryful scheduler.

\begin{restatable}{theorem}{memlesssuff}
\label{theo:memless-suff}
    For all $\opState \in \OpStates$, we have that
    \[
    \cl{\dwc{\set{ \ExpRew{\MDP}{\OpRew}{\opState}{\sched} \mid \sched \in \Sched{\MDP}}}} = \cl{\dwc{\set{ \ExpRew{\MDP}{\OpRew}{\opState}{\sched} \mid \sched \in \SchedMemless{\MDP}}}}.
    \]
\end{restatable}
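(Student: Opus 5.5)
The inclusion $\supseteq$ is immediate, since $\SchedMemless{\MDP} \subseteq \Sched{\MDP}$ and $\cl{\dwc{\cdot}}$ is monotone. For $\subseteq$ we avoid approximating a memoryful scheduler by memoryless ones coordinate by coordinate. Instead we reduce both sides to their weighted sums, reusing the separation machinery of \Cref{theo:mop-as-weighted-wp}.

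\textbf{Step 1: both sides are elements of $\HoarePowerDom$.} Write $A = \cl{\dwc{\set{\ExpRew{\MDP}{\OpRew}{\opState}{\sched} \mid \sched \in \Sched{\MDP}}}}$ and $B$ for the analogous set over $\SchedMemless{\MDP}$. Both contain $0^n$, and both are downward and Scott closed by construction. For convexity, $\ExpRew{\MDP}{\OpRew}{\opState}{\cdot}$ is affine in the mixing weights. This holds for each $k$-step truncation by the definition of $\pathProb{\MDP}{\sched}$ for mixed schedulers, and it passes to the supremum over $k$ because the truncations increase monotonically in $k$. A finite mixture of finitely supported distributions over (memoryless) pure schedulers is again such a distribution. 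Hence the pre-closure sets are convex, and by \Cref{theo:sup-is-cl} so are their closures. So $A, B \in \HoarePowerDom$.

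\textbf{Step 2: reduction to single objectives.} The proof of \Cref{theo:mop-as-weighted-wp} shows that every element of $\HoarePowerDom$ is the intersection of its supporting halfspaces $\set{x \mid \weight \cdot x \leq \max(\weight \cdot \HDel)}$ over $\weight \in \Weights$. That argument used only the closure properties of $\HDel$, not the program. It therefore suffices to show, for every $\weight \in \Weights$, that $\max(\weight \cdot A) \leq \max(\weight \cdot B)$. By ($\ast$) in that proof, closures do not increase weighted maxima, so the maximum over $A$ equals $\sup_{\sched \in \Sched{\MDP}} \weight \cdot \ExpRew{\MDP}{\OpRew}{\opState}{\sched}$. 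By linearity this supremum may be taken over pure (possibly memoryful) schedulers only, and likewise over pure memoryless schedulers for $B$. Moreover, $\weight \cdot \ExpRew{\MDP}{\OpRew}{\opState}{\sched}$ is the expected scalar reachability reward for the reward $\weight \cdot \OpRew \colon \MT \to \PosRealsInf$. Monotone convergence in $k$ lets us exchange the sum with the supremum over $k$, and the convention $0 \cdot \infty = 0$ is harmless because $\weight \geq 0$.

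\textbf{Step 3 (main obstacle): memoryless schedulers suffice for scalar reachability rewards.} We must show that, in a countable-state, finite-action MDP with nonnegative scalar reachability reward, the supremum over history-dependent deterministic schedulers equals the supremum over memoryless deterministic ones. The plan is to characterise the optimal value $V$ as the least fixed point of the Bellman operator $\Phi(f)(s) = \OpRew(s)$ for $s \in \MT$ and $\Phi(f)(s) = \max_{a \in \OpAct(s)} \sum_{s'} \OpP(s,a,s') f(s')$ otherwise. Kleene iteration applies because actions and successors are finite, so $\Phi$ is $\omega$-continuous. One then invokes Ornstein's theorem for positive countable MDPs: for every $\epsilon > 0$ there is a stationary deterministic scheduler that is uniformly $\epsilon$-optimal in the multiplicative sense, i.e.\ it achieves at least $(1-\epsilon) V(s)$ whenever $V(s) < \infty$. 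This yields the supremum at $\opState$.

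The case $V(\opState) = \infty$ must be handled separately: there we show that memoryless schedulers achieve arbitrarily large values, by Ornstein applied to the truncated rewards $\min(\weight \cdot \OpRew, N)$ for $N \to \infty$. If citing Ornstein turns out to be unwanted, a fallback is to mirror the argument of \cite[Theorem 2.5]{batz2024programmatic} in the programmatic setting. We expect this step, rather than the geometric reduction, to carry the real difficulty. The reason is that a naive greedy memoryless choice with respect to $V$ may never reach $\MT$, so it can fail to attain the value; the multiplicative $\epsilon$-construction is precisely what circumvents this.
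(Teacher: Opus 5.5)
Your proof is correct, but it takes a different route from the paper's.

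\textbf{The paper's route.} The paper argues scheduler by scheduler. It truncates a memoryful $\sched$ after $k$ steps and views the truncation as a scheduler on a finite-state fragment $\MDP_k^\bot$. There it replaces the truncation by a memoryless scheduler with the same reward vector, using the finite-state multiobjective result \cite[Proposition 6]{forejt2011quantitative}. Finally it observes that these rewards form a directed family with supremum $\ExpRew{\MDP}{\OpRew}{\opState}{\sched}$.

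\textbf{Your route.} You never approximate an individual scheduler. You use that $B \in \HoarePowerDom$ is the intersection of its supporting halfspaces. Steps (i)--(iv) in the proof of \Cref{theo:mop-as-weighted-wp} indeed use only the closure properties, so they apply to $B$, and the order of results creates no circularity. Together with $(\ast)$, this reduces the claim, for each $\weight$, to the scalar fact that memoryless deterministic schedulers are $\epsilon$-optimal for the reachability reward $\weight \cdot \OpRew$ in a countable MDP, which is Ornstein's theorem.

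\textbf{What each buys.} The paper's route needs only finite-state theory and gives an explicit directed approximation of every memoryful point. However, its sketch must handle technicalities that you avoid:
\begin{itemize}
\item Forejt et al.\ supply behaviourally randomized memoryless schedulers. These must be turned into finite mixtures of deterministic memoryless ones.
\item A memoryless scheduler on $\MDP_k^\bot$ must be extended to one on $\MDP$ without losing reward.
\end{itemize}
Your route instead concentrates the one genuinely infinite-state ingredient in a single classical single-objective theorem. It also makes transparent why the statement holds only up to closure: Ornstein gives $\epsilon$-optimality, not optimality.

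\textbf{Minor remarks.}
\begin{itemize}
\item Ornstein's theorem is best known for reachability probabilities. It is therefore cleanest to apply your truncation $\min(\weight \cdot \OpRew, N)$ in every case, not only when $V(\opState) = \infty$. Normalise by $N$ and encode the reward as the probability of moving from the target state to a fresh goal state. Then let $N \to \infty$; monotone convergence gives $\sup_N V_N = V$.
\item Your fallback via \cite[Theorem 2.5]{batz2024programmatic} would not substitute for Ornstein. That result concerns optima that are attained, not $\epsilon$-optimal memoryless schedulers, and it lives at program level rather than for arbitrary countable MDPs.
\item The Bellman least-fixpoint characterisation of $V$ in Step 3 is not needed.
\item Convexity of $A$ in Step 1 is also unnecessary; only $B \in \HoarePowerDom$ is used.
\end{itemize}
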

\begin{proof}
    For the interesting direction $\subseteq$, we show that
    \[
        \ExpRew{\MDP}{\OpRew}{\opState}{\sched} \in \cl{\dwc{\set{ \ExpRew{\MDP}{\OpRew}{\opState}{\sched'} \mid \sched' \in \SchedMemless{\MDP}}}}
    \]
    for every memoryful scheduler $\sched \in \Sched{\MDP}$.
    To this end, we proceed as follows:
    \begin{enumerate}
        \item Define memoryful schedulers $\sched_k$ simulating the first $k$ steps of $\sched$, thus in the limit achieving the same reward as $\sched$.
        \item Each $\sched_k$ operates on a finite-state fragment of $\MDP$, denoted by $\MDP_k^\bot$.
        \item We obtain a \emph{memoryless} scheduler on $\MDP_k^\bot$ which achieves the same reward as $\sched_k$ by \cite[Proposition 6]{forejt2011quantitative}.
        \item As $k$ grows, the rewards achieved by $\sched_k$ tend to the reward achieved by $\sched$. 
    \end{enumerate}
    Thus, $\ExpRew{\MDP}{\OpRew}{\opState}{\sched}$ is the supremum of the directed set of the rewards achieved by $\sched_k$, proving that 
    it is included in the Scott-closure of $\dwc{\set{ \ExpRew{\MDP}{\OpRew}{\opState}{\sched'} \mid \sched' \in \SchedMemless{\MDP}}}$ as required.
\end{proof}

\subsection{A Bellman Operator for Computing the Pareto Front}
\label{sec:computing-pareto}
We now characterize the achievable rewards as the least fixed point of an adequately defined Bellman operator.
We first define the lattice, reusing the Hoare powerdomain $\MVdom$ from \cref{def:hoare-power}:
\begin{definition}[Multivalue Functions]
\label{def:multivalue-function}
    The complete lattice of \emph{multivalue functions} is defined as $(\MV,\MVleq)$, where
    \begin{enumerate}
        \item $\MV = \OpStates \to \MVdom$ is the \emph{set of multivalue functions}, and
        \item $\MVleq$ is the pointwise lifted set inclusion, i.e.\ for all $\mv,\mv' \in \MV$,
        \[
            \mv \MVleq \mv' \quad \text{iff} \quad \forall \opState \in \OpStates\colon \mv(\opState) \subseteq \mv'(\opState).
        \]
    \end{enumerate}
\end{definition}
This is the analogue of multiobjective expectations (see \Cref{def:mexp}).
The least element of $(\MV,\MVleq)$ is the constant zero function $\mylambda{\opState} \zeroHoare$, which we denote by abuse of notation as $\zeroMV$.
Moreover, suprema and infima are the pointwise liftings of suprema and infima in the Hoare powerdomain. %

To ease the adaptation to probabilistic programs, we generalize the notion of rewards:
We let $\OpRew\colon \OpStates \to \MVdom$, so $\OpRew\in \MV$.
We do not adapt the MDP definitions to this new reward type, but instead, when appropriate, use a simple cast function.

\begin{definition}
\label{def:cast-reward}
    Given $\OpRew\colon \MT \to \PosRealsInfVect$, we define $\OpRewCast\colon \MT \to \MVdom$ as
    $\OpRewCast(\opState) = \dwc{\set{\OpRew(\opState)}}.$
\end{definition}

Towards the adaptation of (standard) Bellman operators, we lift the addition using Minkowski sums. %
Replacing the maximum by a union and a convex combination is a direct consequence of the power of our schedulers.
\begin{definition}[Bellman Operator for Downward Closed Pareto Front]
\label{def:multi-bellman}
    Let $\OpRew\colon \MT \to \MVdom$ be a reward assignment to some target set $\MT$.
    We define the function $\BellmanPareto{\OpRew}{\MDP} \colon \MV \to \MV$ as
    \[
        \BellmanPareto{\OpRew}{\MDP}(\mv) = \mylambda{\opState} 
        \begin{dcases}
            \OpRew(\opState) & \text{ if } \opState \in \MT \\
            \bigsqcup_{\opAct \in \OpAct(\opState)} \Minkow_{\opState' \in \OpSucc(\opState,\opAct)} (\OpP(\opState, \opAct, \opState') \cdot \mv(\opState')) & \text{ else }
        \end{dcases}
    \]
\end{definition}
This Bellman operator is $\omega$-continuous.
We note that a similar construction appears in the context of finite  stochastic games in~\cite{chen2013stochastic}.
There are some minor differences:
our rewards already have the value function's type, so we do not explicitly take a downward closure; the Hoare powerdomain handles this implicitly while also ensuring Scott closure and convexity.

Next, we show that the least fixed point of this operator characterizes the achievable rewards.

\begin{restatable}[Achievable Rewards via Bellman Operators]{theorem}{bellmanispareto}
\label{theo:bellman-is-pareto}
    Let $\OpRew\colon \MT \to \MVdom$ be a reward assignment to some target set $\MT$.
    \begin{enumerate}
        \item $\BellmanPareto{\OpRewCast}{\MDP}^{k}(\zeroMV) = \mylambda{\opState} \AchI{\MDP}{\OpRew}{\opState}{k-1}$ for all $k > 0$ %
        \item $\lfp \BellmanPareto{\OpRewCast}{\MDP} = \mylambda{\mi} \cl{\Ach{\MDP}{\OpRew}{\mi}}$
    \end{enumerate}
\end{restatable}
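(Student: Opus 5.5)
For part (1), I would induct on $k \geq 1$. For $k = 1$, $\BellmanPareto{\OpRewCast}{\MDP}(\zeroMV)$ equals $\OpRewCast(\opState) = \dwc{\set{\OpRew(\opState)}}$ on targets and $\zeroHoare$ elsewhere. Likewise, $\AchI{\MDP}{\OpRew}{\opState}{0}$ is the downward closure of the reward collected by paths of length zero. So the base case is bookkeeping about the indexing convention: a path $\opState$ of length zero that starts in $\MT$ already collects its reward.

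For the step, I assume $\BellmanPareto{\OpRewCast}{\MDP}^{k}(\zeroMV)(\opState') = \AchI{\MDP}{\OpRew}{\opState'}{k-1}$ for all $\opState'$ and fix a non-target state $\opState$. I then prove two inclusions.

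For $\supseteq$, take a mixed scheduler $\sched$ and split its first move. Under each pure component, the first action is some $\opAct$, and the residual scheduler after the prefix $\opState\opState'$ is a scheduler from $\opState'$. Its bounded reward lies in $\AchI{\MDP}{\OpRew}{\opState'}{k-1}$, so the reward of that component lies in $\Minkow_{\opState'} \OpP(\opState,\opAct,\opState') \cdot \AchI{\MDP}{\OpRew}{\opState'}{k-1}$. Mixing over the components gives a convex combination across actions, which lies in the $\bigsqcup$.

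For $\subseteq$, I build schedulers in the other direction. Take finitely many points, one from each Minkowski sum, each realized by residual mixed schedulers from the successors. I glue these into a single mixed scheduler: choose the action according to the convex weights, then, on reaching $\opState'$, continue with the residual scheduler. Only finitely many successors are involved (finite support), and the residual mixtures are finitely supported, so taking products of supports keeps the mixture finitely supported.

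The remaining gap is the Scott closure hidden in $\bigsqcup$ (\Cref{theo:sup-is-cl}). I must show that the finite-horizon achievable set is already convex and Scott closed, so that $\bigsqcup$ reduces to $\dwc{\conv{\cdot}}$. Convexity comes from mixing schedulers. Scott closedness I would argue by compactness: at horizon $k$, only finitely many paths matter, because the action set is finite and supports are finite. The horizon-$k$ values of pure schedulers therefore range over finitely many vectors, the mixed values form a polytope, and its downward closure is closed. This finiteness also settles part (1) independently of the induction.

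For part (2), Kleene's theorem (\Cref{theo:Kleene}) together with $\omega$-continuity gives $\lfp \BellmanPareto{\OpRewCast}{\MDP} = \bigsqcup_k \AchI{\MDP}{\OpRew}{\mi}{k}$, pointwise, which equals $\cl{\conv{\bigcup_k \AchI{\MDP}{\OpRew}{\mi}{k}}}$.

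To show $\subseteq \cl{\Ach{\MDP}{\OpRew}{\mi}}$, note that every horizon-$k$ value of a scheduler is below its full value, so each $\AchI{\MDP}{\OpRew}{\mi}{k} \subseteq \Ach{\MDP}{\OpRew}{\mi}$. The right-hand side is convex because mixed schedulers are closed under mixing, so after closure the inclusion holds.

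For $\supseteq$, every $\ExpRew{\MDP}{\OpRew}{\mi}{\sched}$ is by definition the supremum of the increasing chain $\ExpRewN{\MDP}{\OpRew}{\mi}{\sched}$, which is directed. Each element of the chain lies in some $\AchI{\MDP}{\OpRew}{\mi}{k}$, so the supremum lies in the Scott closure of the union. Hence $\Ach{\MDP}{\OpRew}{\mi}$ is contained in the least fixed point, and closing both sides finishes the argument.

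I expect the main obstacle to be the gluing step in part (1) together with the finite-horizon closedness claim. I would handle both through the finiteness of the horizon-$k$ unfolding, treating it as a finite MDP and appealing to the polytope structure of finite-horizon achievable sets.
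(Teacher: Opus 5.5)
Your route is the paper's: induction on $k$ via the one-step decomposition of the finite-horizon achievable sets (the paper's ``main observation''), then Kleene's theorem and a closure argument for part (2). Your splitting and gluing of mixed schedulers, and your two inclusions for part (2), supply details the paper omits. You are also right that the $\bigsqcup$ hides a Scott closure that must be absorbed.

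That absorption is where the proof breaks. The claim that the horizon-$k$ mixed values form a polytope whose downward closure is closed fails as soon as rewards take the value $\infty$, which the setting explicitly allows ($\OpRew\colon \MT \to \PosRealsInfVect$). Let a non-target state $s$ have an action $\alpha$ leading with probability $1$ to a target with reward $\vvechorizon{\infty}{0}$, and an action $\beta$ leading to a target with reward $\vvechorizon{0}{1}$. Putting weight $\lambda$ on $\alpha$ yields $\vvechorizon{\lambda\cdot\infty}{1-\lambda}$. This is $\vvechorizon{\infty}{1-\lambda}$ for $\lambda>0$ but $\vvechorizon{0}{1}$ for $\lambda=0$, since $0\cdot\infty=0$ makes scaling discontinuous at $0$. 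Hence $\AchI{\MDP}{\OpRew}{s}{1}$ contains every $\vvechorizon{1}{1-\frac{1}{m}}$ but not their supremum $\vvechorizon{1}{1}$, so it is not Scott closed. By contrast, $\BellmanPareto{\OpRewCast}{\MDP}^{2}(\zeroMV)(s)$, the join of $\dwc{\set{\vvechorizon{\infty}{0}}}$ and $\dwc{\set{\vvechorizon{0}{1}}}$ in $\HoarePowerDom$, contains $\frac{1}{m}\cdot\vvechorizon{m}{0}+\bigl(1-\frac{1}{m}\bigr)\cdot\vvechorizon{0}{1}$ for every $m \geq 1$, and therefore contains $\vvechorizon{1}{1}$.

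So part (1) as literally stated is false for $\infty$-valued rewards, and no argument can close this step. The paper's sketch has the same hole, since its recursive equation for $\AchI{\MDP}{\OpRew}{\opState}{k+1}$ equates an unclosed set with a $\bigsqcup$. Your proof survives in either of two repaired forms:
\begin{itemize}
\item Restrict to finite-valued rewards, where your finite-tree/polytope argument is sound.
\item State (1) as $\BellmanPareto{\OpRewCast}{\MDP}^{k}(\zeroMV) = \mylambda{\opState}\, \cl{\AchI{\MDP}{\OpRew}{\opState}{k-1}}$, and replace the closedness claim by Scott continuity of addition and of scaling by $\OpP(\opState,\opAct,\opState')>0$. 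This gives $\Minkow_{\opState'}\OpP(\opState,\opAct,\opState')\cdot\cl{A_{\opState'}} \subseteq \cl{\Minkow_{\opState'}\OpP(\opState,\opAct,\opState')\cdot A_{\opState'}}$, after which your splitting and gluing inclusions go through up to closure.
\end{itemize}
Part (2) is unaffected, because closing each stage before taking the join does not change the join.

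Finally, your base case is a change of definition rather than bookkeeping. The paper's horizon-$k$ reward sums only over paths with $i \geq 1$ transitions whose first state lies outside $\MT$. Read literally, a target state collects nothing and $\AchI{\MDP}{\OpRew}{\opState}{0} = \zeroHoare$ for every $\opState$. Counting the length-zero path, as you do, is necessary, and the paper's proof makes the same adjustment silently.
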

\begin{proof}
    Part (1) is shown by induction on $k$.
    The main observation used in the proof is that we can inductively compute the achievable points as 
    \[
        \AchI{\MDP}{\OpRew}{\opState}{k+1} = 
        \begin{dcases}
            \dwc{\OpRew(\opState)} & \text{ if } \opState \in \MT \\
            \bigsqcup_{\opAct \in \OpAct(\opState)} \Minkow_{\opState' \in \OpSucc(\opState,\opAct)} (\OpP(\opState, \opAct, \opState') \cdot \AchI{\MDP}{\OpRew}{\opState'}{k}) & \text{ else.}
        \end{dcases}
    \]
    For part (2), we then get: 
    \begin{align*}
        \lfp \BellmanPareto{\OpRewCast}{\MDP} 
        =&\ \bigsqcup_{k > 1} {\BellmanPareto{\OpRewCast}{\MDP}^{k}}(\zeroMV) \tag{\Cref{theo:Kleene}} \\
        =&\ \bigsqcup_{k > 1} \mylambda{\opState} \AchI{\MDP}{\OpRew}{\opState}{k-1} \tag{by (1)} \\
        =&\ \mylambda{\opState} \cl{ \bigcup_{k \in \Nats} \AchI{\MDP}{\OpRew}{\opState}{k-1}}  \tag{by Def.\ of $\bigsqcup$} \\
        =&\ \mylambda{\opState} \cl{\Ach{\MDP}{\OpRew}{\opState}} \tag{follows from definition} \\
    \end{align*}
\end{proof}

\section{Multiobjective Reasoning on Program- vs.\ MDP-Level}
\label{sec:mop-mdp}

In this paper, we discussed multiobjective reasoning from two perspectives: at the program-level with the \mopsymbol transformer, and at the operational level on MDPs with the Bellman operator $\BellmanParetoNo$.
In this section, we establish a tight correspondence between these results.

\subsection{MDP Semantics of pGCL}
\label{sec:actualMDPconstruction}
We define the standard small-step execution relation (cf.~\cite{batz2024programmatic}).
Program configurations are pairs $(\program,\sigma)$, where $\program \in \pGCL \cup \{\Term\}$ is a program (or the termination symbol) and $\sigma \in \States$ is a program state.
Formally, the countable set of program configurations is $\OpStates \eeq (\pGCL \cup \{\Term\}) \times \States$.
The execution relation is $\ExecSymbol \subseteq \OpStates \times \{\labtau,\labalpha,\labbeta\} \times [0,1] \times \OpStates$.
Each transition consists of an action label $\actlab \in \{\labtau,\labalpha,\labbeta\}$ and a probability $\pnum \in [0,1]$.
Its formal definition is given in \Cref{fig:smallstep} and follows the standard operational semantics of $\pGCL$.
The label $\labtau$ is used for all transitions except those induced by guarded commands $\GC{\guard_1}{\program_1}{\guard_2}{\program_2}$.
The labels $\labalpha$ and $\labbeta$ distinguish the branch taken by a guarded command.
We write $\ExecAbbr{\mc}{\actlab}{\pnum}{\mc'}$ for $(\mc,\actlab,\pnum,\mc') \in \ExecSymbol$.
We define the successor relation by $\mc \rightharpoonup \mc' \iff \exists\,\actlab,\pnum:\ExecAbbr{\mc}{\actlab}{\pnum}{\mc'}$.
A configuration $\mc'$ is \emph{reachable} from $\mc$ if it is in the reflexive-transitive closure of $\rightharpoonup$.
Based on $\ExecSymbol$, we now define the MDP semantics.

\begin{definition}
    The \emph{operational Markov decision process} $\ProgMDP$ of $\pGCL$ is %
    \begin{align*}
    \ProgMDP \eeq \left( \OpStates, \MI, \OpAct, \OpP \right)~\text{, where}
    \end{align*}
    \begin{enumerate}
        \item $\OpStates = \{ (\program', \pstate') \mid \exists \pstate \in \States, \program \in \pGCL \colon (\program',\pstate') \text{ is reachable from } (\program,\pstate) \}$,
        \item $\OpAct = \{\labtau,\labalpha,\labbeta\}$ is the set of \emph{action labels},
        \item $\OpP\colon \OpStates \times \OpAct \times \OpStates \to [0,1]$ is the \emph{transition probability function} given by 
        \[
        \OpP(\mc, \actlab,\mc')\eeq
        \begin{cases}
        \pnum & \text{ if } \ExecAbbrSmash{\mc}{\actlab}{\pnum}{\mc'} \\
        0 & \text{else}~,
        \end{cases}
        \]
        \item and $\MI = \{(\program, \pstate) \mid \pstate \in \States, \program \in \pGCL \}$ are the initial states of $\ProgMDP$.
    \end{enumerate}
\end{definition}%

\begin{figure}[t]
    \centering
\begin{gather*}
  \infer{
    \Exec{\SKIP}{\pstate}{\labtau}{1}{\Term}{\pstate}
  }{
  }
  \quad
  \infer{
    \Exec{\ASSIGN{x}{\aexpr}}{\pstate}{\labtau}{1}{\Term}{\pstate\statesubst{x}{\pstate(\aexpr)}}
  }{
  }
\quad
\infer{
	\Exec{\Term}{\pstate}{\labtau}{1}{\Term}{\pstate}
}{
}
  \\[1ex]
  \infer{
    \Exec{\COMPOSE{\program_1}{\program_2}}{\pstate}{\actlab}{\pnum}{\program_2}{\pstate'}
  }{
    \Exec{\program_1}{\pstate}{\actlab}{\pnum}{\Term}{\pstate'}
  }
  \quad
  \infer{
    \Exec{\COMPOSE{\program_1}{\program_2}}{\pstate}{\actlab}{\pnum}{\COMPOSE{\program_1'}{\program_2}}{\pstate'}
  }{
    \Exec{\program_1}{\pstate}{\actlab}{\pnum}{\program_1'}{\pstate'}
  }
  \\[1ex]
  \infer{
    \Exec{\GC{\guard_1}{\program_1}{\guard_2}{\program_2}}{\pstate}{\labalpha}{1}{\program_1}{\pstate}
  }{
    \pstate \models \guard_1
  }
  \quad
  \infer{
	\Exec{\GC{\guard_1}{\program_1}{\guard_2}{\program_2}}{\pstate}{\labbeta}{1}{\program_2}{\pstate}
}{
	\pstate \models \guard_2
}
  \\[1ex]
    \infer{
  	    \Exec{\PCHOICE{\program_1}{\pexpr}{\program_2}}{\pstate}{\labtau}{\pexpr(\pstate)}{\program_1}{\pstate}
  	    }{
          \program_1 \neq \program_2
  	    }
    \quad
    \infer{
  	    \Exec{\PCHOICE{\program_1}{\pexpr}{\program_2}}{\pstate}{\labtau}{1-\pexpr(\pstate)}{\program_2}{\pstate}
  	    }{
          \program_1 \neq \program_2
  	    }
    \quad
    \infer{
        \Exec{\PCHOICE{\program}{\pexpr}{\program}}{\pstate}{\labtau}{1}{\program}{\pstate}
        }{
        }
    \\[1ex]
  \infer{
    \Exec{\WHILEDO{\guard}{\program}}{\pstate}{\labtau}{1}{\Term}{\pstate}
  }{
    \pstate \models \neg\guard
  }
  \quad
  \infer{
  \Exec{\WHILEDO{\guard}{\program}}{\pstate}{\labtau}{1}{\COMPOSE{\program}{\WHILEDO{\guard}{\program}}}{\pstate}
  }{
    \pstate \models \guard
  }
\end{gather*}
\caption{Rules defining the small-step execution relation $\rightarrow$ \cite[Figure 18]{batz2024programmatic}.}
\label{fig:smallstep}
\Description{
    Inference rules for the small-step execution relations, defining which configuration is reachable from each program.
}
\end{figure}

For example, the program in \Cref{ex:mop-stanni-program} essentially induces the MDP $\MDP_1$ in \Cref{fig:ex:pareto}.

Next, we introduce reward functions corresponding to both single- and multiobjective postexpectations.
Upon reaching a terminal state $(\Term,\pstate)$, we collect reward $\post(\pstate)$ respectively $\mpost(\pstate)$.

\begin{definition}[Rewards from Expectations]
\label{def:rew-mexp}
    Given $\program \in \pGCL$, we define reward structures on the target set $\set{(\Term,\pstate) \mid \pstate \in \States}$ as follows:
    
    For $\post \in \Exp^n$, we define $\ProgR{\post} \colon \OpStates \to \PosRealsInfVect$ as $\ProgR{\post}(\Term,\pstate) = \post(\pstate)$.

    For $\mpost \in \mExp$, we define $\ProgR{\mpost}\colon \OpStates \to \HoarePowerDom$ as $\ProgR{\mpost}(\Term,\pstate) = \mpost(\pstate)$.
\end{definition}

This gives us a means to compare the notions we have defined on program-level to the established notions on MDPs.

\subsection{Equivalence of Achievability in pGCL Programs and Their Operational MDPs}

First, we note that a mixed determinization of a nondeterministic program corresponds to a memoryless, mixed scheduler resolving nondeterminism in the operational MDP of \pGCL, in that they produce the same weakest preexpectation and expected reward, respectively.
For the remainder of this section, let $\program \in \pGCL$ and $\post \in \Exp^n$.

\begin{restatable}[]{lemma}{determissched}
\label{theo:determ-is-sched}
    We have that 
    \[
    \set{ \wpVec{\program'}{\post} \mid \program' \determmixed \program}
    = \set{  \mylambda{\pstate}  \ExpRew{\ProgMDP}{\ProgR{\post}}{(\program,\pstate)}{\sched} \mid \sched \in \SchedMemless{\ProgMDP}}.
    \]
\end{restatable}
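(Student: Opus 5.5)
The plan is to prove the equality one pure determinization at a time and then lift it to mixtures by linearity. For each pure determinization $\program' \determ \program$ I will build a memoryless deterministic scheduler $\msched_{\program'}$ of $\ProgMDP$ such that
\[
\mylambda{\pstate}\ExpRew{\ProgMDP}{\ProgR{\post}}{(\program,\pstate)}{\msched_{\program'}} \eeq \wpVec{\program'}{\post}~,
\]
and conversely for each memoryless deterministic scheduler a matching pure determinization. Both sides of the statement are then closed under finitely supported mixtures, with the same weights. On the program side, \Cref{def:wp} gives $\wpVec{\PCHOICEFINITE{\program_i}{}{\determDist(\program_i)}{}}{\post} = \sum_i \determDist(\program_i)\cdot\wpVec{\program_i}{\post}$. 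On the MDP side, the path probability of a mixed scheduler is by definition the $\sched$-weighted sum of the path probabilities of its pure components. Since $\ExpRew{\ProgMDP}{\ProgR{\post}}{\mi}{\sched}$ is a supremum of increasing truncations, the weighted sum commutes with it by monotone convergence, also in $\PosRealsInf$. So the mixed case reduces to the pure case.

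For the pure case I will use the standard fact, used in \cite{batz2024programmatic}, that for a deterministic program $\program'$ the preexpectation $\wp{\program'}{\post_i}(\pstate)$ equals the expected reachability reward of $\post_i$ in the Markov chain rooted at $(\program',\pstate)$. I apply it componentwise, and because a single chain serves all $n$ components, the vector statement follows. What remains is to relate configurations of $\program'$ to configurations of $\program$. The relation $\determ$ only strengthens guards, so the reachable configurations $(\program'',\pstate)$ of $\program'$ correspond, via the map that erases refinements, to reachable configurations of $\program$. I will prove this by induction along the rules of \Cref{fig:smallstep}, with a nested induction over the determinization rules of \Cref{fig:refinementrules}. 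The scheduler is then read off at guarded configurations $(\COMPOSE{\GC{\guard_1}{\program_1}{\guard_2}{\program_2}}{\dots},\pstate)$: it chooses $\labalpha$ iff $\pstate\models\guard_1'$ for the refined guard $\guard_1'$ at that occurrence. The premise $\entails \guard_1'\xor\guard_2'$ makes this a function, and $\guard_i'\entails\guard_i$ makes the chosen action enabled. The induced chain $\ProgMDP^{\msched_{\program'}}$ from $(\program,\pstate)$ is then isomorphic to the operational chain of $\program'$ from $(\program',\pstate)$, and the isomorphism preserves terminal rewards.

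For the converse I fix a memoryless scheduler $\msched$ and define guards per syntactic occurrence of a guarded choice in $\program$. I will argue that every reachable configuration at such an occurrence has the form $(\COMPOSE{\GC{\dots}{}{\dots}{}}{K},\pstate)$, where the continuation $K$ is uniquely determined by the occurrence: nested loops always unfold to the same residual $\COMPOSE{\text{body-rest}}{\WHILEDO{\guard}{\dots}}$. I then set $\guard_1' = \{\pstate \mid \msched(\COMPOSE{\GC{\dots}{}{\dots}{}}{K},\pstate) = \labalpha\}$ and $\guard_2' = \neg\guard_1'$. Predicates are arbitrary subsets of $\States$, so no expressibility issue arises. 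After this, the forward argument applies verbatim.

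The main obstacle is the mismatch between \emph{occurrences} and \emph{configurations}. Two distinct occurrences can induce identical configurations, for instance via the rule $\PCHOICE{\program}{\pexpr}{\program}\to\program$, or via syntactically identical subterms followed by the same continuation. A determinization may refine such occurrences differently, whereas a memoryless scheduler on $\ProgMDP$ of $\program$ cannot distinguish them. For the direction from schedulers to determinizations this is harmless: I simply give coinciding occurrences the same refinement. For the direction from determinizations to schedulers I would first attempt to show that whenever two occurrences reach the same configuration, the refinements chosen there can be unified by a single deterministic choice without changing $\wpVec{\cdot}{\post}$ at any initial state. If that fails, I would weaken the goal and instead realize the offending determinization as a \emph{mixed} memoryless scheduler. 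Since $\PCHOICE{\program}{\pexpr}{\program}$ collapses to a single configuration only in that degenerate rule, the value splits as a $\pexpr$-weighted average, which is the case I expect to need the most care.
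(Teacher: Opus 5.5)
Your reduction from mixed to pure determinizations is fine: both sides mix with state-independent weights, and the weighted sum commutes with the supremum over truncations. Your converse construction also works: reading $\guard_1'$ off $\msched$ at each occurrence's configuration, and refining coinciding occurrences identically, gives a pure determinization with the same value function. That proves the inclusion $\supseteq$. The gap is the forward direction, and it cannot be closed, because the inclusion $\subseteq$ is false as stated.

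The scheduler you read off from $\program'$ is well defined only if each configuration of $\program$ corresponds to a single refined occurrence, and this fails. Take $G = \GC{\true}{\ASSIGN{z}{0}}{\true}{\ASSIGN{z}{1}}$, $\program_1 = \COMPOSE{\ASSIGN{y}{0}}{G}$, $\program = \GC{\true}{\program_1}{\true}{\program_1}$, $n=1$ and $\post = z$. Consider the pure determinization $\program' = \GC{y=1}{\COMPOSE{\ASSIGN{y}{0}}{G_0}}{y \neq 1}{\COMPOSE{\ASSIGN{y}{0}}{G_1}}$, where $G_0$ resp.\ $G_1$ refine $G$ to always take $\ASSIGN{z}{0}$ resp.\ $\ASSIGN{z}{1}$. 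Then $\wp{\program'}{z}(\pstate)$ is $0$ if $\pstate(y)=1$ and $1$ otherwise. In $\ProgMDP$, however, both $\labalpha$ and $\labbeta$ lead from $(\program,\pstate)$ to the same configuration $(\program_1,\pstate)$, and then to $(G,\pstate[y\mapsto 0])$. So a memoryless $\msched$ has value $1$ from $(\program,\pstate)$ iff $\msched(G,\pstate[y\mapsto 0]) = \labbeta$, and $0$ otherwise. Since mixed schedulers have state-independent weights, every function on the right-hand side agrees on two states that differ only in $y \in \{1,2\}$, whereas $\wp{\program'}{z}$ separates them.

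This also shows why neither fallback works. No single choice at the merged configuration $(G,\pstate[y\mapsto 0])$ preserves the value at both initial states. No finite mixture helps either, because the required resolution depends on $y$, which is overwritten before the merge. Your diagnosis that merging arises only from the rule for $\PCHOICE{\program}{\pexpr}{\program}$ is also off: here it comes from a guarded choice with identical branches. Even the degenerate probabilistic case is not repaired by mixing. The determinization $\PCHOICE{G_0}{\pexpr}{G_1} \determ \PCHOICE{G}{\pexpr}{G}$ has $\wp{\PCHOICE{G_0}{\pexpr}{G_1}}{z} = 1-\pexpr$, which takes infinitely many values when $\pexpr$ does. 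Any finite mixture of memoryless schedulers yields only finitely many values on this program.

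The paper's own proof takes exactly your route (a simulating memoryless scheduler, with the converse ``by the reverse construction'') and passes over the same point. So the statement has to be weakened rather than proved. The inclusion $\supseteq$ survives. For the closure-level \Cref{theo:Pareto-match}, together with \Cref{theo:memless-suff}, it then suffices to show $\AchExp{\program}{\post}{\pstate} \subseteq \mop{\program}{\dwcset{\post}}(\pstate)$, whose right-hand side is the closure of the MDP-achievable set by \Cref{theo:sound-mop,theo:bellman-is-pareto}. That inclusion follows at program level from three facts: monotonicity of $\mopsymbol$ under $\determ$, convexity, and a direct induction showing $\mop{\program'}{\dwcset{\post}} = \dwcset{\wpVec{\program'}{\post}}$ for deterministic $\program'$. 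Do not derive the last fact via \Cref{theo:mop-wp}, which itself depends on this lemma.
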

\begin{proof}
    For every $\program' \determmixed \program$, construct a corresponding scheduler $\sched \in \SchedMemless{\ProgMDP}$ on the operational MDP which simulates the behavior of the program.
    The property then follows by induction on the program structure of the underlying pure determinizations.
    The converse follows by the reverse construction.
\end{proof}

We continue by showing that the notions of achievable points are equivalent for a \pGCL program and its operational MDP.

\begin{restatable}[]{lemma}{achmatch}
\label[lemma]{theo:ach-match}
    Let $\pstate \in \States$ and $p \in \PosRealsInfVect$.
    Then $p$ is achievable from $\pstate$
    iff $p$ is achievable in $\ProgMDP$ from $(\program,\pstate)$, i.e.,
    $
        \AchExp{\program}{\post}{\pstate} = \AchMemless{\ProgMDP}{\ProgR{\post}}{(\program,\pstate)}.
    $
\end{restatable}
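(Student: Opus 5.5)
The claim follows almost directly from \Cref{theo:determ-is-sched}: both sides are downward closures of sets of vectors, and that lemma identifies the underlying sets of functions. I would unfold the two definitions first. By \Cref{def:achievable-exp},
\[
    \AchExp{\program}{\post}{\pstate} = \dwc{\set{\wpVec{\program'}{\post}(\pstate) \mid \program' \determmixed \program}},
\]
and by \Cref{def:achievable}(3),
\[
    \AchMemless{\ProgMDP}{\ProgR{\post}}{(\program,\pstate)} = \dwc{\set{\ExpRew{\ProgMDP}{\ProgR{\post}}{(\program,\pstate)}{\sched} \mid \sched \in \SchedMemless{\ProgMDP}}}.
\]
It therefore suffices to show that the two generating sets of vectors coincide; applying $\dwc{\cdot}$ to both then gives the claim. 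The pointwise statement ``$p$ is achievable iff $p$ is achievable in $\ProgMDP$'' is just membership in these two equal sets.

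Next I would evaluate the function-level equality of \Cref{theo:determ-is-sched} at the fixed state $\pstate$. For $\subseteq$: given $\program' \determmixed \program$, the lemma gives $\sched \in \SchedMemless{\ProgMDP}$ with $\wpVec{\program'}{\post} = \mylambda{\pstate'} \ExpRew{\ProgMDP}{\ProgR{\post}}{(\program,\pstate')}{\sched}$. Evaluating at $\pstate$ yields $\wpVec{\program'}{\post}(\pstate) = \ExpRew{\ProgMDP}{\ProgR{\post}}{(\program,\pstate)}{\sched}$. For $\supseteq$: a memoryless mixed scheduler $\sched$ gives a function $\mylambda{\pstate'}\ExpRew{\ProgMDP}{\ProgR{\post}}{(\program,\pstate')}{\sched}$. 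This function lies in the right-hand set of \Cref{theo:determ-is-sched}, so it equals $\wpVec{\program'}{\post}$ for some $\program' \determmixed \program$, and evaluating at $\pstate$ finishes this direction.

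The only subtle point is that the lemma equates sets of \emph{functions} of the initial state, while achievability concerns a single state. Equality of the function sets implies equality of their pointwise images at $\pstate$, because the map ``evaluate at $\pstate$'' is applied uniformly to both sides. No uniformity issue arises, since for a fixed $\pstate$ we only need one witness per vector. I expect this step, rather than anything deeper, to be where care is needed. All the real work — simulating guards by memoryless actions on configurations and inducting over the pure determinizations in the support of a mixture — is already contained in \Cref{theo:determ-is-sched}.
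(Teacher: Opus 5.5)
Your proposal is correct and follows the paper's proof, which calls the lemma an immediate consequence of \Cref{theo:determ-is-sched}. You simply make explicit the two routine steps the paper leaves implicit: evaluating that function-level set equality at the fixed state $\sigma$, and then taking the downward closure of both equal generating sets.
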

\begin{proof}
    This is an immediate consequence of \Cref{theo:determ-is-sched}.
\end{proof}

Together with \Cref{theo:memless-suff}, which showed that memoryless schedulers suffice to achieve values in the closure of achievable points,
this lets us conclude that our program-level notion of almost achievable points matches the MDP notion.

\begin{restatable}[]{corollary}{Paretomatch}
\label[corollary]{theo:Pareto-match}
    Let $\pstate \in \States$ and $p \in \PosRealsInfVect$.
    Then $p$ is almost achievable from $\pstate$ %
    iff $p$ is almost achievable on $\ProgMDP$ from $(\program,\pstate)$, i.e.,
    \[
    \cl{\AchExp{\program}{\post}{\pstate}}  \eeq \cl{\Ach{\ProgMDP}{\ProgR{\post}}{(\program,\pstate)}}.
    \]
\end{restatable}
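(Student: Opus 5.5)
The plan is to chain three earlier results, with the Scott closure applied on both sides. By \Cref{theo:ach-match}, the program-level achievable set coincides with the memoryless achievable set of the operational MDP:
\[
    \AchExp{\program}{\post}{\pstate} \eeq \AchMemless{\ProgMDP}{\ProgR{\post}}{(\program,\pstate)}.
\]
Applying the closure operator $\cl{\cdot}$ to both sides is harmless, since $\cl{\cdot}$ is a function of the set. This gives $\cl{\AchExp{\program}{\post}{\pstate}} = \cl{\AchMemless{\ProgMDP}{\ProgR{\post}}{(\program,\pstate)}}$.

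Next, I unfold the definitions from \Cref{def:achievable}. There $\AchMemless{\cdot}{\cdot}{\cdot}$ and $\Ach{\cdot}{\cdot}{\cdot}$ are precisely the downward closures of the expected-reward vectors over memoryless mixed and over arbitrary mixed schedulers, respectively. \Cref{theo:memless-suff}, instantiated at the MDP $\ProgMDP$, the reward $\ProgR{\post}$ and the state $(\program,\pstate)$, states exactly that the Scott closures of these two downward-closed sets agree. This yields $\cl{\AchMemless{\ProgMDP}{\ProgR{\post}}{(\program,\pstate)}} = \cl{\Ach{\ProgMDP}{\ProgR{\post}}{(\program,\pstate)}}$, and combining the two equalities gives the claim.

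The only real care point is checking that \Cref{theo:memless-suff} applies to $\ProgMDP$. That theorem is stated for countable-state, finite-action MDPs with reachability rewards on a sink target set. So I will verify three things. First, the configuration space is countable, since $\pGCL$ is countable (under the implicit assumption of a countable expression language) and $\States$ is countable. Second, the action set $\{\labtau,\labalpha,\labbeta\}$ is finite and every configuration has an enabled action: $\Term$ self-loops, and guarded choices always have an enabled guard because $\guard_1 \vee \guard_2 = \true$. Third, the targets $(\Term,\pstate)$ are sinks.

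A subtlety is that \Cref{theo:memless-suff} states the closure equality for an arbitrary state, whereas \Cref{def:achievable} phrases $\Ach{\cdot}{\cdot}{\cdot}$ for initial states. Since $(\program,\pstate) \in \MI$, this causes no issue. I also expect the $\PosRealsInfVect$-valued rewards with possible $\infty$ components to be covered by that theorem as stated, so no further argument is needed.
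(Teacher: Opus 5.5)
Your proposal is correct and matches the paper's proof, which simply states that the corollary is an immediate consequence of \Cref{theo:ach-match} and \Cref{theo:memless-suff}: take Scott closures of the first equality, then use the second to pass from memoryless mixed schedulers to arbitrary mixed schedulers. The applicability checks you add (countable configurations, finite enabled actions, sink targets, and the initial state lying in the initial-state set) are left implicit in the paper, and they hold as you describe.
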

\begin{proof}
    This is an immediate consequence of \Cref{theo:ach-match,theo:memless-suff}.
\end{proof}

\subsection{Relating mop and the Bellman Operator}

In \Cref{sec:mdp}, we introduced a Bellman-style operator for computing Pareto fronts in countably infinite-state MDPs.
In this section, we establish that the \mopsymbol\ transformer coincides with this Bellman operator when instantiated on the operational MDP semantics of pGCL for an appropriate choice of reward function.
This correspondence is analogous to classical soundness results for weakest preexpectation semantics, where the transformer is shown to agree with its operational model (see, for example, \cite[Theorem 6]{batz2024jp}).

We then show that the \mopsymbol\ transformer coincides with the least fixed point of the multiobjective Bellman operator introduced in \Cref{def:multi-bellman}.

\begin{restatable}[Operational Soundness of \mopsymbol]{theorem}{soundmop}
\label{theo:sound-mop}
    It holds that 
    \[
        \mop{\program}{\mpost} = \mylambda \pstate \left(\lfp \BellmanPareto{\ProgR{\mpost}}{\ProgMDP}\right)\left(\left(\program,\pstate\right)\right).
    \]
\end{restatable}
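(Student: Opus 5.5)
We prove the statement by structural induction on $\program$, for all $\mpost \in \mExp$ simultaneously. This mirrors the classical soundness proof of $\wpsymbol$ against the operational MDP (cf.\ \cite[Theorem 6]{batz2024jp}). Write $\mathcal{B}_{\mpost}$ for $\BellmanPareto{\ProgR{\mpost}}{\ProgMDP}$ and $L_{\mpost} = \lfp \mathcal{B}_{\mpost}$. The claim is $\mop{\program}{\mpost}(\pstate) = L_{\mpost}(\program,\pstate)$.

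First we record some structural facts about $L_{\mpost}$:
\begin{enumerate}
    \item $L_{\mpost}(\Term,\pstate) = \mpost(\pstate)$, since terminal configurations lie in the target set.
    \item For every non-target configuration, $L_{\mpost}(\mc) = \bigsqcup_{\actlab} \Minkow_{\mc'} \OpP(\mc,\actlab,\mc') \cdot L_{\mpost}(\mc')$, because $L_{\mpost}$ is a fixpoint.
    \item Configurations with a single $\labtau$-successor of probability one can be skipped.
\end{enumerate}
These facts settle the base cases. For $\SKIP$ and $\ASSIGN{x}{\ee}$, the only successor is $(\Term,\pstate)$ or $(\Term,\pstate\statesubst{x}{\pstate(\ee)})$, giving $\mpost$ resp.\ $\mpost\subst{x}{\ee}$.

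For the guarded choice, the $\labalpha/\labbeta$ actions are enabled exactly when $\guard_1$ resp.\ $\guard_2$ holds. The binary $\bigsqcup$ then matches $[\guard_1]\cdot\mop{\program_1}{\mpost} \mecup [\guard_2]\cdot\mop{\program_2}{\mpost}$ by the induction hypothesis. Here we read $[\guard]\cdot$ as selecting the enabled joinands; where only one guard holds, the $\zeroHoare$ contribution is absorbed.

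For probabilistic choice, the single $\labtau$-action yields the Minkowski sum $\ps\cdot L(\program_1,\pstate) \minkow (1-\ps)\cdot L(\program_2,\pstate)$, which matches the $\mopsymbol$ rule. The degenerate rule for $\PCHOICE{\program}{\pexpr}{\program}$ agrees by convexity: $\ps X \minkow (1-\ps)X = X$ for convex $X$.

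The key lemma is sequential composition. We need
\[
    L_{\mpost}(\COMPOSE{\program_1}{\program_2},\pstate) = L_{L'}(\program_1,\pstate),
    \qquad\text{where } L'(\pstate') = L_{\mpost}(\program_2,\pstate').
\]
Combined with the induction hypotheses, this yields $\mop{\program_1}{\mop{\program_2}{\mpost}}$. We prove it by relating the fixpoints on the reachable fragments. The configurations reachable from $(\COMPOSE{\program_1}{\program_2},\pstate)$ are either $(\COMPOSE{\program_1'}{\program_2},\pstate')$ or are reachable from some $(\program_2,\pstate')$. The map $(\program_1',\pstate') \mapsto (\COMPOSE{\program_1'}{\program_2},\pstate')$ is a transition-preserving bijection onto the first kind, with $(\Term,\pstate')$ sent to $(\program_2,\pstate')$.

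We show both inclusions of the fixpoint equality.
\begin{itemize}
    \item[($\sqsubseteq$)] Define a candidate $Y$ on the composed fragment: on $(\COMPOSE{\program_1'}{\program_2},\pstate')$ it takes the value $L_{L'}(\program_1',\pstate')$, and elsewhere it agrees with $L_{\mpost}$. Then $Y$ is a fixpoint of $\mathcal{B}_{\mpost}$, so Park induction (as in \Cref{theo:park-upper-mop}) gives $L_{\mpost} \sqsubseteq Y$.
    \item[($\sqsupseteq$)] The restriction of $L_{\mpost}$, transported back along the bijection, is a fixpoint of $\mathcal{B}_{L'}$, so Park induction gives $L_{L'} \sqsubseteq$ the transported $L_{\mpost}$.
\end{itemize}
This argument needs only monotonicity and the fixpoint property, not continuity.

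The loop case is the main obstacle. By the loop rules of \Cref{fig:smallstep}, $L_{\mpost}(\WHILEDO{\guard}{\program'},\pstate)$ equals $\mpost(\pstate)$ if $\pstate\models\neg\guard$. Otherwise it equals $L_{\mpost}(\COMPOSE{\program'}{\WHILEDO{\guard}{\program'}},\pstate)$, which by the composition lemma and the induction hypothesis for $\program'$ is $\mop{\program'}{\mylambda\pstate'.L_{\mpost}(\WHILEDO{\guard}{\program'},\pstate')}(\pstate)$. Hence $\mylambda\pstate.L_{\mpost}(\WHILEDO{\guard}{\program'},\pstate)$ is a fixpoint of $\phiMopNo$, and \Cref{theo:park-upper-mop} gives $\mop{\WHILEDO{\guard}{\program'}}{\mpost} \sqsubseteq$ it.

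For the converse, a fixpoint argument alone is insufficient, since it does not exclude a larger fixpoint. We instead use the Kleene iterates of $\mathcal{B}_{\mpost}$ from \Cref{theo:Kleene}. Consider the candidate
\begin{itemize}
    \item $Z(\WHILEDO{\guard}{\program'},\pstate') = \lfp\phiMopNo(\pstate')$;
    \item $Z(\COMPOSE{\program''}{\WHILEDO{\guard}{\program'}},\pstate') = L_{\lfp\phiMopNo}(\program'',\pstate')$ for $\program''$ reachable from $\program'$;
    \item $Z = L_{\mpost}$ elsewhere.
\end{itemize}
We show $Z$ is a prefixpoint, i.e.\ $\mathcal{B}_{\mpost}(Z) \sqsubseteq Z$, using the induction hypothesis for $\program'$ together with $\lfp\phiMopNo = \phiMopNo(\lfp\phiMopNo)$. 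Park induction then yields $L_{\mpost} \sqsubseteq Z$, closing the loop case. The delicate point is the self-reference through nested loops inside $\program'$. We handle it by applying the outer induction hypothesis only at the fixed postexpectation $\lfp\phiMopNo$, and we check that the configuration sets for the different kinds of configurations are disjoint.
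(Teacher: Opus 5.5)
Your overall plan works, and the first half of your loop case matches the paper: the Bellman value at the loop head is a fixpoint of $\phiMopNo$, then Park. However, the $(\sqsubseteq)$ half of your composition lemma is circular as written. Take the instance you actually use, $\program_1=\program'$ and $\program_2=\WHILEDO{\guard}{\program'}$. The loop-head configuration $(\program_2,\pstate'')$ lies in your ``elsewhere'' region, so $Y(\program_2,\pstate'')=L_{\mpost}(\program_2,\pstate'')$. Yet for $\pstate''\models\guard$ its unique successor $(\COMPOSE{\program'}{\program_2},\pstate'')$ has the composed shape, where $Y=L_{L'}(\program',\pstate'')$. The fixpoint equation of $\mathcal{B}_{\mpost}$ at the loop head therefore reads $L_{\mpost}(\COMPOSE{\program'}{\program_2},\pstate'')=L_{L'}(\program',\pstate'')$, which is the lemma itself. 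The same back-edge occurs whenever $\program_2$ is a loop, since its unfoldings have the shape $\COMPOSE{\cdot}{\program_2}$.

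The repair is to prove $(\sqsupseteq)$ first. Your transport argument only uses that $L_{\mpost}$ is a fixpoint, so it is not circular, and it gives $Y\sqsubseteq L_{\mpost}$. Then show only that $Y$ is a \emph{pre}fixpoint. On composed configurations, $\mathcal{B}_{\mpost}(Y)=Y$ by the transport computation: $(\program_2,\pstate'')$ is never itself of the shape $\COMPOSE{X}{\program_2}$, so it still carries $L_{\mpost}(\program_2,\pstate'')=L'(\pstate'')$. Elsewhere, $\mathcal{B}_{\mpost}(Y)\sqsubseteq\mathcal{B}_{\mpost}(L_{\mpost})=L_{\mpost}=Y$ by monotonicity. (Alternatively, bound the Kleene iterates of $\mathcal{B}_{\mpost}$ on composed configurations by induction.)

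Your loop candidate $Z$ needs the same care if you apply Park on the whole MDP. Configurations outside your three classes, e.g.\ a guarded choice with a branch $\WHILEDO{\guard}{\program'}$, have successors where $Z\neq L_{\mpost}$. Either work on the successor-closed fragment reachable from $(\WHILEDO{\guard}{\program'},\pstate)$, whose only other configurations are terminal, or first derive $Z\sqsubseteq L_{\mpost}$ from the already-established half.

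The paper never meets this issue. It proves $\mylambda\pstate.L_{\mpost}(\program,\pstate)\sqsubseteq\mop{\program}{\mpost}$ by one global Park argument: the valuation $(\program,\pstate)\mapsto\mop{\program}{\mpost}(\pstate)$, $(\Term,\pstate)\mapsto\mpost(\pstate)$ is a fixpoint of the Bellman operator. This only needs every $\mopsymbol$ rule to be compatible with one small step (for loops, $\lfp\phiMopNo=\phiMopNo(\lfp\phiMopNo)$). The structural induction, with Park induction on $\phiMopNo$ for loops, is used only for $\mop{\program}{\mpost}\sqsubseteq\mylambda\pstate.L_{\mpost}(\program,\pstate)$, and that direction needs only the transport half of compositionality. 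Proving equality within a single induction, as you do, forces you to establish both halves of the composition lemma and to construct the separate candidate $Z$.
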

\begin{proof}
    We establish the equality by proving both inclusions, analogously to the proof of \cite[Theorem 6]{batz2024jp}. %
    For $\sqsupseteq$, we essentially show that $\mop{\program}{\mpost}(\pstate)$ is a fixed point of the Bellman operator, which by Park induction \Cref{theo:park-upper-mop} in particular means that it is greater than the least fixed point.
    The converse inclusion is proven via induction on the program structure, using a compositionality lemma for sequential composition. 
    The loop case is handled by another application of Park induction.
\end{proof}

This finally implies the following lemma:

\begin{corollary}
\label{theo:mop-is-pareto}
    It holds that
    $
        \mop{\program}{\dwc{\set{\post}}}(\pstate) \eeq \cl{\AchExp{\program}{\post}{\pstate}}. %
    $
\end{corollary}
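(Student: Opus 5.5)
The plan is to chain three results already stated: \Cref{theo:sound-mop}, \Cref{theo:bellman-is-pareto}, and \Cref{theo:Pareto-match}. First, instantiate \Cref{theo:sound-mop} with the multiobjective postexpectation $\mpost = \dwc{\set{\post}}$. This gives
\[
    \mop{\program}{\dwc{\set{\post}}}(\pstate) \eeq \left(\lfp \BellmanPareto{\ProgR{\dwc{\set{\post}}}}{\ProgMDP}\right)\bigl((\program,\pstate)\bigr).
\]

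Next, observe that the reward structure $\ProgR{\dwc{\set{\post}}}$ is exactly the cast $\ProgRCast{\post}$ of the vector-valued reward $\ProgR{\post}$ in the sense of \Cref{def:cast-reward}. On every target configuration $(\Term,\pstate')$, \Cref{def:rew-mexp} yields $\ProgR{\dwc{\set{\post}}}(\Term,\pstate') = \dwc{\set{\post(\pstate')}} = \dwc{\set{\ProgR{\post}(\Term,\pstate')}}$. Both reward functions live on the same target set, so the two Bellman operators coincide. Applying part (2) of \Cref{theo:bellman-is-pareto} to the MDP $\ProgMDP$ with reward $\ProgR{\post}$ then rewrites the right-hand side as $\cl{\Ach{\ProgMDP}{\ProgR{\post}}{(\program,\pstate)}}$. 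This step uses that $(\program,\pstate)$ is an initial state of $\ProgMDP$, which holds by definition of $\MI$.

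Finally, \Cref{theo:Pareto-match} identifies $\cl{\Ach{\ProgMDP}{\ProgR{\post}}{(\program,\pstate)}}$ with $\cl{\AchExp{\program}{\post}{\pstate}}$, which closes the chain. That corollary in turn rests on \Cref{theo:ach-match}, via \Cref{theo:determ-is-sched}, and on \Cref{theo:memless-suff} to pass from memoryless to general mixed schedulers under closure.

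The only point I expect to need care is the bookkeeping in the middle step. \Cref{theo:sound-mop} is phrased with a Hoare-powerdomain-valued reward, while \Cref{theo:bellman-is-pareto} expects a cast of a vector reward, so I must check that the two Bellman operators are literally equal and not merely similar. Off the target set, non-terminal configurations receive no reward and the operator only uses the transition structure. On the target set the values agree as computed above. So the equality should be immediate.

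A second minor check concerns the self-loop $\Exec{\Term}{\pstate}{\labtau}{1}{\Term}{\pstate}$. Terminal configurations are targets, so the Bellman operator returns the reward there rather than following this self-loop, and the loop causes no issue. No new analytic argument should be required; all heavy lifting sits in the cited results.
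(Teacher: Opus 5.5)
Your proposal is correct and is exactly the paper's proof: the same three-step chain through \Cref{theo:sound-mop}, \Cref{theo:bellman-is-pareto}, and \Cref{theo:Pareto-match}, in that order. Your check that the reward induced by the downward-closed singleton postexpectation equals the cast of the vector-valued reward is correct, and it makes explicit a step the paper's proof leaves implicit.
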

\begin{proof}
    We have:
    \begin{align*}
        & \mop{\program}{\dwc{\set{\post}}}(\pstate) \\
        =\ & \left(\lfp \BellmanPareto{\ProgR{\dwc{\set{\post}}}}{\ProgMDP}\right)((\program,\pstate)) \tag{\Cref{theo:sound-mop}} \\
        =\ & \cl{\Ach{\ProgMDP}{\ProgR{\post}}{(\program,\pstate)}} \tag{\Cref{theo:bellman-is-pareto}} \\
        =\ & \cl{\AchExp{\program}{\post}{\pstate}}  \tag{\Cref{theo:Pareto-match}} \\
    \end{align*}
\end{proof}

This proves what we have claimed throughout the paper, giving us a purely program-level understanding of the almost achievable points that avoids explicitly appealing to the underlying operational model.
As a direct consequence, we can express the entire set $\mop{\program}{\dwc{\set{\post}}}(\pstate)$ in terms of determinizations:

\begin{lemma}
\label{theo:mop-wp}    
    It holds that
    $
        \mop{\program}{\dwc{\set{\post}}}(\pstate) \eeq \cl{\set{ \wpVec{\program'}{\post}(\pstate) \mid \program' \determmixed \program }}.
    $
\end{lemma}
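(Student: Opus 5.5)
This is essentially a one-line consequence of \Cref{theo:mop-is-pareto} combined with \Cref{def:achievable-exp}. The plan is to rewrite the left-hand side using \Cref{theo:mop-is-pareto}, which gives
\[
    \mop{\program}{\dwc{\set{\post}}}(\pstate) \eeq \cl{\AchExp{\program}{\post}{\pstate}}
    \eeq \cl{\dwc{\set{ \wpVec{\program'}{\post}(\pstate) \mid \program' \determmixed \program }}},
\]
where the second equality unfolds \Cref{def:achievable-exp}. What remains is to remove the inner downward closure, that is, to show $\cl{\dwc{S}} = \cl{S}$ for the set $S = \set{ \wpVec{\program'}{\post}(\pstate) \mid \program' \determmixed \program }$.

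That step follows from the definition of Scott closure adopted from \cite{abramsky1995domain}: $\cl{X} = \dwc{\set{\sup D \mid D \subseteq X,~D \text{ directed}}}$. For $\supseteq$, note that $S \subseteq \dwc{S}$ and that $\cl{\cdot}$ is monotone. For $\subseteq$, I would check that $\cl{S}$ is downward closed by definition and contains $S$, since singletons are directed. Hence $\dwc{S} \subseteq \cl{S}$, and idempotence and monotonicity of the closure operator yield $\cl{\dwc{S}} \subseteq \cl{\cl{S}} = \cl{S}$.

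The only real obstacle is whether the displayed formula $\cl{\cdot}$ is genuinely an idempotent closure operator, since a single round of directed suprema need not be idempotent in general. I would sidestep this by reading $\cl{\cdot}$ as the least Scott-closed superset, which is how the paper uses it, for example in \Cref{theo:sup-is-cl} and in the Hoare powerdomain. Under that reading, $\cl{\dwc{S}}$ and $\cl{S}$ are both the least Scott-closed set containing $S$, because every Scott-closed set is downward closed and therefore contains $S$ iff it contains $\dwc{S}$. The two sides then coincide immediately, and the statement follows.
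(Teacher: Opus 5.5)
Your proof is correct and follows the paper's own argument: rewrite via \Cref{theo:mop-is-pareto}, unfold \Cref{def:achievable-exp}, and drop the inner downward closure, a step the paper takes without comment. Reading $\cl{\cdot}$ as the least Scott-closed superset is the right choice, and the last step genuinely needs it: for a convex antichain such as $\{(t,1-t) \mid t \in [0,1)\}$ only singletons are directed, so the literal one-round formula misses $(1,0)$. On downward closed sets like $\AchExp{\program}{\post}{\pstate}$ the two readings coincide, because $\PosRealsInfVect$ is a continuous lattice, so your appeal to \Cref{theo:mop-is-pareto} is unaffected.
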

\begin{proof}
    We have:
    \begin{align*}
        & \mop{\program}{\dwc{\set{\post}}}(\pstate) \\
        =\ & \cl{\AchExp{\program}{\post}{\pstate}} \tag{\Cref{theo:mop-is-pareto}} \\
        =\ & \cl{\dwc{\set{\wpVec{\program'}{\post}(\pstate) \mid \program' \determmixed \program}}} \tag{\Cref{def:achievable-exp}} \\
        =\ & \cl{\set{ \wpVec{\program'}{\post}(\pstate) \mid \program' \determmixed \program }}
    \end{align*}
\end{proof}

Further, this means that we can conclude the following properties about deterministic programs and multiobjective expectations where $n=1$ (i.e., single expectations), showing that \mopsymbol is a faithful generalization of \wpsymbol.

\begin{lemma}
\label{theo:mop-wp-determ}
    If $\program$ is deterministic,
    $
        \mop{\program}{\dwcset{\post}} = \dwcset{ \wpVec{\program}{\post} }.
    $ 
\end{lemma}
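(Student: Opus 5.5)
The plan is to reduce the claim to \Cref{theo:mop-wp}, which gives
\[
    \mop{\program}{\dwcset{\post}}(\pstate) \eeq \cl{\set{\wpVec{\program'}{\post}(\pstate) \mid \program' \determmixed \program}}
\]
for every $\pstate$. It therefore suffices to show that, for deterministic $\program$, every mixed determinization $\program'$ satisfies $\wpVec{\program'}{\post} = \wpVec{\program}{\post}$. Once this holds, the set inside the closure is the singleton $\set{\wpVec{\program}{\post}(\pstate)}$, and its Scott closure is $\dwcset{\wpVec{\program}{\post}(\pstate)}$. The reason is that every directed subset of a singleton has that point as its supremum, so $\cl{\cdot}$ reduces to $\dwc{\cdot}$, exactly as observed in the example following \Cref{def:mexp}.

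The key step is to characterize the pure determinizations of a deterministic program. I would show by induction on the derivation of $\program'' \determ \program$ (\Cref{fig:refinementrules}) that $\program''$ and $\program$ agree syntactically up to the guards of guarded choices. Moreover, at each guarded choice $\GC{\guard_1}{\program_1}{\guard_2}{\program_2}$ of $\program$, the new guards satisfy $\guard_i' \entails \guard_i$ and $\entails \guard_1' \xor \guard_2'$.

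Since $\program$ is deterministic, $\guard_1 \wedge \guard_2$ is unsatisfiable, and $\guard_1 \vee \guard_2$ is valid by the syntax of guarded choice. So $\guard_1, \guard_2$ partition $\States$. The refined guards must also partition $\States$ while being contained in the originals, which forces $\guard_i' = \guard_i$ semantically. Hence the guards $[\guard_i']$ and $[\guard_i]$ coincide as predicates, and a routine induction on program structure using the rules of \Cref{tab:wp} gives $\wp{\program''}{f} = \wp{\program}{f}$ for every $f \in \Exp$.

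A subtlety is that $\determ$ is defined as the smallest \emph{partial order} closed under the rules, so transitivity may compose derivations. This is harmless: each step preserves the property ``guards semantically equal to the original'', since the intermediate programs are again deterministic with the same guards.

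For a mixed determinization $\program' = \PCHOICEFINITE{\program_i}{}{\determDist(\program_i)}{}$, the \wpsymbol rule for probabilistic choice, applied along the finite nesting, yields
\[
    \wp{\program'}{\post_j} \eeq \sum_i \determDist(\program_i) \cdot \wp{\program_i}{\post_j} \eeq \wp{\program}{\post_j}
\]
componentwise, because the weights sum to $1$. This holds even when some values are infinite, given the paper's $0\cdot\infty = 0$ convention.

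I expect the main obstacle to be the guard-equality step, specifically making precise that $\determ$, including its reflexive and transitive closure, cannot change the behaviour of an already deterministic program. The rest is bookkeeping.

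An alternative that avoids \Cref{theo:mop-wp} is a direct structural induction proving $\mop{\program}{\dwcset{\post}} = \dwcset{\wpVec{\program}{\post}}$. Its loop case would need that Kleene iterates remain principal downsets and that suprema of chains of principal downsets are principal, with $\cl{\cdot}$ giving the supremum point. Because that route is heavier, I would use the first one.
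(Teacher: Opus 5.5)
Your proposal is correct and is essentially the paper's argument, which likewise reduces to \Cref{theo:mop-wp} and notes that a deterministic program has only itself as a determinization. Your guard-partition argument, the linearity step for mixed determinizations, and the observation $\cl{\set{x}} = \dwc{\set{x}}$ supply the details the paper leaves implicit; and since predicates here are semantic objects of type $\States \to \{\true,\false\}$, the guard-partition argument in fact gives $\DetermSimple{\program} = \set{\program}$ literally, not just up to semantic equality.
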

\begin{proof}
    Follows immediately from \Cref{theo:mop-wp} since a deterministic program $\program$ only has one determinization, which is $\program$.
\end{proof}

\begin{lemma}
\label{theo:mop-wp-single}
    If $\post$ is single-objective,
    $
        \mop{\program}{\dwcset{\post}} = \dwcset{ \wp{\program}{\post} }.
    $
\end{lemma}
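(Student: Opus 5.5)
The plan is to reduce the statement to \Cref{theo:mop-wp} together with \Cref{theo:wp-sup-determ}, or alternatively to \Cref{theo:mop-as-weighted-wp}. The quickest route uses the latter. For $n=1$ the only weight vector is $\weight = 1$, so $\Weights = \set{1}$, and \Cref{theo:mop-as-weighted-wp} gives
\[
    \mop{\program}{\dwcset{\post}}(\pstate) \eeq \set{x \in \PosRealsInf \mid x \leq \wp{\program}{\post}(\pstate)} \eeq \dwcset{\wp{\program}{\post}(\pstate)}~.
\]
Evaluating this pointwise in $\pstate$ is exactly the claim. The only thing to check is that the infimum in $\HoarePowerDom$ over a single halfspace is that halfspace itself. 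This holds because infima are intersections, and the halfspace lies in $\HoarePowerDom$ by the second observation in the proof of \Cref{theo:wp-as-weighted-mop}.

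As a sanity check, and as a backup in case the reader prefers the determinization view, I would also argue directly via \Cref{theo:mop-wp}. By that lemma, $\mop{\program}{\dwcset{\post}}(\pstate) = \cl{\set{\wp{\program'}{\post}(\pstate) \mid \program' \determmixed \program}}$. In one dimension, the Scott closure of a set $S \subseteq \PosRealsInf$ containing $0$ is $\dwcset{\sup S}$. The set $S$ is downward-directed enough for this: any subset of a chain is directed, so $\sup S \in \cl{S}$, and downward closure yields the rest. It then remains to show $\sup S = \wp{\program}{\post}(\pstate)$. For $\geq$, use \Cref{theo:wp-sup-determ}, since pure determinizations are mixed ones. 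For $\leq$, note that $\wp{\program'}{\post}$ of a mixed determinization is a convex combination of $\wp{\cdot}{\post}$ of pure ones, by the probabilistic choice rule of \Cref{tab:wp}. Each such pure value is bounded by $\wp{\program}{\post}$, again by \Cref{theo:wp-sup-determ}.

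The main (minor) obstacle is the treatment of $\infty$. If $\wp{\program}{\post}(\pstate) = \infty$, the supremum may not be attained, but Scott closure still adds $\infty$, so $\dwcset{\infty} = \PosRealsInf$ is obtained. For the first route, one has to double-check that the convention $0 \cdot \infty = 0$ causes no trouble, which it does not since the weight is $1$. I would present the first route as the proof and mention the second only as a remark.
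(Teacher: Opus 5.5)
Your proof is correct, and your primary route differs from the paper's. The paper's proof is essentially your backup argument: it combines \Cref{theo:mop-wp} with \Cref{theo:wp-sup-determ}. Note that \Cref{theo:wp-sup-determ} is stated for pure determinizations while the paper applies it to mixed ones. Your observation that the value of a mixed determinization is a convex combination of pure values, each bounded by $\wp{\program}{\post}$, is exactly the step the paper leaves implicit. One small nit in that backup: the value set need not contain $0$. What you actually use is that it is nonempty, which holds because every program has at least one pure determinization. Nonemptiness is also what makes it a directed subset of the chain $\PosRealsInf$.

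Your primary route specializes \Cref{theo:mop-as-weighted-wp} to $\Weights = \set{1}$. Its advantage is that it never touches the operational side of the paper. \Cref{theo:mop-wp} rests on \Cref{theo:sound-mop}, \Cref{theo:bellman-is-pareto} and \Cref{theo:memless-suff}, the last of which relies on the finite-fragment argument imported from Forejt et al. Your argument needs only the scalarization identity ($\dagger$), which is proved by structural induction in \Cref{theo:wp-as-weighted-mop}. In dimension one the separation part of \Cref{theo:mop-as-weighted-wp} is trivial: any $x \notin \dwcset{a}$ satisfies $x > a$. So you could cite \Cref{theo:wp-as-weighted-mop} with $\weight = 1$ directly, using that every element of $\HoarePowerDom$ for $n=1$ has the form $\dwcset{a}$ with $a$ its maximum. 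The paper's route instead shows that the endpoint $\wp{\program}{\post}(\pstate)$ is the supremum of the values of (mixed) determinizations. That is the "faithful generalization" reading the lemma is meant to convey, but it comes at the cost of the full MDP correspondence.
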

\begin{proof}
    Since $\post$ is single objective, we have by \Cref{theo:wp-sup-determ} that 
    \[
        \sup \set{\wp{\program'}{\post} \mid \program' \in \pGCL, \program' \determmixed \program} \quad = \quad \wp{\program}{\post},
    \]
    which together with \Cref{theo:mop-wp} implies the property that is to be shown.
\end{proof}

\Cref{ex:simple-mop-assignment} illustrates the application of \mopsymbol to a deterministic program, while \Cref{ex:non-scott-closed} considers a program with a single objective.

\section{Related Work}
\label{sec:related}

There are various approaches to upper bounding expected costs in probabilistic programs beyond predicate-transformer semantics, including type-based techniques and abstract interpretation.
Type-based approaches have been used to derive bounds on expected costs as well as higher-order quantities such as variances and higher moments \cite{ankush2023probabilistic,wang2021central,wang2020raising}.
More generally, probabilistic abstract-interpretation frameworks can reason about quantitative properties of distributions using abstractions such as intervals and moments \cite{ngo2018bounded,cousot2012probabilistic}. 
These approaches provide powerful techniques for bounding quantitative properties, however, they are formulated around a single cost objective rather than simultaneous reasoning about multiple expectations.

\paragraph{Predicate-transformer semantics}
Existing predicate-transformer calculi for probabilistic programs primarily consider a single quantitative objective, such as expected values or expected runtimes \cite{morgan1996probabilistic,kaminski2018weakest}.
While these approaches provide compositional reasoning principles, they do not support simultaneous reasoning about multiple quantitative objectives.
Closest to our setting is the framework of \cite{batz2022weighted}, which instantiates weakest pre style calculi over general semirings.
However, their framework admits only a single branching construct and thus does not support programs combining nondeterministic and probabilistic choice.

\paragraph{Multi-objective model checking}
Early work on multiobjective MDPs introduced dynamic programming formulations for discounted vector-valued objectives~\cite{white1982multi,henig1983vector}. 
Subsequent work established algorithms for a broader range of objectives, including discounted rewards, mean-payoff objectives, and $\omega$-regular specifications~\cite{chatterjee2006mdp,chatterjee2007,etessami2008multi,forejt2011quantitative}. 
Weighted-sum optimization (WSO), originally introduced for convex multiobjective optimization~\cite{solanki1993,rennenDH2011}, was later adapted to probabilistic model checking by Forejt et al.~\cite{forejt2012pareto}. 
Their approach approximates Pareto frontiers by repeatedly solving scalarized optimization problems using value iteration~\cite{puterman2014markov}. 
These MDP-based approaches typically consider finite-state models. Notably, the work of \citeauthor{etessami2020qualitative} considers infinite-state models and multiple objectives, but is restricted to qualitative objectives.
Beyond MDPs, multiobjective verification has been extended to stochastic games by lifting the Bellman operator from scalar values to downward closed convex sets of achievable value vectors~\cite{chen2013stochastic,chen2013synthesis,basset2015strategy,ashok2020approximating}.

Thus, most existing approaches either restrict attention to finite-state models or consider only a single quantitative objective.
A notable exception is the work of \citeauthor{watanabe2026posterior}, who study the verification of safety properties of posterior distributions.
Given a partition of the state space $(S_1,\dots,S_n)$, their goal is to establish that the probabilities of terminating in states $S_i$ satisfy a convex safety specification, i.e., belong under any resolution of nondeterminism to a designated non-empty downward-closed convex set.
In our terms, this corresponds to considering indicator expectations of the form $\post = \vvvechorizon{\iverson{S_1}}{\dots}{\iverson{S_n}}$ where $\iverson{S_i}(\sigma) = 1$ iff $\sigma \in S_i$.

Further, \citeauthor{watanabe2026posterior} develop a CEGAR-based algorithm that refines a convex over-approximation of achievable posterior distributions by repeatedly optimizing weighted linear combinations of objective probabilities.
This weighted-sum scalarization is closely related to the approach discussed in \Cref{sec:mop-wp}.
In particular, for the special case of indicator expectations, \citeauthor{watanabe2026posterior} establish the correspondence captured by \Cref{theo:wp-as-weighted-mop,theo:mop-as-weighted-wp}.

Consequently, our work can be viewed in part as a generalization of \cite{watanabe2026posterior}, extending its treatment of indicator expectations to general multiobjective expectations.
Moreover, our approach supports the refutation of results --- i.e., proving that a given point is \emph{not} achievable --- which \citeauthor{watanabe2026posterior} themselves identify as a limitation of their approach.

\section{Conclusion}
\label{sec:conclusion}

In conclusion, we have developed a semantic foundation for multiobjective reasoning about probabilistic programs through the \mopsymbol transformer, which extends weakest preexpectations to the multiobjective setting and supports invariant-based reasoning for loops.
We further established an exact correspondence with a generalized Bellman operator on the operational semantics.
We took first steps towards programmatic strategy synthesis by identifying conditions under which Pareto optimal determinizations exist and showing that $\epsilon$-optimal determinizations can always be constructed.

\paragraph{Limitations}
While the \mopsymbol transformer characterizes all \emph{almost}-achievable trade-offs, our proposed synthesis algorithm constructs witnesses only under sufficient conditions that are not straightforward to verify.
Characterizing efficiently checkable necessary and sufficient conditions under which Pareto optimal points are (not just almost, but precisely) achievable remains an open problem.

As in classical weakest preexpectation calculi, we work with a simple language that isolates the essential semantic concepts.
Extending the framework to richer language features, such as procedures, recursion, heap manipulation, and continuous distributions, is left for future work.

Finally, our framework encodes objectives as expectations.
Many quantitative verification problems involve objectives such as discounted rewards or long-run average rewards, which are currently beyond the scope of our approach.

\paragraph{Future work}
Several directions remain for future work.
On the theoretical side, we plan to extend our framework to objectives that combine maximization and minimization, as well as to multi-player settings.
On the practical side, we aim to integrate our techniques into the deductive verifier \emph{Caesar} \cite{caesar}, enabling automated multiobjective verification.

\bibliographystyle{ACM-Reference-Format}
\bibliography{literature}

\end{document}